\title{Fair Division of Indivisible Items}			% Title of your thesis/dissertation
\author{Kevin Hsu}			% Your (full) name!
\date{2025}				% The year!
\newcommand{\xmark}{\ding{55}}
\tikzset{arc/.style={decoration={
			markings,
			mark=at position 0.6 with {\arrow{>[scale=2.5,
					length=2,
					width=2.5]}}},postaction={decorate}}}
\tikzset{every loop/.style={}}
\tikzset{redvertex/.style={black, draw=black, circle, fill, scale=0.5}}
\tikzset{blkvertex/.style={black, draw=black, circle, scale=0.5}}
\tikzset{sqvertex/.style={black, draw=black, rectangle}}
\tikzset{circlevertex/.style={black, draw=black, circle, fill, scale=0.5}}
\tikzset{emptyvertex/.style={black, draw=black, circle, scale=0.5}}
\tikzset{diamondvertex/.style={black, draw=black, diamond, fill, scale=0.5}}
\newcommand{\N}{\mathbb{N}}
\newcommand\restr[2]{{% we make the whole thing an ordinary symbol
  \left.\kern-\nulldelimiterspace % automatically resize the bar with \right
  #1 % the function
  \vphantom{\big|} % pretend it's a little taller at normal size
  \right|_{#2} % this is the delimiter
  }}
\DeclareMathOperator*{\argmax}{arg\,max}
\DeclareMathOperator{\NW}{NW}
\newcommand{\abbrevs}[2]{%   The command takes in two parameters: a phrase and its abbreviation.
	\refstepcounter{abbrevs}#1 (#2)% This line associates a number with the entry and typesets something there.
	\addcontentsline{abr}{abbrevs}{\numberline{}#2 - #1}}	% This line adds a line to the List of Abbreviations.
\newcommand{\abbrevsmulti}[3]{
		\refstepcounter{abbrevs}#1 (#2)% This line associates a number with the entry and typesets something there.
		\addcontentsline{abr}{abbrevs}{\numberline{}#2 - #3}}	% This line adds a line to the List of Abbreviations.
\newtheorem{theorem}{Theorem}[chapter] % Numbering is impacted by [chapter]; could do [section] or [subsection] also.
\newtheorem{lemma}[theorem]{Lemma} % The [thm] argument says to number Lemma in sequence with Theorem.
\newtheorem{proposition}[theorem]{Proposition}
\newtheorem{corollary}[theorem]{Corollary}
\newtheorem{observation}[theorem]{Observation}
\newtheorem*{problem}{Problem}
\theoremstyle{definition}
\newtheorem{definition}[theorem]{Definition} % These definitions are also numbered in sequence with Theorem.
\newtheorem{example}[theorem]{Example}
\begin{document}
\frontmatter 	% From the class file, sets small spacing and roman numeral page numbering

\newbool{terr-ack}
\booltrue{terr-ack} % Uncomment to include the standard UVic Territory Acknowledgement in your title page at the bottom; leave commented to leave it out.
\maketitle{terr-ack}	% Typesets the title page (as redefined by the class file). 

\makecommittee	% Typesets the committee page (as defined by the class file).

% Abbreviations

\begin{abstract}
% Here is where the abstract goes; it gets typeset on its own page. No required length, but recommended ~250 words.

We study the fair division of indivisible items. In the general model, the goal is to find a way to allocate $m$ indivisible items to $n$ agents, while satisfying some fairness criteria. We are particularly interested in the fairness criteria of maximin share (MMS), envy-freeness up to one item (EF1), and envy-freeness up to any item (EFX). Additionally, we study a recently-introduced graphical model that represents the fair division problem as a multigraph, in which vertices correspond to agents and edges correspond to items. The graphical model stipulates that an item can have non-zero marginal utility to an agent only if its corresponding edge is incident to the agent's corresponding vertex, capturing the idea of proximity between agents and items. It is desirable to allocate edges only to their endpoints. Such allocations are called orientations, as they correspond naturally to graph orientations.

Our first contribution concerns MMS allocations of mixed manna (i.e.\ a mixture of goods and chores) in the general model. It is known that MMS allocations of goods exist when $m \leq n+5$. We generalize this result by showing that when $m \leq n+5$, MMS allocations of mixed manna exist as long as $n \leq 3$, there is an agent whose MMS threshold is non-negative, or every item is a chore. Remarkably, our result leaves only the case in which every agent has a negative MMS threshold unanswered.

Our second contribution concerns EFX orientations of multigraphs of goods. We show that deciding whether EFX orientations exist for multigraphs is NP-complete, even for symmetric bi-valued multigraphs. Complementary to this, we show symmetric bi-valued multigraphs that do not contain non-trivial odd multitrees have EFX orientations that can be found in polynomial time.

Our third contribution concerns EF1 and EFX orientations of graphs and multigraphs of chores. We obtain polynomial-time algorithms for deciding whether such graphs have EF1 and EFX orientations, resolving a previous conjecture and showing a fundamental difference between goods and chores division. In addition, we show that the analogous problems for multigraphs are NP-hard.

\end{abstract}

\newpage\addtoToC{Table of Contents}\tableofcontents % Typesets the Table of Contents; LaTeX handles this. Here, because I wrote ``Table of Contents'' as the name of the entry, then I should go into the class file and change \contentsname to whatever I just put, so that the name and the header match. Default for \contentsname is ``Contents''.
%\newpage\addtoToC{List of Tables}\listoftables % Uncomment to add a list of Tables to the Table of Contents; required if you have any tables.
\newpage\addtoToC{List of Figures}\listoffigures % Uncomment to add a list of Figures to the Table of Contents; required if you have any figures.
%\newpage\addtoToC{...}\listof... % Any other custom lists, defined as above.

\newpage\addtoToC{List of Abbreviations}\listofabbrevs

\begin{acknowledgements} 
	\vskip 2\baselineskip
	
	\noindent I would like to thank my supervisory committee and external examiner Nisarg Shah for the time and effort they invested in carefully examining this dissertation and their helpful comments and suggestions. In particular, I am grateful to Nisarg Shah for raising an interesting question during the oral examination that inspired the addition of Chapter~\ref{chapter:po-and-orientations} to this work.
	
	I would also like to thank my amazing supervisor Valerie King for her encouragement and unwavering support throughout the years. Under her guidance, I developed not only professionally, but also personally thanks to her inspiring wisdom and kindness. I could not have asked for a better supervisor and cannot thank her enough.
	
	I would also like to thank my family and friends, whose continued support made this work possible. In particular, I would like to thank my friends Philip and Charlie for attending my oral examination and supporting me on such an important occasion.
	
	Finally, I would like to thank the numerous anonymous reviewers for their insightful comments and suggestions during the peer review process at the conferences to which the works included in this dissertation were submitted.
	
	% If you have acknowledgements, place them here. If you do not, comment out the whole begin-end block. These need to go before the Dedication. Depending on your choice, you may wish to indent or not indent the whole thing.
\end{acknowledgements}

%\begin{dedication} 
%	\vskip 2\baselineskip
%	\noindent % If you have dedications, place them here. If you do not, comment out the whole begin-end block.
%\end{dedication}
%--------------------------------------------------------------------------------------%
%								MAIN CONTENT										
%--------------------------------------------------------------------------------------%
\mainmatter % Switches spacing back to default and arabic page numbers (and implements a page break).

% Start writing here! This is the main body of the document. Some folks like to type everything here, in one document; you could also use an \input (verbatim inclusion of a file into this one) or \include (adds page breaks, etc; you would \include a chapter file, for example, which would allow you to \includeonly some of your chapters when compiling) to fill in your content.

\chapter{Introduction} % Use \chapter*{} if you don't want the chapter numbered; similarly for sub/sections.

How does one divide things between a group of agents in a fair manner? Various forms of this problem have presented themselves since time immemorial and this problem has become the basis of a rich literature at the intersection of computer science, economics, and mathematics. A reason for its importance and popularity is its exceptionally wide applicability, including household chores division, inheritance division, divorce settlements, course allocation, et cetera. This problem may appear simple at first glance, but is greatly complicated by the fact that the agents may have conflicting interests, and by the implicit question of what fairness is. Moreover, certain divisions may not be feasible depending on the specific setting. Thus, for each setting under consideration, one may ask what level of fairness is achievable and how to find a division that achieves it. 

The study of fair division generally consists of the following steps. First, one chooses a real-life application. Then, one designs a suitable mathematical model that accurately represents the relevant parameters and constraints. Finally, one studies different fairness criteria for the model with the goal of knowing which fairness criteria can be attained under what conditions, and how to efficiently compute an allocation that satisfies the candidate fairness criteria.

The first formal model of fair division was introduced in 1948 by \citet{steinhaus1948problem} and is called the {\em cake-cutting model}. In this model, the items being divided are assumed to be arbitrarily divisible, much like cake, and are represented by the unit interval $[0, 1]$. There are $n$ agents, each of whose preference is represented by a utility function $u_i: \mathcal{P}([0, 1]) \to \mathbb{R}_{\geq 0}$ that assigns a numerical value to each possible cut of cake. By possibly scaling the utility functions, we assume that $u_i([0,1]) = 1$ for each agent $i$. The goal is to find a partition of the cake into $n$ parts, and to allocate the parts to the agents while satisfying some fairness criterion. Various fairness criteria for cake-cutting have been studied, most important among which are {\em envy-freeness} and {\em proportionality}. An {\em envy-free} allocation is one in which no agent envies another, that is, no agent believes the slice of another agent to have greater utility than their own slice. A {\em proportional} allocation is one in which every agent receives their proportional share, that is, every agent believes their slice is worth at least $1/n$. Beyond fairness, another consideration for the cake-cutting model is that of {\em contiguity} --- it is better for each agent to be allocated a contiguous slice of cake, rather than a union of many small slices.

For the case of 2 agents, an envy-free and proportional allocation can be found using the folklore {\em divide-and-choose} (also called {\em cut-and-choose}) algorithm. In this algorithm, one agent cuts the cake into what they perceive to be two equal halves. Then, the other agent chooses their preferred half. The remaining half is given to the first agent. When there are more than 2 agents, the existence of envy-free allocations has been shown using Lyapunov's convexity theorem about vector measures \citep{dubins1961cut} and Sperner's lemma from combinatorics \citep{edward1999rental}. Unfortunately, because these proofs are non-constructive, they do not result in an algorithm for computing an envy-free allocation. This leads researchers to attempt to design such an algorithm and to ask what the computational complexity of this problem is. Algorithms for cake-cutting are usually described using the Robertson-Webb query model \citep{woeginger2007complexity}. In this model, a set of special points in $[0,1]$ called {\em cut-points} is maintained and initially includes the two boundaries 0 and 1 of the cake. Two types of queries can be made to the agents. The first, the cut query, presents a cut-point $a$ and a target value $t \in [0,1]$ to an agent $i$, and asks for the leftmost point $b \in [a,1]$ such that $u_i([a,b]) = t$, if such a point exists. Such a point is then added to the set of cut-points. The second, the evaluation query, presents two cut-points $a<b$ to an agent $i$ and asks for the value of $u_i([a,b])$.

The complexity of a cake-cutting algorithm can be described by the number of queries it requires. Unfortunately, designing an efficient algorithm for computing an envy-free algorithm has proven to be difficult. While the best-known lower bound of the query complexity of this problem is $\Omega(n^2)$ \citep{procaccia2009thou}, the state-of-the-art algorithm requires $O\left(n^{n^{n^{n^{n^n}}}}\right)$ queries \citep{aziz2016discrete}. In contrast, proportional allocations can be computing using $O(n \log n)$ queries even for contiguous cuts \citep{even1984note}, with a matching lower bound \citep{woeginger2007complexity,edmonds2011cake}.

A fundamental assumption of the cake-cutting model is that the items being divided are arbitrarily divisible. However, there are many scenarios to which this assumption does not apply. For example, when allocating classrooms to concurrent lectures, it is natural to require that a classroom is not shared between multiple concurrent lectures. Similarly, when allocating delivery tasks to workers of a food-delivery platform, a single task should be assigned to a single worker. Other examples of items that are inherently indivisible include movie tickets, individual pieces of Halloween candy, rooms in a shared apartment, et cetera. This consideration prompts the study of the fair division of {\em indivisible} items, which is the focus of our work. This indivisible setting is modelled using $n$ agents and $m$ items. Similarly to cake-cutting, the preference of each agent $i$ is represented by a utility function $u_i$ that assigns a numerical value to each subset of items.

Classical fairness criteria such as envy-freeness and proportionality are often unattainable in the indivisible setting --- in the canonical example of dividing a single item between two agents, one agent is necessarily envious. Thus, relaxations of these criteria have been introduced. In this work, we focus on three such criteria: {\em \abbrevsmulti{envy-freeness up to one good}{EF1}{envy-freeness up to one good/chore/item}} \citep{lipton2004approximately,budish2011combinatorial}, {\em \abbrevsmulti{envy-freeness up to any good}{EFX}{envy-freeness up to any good/chore/item}} \citep{caragiannis2019unreasonable,gourves2014near}, and {\em \abbrevs{maximin share}{MMS}} \citep{budish2011combinatorial}. We informally describe each of them below for the special case in which the items being divided are {\em goods}, that is, they have non-negative utility to every agent. The formal definitions are deferred to Chapter~\ref{chapter:theoretical-overview}.

EF1 allocations were implicitly introduced in 2004 by \citet{lipton2004approximately} and again more formally in 2011 by \citet{budish2011combinatorial}. Because envy between agents is often unavoidable, an EF1 allocation allows each agent (say Alice) is allowed to envy another (say Bob) as long as Alice's envy toward Bob disappears when she ignores what she perceives as the best good allocated to Bob.

EFX allocations were introduced in 2014 by \citet{gourves2014near} under the name of {\em nearly envy-free allocations} and again under its present name in 2019 by \citet{caragiannis2019unreasonable}. EFX allocations are more stringent than EF1 allocations, in that they only allow Alice to envy Bob as long as Alice's envy toward Bob disappears when she ignores {\em any} good allocated to Bob. Thus, EFX allocations are EF1. However, the converse is not always true. Consider the example of dividing two \$1 coins and one \$5 note between Alice and Bob. The allocation in which Alice receives a \$1 coin and Bob receives both a \$1 coin and a \$5 note is EF1 because Alice's envy toward Bob disappears when she ignores the \$5 note allocated to Bob, but is not EFX because her envy persists when she ignores the \$1 coin allocated to Bob.

MMS allocations were introduced in 2011 by \citet{budish2011combinatorial}. MMS allocations differ fundamentally from EF1 and EFX allocations in that MMS is a  {\em threshold-based} fairness notion, similar to proportionality, in which each agent must receive a bundle whose utility meets a threshold value. Informally, for each agent $i$, we define their {\em MMS threshold} to be the utility that they would receive if they were to choose the partition of the goods into $n$ bundles, but must receive the bundle with the least utility in their judgment. Different agents may have different MMS thresholds depending on how they value each good. An MMS allocation is one that provides each agent with their respective MMS threshold. MMS allocations generalize the spirit behind the divide-and-choose algorithm for division between 2 agents to $n>2$ agents in the sense that the agent who chooses the division must choose their bundle last. They are particularly interesting as they relate to the {\em original position} (also known as the {\em veil of ignorance}) introduced by the American philosopher \citet{rawls1971}, which is the idea that principles of justice need to be decided without knowledge of one's position in society, so that one does not shape justice in one's own favour.

Beyond fairness, certain settings of fair division have additional mathematical structure that must be considered. A common such example is spatial proximity. Consider the example of allocating delivery tasks to workers of a food-delivery platform. Because delivery workers and restaurants have physical locations, it is desirable to assign delivery tasks to workers who are physically close to the restaurant where the order is to be collected. Allocating classrooms to different lectures is another example. It is more desirable for a class provided by a given university department to take place in a classroom that is close to the department rather than in one far away.

Motivated by such examples, \citet{christodoulou2023fair} introduced the {\em graphical model} of fair division in 2023. This model of fair division represents the problem as a graph $G$, in which each vertex represents an agent and each edge represents an item. An item is allowed to have non-zero marginal utility to an agent only if its representative edge is incident to the agent's representative vertex. Thus, each agent only "cares" about the items in their vicinity. In this setting, it is particularly desirable to find allocations that allocate each item to an agent who "cares" about it. Such allocations are called {\em orientations} because they exactly correspond to orientations of the graph $G$ in a natural way --- each edge is oriented toward the endpoint representing the agent who receives the item represented by the edge.

\section{Organization}

This dissertation is organized as follows. In Chapter~\ref{chapter:theoretical-overview}, we formally define the models of fair division of indivisible items and various fairness considerations in order to familiarize the reader with important background knowledge. We also present a selection of the canonical results and techniques that are well-known in the field in this chapter.

In Chapter~\ref{chap:our-contribution}, we state our contributions and contextualize them by presenting previous work by other researchers that are most directly related to ours.

We present the details of our work in Chapters~\ref{chapter:MMS-Mixed-Manna}-\ref{chapter:Fair-Orientations-Chores}. Each of these chapters is independent and pertains to a different topic.

Finally, we conclude the dissertation with a technical discussion on the relationship between Pareto optimality and orientations in Chapter~\ref{chapter:po-and-orientations}.

\chapter{Overview of Previous Work}\label{chapter:theoretical-overview}

In this chapter, we formally define two models of fair division and various fairness concepts, while highlighting a selection of fundamental results in the literature. This chapter is meant to serve as a brief introduction of the well-known results from previous work by other researchers. For a more comprehensive introduction to the fair division of indivisible items, we refer the reader to the survey by \citet{amanatidis2022fair}. The definitions covered in this chapter are standard, and can also be found in \citep{amanatidis2022fair}.

\section{The General Model}

An {\em instance} $I$ of the fair division problem is a tuple $(N, M, U)$ where $N = [n]$ is a set of $n$ agents, $M = \{o_1, o_2, \dots, o_m\}$ is a set of $m$ items, and $U$ is a collection of $n$ utility functions $u_i:\mathcal{P}(M) \rightarrow \mathbb{R}$, each corresponding to an agent. An agent's utility function represents its preferences by assigning a numerical value to each subset of items. For the case of a single item $o_j$, we will write $u_i(o_j)$ instead of $u_i(\{o_j\})$ for brevity.

Items generally can have any real-valued utility. Two important types of items are {\em goods} and {\em chores}.

\begin{definition}[Goods and Chores]
	We say an item $o_j$ is a {\em good} to an agent $i$ if $u_i(S \cup \{o_j\}) \geq u_i(S)$ for every set $S \subseteq M$. Similarly, we say an item $o_j$ is a {\em chore} to an agent $i$ if $u_i(S \cup \{o_j\}) \leq u_i(S)$ for every subset $S \subseteq M$.
\end{definition}

Importantly, it is possible for an item to be neither a good nor a chore to the same agent.

\begin{example}[An Item that is Neither a Good nor a Chore]\label{ex:neither-good-nor-chore}
	Consider an agent $i$ and three items: a dog, a cat, and dog food. Because dog food is useful to have when there is a dog, we have $u_i(\{\text{dog}, \text{dog food}\}) > u_i(\{\text{dog}\})$. On the other hand, if one only has a cat, then having dog food is undesirable because it is useless and wastes storage space, so we have $u_i(\{\text{cat}, \text{dog food}\}) < u_i(\{\text{cat}\})$. Thus, the marginal utility of dog food is either positive or negative depending on the situation, so dog food is neither a good nor a chore.
\end{example}

Moreover, it is possible that different agents disagree on whether an item is a good or a chore. For example, Alice might like chocolate, but Bob might be allergic to it. This suggests the following definition.

\begin{definition}[Objective and Subjective Items]
	We say an item $o_j$ is an {\em objective good} if it is a good to every agent. Similarly, we say $o_j$ is an {\em objective chore} if it is a chore to every agent. If an item is not objective, it is said to be {\em subjective}.
\end{definition}

It is useful to classify instances depending on the types of items they contain.

\begin{definition}[Instance Types]
	An instance is called a {\em goods instance} (resp.\ {\em chores instance}) if every item is a good (resp.\ chore) to every agent. An instance that is neither a goods instance nor a chores instance is called a {\em mixed instance}.
\end{definition}

A combination of goods and chores is often referred to as {\em mixed manna} in the literature. 

We now formally define how to allocate items to agents.

\begin{definition}[Partial Allocations]
	A {\em partial allocation} $\pi$ is an $n$-tuple $(\pi_1, \pi_2, \dots, \pi_n)$ where each $\pi_i$ is a subset of $M$ and no pair of distinct subsets $\pi_i, \pi_j$ intersect.
\end{definition}

\begin{definition}[Allocations]
	A {\em complete allocation} (or simply, an {\em allocation}) is a partial allocation $\pi = (\pi_1, \pi_2, \dots, \pi_n)$ such that $\cup_{i \in N} \pi_i = M$.
\end{definition}

We interpret each subset $\pi_i$ in a (partial) allocation as being allocated to the agent $i$. For convenience, we refer to these subsets as {\em bundles}.

\begin{definition}[Bundles]
	Let $\pi = (\pi_1, \pi_2, \dots, \pi_n)$ be a partial allocation. The subset $\pi_i$ is called the {\em bundle of agent} $i$ or the {\em bundle allocated to agent} $i$.
\end{definition}

We often make the assumption that utility functions are {\em additive}. Additivity is a very commonly made assumption in the literature.

\begin{definition}[Additivity]
	A utility function $u_i$ is said to be {\em additive} if for any pair of disjoint subsets $S, T \subseteq M$, we have $u_i(S \cup T) = u_i(S) + u_i(T)$. If every utility function is additive, we call $I$ an {\em additive instance}.
\end{definition}

If every agent has an additive utility function, we can represent the instance $I$ using an $n \times m$ matrix $M(I)$ in which the entry $m_{ij}$ takes the value $u_i(o_j)$. This representation is useful when describing examples of instances.

\begin{example}[Matrix Representation]
	The following matrix represents an instance $I$ containing 3 agents and 4 items, in which every agent has an additive utility function. Each row represents the utility function of an agent. For example, row 1 indicates that $u_1(o_1) = 1, u_1(o_2) = 2, u_1(o_3) = 3$, and $u_1(o_4) = 4$. Moreover, $o_3$ is the unique objective good in this instance.
	$$\begin{blockarray}{ccccc}
		& o_1 & o_2 & o_3 & o_4 \\
		\begin{block}{c[cccc]}
			1&1 & 2 & 3 & 4 \\
			2&-1 & -1 & 1 & -1 \\
			3&5 & 2 & 1 & 0 \\
		\end{block}
	\end{blockarray}$$
\end{example}

We define some other commonly made assumptions about utility functions.

\begin{definition}[Monotonicity]
	A utility function $u_i$ is said to be
	\begin{itemize}
		\item {\em monotone increasing} if for any pair of subsets $S \subseteq T$ of $M$, we have $u_i(S) \leq u_i(T)$; and
		\item {\em monotone decreasing} if for any pair of subsets $S \subseteq T$ of $M$, we have $u_i(S) \geq u_i(T)$.
	\end{itemize}
	A utility function is said to be {\em monotone} if it is either monotone increasing or monotone decreasing. If every utility function is monotone, we call $I$ a {\em monotone instance}.
\end{definition}

The purpose of defining two versions of monotonicity is to account for both goods instances and chores instances. In either case, additive utility functions are monotone. On the other hand, additive utility functions for mixed instances  are non-monotone. This is because including an additional good in a bundle possibly increases its utility, while including an additional chore possibly decreases its utility. However, a variant of monotonicity has also been defined for mixed instances.

\begin{definition}[Double Monotonicity]
	A utility function $u_i$ is said to be {\em doubly monotone} if for each agent $i$ and each item $o_j$, the item $o_j$ is a good or a chore to agent $i$. If every utility function is doubly monotone, we call $I$ a {\em doubly monotone instance}.
\end{definition}

Essentially, double monotonicity prevents the existence of an item $o_j$ that is neither a good nor a chore to an agent $i$, i.e.\ there to be sets $S, T \subseteq M$ such that $u_i(S \cup \{o_j\}) > u_i(S)$ and $u_i(T \cup \{o_j\}) < u_i(T)$, such as the item in Example~\ref{ex:neither-good-nor-chore}. On the other hand, this above cannot occur with additive utility functions. Thus, additive utility functions are doubly monotone.

The following two assumptions are relaxations of additivity, and are less commonly made.

\begin{definition}[Subadditivity]
	A utility function $u_i$ is said to be {\em subadditive} if for any pair of disjoint subsets $S, T \subseteq M$, we have $u_i(S \cup T) \leq u_i(S) + u_i(T)$. If every utility function is subadditive, we call $I$ a {\em subadditive instance}.
\end{definition}

\begin{definition}[Superadditivity]
	A utility function $u_i$ is said to be {\em superadditive} if for any pair of disjoint subsets $S, T \subseteq M$, we have $u_i(S \cup T) \geq u_i(S) + u_i(T)$. If every utility function is superadditive, we call $I$ a {\em superadditive instance}.
\end{definition}

Researchers have also considered types of instances that restrict the number of values that the utility of an item can have.

\begin{definition}[Binary Instances]
	An instance is said to be {\em binary} if for each agent $i$ and each item $o_j$, we have we have $u_i(o_j) \in \{0, 1\}$.
\end{definition}

\begin{definition}[$k$-valued Instances]
	An instance is said to be {\em $k$-valued} if there exists a set $S$ of size $k$, such that for each agent $i$ and each item $o_j$, we have $u_i(o_j) \in S$.
\end{definition}

Another particularly important type of instances is the type in which agents rank the items in the same order.

\begin{definition}[Same-Order Preference Instances]
	An instance is said to have {\em \abbrevs{same-order preference}{SOP}} (or, to be {\em fully correlated}) if there exists a permutation $\sigma: M \to M$ such that for each agent $i$, we have
	$$
	u_i(o_{\sigma(1)}) \geq u_i(o_{\sigma(2)}) \geq \dots \geq u_i(o_{\sigma(m)})
	$$
\end{definition}

If $I$ is an SOP instance, we can assume that the item utilities are already ordered in non-increasing order without loss of generality, i.e.\ $u_i(o_1) \geq u_i(o_2) \geq \dots \geq u_i(o_m)$ for each agent $i$.

\section{Graph Theory}

We briefly familiarize the reader with some basic definitions and results from graph theory that our results depend on. The content of this section is standard in any introductory graph theory course and is included for the sake of completeness. For this reason, we do not include citations for the basic definitions, and instead refer the interested reader to \citet{chartrand2024graphs} for a more comprehensive introduction to graph theory. A knowledgeable reader can disregard this section.

\subsection{Graphs}

A {\em graph} $G$ is a pair $(V(G), E(G))$ where $V(G)$ is a set of {\em vertices} and $E(G) \subseteq \{\{v, w\} \mid v,w \in V(G)\}$ is a set of unordered pairs of vertices called {\em edges}. Let $e = \{v, w\}$ be an edge. We call the vertices $v, w$ the {\em endpoints} of $e$, and say $e$ is {\em incident} to $v$ and $w$. We also say that $v$ and $w$ are {\em adjacent} to each other. In the special case that $v=w$, we call $e$ a {\em self-loop} at the vertex $v$. When specifying the edge between two vertices $v, w$, we often write $vw$ instead of $\{v, w\}$ as a shorthand.

\begin{definition}[Neighbours and Neighbourhoods]
	Let $v$ be a vertex. The {\em neighbourhood} $N(v)$ of $v$ is the set of vertices $v$ is adjacent to. The vertices in $N(v)$ are called the {\em neighbours} of $v$.
\end{definition}

We now define subgraphs and induced subgraphs. These definitions capture the idea of containment between graphs.

\begin{definition}[Subgraphs]
	Let $G$ be a graph. A {\em subgraph} of $G$ is a graph $H$ such that $V(H) \subseteq V(G)$ and $E(H) \subseteq E(G)$. If $H$ is a subgraph of $G$, we say $G$ {\em contains $H$ as a subgraph}.
\end{definition}

\begin{definition}[Induced Subgraphs]
	Let $G$ be a graph and $S \subseteq V(G)$ be a subset of vertices. The {\em subgraph of $G$ induced by $S$} is the subgraph $H$ such that $V(H) = S$ and $E(H) = \{vw \in E(G) \mid v, w \in S\}$. We refer to any subgraph $H$ of $G$ of this form as an {\em induced subgraph} of $G$. If $H$ is an induced subgraph of $G$, we say $G$ {\em contains $H$ as an induced subgraph}.
\end{definition}

\begin{definition}[Edge-Induced Subgraphs]
	Let $G$ be a graph and $S \subseteq E(G)$ be a set of edges. The {\em subgraph of $G$ induced by $S$} is the subgraph $H$ such that $V(H)$ is the set of endpoints of the edges of $S$, and $E(H) = S$. We refer to any subgraph $H$ of $G$ of this form as an {\em edge-induced subgraph} of $G$. If $H$ is an edge-induced subgraph of $G$, we say $G$ {\em contains $H$ as an edge-induced subgraph}.
\end{definition}

Induced subgraphs are sometimes referred to as {\em vertex-induced subgraphs} in order to emphasize their distinction from edge-induced subgraphs.

We now define paths and cycles in graphs. To do this, we first introduce walks.

\begin{definition}[Walks]\label{def:walk}
	A {\em walk} $W$ in a graph $G$ is a sequence $(v_0, e_1, v_1, e_2, \dots, e_k, v_k)$ such that
	\begin{itemize}
		\item $v_i \in V(G)$ for each $0 \leq i \leq k$;
		\item $e_i \in E(G)$ for each $1 \leq i \leq k$; and
		\item $e_i = \{v_{i-1}, v_i\}$ for each $1 \leq i \leq k$.
	\end{itemize}
	The {\em length} of $W$ is the number $k$ of edges it contains. We say that $W$ is a walk {\em from $v_0$ to $v_k$} or that $W$ is a walk {\em between $v_0$ and $v_k$}.
\end{definition}

Because a graph contains at most one edge between any pair of vertices, it is possible to equivalently define walks as sequences of vertices instead of explicitly including the edges between consecutive vertices. The reason we prefer the above definition is because we will reuse it for multigraphs, which are generalizations of graphs that allow for multiple edges between each pair of vertices.

\begin{definition}[Paths]\label{def:path}
	A {\em path} $P$ in a graph $G$ is a walk $(v_0, e_1, v_1, e_2, \dots, e_k, v_k)$ such that no vertex is repeated, i.e.\ $i \neq j$ implies $v_i \neq v_j$ for all $1 \leq i, j \leq k$.
\end{definition}

\begin{definition}[Cycles]\label{def:cycle}
	A {\em cycle} $C$ in a graph $G$ is a walk $(v_0, e_1, v_1, e_2, \dots, e_k, v_k)$ such that $v_0 = v_k$ and no vertex nor edge is repeated.
\end{definition}

A self-loop is a cycle of length 1. Similarly to paths, cycles can also be specified using the sequence of edges between its consecutive edges.

We now define some important classes of graphs that are commonly considered.

\begin{definition}[Simple Graphs]
	A {\em simple graph} is a graph that contains no self-loops.
\end{definition}

\begin{definition}[Bipartite Graphs]
	A graph $G$ is said to be {\em bipartite} if there exists a partition $A \cup B$ of $V(G)$ such that every edge $e$ has an endpoint in both $A$ and $B$. If $G$ is bipartite, we often write $G = (A \cup B, E)$ instead of $G = (V(G), E(G))$ to specify the partition.
\end{definition}

In order to define the class of connected graphs, we first define components.

\begin{definition}[Components]
	A {\em component} $K$ of a graph $G$ is a maximal subset of $V(G)$ such that between every pair of distinct vertices $v, w \in K$, there exists a path in $G$ between $v$ and $w$.
\end{definition}

\begin{definition}[Connected Graphs]
	A graph $G$ is said to be {\em connected} if it contains a unique component and {\em disconnected} otherwise. Equivalently, $G$ is connected if for each pair of distinct vertices $v, w \in V(G)$, there exists a path between $v$ and $w$, and disconnected otherwise.
\end{definition}

\subsection{Directed Graphs and Graph Orientations}

We now define directed graphs and graph orientations. A directed graph is essentially a graph whose edges have directions. More formally, a {\em directed graph} (or simply a {\em digraph}) $D$ is a pair $(V(D), A(D))$, where $V(D)$ is a set of {\em vertices} and $A(D) \subseteq \{ (v, w) \mid v, w \in V(D) \}$ is a set of {\em directed edges} (also called {\em arcs}). When the context is clear, we often refer to directed edges simply as {\em edges}. Let $e=(v,w)$ be a directed edge. We say $e$ is {\em from} $v$ {\em to} $w$ and that it is {\em directed toward} $w$. We also say that $v$ is {\em adjacent to} $w$ and vice versa.

\begin{definition}[Neighbours and Neighbourhoods]
	Let $v$ be a vertex in a directed graph $G$. The {\em in-neighbourhood} $v$ is the set $N^-(v) \coloneqq \{w \in V(G) \mid (w, v) \in E(G)\}$ and the {\em out-neighbourhood} $v$ is the set $N^+(v) \coloneqq \{w \in V(G) \mid (v, w) \in E(G)\}$. The vertices in $N^-(v)$ and $N^+(v)$ are called the {\em in-neighbours} and {\em out-neighbours} of $v$, respectively.
\end{definition}

\begin{definition}[Underlying Graphs]
	The {\em underlying graph} of a directed graph $D$ is the graph $G$ such that $V(G) = V(D)$ and $E(G) = \{\{v, w\} \mid (v,w) \in A(D) \text{ or } (w,v) \in A(D)\}$. Less formally, $G$ is the graph obtained from $D$ by ignoring edge directions.
\end{definition}

\begin{definition}[Graph Orientations]
	An {\em orientation} of a graph $G$ is a directed graph $D$ whose underlying graph is $G$.
\end{definition}

Directed graphs can be represented visually using arrows to indicate the direction of the edges.

\begin{example}[A Directed Graph]
	The following example is a directed graph on 5 vertices and 7 edges. The edge between 1 and 2 is directed toward 2, as indicated by the direction of the arrow.
	$$\begin{tikzpicture}
		\node[redvertex] (1) at (0,0) {};
		\node[redvertex] (2) at (0,2) {};
		\node[redvertex] (3) at (1,1) {};
		\node[redvertex] (4) at (2,0) {};
		\node[redvertex] (5) at (2,2) {};
		\node at (-0.2,-0.2) {1};
		\node at (-0.2,2.2) {2};
		\node at (1,1.3) {3};
		\node at (2.2,-0.2) {4};
		\node at (2.2,2.2) {5};
		\draw[arc] (1) to (2);
		\draw[arc] (2) to (3);
		\draw[arc] (4) to (3);
		\draw[arc] (5) to (4);
		\draw[arc] (3) to (1);
		\draw[arc] (1) to (4);
		\draw[arc] (2) to (5);
	\end{tikzpicture}$$
\end{example}

Directed walks, directed paths, and directed cycles in directed graphs can be defined analogously to their counterparts in undirected graphs, by requiring that the directed edges along them are all directed toward their respective subsequent vertices.

\begin{definition}[Directed Walks]
	A {\em directed walk} $W$ in a directed graph $D$ is a sequence $(v_0, e_1, v_1, e_2, \dots, e_k, v_k)$ such that
	\begin{itemize}
		\item $v_i \in V(D)$ for each $0 \leq i \leq k$;
		\item $e_i \in E(D)$ for each $1 \leq i \leq k$; and
		\item $e_i = (v_{i-1}, v_i)$ for each $1 \leq i \leq k$.
	\end{itemize}
	The {\em length} of $W$ is the number $k$ of edges it contains. We say that $W$ is a walk {\em from $v_0$ to $v_k$} or that $W$ is a walk {\em between $v_0$ and $v_k$}.
\end{definition}

\begin{definition}[Directed Paths]
	A {\em directed path} $P$ in a directed graph $D$ is a directed walk $(v_0, e_1, v_1, e_2, \dots, e_k, v_k)$ such that no vertex is repeated, i.e.\ $i \neq j$ implies $v_i \neq v_j$ for all $1 \leq i, j \leq k$.
\end{definition}

\begin{definition}[Directed Cycles]
	A {\em directed cycle} $C$ in a directed graph $D$ is a directed walk $(v_0, e_1, v_1, e_2, \dots, e_k, v_k)$ such that $v_0 = v_k$ and no vertex nor edge is repeated.
\end{definition}

\subsection{Multigraphs}

We are also interested in multigraphs. Essentially, multigraphs are graphs that possibly have multiple edges between the same pair of vertices. More formally, a multigraph $G$ is a pair $(V(G), E(G))$ where $V(G)$ is a set of {\em vertices} and $E(G) \subseteq \{\{v, w\} \mid v,w \in V(G)\}$ is a {\em multiset} of unordered pairs of vertices called {\em edges}. Endpoints, incident edges, adjacent vertices, and self-loops are defined similarly to the case of graphs.

\begin{definition}[Parallel Edges]
	Two edges $e_1, e_2$ are said to be {\em parallel} if they share the same pair of endpoints. An edge is always parallel with itself.
\end{definition}

\begin{definition}[Multiplicity]
	The {\em multiplicity} of an edge $e$ is the number of edges that are parallel with it. The {\em multiplicity} $q(G)$ of a multigraph $G$ is the maximum multiplicity of an edge it contains.
\end{definition}

A graph is simply a multigraph of multiplicity 1.

\begin{example}[A Multigraph]
	The following multigraph has multiplicity 2 because the edge $e$ has multiplicity 2. It also contains a self-loop at the vertex $v$.
	$$\begin{tikzpicture}
		\node[redvertex] (1) at (0,0) {};
		\node[redvertex] (2) at (1.5,0) {};
		\node[redvertex] (3) at (1.5,1.5) {};
		\node[redvertex] (4) at (0,1.5) {};
		\node[redvertex] (5) at (0.75,2.2) {};
		\node (le) at (0.75, 1.1) {$e$};
		\node (l3) at (1.75,1.2) {$v$};
		\node (l4) at (-0.3,1.5) {$u$};
		\draw (1) to (2) (2) to (3) (1) to (4) (4) to (5) (5) to (3);
		\draw (4) to[bend left=25] (3)
		(4) to[bend right=25] (3);
		\draw (3) edge[out=-20, in=50, loop] (3);
	\end{tikzpicture}$$
\end{example}

Walks, paths, and cycles in multigraphs are defined in the exact same way as they are in graphs (see Definitions~\ref{def:walk}, \ref{def:path}, and \ref{def:cycle}.)

Similarly to graphs, edges in multigraphs can also be replaced with directed edges to form {\em directed multigraphs} in an analogous fashion to directed graphs. Equivalently, directed multigraphs can also be defined by allowing for the arc set $A(D)$ of a directed graph $D$ to be a multiset rather than a set.

\subsection{Graph Colourings}

An important property of a graph is its {\em chromatic number}. In order to define this property, we first define colourings.

\begin{definition}[Colourings]
	Let $G$ be a graph. A {\em colouring} of $G$ is a map $c: V(G) \to \N$. We say $c$ is a {\em $k$-colouring} if $|c(V(G))| = k$.
\end{definition}

\begin{definition}[Proper Colourings]
	A colouring $c$ of a graph $G$ is said to be {\em proper} if $c$ assigns each pair of adjacent vertices different colours, that is, $c(v) \neq c(w)$ whenever $vw \in E(G)$.
\end{definition}

\begin{definition}[Chromatic Number]
	The {\em chromatic number} $\chi(G)$ of a graph $G$ is the minimum integer $k$ such that there exists a proper $k$-colouring of $G$.
\end{definition}

The following fact is folklore.

\begin{proposition}
	A graph $G$ is bipartite if and only if $\chi(G) \leq 2$. \qed
\end{proposition}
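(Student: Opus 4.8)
The plan is to prove the two implications of the equivalence separately, in each case exploiting the fact that a proper colouring using at most two colours and a bipartition of $V(G)$ are essentially the same object: the two colour classes are the two parts, and vice versa.

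First I would establish that bipartiteness implies $\chi(G) \le 2$. Given a bipartite $G = (A \cup B, E)$, I would define the colouring $c \colon V(G) \to \N$ that assigns colour $1$ to every vertex of $A$ and colour $2$ to every vertex of $B$. Since every edge of a bipartite graph has one endpoint in $A$ and one in $B$, adjacent vertices always receive distinct colours, so $c$ is proper; as $c$ uses at most two colours, $\chi(G) \le 2$ follows immediately.

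For the converse, I would start from a proper colouring $c$ witnessing $\chi(G) \le 2$, fix one of the colours it uses (call it $a$), and set $A = \{ v \in V(G) : c(v) = a \}$ together with $B = V(G) \setminus A$. This $A \cup B$ is a partition of $V(G)$, and for any edge $vw$ the properness of $c$ gives $c(v) \ne c(w)$; because at most two colours occur, exactly one of $v, w$ has colour $a$, so the edge meets both $A$ and $B$. Hence $G$ is bipartite.

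Since the whole argument is just a restatement of one condition in the language of the other, I expect no genuine obstacle. The only points needing a moment of care are the degenerate cases: when $\chi(G) = 1$ the graph has no edges and one part of the bipartition may be empty, in which case the defining condition of bipartiteness holds vacuously; and a graph containing a self-loop is simultaneously non-bipartite and not properly colourable, so the equivalence is respected there as well.
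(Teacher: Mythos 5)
Your proof is correct and complete: it is the canonical identification of the two colour classes with the two parts of a bipartition, and you even handle the degenerate cases (an empty part when $\chi(G)=1$, and self-loops, which the paper's definition of graphs permits) that are the only places such an argument could stumble. The paper states this proposition as folklore and gives no proof at all, so there is nothing to compare against; your argument is exactly the standard one that the \qed\ implicitly appeals to.
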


\subsection{Matchings}

An important tool that we use to obtain one of our results is Hall's theorem \citep{hall1935representatives}. In order to state Hall's theorem, we first provide some relevant definitions.

\begin{definition}[Matchings]
	Let $G$ be a graph. A {\em matching} $M$ in $G$ is a subset of $E(G)$ such that for each vertex $v$, there exists at most one edge $e \in M$ with $v$ as an endpoint. If such an edge exists, we say $M$ {\em matches} $v$.
\end{definition}

\begin{definition}[Perfect Matchings]
	A matching $M$ of a graph $G$ is said to be {\em perfect} if it matches every vertex of $G$.
\end{definition}

\begin{definition}[Neighbourhood of a Set of Vertices]
	Let $G$ be a graph and $S \subseteq V(G)$ be a subset of vertices. We define the {\em neighbourhood} $N(S)$ of $S$ to be the union of the neighbourhoods of the vertices in $S$, i.e.\ $N(S) \coloneqq \cup_{v \in S} N(v)$.
\end{definition}

We can now state Hall's theorem. The version that we state is not the most general, but is sufficient for our purposes.

\begin{theorem}[Hall's theorem \citep{hall1935representatives}]\label{thm:hall}
	Let $G = (A \cup B, E)$ be a bipartite graph such that $|A| = |B|$. There exists a perfect matching of $G$ if and only if for every subset $S \subseteq A$, we have $|S| \leq |N(S)|$. \qed
\end{theorem}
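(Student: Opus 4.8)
The plan is to prove the two directions separately, with the forward (necessity) direction being immediate and the converse (sufficiency) being the substantive part, which I would establish by induction on $n = |A| = |B|$.

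For necessity, suppose $G$ has a perfect matching $M$ and fix any $S \subseteq A$. Since $M$ matches each vertex of $S$ to a distinct vertex, all of which lie in $N(S)$, the map sending each $a \in S$ to its $M$-partner is an injection from $S$ into $N(S)$, whence $|S| \leq |N(S)|$. For sufficiency I would induct on $n$. The base case $n = 1$ is immediate: the single vertex of $A$ has a neighbour by Hall's condition, and matching them gives a perfect matching. For the inductive step, I would split into two cases according to whether Hall's condition is ever \emph{tight} on a nonempty proper subset of $A$.

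\emph{Case 1 (no critical set):} suppose $|N(S)| \geq |S| + 1$ for every nonempty $S \subsetneq A$. Pick any edge $ab$ and match it. In the graph $G'$ obtained by deleting $a$ and $b$, every $S \subseteq A \setminus \{a\}$ satisfies $|N_{G'}(S)| \geq |N_G(S)| - 1 \geq |S|$, so Hall's condition holds for $G'$ and the inductive hypothesis completes a perfect matching. \emph{Case 2 (a critical set exists):} suppose some nonempty $S \subsetneq A$ has $|N(S)| = |S|$. The bipartite subgraph induced on $S \cup N(S)$ inherits Hall's condition, so by induction it admits a perfect matching $M_1$. For the remaining graph on $(A \setminus S) \cup (B \setminus N(S))$, I would verify Hall's condition by relating the neighbourhood of any $T \subseteq A \setminus S$ in the remaining graph to $N_G(S \cup T)$: since $S$ and $T$ are disjoint, $|N_G(S \cup T)| \geq |S| + |T|$, and subtracting off the $|S| = |N(S)|$ vertices already consumed by $M_1$ leaves at least $|T|$ available neighbours. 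Induction then yields a perfect matching $M_2$, and $M_1 \cup M_2$ is the desired perfect matching of $G$.

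The main obstacle is precisely the bookkeeping in Case 2: one must argue that deleting the critical set together with its neighbourhood does not destroy Hall's condition on the remainder. The key identity $N_G(S \cup T) = N(S) \cup N_G(T)$, combined with the disjointness of $S$ and $T$, is what makes this estimate go through. Recognizing that these two cases are exhaustive and handling them uniformly is the conceptual crux of the argument; once the neighbourhood inequality is checked in each sub-instance, the result follows by a clean application of the inductive hypothesis.
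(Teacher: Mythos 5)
Your proposal is correct, but there is nothing in the paper to compare it against: the paper states Hall's theorem as background material and terminates it with a \qed, citing \citet{hall1935representatives} rather than giving any argument. What you have written is the classical inductive proof (the Halmos--Vaughan argument): necessity via the injection $a \mapsto$ ($M$-partner of $a$), and sufficiency by induction on $n$ with the case split on whether some nonempty proper $S \subsetneq A$ is \emph{tight}, i.e.\ $|N(S)| = |S|$. Both cases check out. In Case 1, any $S \subseteq A \setminus \{a\}$ is automatically a proper subset of $A$, so the strengthened hypothesis $|N_G(S)| \geq |S|+1$ applies and survives the deletion of the single vertex $b$. In Case 2, the bookkeeping you flag as the crux is handled correctly: for disjoint $S$ and $T \subseteq A \setminus S$, the identity $N_G(S \cup T) = N(S) \cup N_G(T)$ gives
\begin{equation*}
	\bigl| N_G(T) \setminus N(S) \bigr| \;\geq\; |N_G(S \cup T)| - |N(S)| \;\geq\; \bigl(|S| + |T|\bigr) - |S| \;=\; |T|,
\end{equation*}
so Hall's condition passes to the residual graph, and both sub-instances are strictly smaller ($|S| < n$ since $S$ is proper, $|A \setminus S| < n$ since $S$ is nonempty), so the inductive hypothesis applies to each. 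Your proof therefore supplies a complete, self-contained justification of a statement the paper deliberately leaves as a citation; this is a reasonable choice for the paper, since Hall's theorem is standard, but your argument is a valid and essentially canonical way to fill that gap.
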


\section{Fairness and Efficiency}

We now formally define several popular fairness criteria considered in the literature, and highlight some relevant important results.

\subsection{EF and PROP Allocations}

We first define the two classical fairness criteria of envy-freeness and proportionality.

\begin{definition}[Envy]
	Let $\pi = (\pi_1, \pi_2, \dots, \pi_n)$ be an allocation. We say that agent $i$ {\em envies} agent $j$ if $u_i(\pi_i) < u_i(\pi_j)$.
\end{definition}

\begin{definition}[Envy-Free Allocations]
	We say an allocation $\pi = (\pi_1, \pi_2, \dots, \pi_n)$ is {\em \abbrevs{envy-free}{EF}} if no agent envies another.
\end{definition}

\begin{definition}[Proportional Allocations]
	We say an allocation $\pi = (\pi_1, \pi_2, \dots, \pi_n)$ is {\em \abbrevs{proportional}{PROP}} if for each agent $i$, we have $u_i(\pi_i) \geq u_i(M)/n$.
\end{definition}

In the following, we show the logical relationship between envy-freeness and proportionality. Most importantly, envy-freeness implies proportionality when every agent has a subadditive utility function.

\begin{proposition}\label{prop:EF-implies-PROP}
	For subadditive instances, EF implies PROP.
\end{proposition}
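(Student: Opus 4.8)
The plan is to fix an arbitrary agent $i$ and show directly that $u_i(\pi_i) \geq u_i(M)/n$, using envy-freeness to compare the bundle of agent $i$ against every other bundle, and subadditivity to bound $u_i(M)$ from above by the sum of the bundle utilities.

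First I would observe that since $\pi = (\pi_1, \dots, \pi_n)$ is a complete allocation, the bundles $\pi_1, \dots, \pi_n$ are pairwise disjoint and satisfy $\bigcup_{j \in N} \pi_j = M$. The key inequality is then
\[
u_i(M) = u_i\!\left( \bigcup_{j \in N} \pi_j \right) \leq \sum_{j \in N} u_i(\pi_j),
\]
which follows from subadditivity of $u_i$. Note that subadditivity is stated only for pairs of disjoint sets, so extending it to the union of $n$ pairwise disjoint bundles requires a short induction on the number of bundles: the base case is the definition, and the inductive step applies the two-set inequality to $\pi_1 \cup \dots \cup \pi_{k-1}$ and $\pi_k$, which are disjoint. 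This induction is the only mildly technical point, and I expect it to be the main (though routine) obstacle.

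Next I would invoke envy-freeness. Since $\pi$ is EF, no agent envies another, so in particular $u_i(\pi_i) \geq u_i(\pi_j)$ for every $j \in N$. Substituting this bound into the sum above yields
\[
u_i(M) \leq \sum_{j \in N} u_i(\pi_j) \leq \sum_{j \in N} u_i(\pi_i) = n \cdot u_i(\pi_i).
\]
Dividing through by $n$ gives $u_i(\pi_i) \geq u_i(M)/n$. Because the agent $i$ was arbitrary, this establishes that $\pi$ is PROP, completing the argument.

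I would emphasize that the subadditivity hypothesis is exactly what makes the first displayed inequality go in the needed direction; without it, $u_i(M)$ could exceed the sum of the bundle utilities and the conclusion would fail. The envy-freeness hypothesis then does the remaining work of flattening all the bundle utilities down to $u_i(\pi_i)$.
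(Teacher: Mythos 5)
Your proposal is correct and follows essentially the same argument as the paper: envy-freeness gives $u_i(\pi_i) \geq u_i(\pi_j)$ for every $j$, summing these and applying subadditivity of $u_i$ to the disjoint union $\bigcup_{j} \pi_j = M$ yields $n \cdot u_i(\pi_i) \geq u_i(M)$. Your explicit remark that extending the two-set subadditivity inequality to $n$ pairwise disjoint bundles requires a routine induction is a point the paper passes over silently, but it does not change the substance of the proof.
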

\begin{proof}
	Suppose $\pi = (\pi_1, \pi_2, \dots, \pi_n)$ is EF. We show that $u_i(\pi_i) \geq u_i(M)/n$ for each agent $i$. Fix an agent $i$. Because $\pi$ is EF, agent $i$ does not envy any agent $j$. So, the inequality $u_i(\pi_i) \geq u_i(\pi_j)$ holds for each agent $j$. By summing together the $n$ inequalities of this form, we obtain
	\begin{align*}
		n \cdot u_i(\pi_i) &\geq u_i(\pi_1) + u_i(\pi_2) + \dots u_i(\pi_n) \\
		&\geq u_i(M)
	\end{align*}
	where the second inequality follows from the subadditivity of $u_i$. It follows that $u_i(\pi_i) \geq u_i(M)/n$ as required.	
\end{proof}

The converse is not necessarily true. It is possible for an allocation to be PROP but not EF, even if every agent has an additive utility function, which is a special case of subadditive utility functions.

\begin{example}[An Allocation that is PROP but not EF]\label{ex:prop-but-not-ef}
	Consider the 3-agent 3-good instance represented by the following matrix.
	$$\begin{blockarray}{cccc}
		& o_1 & o_2 & o_3  \\
		\begin{block}{c[ccc]}
			1&\mathbf{\underline{1}} & 2 & 0  \\
			2&0 & \mathbf{\underline{1}} & 2  \\
			3&2 & 0 & \mathbf{\underline{1}} \\
		\end{block}
	\end{blockarray}$$

	We claim the allocation $\pi = (\{o_1\}, \{o_2\}, \{o_3\})$ indicated by the bold underlined matrix entries, in which each agent $i$ is allocated $o_i$ is PROP but not EF. For each agent $i$, we have $u_i(\pi_i) = 1 = u_i(M)/n$, so $\pi$ is PROP. On the other hand, because $u_1(\pi_1) = u_1(o_1) = 1 < 2 = u_1(o_2) = u_1(\pi_2)$, agent 1 envies agent 2, so $\pi$ is not EF.
\end{example}

However, for the case of $n=2$ where agents have superadditive utility functions, EF and PROP are equivalent.

\begin{proposition}\label{prop:superadditive-2-agents}
	For superadditive instances with exactly two agents, PROP implies EF.
\end{proposition}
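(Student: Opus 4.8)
The plan is to mirror the proof of Proposition~\ref{prop:EF-implies-PROP}, but to run the key inequality in the opposite direction, using superadditivity in place of subadditivity. The essential observation specific to the case $n=2$ is that the two bundles $\pi_1$ and $\pi_2$ of an allocation partition $M$, so that for either agent $i$ the superadditivity of $u_i$ yields a \emph{lower} bound on $u_i(M)$ in terms of the utilities of both bundles. This is exactly the ingredient needed to turn the threshold guarantee of PROP into the pairwise comparison required by EF.

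First I would fix an agent, say agent $1$, and note that since $\pi = (\pi_1, \pi_2)$ is a complete allocation with $n = 2$, we have $\pi_1 \cup \pi_2 = M$ and $\pi_1 \cap \pi_2 = \emptyset$. Applying the superadditivity of $u_1$ to the disjoint sets $\pi_1$ and $\pi_2$ gives $u_1(M) = u_1(\pi_1 \cup \pi_2) \geq u_1(\pi_1) + u_1(\pi_2)$. Next I would invoke the PROP hypothesis for agent $1$, namely $u_1(\pi_1) \geq u_1(M)/2$, and chain it with the previous inequality to obtain $2\,u_1(\pi_1) \geq u_1(M) \geq u_1(\pi_1) + u_1(\pi_2)$, which simplifies to $u_1(\pi_1) \geq u_1(\pi_2)$. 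Hence agent $1$ does not envy agent $2$.

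To finish, I would observe that the argument is fully symmetric in the two agents: repeating it verbatim with the roles of agents $1$ and $2$ interchanged shows $u_2(\pi_2) \geq u_2(\pi_1)$, so agent $2$ does not envy agent $1$ either. Since neither agent envies the other, $\pi$ is EF.

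There is no substantive obstacle here; the only point worth emphasizing is that the argument depends crucially on $n = 2$. It is precisely when there are two agents that the complement of one agent's bundle is exactly the other agent's bundle, so that a single application of superadditivity bounds $u_i(M)$ from below by $u_i(\pi_1) + u_i(\pi_2)$. For $n > 2$ the analogous step would bound $u_i(M)$ by a sum over all $n$ bundles, while envy is compared against only one other bundle at a time, and this clean two-agent cancellation no longer goes through (as Example~\ref{ex:prop-but-not-ef} already illustrates for the additive, and hence superadditive-at-the-boundary, case with three agents).
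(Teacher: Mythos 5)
Your proof is correct and takes essentially the same approach as the paper's: both combine the PROP guarantee $u_1(\pi_1) \geq u_1(M)/2$ with the superadditivity bound $u_1(M) = u_1(\pi_1 \cup \pi_2) \geq u_1(\pi_1) + u_1(\pi_2)$ and conclude by symmetry. Your chaining $2\,u_1(\pi_1) \geq u_1(M) \geq u_1(\pi_1) + u_1(\pi_2)$ is just a slightly tidier arrangement of the paper's own inequality chain.
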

\begin{proof}
	Suppose $\pi = (\pi_1, \pi_2)$ is PROP. By symmetry, it suffices to show that $u_1(\pi_1) \geq u_1(\pi_2)$, i.e.\ agent 1 does not envy agent 2. Because $\pi$ is PROP, we have $u_1(\pi_1) \geq u_1(M)/2$. On the other hand, because $u_1$ is superadditive, we have $u_1(M) = u_1(\pi_1 \cup \pi_2) \geq u_1(\pi_1) + u_1(\pi_2)$, so $u_1(\pi_1 \cup \pi_2) - u_1(\pi_1) \geq u_1(\pi_2)$. Thus, 
	\begin{align*}
		u_1(\pi_1) &\geq u_1(M)/2 \\
		&= u_1(M) - u_1(M)/2 \\
		&= u_1(\pi_1 \cup \pi_2) - u_1(M)/2 \\
		&\geq u_1(\pi_1 \cup \pi_2) - u_1(\pi_1) \\
		&\geq u_1(\pi_2)
	\end{align*}
\end{proof}

One remaining case is when $n \geq 3$ and agents have superadditive utility functions. Example~\ref{ex:prop-but-not-ef} already showed that it is possible for an allocation to be PROP but not EF. We now exhibit an allocation that is EF but not PROP.

\begin{example}[An Allocation that is EF but not PROP]\label{ex:ef-but-not-prop}
	Consider the instance with 3 agents and 3 goods $o_1, o_2, o_3$. We define the utility functions of every agent $i$ as follows:
	$$
	u_i(S) = \begin{cases}
		|S| & \text{if } |S| \leq 2 \\
		9 & \text{if } |S| = 3
	\end{cases}
	$$
	It is straightforward to verify that $u_i$ is superadditive. We claim the allocation $\pi = (\{o_1\}, \{o_2\}, \{o_3\})$ is EF but not PROP. For each pair of agents $i, j$, we have $u_i(\pi_i) = 1 = u_i(\pi_j)$, so $\pi$ is EF. On the other hand, for each agent $i$, we have $u_i(\pi_i) = 1 < 3 = u_i(M)/3$, so $\pi$ is not PROP.
\end{example}

For a visual summary of the relationship between EF and PROP, see Table~\ref{table:ef-prop}.

\begin{table}
	\centering
\begin{tabular}{|c||c|c|}
	\hline
	Utility Functions & $n=2$ & $n\geq 3$ \\
	\hline
	\hline
	Additive & EF $\Leftrightarrow$ PROP (Props~\ref{prop:EF-implies-PROP} and \ref{prop:superadditive-2-agents}) & EF $\implies$ PROP (Prop~\ref{prop:EF-implies-PROP}) \\
	\hline
	Subadditive & EF $\implies$ PROP (Prop~\ref{prop:EF-implies-PROP}) & EF $\implies$ PROP (Prop~\ref{prop:EF-implies-PROP})\\
	\hline
	Superadditive & PROP $\implies$ EF (Prop~\ref{prop:superadditive-2-agents})& \xmark (Examples~\ref{ex:prop-but-not-ef} and \ref{ex:ef-but-not-prop})\\
	\hline
\end{tabular}
\caption{Relationship between EF and PROP depending on $n$ and the type of utility functions. "$\implies$" denotes logical implication. "$\Leftrightarrow$" denotes logical equivalence. "\xmark" denotes the lack of logical implication.}
\label{table:ef-prop}
\end{table}

\subsection{EF1 Allocations}

Unfortunately, both EF and PROP allocations often fail to exist. Even for the simple example of dividing one good between two agents, both EF and PROP allocations fail to exist because inevitably one agent receives nothing. This has led to many other alternative fairness criteria to be introduced. We describe one such criterion which relaxes envy-freeness by allowing an agent (say, Alice) to envy another (say, Bob) as long as Alice's envy toward Bob can be eliminated by disregarding either some good in Bob's bundle or some chore in Alice's bundle. This ensures that while envy inevitably exists, it is relatively small.

\begin{definition}[EF1 Allocations \citep{lipton2004approximately,budish2011combinatorial}]
	We say an allocation $\pi = (\pi_1, \pi_2, \dots, \pi_n)$ is {\em envy-free up to one item (EF1)} if for each pair of agents $i, j$, one of the following conditions hold:
	\begin{itemize}
		\item there exists an item $o_j \in \pi_j$ such that $u_i(\pi_i) \geq u_i(\pi_j \setminus \{o_j\})$, or
		\item there exists an item $o_i \in \pi_i$ such that $u_i(\pi_i \setminus \{o_i\}) \geq u_i(\pi_j)$.
	\end{itemize}
\end{definition}

\citet{caragiannis2019unreasonable} introduced the elegant {\em round-robin} algorithm (Algorithm~\ref{alg:RoundRobin}) for computing an EF1 allocation for goods instances and chores instances. This algorithm accepts an arbitrary linear ordering $\sigma$ of the agents called a {\em picking order}. Then, the algorithm proceeds in multiple rounds. In each round, the agents take turns selecting their preferred item among those remaining, in the order dictated by $\sigma$. The rounds continue until every item has been selected.

\begin{algorithm}
	\caption{\citep{caragiannis2019unreasonable} $\textsc{RoundRobin}(I, \sigma)$}
	\label{alg:RoundRobin}
	
	\hspace*{\algorithmicindent} \textbf{Input:} An additive goods instance or an additive chores instance $I = (N, M, U)$, and a picking order $\sigma$.
	
	\hspace*{\algorithmicindent} \textbf{Output:} An EF1 allocation $\pi = (\pi_1, \pi_2, \dots, \pi_n)$.
	\begin{algorithmic}[1] %[1] enables line numbers
		%\Procedure{Objectivize}{$G, u$}
		
		\For{each agent $i = 1, 2, \dots, n$}
		\State $\pi_i \gets \emptyset$ 
		\EndFor
		
		\State $R \gets M$ \Comment{Remaining items}
		
		\State $i \gets$ the first agent in $\sigma$
		\While{$R$ is not empty}
		\State Select any $o^* \in \argmax_{o \in R} u_i(o)$ 
		\State $\pi_i \gets \pi_i \cup \{o^*\}$ \Comment{Give $o^*$ to agent $i$}
		\State $R \gets R \setminus \{o^*\}$
		\State $i \gets$ the next agent in $\sigma$ in a cyclic fashion
		
		\EndWhile

		\State \Return $(\pi_1, \pi_2, \dots, \pi_n)$
		%\EndProcedure
	\end{algorithmic}
\end{algorithm}

\begin{example}[Round-Robin]
	Consider the 3-agent 4-good instance represented by the following matrix.
	$$\begin{blockarray}{ccccc}
		& o_1 & o_2 & o_3 & o_4 \\
		\begin{block}{c[cccc]}
			1&1 & 2 & \mathbf{\underline{0}} & \mathbf{\underline{5}}\\
			2&\mathbf{\underline{2}} & 1 & 0 & 2\\
			3& 1 & \mathbf{\underline{1}} & 1 & 0 \\
		\end{block}
	\end{blockarray}$$
	We use the picking order $(1, 2, 3)$. Because there are 3 agents and 4 goods, round-robin requires 2 rounds. In the first round, agent 1 is first to pick, and picks $o_4$ because agent 1 judges it to have utility 5, which is the highest of the remaining goods.
	
	Agent 2 is second to pick, and must pick from the remaining items $o_1, o_2, o_3$. Of these, agent 2 picks $o_1$ as it has the highest utility of 2.
	
	Agent 3 follows and must pick from the remaining two items $o_2, o_3$. Both of them have the same utility of 1 to agent 3, so agent 3 arbitrarily picks $o_2$.
	
	Finally, we proceed to round 2 and begin with agent 1 again. Agent 1 picks the last remaining item $o_3$. This results in the allocation $(\{o_3, o_4\}, \{o_1\}, \{o_2\})$ which we indicate using the bold and underlined matrix entries.
\end{example}

\begin{theorem}[\citet{caragiannis2019unreasonable}]\label{thm:round-robin}
	If $I$ is an additive goods instance or an additive chores instance, then round-robin (Algorithm~\ref{alg:RoundRobin}) returns an EF1 allocation.
\end{theorem}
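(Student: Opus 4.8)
The plan is to fix the picking order and, after relabelling, assume the agents pick in the order $1, 2, \dots, n$; it then suffices to verify the EF1 condition for an arbitrary ordered pair of agents $(i,j)$. I would write agent $i$'s successive selections as $a^i_1, a^i_2, \dots, a^i_{r_i}$, where $a^i_t$ is the item $i$ takes in round $t$ and $r_i = |\pi_i|$, so that $\pi_i = \{a^i_1, \dots, a^i_{r_i}\}$. The single fact driving everything is the greedy property: by the greedy selection rule of Algorithm~\ref{alg:RoundRobin} together with additivity, whenever $i$ selects an item $o$ at a moment when another item $o'$ is still unselected, we have $u_i(o) \ge u_i(o')$. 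The whole argument is then bookkeeping that pairs up the items of $\pi_i$ and $\pi_j$ so that each comparison is an instance of this inequality, with care taken over the fact that $r_i$ and $r_j$ differ by at most one (since $m$ need not be a multiple of $n$).

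For the goods case I would split on the relative position of $i$ and $j$. If $i$ precedes $j$, then in every common round $t$ agent $i$ picks $a^i_t$ before $j$ picks $a^j_t$, so $a^j_t$ is still available and $u_i(a^i_t) \ge u_i(a^j_t)$; summing these, and noting that a possible surplus item of $i$ (which can only be $a^i_{r_i}$) has nonnegative value, gives $u_i(\pi_i) \ge u_i(\pi_j)$, i.e.\ no envy at all. If instead $j$ precedes $i$, I would shift the pairing by one round, matching $a^i_t$ with $a^j_{t+1}$: at the moment $i$ takes $a^i_t$ the later-round item $a^j_{t+1}$ is still available, so $u_i(a^i_t) \ge u_i(a^j_{t+1})$, and summing shows $u_i(\pi_i) \ge u_i(\pi_j \setminus \{a^j_1\})$, which is EF1 after discarding $j$'s first-picked good (the first bullet of the definition).

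For the chores case the same greedy inequality holds verbatim (now the greedily chosen item is the least-costly available chore), but the resolution flips: I would always discard agent $i$'s own last-picked chore $a^i_{r_i}$, matching the second bullet of the EF1 definition. Using the pairing $a^i_t \leftrightarrow a^j_t$ when $i$ precedes $j$ and $a^i_t \leftrightarrow a^j_{t+1}$ when $j$ precedes $i$, the sum over the items of $\pi_i \setminus \{a^i_{r_i}\}$ dominates a sum over a sub-collection of $\pi_j$; since every omitted element of $\pi_j$ is a chore and hence has nonpositive value, dropping those elements only lowers the bound, so $u_i(\pi_i \setminus \{a^i_{r_i}\}) \ge u_i(\pi_j)$ (the degenerate case $\pi_i = \emptyset$ being immediate, as then $u_i(\pi_i) = 0 \ge u_i(\pi_j)$).

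I expect the main obstacle to be combinatorial rather than conceptual: keeping the off-by-one in the item counts straight and checking that the shifted pairing remains valid at the boundary rounds, where $r_i$ and $r_j$ differ or where the final round is only partially completed. The sign conventions must also be tracked carefully, since the ignored item lies in $j$'s bundle for goods but in $i$'s own bundle for chores; reconciling these two directions is precisely where a careless argument is most likely to slip.
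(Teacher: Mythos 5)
Your proposal is correct and takes essentially the same approach as the paper's proof: you pair the agents' picks round by round and, when the envied agent picks earlier in the order, shift the pairing by one round so that ignoring that agent's first pick (for goods) restores the domination, which is exactly the paper's ``it becomes as if $i$ always picks before $j$'' argument. The only substantive difference is that you spell out the chores case in full --- correctly switching to the second bullet of the EF1 definition by discarding $i$'s own last-picked chore --- whereas the paper dismisses it as ``analogous.''
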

\begin{proof}
	We prove the theorem for goods instances $I$. Chores instances can be proved analogously. Without loss of generality, assume the picking order is $(1, 2, \dots, n)$. Let $\pi$ denote the allocation resulting from the round-robin algorithm and fix a pair of agents $j < i$. Because $j$ always picks a good before $i$ according to the picking order, agent $j$ does not envy agent $i$. On the other hand, it is possible that $i$ envies $j$. However, by ignoring the first good $o_j$ that $j$ picks during the algorithm, it becomes as if $i$ always picks a good before $j$, so $i$ no longer envies $j$. In other words, we have $u_i(\pi_i) \geq u_i(\pi_j \setminus \{o_j\})$ for the good $o_j \in \pi_j$. Thus, $\pi$ is EF1.
\end{proof}

Theorem~\ref{thm:round-robin} shows that round-robin computes an EF1 allocation under two conditions: (1) the agents have additive utility functions and (2) $I$ is a goods instance or a chores instance. It turns out that both of these conditions are necessary, as shown by Examples~\ref{ex:round-robin-fail-non-additive} and \ref{ex:round-robin-fail-mixed}.

\begin{example}[Round-Robin Fails for Non-Additive Utility Functions]\label{ex:round-robin-fail-non-additive}
	Round-robin fails to compute an EF1 allocation for non-additive instances, even when there are only two agents. Consider the instance $I = (N, M, U)$ containing 2 agents and 6 goods $o_1, o_2, \dots, o_6$. The utility functions are defined as follows.
	
	For agent 1, we let $u_1(S) = |S|$ for each subset $S \subset M$. For agent 2, we let
	$$
	u_2(S) = \begin{cases}
		10 & \text{if } |S \cap \{o_1, o_3, o_5\}| \geq 2 \\
		|S| & \text{otherwise}
	\end{cases}
	$$
	Agent 1 has an additive utility function whereas agent 2 does not. Suppose we use the picking order $(1, 2)$. At each step, each agent is presented with the set of remaining items which are all tied with one another because $u_i(S) = 1$ for any singleton $S$. If the ties are always broken by choosing the item with smallest index, then round-robin produces the allocation $\pi = (\{o_1, o_3, o_5\}, \{o_2, o_4, o_6\})$.
	
	Agent 2 envies agent 1 because $u_2(\pi_2) = 3 < 10 = u_2(\pi_1)$. Moreover, for any item $o \in \pi_1$, the set $\pi_1 \setminus \{o\}$ still contains at least two of $o_1, o_3, o_5$, so $u_2(\pi_2) = 3 < 10 = u_2(\pi_1 \setminus \{o\})$. It follows that $\pi$ is not EF1.
\end{example}

The following example found by \citet{aziz2022fair} shows that round-robin does not always produce an EF1 allocation for mixed instances.

\begin{example}[Round-Robin Fails for Mixed Instances \citep{aziz2022fair}]\label{ex:round-robin-fail-mixed}
	Round-robin fails to compute an EF1 allocation for the following 2-agent instance.
	$$\begin{blockarray}{ccccc}
		& o_1 & o_2 & o_3 & o_4 \\
		\begin{block}{c[cccc]}
			1&\mathbf{\underline{2}} & -3 & \mathbf{\underline{-3}} & -3 \\
			2&2 & \mathbf{\underline{-3}} & -3 & \mathbf{\underline{-3}}\\
		\end{block}
	\end{blockarray}$$

	Using the picking order $(1, 2)$, round-robin produces the allocation $\pi$ in which agent 1 receives a good of utility $2$ and a chore of utility $-3$, and agent 2 receives two chores of utility $-3$, indicated by the bold underlined entries. This allocation is not EF1 because even if agent 2 ignores one of their two chores, agent 2 still envies agent 1, because $u_2(\pi_2 \setminus \{o_4\}) = -3 < -1 = u_2(\pi_1)$.
\end{example}

\citet{aziz2022fair} devised a variant of round-robin called {\em double round-robin} (Algorithm~\ref{alg:DoubleRoundRobin}) for computing an EF1 allocation for mixed instances. Essentially, double round-robin consists of two runs of round-robin, in which the second run uses the picking order of the first in reverse. To begin, the algorithm partitions $M$ into two parts $M^+$ and $M^-$, where $M^-$ is the set of objective chores. Then, double round-robin proceeds in two phases. In the first phase, the objective chores are allocated using round-robin with the picking order $(1, 2, \dots, n)$. Then, in the second phase, the remaining items are allocated using round-robin with the reversed picking order $(n, n-1, \dots, 1)$. The resulting allocation is returned.

Double round-robin is equivalent to ordinary round-robin in the special cases of goods instances and chores instances.

\begin{algorithm}
	\caption{\citep{aziz2022fair} $\textsc{DoubleRoundRobin}(I)$}
	\label{alg:DoubleRoundRobin}
	
	\hspace*{\algorithmicindent} \textbf{Input:} An additive instance $I = (N, M, U)$.
	
	\hspace*{\algorithmicindent} \textbf{Output:} An EF1 allocation $\pi$.
	\begin{algorithmic}[1] %[1] enables line numbers
		%\Procedure{Objectivize}{$G, u$}
		
		\State $M^- \gets \{o \in M \mid \forall i \in N, u_i(o) \leq 0\}$ \Comment{i.e.\ objective chores}
		\State $M^+ \gets M \setminus M^-$ \label{line:double:define-M+}
		
		\State Allocate $M^-$ using round-robin with picking order $(1, 2, \dots, n)$ \label{line:double:phase1}\Comment{Phase 1}
		\State Allocate $M^+$ using round-robin with picking order $(n, n-1, \dots, 1)$ \label{line:double:phase2} \Comment{Phase 2}
		
		\State \Return the resulting allocation $\pi$
		%\EndProcedure
	\end{algorithmic}
\end{algorithm}

Before giving the proof of correctness of double round-robin, we first make a useful observation.

\begin{observation}\label{obs:ef1-dummy-item}
	Let $I = (N, M, U)$ be an additive instance and $I^+$ be the instance obtained from $I$ by introducing a dummy item with zero utility to every agent. Then, the following statements hold.
	\begin{enumerate}
		\item If $I$ has an EF1 allocation, then $I^+$ does as well.
		\item Deleting the dummy item from an EF1 allocation of $I^+$ yields an EF1 allocation of $I$.
	\end{enumerate}
\end{observation}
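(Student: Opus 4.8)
The plan is to build everything on a single invariance: by additivity, an item of zero utility changes no agent's valuation of any bundle. Writing $d$ for the dummy item, we have $u_i(\{d\}) = 0$ for every agent $i$, so for any bundle $B$ with $d \notin B$, additivity gives $u_i(B \cup \{d\}) = u_i(B) + u_i(\{d\}) = u_i(B)$. Because EF1 is expressed entirely through quantities of the form $u_i(\pi_k)$ and $u_i(\pi_k \setminus \{o\})$, this invariance will let me transfer EF1 witnesses between $I$ and $I^+$ almost verbatim.

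For statement~(1), I would take an EF1 allocation $\pi$ of $I$ and define $\pi^+$ by appending $d$ to an arbitrary bundle, say $\pi_1$. No original item is moved, so for every pair $(i,j)$ the item witnessing EF1 in $\pi$ still lies in the same bundle of $\pi^+$, and by the invariance the valuations entering the EF1 inequalities are unchanged; hence the same witness certifies EF1 in $\pi^+$. This direction has no real obstacle, since adjoining an item can never destroy an existing witness.

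For statement~(2), I would take an EF1 allocation $\pi^+$ of $I^+$, find the unique bundle $\pi^+_k$ containing $d$, and delete $d$ to obtain an allocation $\pi$ of $I$; all bundle valuations are again preserved. Fix a pair $(i,j)$. If the item witnessing EF1 for this pair in $\pi^+$ is different from $d$, it survives the deletion and, by the invariance, continues to witness EF1 in $\pi$.

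The step I expect to require the most care is the remaining case, in which the dummy $d$ is itself the EF1 witness for $(i,j)$. Then the witnessing inequality is $u_i(\pi^+_i) \geq u_i(\pi^+_j \setminus \{d\})$ (or its symmetric analogue with $d \in \pi^+_i$), which---since $d$ contributes nothing---collapses to $u_i(\pi_i) \geq u_i(\pi_j)$, i.e.\ $i$ does not envy $j$ in $\pi$. I would then argue that no-envy already implies EF1 whenever at least one of $\pi_i, \pi_j$ is nonempty: one can always discard a single item (a good from $\pi_j$, or a suitable item from $\pi_i$) that keeps the relevant comparison non-positive, which follows from additivity by a short case analysis on the signs of the item values. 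The sole genuinely delicate point is the degenerate situation where both $\pi_i$ and $\pi_j$ are empty, which is vacuously EF1 under the standard convention; everything else is a direct transfer of witnesses across the zero-utility invariance.
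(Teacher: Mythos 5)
Your proposal is correct and rests on the same mechanism as the paper's proof: by additivity, a zero-utility item leaves every quantity $u_i(\pi_k)$ unchanged, so EF1 witnesses transfer between $I$ and $I^+$; your part (1) is essentially identical to the paper's. For part (2), however, you are more careful than the paper. The paper's proof simply asserts that deleting the dummy ``does not affect either of the two inequalities,'' which tacitly assumes the witnessing item survives the deletion. You correctly isolate the one case where this fails---the dummy $d$ is itself the witness for a pair $(i,j)$---in which case the surviving information is only plain no-envy, $u_i(\pi_i) \geq u_i(\pi_j)$, and a new witness must be produced. Your sign-based case analysis does produce one whenever some bundle is nonempty: remove an item of $\pi_j$ with $u_i \geq 0$, or an item of $\pi_i$ with $u_i \leq 0$, and in the remaining configuration (all items of $\pi_j$ strictly negative to $i$, all items of $\pi_i$ strictly positive to $i$) removing any item of whichever bundle is nonempty works, since the rest of $\pi_j$ is then worth at most $0 \leq u_i(\pi_i)$ and the rest of $\pi_i$ at least $0 = u_i(\pi_j)$. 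The residual degenerate case $\pi_i = \pi_j = \emptyset$ is real, but it is an artifact of the paper's phrasing of EF1, which literally demands that an item exist; under the standard reading (no envy suffices, equivalently the removed set may be empty) it is vacuous, exactly as you say. So your argument is sound, and it in fact fills a small gap that the paper's two-line proof of (2) glosses over.
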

\begin{proof}
	(1): Suppose $\pi = (\pi_1, \pi_2, \dots, \pi_n)$ is an EF1 allocation of $I$ and let $\pi^+$ be the allocation of $I^+$ obtained by adding the dummy item to an arbitrary bundle of $\pi$. Recall that the EF1 condition between a pair of agents $i, j$ requires there to be an item $o_j \in \pi_j$ such that $u_i(\pi_i) \geq u_i(\pi_j \setminus \{o_j\})$ or an item $o_i \in \pi_i$ such that $u_i(\pi_i \setminus \{o_i\}) \geq u_i(\pi_j)$. Adding an extra dummy item to either $\pi_i$ or $\pi_j$ does not effect the validity of either of these inequalities because such a dummy item has zero utility to every agent, so the EF1 condition holds between $i$ and $j$ for $\pi^+$.
	
	(2): Similarly to the above, deleting a dummy item from the bundle of $\pi^+$ containing it does not affect either of the two inequalities in the EF1 condition. Thus, (2) holds.
\end{proof}

\begin{theorem}[\citet{aziz2022fair}]\label{thm:double-round-robin}
	If $I$ is an additive instance, then double round-robin (Algorithm~\ref{alg:DoubleRoundRobin}) returns an EF1 allocation.
\end{theorem}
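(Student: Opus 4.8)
The plan is to fix an arbitrary ordered pair of agents $(i,j)$ and show that $i$ is EF1 towards $j$, analyzing the two phases of Algorithm~\ref{alg:DoubleRoundRobin} separately and then combining the estimates by additivity. Write $\pi_i = \pi_i^- \cup \pi_i^+$, where $\pi_i^-$ is the part of $i$'s bundle assigned in Phase~1 (the objective chores) and $\pi_i^+$ is the part assigned in Phase~2; by additivity $u_i(\pi_i) = u_i(\pi_i^-) + u_i(\pi_i^+)$, so it is enough to bound the two phases and add. Before doing so I would invoke Observation~\ref{obs:ef1-dummy-item} to pad the instance with dummy items of zero utility so that $|M^-|$ is a multiple of $n$. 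This guarantees that every agent receives exactly $|M^-|/n$ objective chores in Phase~1, which is what makes the pairing arguments below line up round-by-round, and by part~(2) of the Observation an EF1 allocation of the padded instance restricts to an EF1 allocation of the original.

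The engine is the pairing idea already underlying Theorem~\ref{thm:round-robin}: if agent $a$ picks before agent $b$ in a given round of a round-robin pass, then the item $b$ selects later in that round is still available when $a$ picks, so $a$'s pick is at least as valuable \emph{to $a$} as $b$'s; summing these inequalities over rounds yields the no-envy and EF1 estimates. I would then split into two cases. If $i < j$, then $i$ precedes $j$ in Phase~1 (order $(1,\dots,n)$), so same-round pairing gives $u_i(\pi_i^-) \ge u_i(\pi_j^-)$, i.e.\ no envy from the chores. In Phase~2 the order is reversed, so $j$ precedes $i$; pairing $i$'s $r$-th good with $j$'s $(r+1)$-th good and telescoping gives $u_i(\pi_i^+) \ge u_i(\pi_j^+ \setminus \{g\})$, where $g$ is the first good $j$ takes in Phase~2. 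Adding the two estimates yields $u_i(\pi_i) \ge u_i(\pi_j \setminus \{g\})$, so removing the single good $g$ from $j$'s bundle witnesses EF1. The case $i > j$ is symmetric with the phases swapped: now $i$ precedes $j$ in Phase~2, giving no envy from the goods, while the chore envy from Phase~1 is eliminated by deleting one chore from $i$'s \emph{own} bundle.

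The step I expect to be the main obstacle is guaranteeing that a \emph{single} removal suffices once both phases are combined, rather than one removal per phase. This is exactly what reversing the picking order between the phases buys: it forces the envy between any fixed pair to be one-sided --- concentrated entirely in the goods when $i<j$, and entirely in the chores when $i>j$ --- so that the opposite phase supplies a genuine no-envy inequality rather than only an EF1 one. Making this precise requires care with the telescoping boundary term, which is the value of one of $i$'s own Phase~2 items; one must use that Phase~1 items are (weakly) chores and Phase~2 items are (weakly) goods for the relevant comparisons, so that this boundary term carries the favourable sign and can be absorbed. I would therefore isolate the per-phase pairing estimate as a standalone lemma, prove it once for a single round-robin pass, and apply it twice with the orders $(1,\dots,n)$ and $(n,\dots,1)$, checking throughout that the dummy padding keeps the per-round pairings aligned.
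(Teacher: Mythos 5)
Your proposal is correct and follows essentially the same route as the paper's proof: padding with dummy items via Observation~\ref{obs:ef1-dummy-item} so Phase~1 divides evenly, decomposing each bundle by phase and combining via additivity, same-round and shifted-round pairing inequalities, and the key point that the reversed picking order makes envy one-sided, so a single removal suffices (the first Phase-2 pick of $j$ when $i<j$; one's own last Phase-1 chore when $i>j$). The only differences are organizational (isolating the pairing estimate as a standalone lemma), and your gloss that Phase-2 items are weakly goods is the same simplification the paper's own proof makes, since items in $M^+$ are only guaranteed to be goods for at least one agent.
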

\begin{proof}
	Observation~\ref{obs:ef1-dummy-item} allows us to assume that $|M^-| = an$ for some integer $a \geq 0$ without loss of generality by creating sufficiently many dummy items and adding them to the set $M^-$ immediately after Line~\ref{line:double:define-M+} and deleting them from the output.
	
	Fix a pair of agents $i < j$. Let $c_t(i)$ and $c_t(j)$ denote the $t$-th item allocated to agents $i$ and $j$, respectively, in phase 1 (Line~\ref{line:double:phase1}). Similarly, let $g_t(i)$ and $g_t(j)$ denote the $t$-th item allocated to agents $i$ and $j$, respectively, in phase 2 (Line~\ref{line:double:phase2}). Here, $c$ stands for "chore" and $g$ stands for "good", reflecting the fact that phase 1 allocates objective chores.
	
	We first consider agent $i$'s envy toward agent $j$. Since $i$ always picks before $j$ in phase 1, agent $i$ does not envy agent $j$ with respect to the set of items $M^-$ allocated during phase 1. Hence,
	\begin{equation}\label{eq:double-1}
		u_i(\pi_i \cap M^-) \geq u_i(\pi_j \cap M^-)
	\end{equation}
	However, we also need to consider the items in $M^+$ allocated in phase 2. Since $j$ picks before $i$ and both $j$ and $i$ receives the same number of items in phase 2, any envy that $i$ has toward $j$ with respect to $M^+$ disappears if $i$ ignores the first item that $j$ picks, which is $g_1(j)$. So, we have
	\begin{equation}\label{eq:double-2}
		u_i(\pi_i \cap M^+) \geq u_i((\pi_j \cap M^+) \setminus \{g_1(j)\})
	\end{equation}
	By combining Equations~\ref{eq:double-1} and \ref{eq:double-2}, we obtain $u_i(\pi) \geq u_i(\pi_j \setminus \{g_1(j)\})$.
	
	It remains to consider agent $j$'s envy toward agent $i$. Although the analysis here is similar to the above, we include it for completeness. Agent $j$ does not envy agent $i$ with respect to $M^+$ because $j$ picks before $i$ in phase 2. Hence,
	\begin{equation}\label{eq:double-3}
		u_j(\pi_j \cap M^+) \geq u_j(\pi_i \cap M^+)
	\end{equation}
	On the other hand, it is possible that $j$ envies $i$ with respect to the set $M^-$ of items allocated in phase 1, because $i$ picks before $j$ in phase 1. Note that $M^-$ only contains objective chores because any dummy item that we introduced also counts as an objective chore. For each round $t$ of phase 1 except for the last round, $j$ believes the chore $c_t(j)$ that $j$ picks is at least as good as the chore $c_{t+1}(i)$ that $i$ picks in the subsequent round. In other words, for each chore $c(j)$ that $j$ receives during phase 1, except for the chore $c_a(j)$ that $j$ receives in the final round of phase 1, there exists a different corresponding chore $c(i)$ that $i$ receives such that $u_j(c(j)) \geq u_j(c(i))$. Thus, any envy that $j$ has toward $i$ with respect to $M^-$ disappears if $j$ ignores the chore $c_a(j)$, that is,
	\begin{equation}\label{eq:double-4}
		u_j((\pi_j \cap M^-) \setminus \{c_a(j)\} \geq u_j(\pi_i \cap M^+)
	\end{equation}
	By combining Equations~\ref{eq:double-3} and \ref{eq:double-4}, we obtain $u_j(\pi_j \setminus \{c_a(j)\}) \geq u_j(\pi_i)$. Thus, $\pi$ is EF1.
\end{proof}

Another well-known algorithm for finding EF1 allocations is {\em envy-cycle elimination} introduced by \citet{lipton2004approximately}. Envy-cycle elimination is composed of $m$ rounds, each of which allocates a single additional good to an unenvied agent, until all goods have been allocated.

In order to find an unenvied agent, the algorithm crucially depends on an operation defined on the {\em envy graph} of a partial allocation. Given a partial allocation $\pi$, the {\em envy graph} $G(\pi)$ is defined as the directed graph with vertex set $N$ where the vertices correspond to the agents, and there is a directed edge $(i, j)$ if and only if $u_i(\pi_i) < u_i(\pi_j)$ (i.e.\ $i$ envies $j$). Suppose $C = (i_i, i_2, \dots, i_k)$ is a directed cycle in $G$ (called an {\em envy cycle}). To {\em shift the bundles of $\pi$ along $C$} means to reassign the bundles belonging to the agents on the envy cycle $C$ so that each agent $i_p$ on $C$ now receives the bundle belonging to agent $i_{p+1}$, where $i_{k+1} \coloneqq i_1$. The agents who are not on the envy cycle $C$ are unaffected. In other words, each agent on $C$ receives the bundle of the agent whom they envy with respect to $C$.

Clearly, the shifting operation strictly decreases the envy that any agent on $C$ has toward any other agent on $C$, so the sum of all the envy between any two agents (i.e.\ $\sum_{i, j \in N} \max(u_i(\pi_j) - u_i(\pi_i), 0)$) strictly decreases. It follows that repeatedly applying the shifting operation until there are no more envy cycles results in a partial allocation $\pi$ such that $G(\pi)$ contains no envy cycle. In this case, $G(\pi)$ contains a vertex with zero in-degree, which corresponds to an unenvied agent.

\begin{algorithm}
	\caption{\citep{lipton2004approximately} $\textsc{EnvyCycleElimination}(I)$}
	\label{alg:EnvyCycleElimination}
	
	\hspace*{\algorithmicindent} \textbf{Input:} A monotone goods instance $I = (N, M, U)$.
	
	\hspace*{\algorithmicindent} \textbf{Output:} An EF1 allocation $\pi = (\pi_1, \pi_2, \dots, \pi_n)$.
	\begin{algorithmic}[1] %[1] enables line numbers
		%\Procedure{Objectivize}{$G, u$}
		
		\For{each agent $i = 1, 2, \dots, n$}
		\State $\pi_i \gets \emptyset$ 
		\EndFor
		
		\For{each good $o \in M$}

		\State{Construct the envy graph $G(\pi)$}
		
		\While{$G$ contains an envy cycle}\label{line:envy-cycle-while}
		
		\State $C \gets$ an arbitrary envy cycle
		
		\State Shift the bundles of $\pi$ along $C$
		\State{Reconstruct the envy graph $G(\pi)$}
		
		\EndWhile
		
		\State $i \gets$ a vertex of $G(\pi)$ of zero in-degree

		\State $\pi_i \gets \pi_i \cup \{o\}$ \label{line:envy-cycle-allocate}

		\EndFor
		
		\State \Return $\pi = (\pi_1, \pi_2, \dots, \pi_n)$
		%\EndProcedure
	\end{algorithmic}
\end{algorithm}

\begin{theorem}[\citet{lipton2004approximately}]\label{thm:envy-cycle-elimination}
	If $I$ is a monotone goods instance, then envy-cycle elimination (Algorithm~\ref{alg:EnvyCycleElimination}) returns an EF1 allocation.
\end{theorem}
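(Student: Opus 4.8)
The plan is to prove the theorem by induction on the rounds of the algorithm, maintaining the loop invariant that after each iteration of the outer for-loop (i.e.\ after each good is allocated) the current partial allocation $\pi$ is EF1. The empty allocation is vacuously EF1, so the base case is immediate, and applying the invariant after the final good shows that the returned complete allocation is EF1. The work thus reduces to showing that a single round preserves EF1, and a round comprises two distinct operations: the inner while-loop that repeatedly shifts bundles along envy cycles, and the final step that hands the new good $o$ to an unenvied agent.

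First I would analyze the shifting operation. The crucial observation is that when bundles are shifted along an envy cycle $C = (i_1, \dots, i_k)$, each participating agent $i_p$ receives the bundle of the agent $i_{p+1}$ it strictly envied, so $u_{i_p}(\pi_{i_p}^{\mathrm{new}}) > u_{i_p}(\pi_{i_p}^{\mathrm{old}})$, while agents off the cycle keep their bundles. Since shifting merely permutes the fixed collection of bundles among the agents, for every agent $a$ the multiset of values $\{u_a(\pi_b)\}_{b \in N}$ is unchanged and only the threshold $u_a(\pi_a)$ can increase. This yields two things at once. First, EF1 is preserved: for any agent $a$ and any other bundle $B$ in the (unchanged) collection, the old EF1 guarantee supplied a good $g \in B$ with $u_a(\pi_a^{\mathrm{old}}) \geq u_a(B \setminus \{g\})$, and since $u_a(\pi_a^{\mathrm{new}}) \geq u_a(\pi_a^{\mathrm{old}})$ the same witness still works; the only genuinely new comparison, an agent on $C$ evaluating the bundle that used to be its own, is even easier since that bundle is now worth strictly less to it than its new bundle. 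Second, the total envy potential $\Phi(\pi) = \sum_{a,b \in N} \max(u_a(\pi_b) - u_a(\pi_a), 0)$ strictly decreases, because each $a$ on the cycle strictly raises its threshold while its value-multiset is fixed.

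Next I would handle termination and the final allocation. Because shifting only permutes a fixed collection of bundles, the while-loop visits at most $n!$ distinct allocations, and a strictly decreasing potential over this finite state space cannot repeat a state; hence the while-loop terminates with an acyclic envy graph, which necessarily has a vertex $i$ of zero in-degree, i.e.\ an unenvied agent, as asserted in the surrounding discussion. I would then allocate $o$ to $i$ and verify EF1 pairwise. For pairs not involving $i$ nothing changes; for $i$'s envy toward others, monotonicity gives $u_i(\pi_i \cup \{o\}) \geq u_i(\pi_i)$, so $i$'s envy only shrinks. The one new source of envy is some agent $a$ toward $i$: since $i$ was unenvied we had $u_a(\pi_a) \geq u_a(\pi_i)$, and removing the single good $o$ restores $u_a((\pi_i \cup \{o\}) \setminus \{o\}) = u_a(\pi_i) \leq u_a(\pi_a)$, so the EF1 condition holds with $o$ as the ignored item. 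This completes the inductive step.

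I expect the main obstacle to be the shifting step rather than the final allocation. The delicate points are (i) arguing that EF1 is genuinely preserved under a permutation of bundles, in particular the comparison an agent must now make against its former bundle, which the ``own value weakly increases'' observation resolves cleanly; and (ii) justifying termination of the while-loop for general monotone, real-valued utilities, where a strictly decreasing real potential does not by itself guarantee termination, so that the finiteness of the reachable state space is what makes the potential argument rigorous. Framing both facts as consequences of the single statement that shifting weakly increases every agent's value for its own bundle is, I believe, the cleanest route.
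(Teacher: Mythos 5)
Your proposal is correct and follows essentially the same route as the paper's proof: induction over the rounds, the observation that shifting along an envy cycle preserves EF1 because every agent's own bundle weakly improves while the collection of bundles is unchanged, termination via the strictly decreasing total-envy potential, and finally giving the new good to an unenvied agent whose envy can be alleviated by ignoring that good. If anything, your write-up is slightly more careful than the paper's on two points the paper leaves implicit --- the explicit witness argument for why EF1 survives a permutation of bundles, and the remark that the potential argument needs the finiteness of the reachable state space (at most $n!$ permutations of fixed bundles) to be rigorous for real-valued utilities.
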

\begin{proof}
	The initial partial allocation $(\{o_1\}, \emptyset, \dots, \emptyset)$ is clearly EF1. We show that if the partial allocation $\pi$ at the beginning of a round of envy-cycle elimination is EF1, then the resulting partial allocation $\pi'$ at the end of the round is also EF1.
	
	In each round of envy-cycle elimination, the algorithm constructs the envy graph $G(\pi)$ and continually shifts the bundles of $\pi$ along an envy-cycle $C$ of $G(\pi)$, until no more envy-cycles exist. This process terminates because the total envy between all pairs of agents, i.e.\ $\sum_{i, j \in N} \max(u_i(\pi_j) - u_i(\pi_i), 0)$, strictly decreases each time the bundles are shifted along an envy-cycle. Moreover, shifting the bundles of $\pi$ along $C$ preserves EF1 because each agent receives a bundle that is at least as good as the one they had before, and the $n$ bundles themselves remain the same. Thus, when the while-loop on Line~\ref{line:envy-cycle-while} terminates, $\pi$ is a partial EF1 allocation such that $G(\pi)$ contains no envy-cycles. Because $G(\pi)$ contains no envy-cycles, it contains a vertex $i$ of zero in-degree. By the definition of $G(\pi)$, the vertex $i$ corresponds to an unenvied agent. The algorithm allocates an arbitrary unallocated good $o_j$ to $i$ on Line~\ref{line:envy-cycle-allocate}. Doing so possibly causes other agents to envy $i$, but any envy that results can be alleviated by ignoring the item $o_j$, so the resulting partial allocation is EF1.
\end{proof}

Unfortunately, envy-cycle elimination can fail to produce an EF1 allocation for monotone chores instances as noted by \citet{bhaskar2021approximate}. \citet{bhaskar2021approximate} constructed a concrete example containing 3 agents and 6 chores that illustrates this effect, in which the agents have additive utility functions.

Fortunately, this problem is resolved by \citet{bhaskar2021approximate} by making a small modification to the envy-cycle elimination algorithm. Rather than shifting bundles along an arbitrary envy-cycle, we always select a {\em top-trading envy-cycle} defined as follows. The {\em top-trading envy graph} $G^\text{top}(\pi)$ of a partial allocation is the directed graph with vertex set $N$ and a directed edge $(i, j)$ if and only if $i$ envies $j$ {\em the most} out of every agent. More specifically, $(i, j)$ is a directed edge if and only if $u_i(\pi_i) < u_i(\pi_j)$ and $u_i(\pi_j) = \max_{k \in N}(u_i(\pi_k))$. It is possible for there $i$ to have multiple out-neighbours. Note that $G^\text{top}(\pi)$ is a subgraph of the envy graph $G(\pi)$. This modified version of envy-cycle elimination is called {\em top-trading envy-cycle elimination} (Algorithm~\ref{alg:TopTradingEnvyCycleElimination}).

We briefly explain the problem that ordinary envy-cycle elimination encounters with chores allocation, and why it is avoided by making this modification. Suppose that while shifting bundles along an envy-cycle $C$ in ordinary envy-cycle elimination, an agent $i$ receives the bundle $\pi_j$ from an agent $j$ whom $i$ envies. Further suppose that there is a third agent $k$ whom $i$ envies both before and after the bundle shifting operation. Even though $u_i(\pi_j) > u_i(\pi_i)$, it could be that the the worst chore $c_i$ in $\pi_i$ is worse than the worst chore $c_j$ in $\pi_j$. If this were the case, it is possible that ignoring the chore $c_i$ in $\pi_i$ alleviates $i$'s envy toward $k$, but ignoring the chore $c_j$ in $\pi_j$ does not alleviate $i$'s envy toward $k$. This causes the allocation resulting from the operation to not be EF1.

On the other hand, by shifting bundles along a {\em top-trading} envy-cycle $C$, the agent $i$ always receives a bundle $\pi_j$ from an agent $j$ that $i$ envies the most. After receiving such a bundle, $i$ no longer envies anybody else, so the allocation remains EF1.

\begin{theorem}[\citet{bhaskar2021approximate}]
	If $I$ is a monotone chores instance, then top-trading envy-cycle elimination (Algorithm~\ref{alg:TopTradingEnvyCycleElimination}) returns an EF1 allocation. \qed
\end{theorem}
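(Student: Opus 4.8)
The plan is to mirror the inductive structure of the proof of Theorem~\ref{thm:envy-cycle-elimination} for goods, maintaining the invariant that the partial allocation $\pi$ is EF1 at the start of every round and showing it remains EF1 after the round. The algorithm processes the chores one at a time, and the initial empty partial allocation is trivially EF1 since every agent holds the empty bundle and hence envies no one. For the inductive step I would split a single round into its two operations: (i) repeatedly shifting bundles along top-trading envy-cycles of $G^\text{top}(\pi)$ until none remain, and (ii) allocating the current chore to an agent who envies no one.

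First I would handle the allocation step (ii), which is where chores behave dually to goods. For goods one gives the item to an \emph{unenvied} agent (in-degree zero) and repairs any new envy by deleting the freshly added good from the envied bundle. For chores I would instead give the chore to a \emph{non-envious} agent $i$ (a vertex of out-degree zero in $G^\text{top}(\pi)$, which is exactly an agent who envies no one), and repair the new envy by deleting the freshly added chore from $i$'s \emph{own} bundle: since $i$ envied no one, $u_i(\pi_i) \geq u_i(\pi_k)$ for every $k$, so after adding chore $c$ we get $u_i((\pi_i \cup \{c\}) \setminus \{c\}) = u_i(\pi_i) \geq u_i(\pi_k)$, giving EF1 from $i$ toward every $k$. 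Every other agent's envy toward $i$ only weakly decreases (their view of $i$'s bundle drops by a chore) and all remaining pairs are untouched, so EF1 is preserved. To guarantee such a non-envious agent exists, I would argue the shifting loop terminates with $G^\text{top}(\pi)$ acyclic: each shift strictly decreases the total envy $\sum_{i,j}\max(u_i(\pi_j)-u_i(\pi_i),0)$, and since the bundles are fixed during the round while only their assignment to agents varies, there are finitely many configurations and none repeats; a finite acyclic digraph has a sink, which is precisely a non-envious agent.

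The heart of the proof, and the step that genuinely requires top-trading, is showing that a single shift preserves EF1. Here I would distinguish on-cycle and off-cycle agents. An off-cycle agent $a$ keeps its bundle, and the multiset of bundles held by the other agents is unchanged (the shift merely permutes bundles among the cycle's agents); since whether $a$ is EF1 toward another agent depends only on $a$'s utility and the ordered pair of bundles involved, not on who holds them, EF1 from $a$ is preserved automatically. The delicate case is an on-cycle agent $i$, and this is exactly where arbitrary envy-cycles fail: after receiving $\pi_j$ from an agent $j$ it merely envies, $i$ might still envy a third agent $k$, yet the worst chore in the new bundle $\pi_j$ need not repair this envy even though the worst chore in the old bundle $\pi_i$ did. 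Top-trading circumvents this: because the edge $(i,j)$ of $G^\text{top}(\pi)$ certifies $u_i(\pi_j) = \max_{k} u_i(\pi_k)$, after the shift $i$ holds a bundle it values at least as much as every other bundle, so $i$ envies no one and is vacuously EF1 (indeed EF) toward all agents. Combining the two steps closes the induction, and since every chore is eventually allocated the returned allocation is complete and EF1.

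I expect the main obstacle to be the on-cycle argument rather than any computation: the crux is to isolate that the top-trading condition $u_i(\pi_j)=\max_k u_i(\pi_k)$ is precisely the property forcing an on-cycle agent to become non-envious after the shift, and to phrase EF1 as a property of the triple $(u_a,\pi_a,\text{other bundle})$ so that off-cycle preservation becomes immediate. Getting these two formulations right is what makes the whole inductive argument go through.
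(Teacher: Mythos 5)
Your proof is correct and takes essentially the same route as the paper, which states this theorem without proof (citing \citet{bhaskar2021approximate}) but whose surrounding discussion contains precisely your two key points: a shift along a top-trading envy-cycle hands each on-cycle agent the bundle it values most, making it non-envious (off-cycle agents are unaffected since the multiset of bundles is unchanged), and the new chore, given to a non-envious agent, can itself be the ignored item in the EF1 condition. One remark: you allocate each chore to a vertex of zero \emph{out}-degree (a sink, i.e.\ an agent who envies no one), whereas the paper's pseudocode in Algorithm~\ref{alg:TopTradingEnvyCycleElimination} says zero \emph{in}-degree; your reading is the correct one (and matches the original algorithm of \citet{bhaskar2021approximate}), since giving the chore to an agent who still envies someone can destroy EF1 by exactly the mechanism your inductive step rules out.
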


\begin{algorithm}
	\caption{\citep{bhaskar2021approximate} $\textsc{TopTradingEnvyCycleElimination}(I)$}
	\label{alg:TopTradingEnvyCycleElimination}
	
	\hspace*{\algorithmicindent} \textbf{Input:} A chores instance $I = (N, M, U)$ in which agents have monotone utility functions.
	
	\hspace*{\algorithmicindent} \textbf{Output:} An EF1 allocation $\pi = (\pi_1, \pi_2, \dots, \pi_n)$.
	\begin{algorithmic}[1] %[1] enables line numbers
		%\Procedure{Objectivize}{$G, u$}
		
		\For{each agent $i = 1, 2, \dots, n$}
		\State $\pi_i \gets \emptyset$ 
		\EndFor

		\For{each chore $o \in M$}
		\State{Construct the top-trading envy graph $G^\text{top}(\pi)$}
		
		\While{$G^\text{top}(\pi)$ contains a top-trading envy-cycle}
		
		\State $C \gets$ an arbitrary top-trading envy-cycle
		
		\State Shift the bundles of $\pi$ along $C$
		\State{Reconstruct the envy graph $G^\text{top}(\pi)$}
		
		\EndWhile
		
		\State $i \gets$ a vertex of $G^\text{top}(\pi)$ of zero in-degree
		
		\State $\pi_i \gets \pi_i \cup \{o\}$

		\EndFor
		
		\State \Return $\pi = (\pi_1, \pi_2, \dots, \pi_n)$
		%\EndProcedure
	\end{algorithmic}
\end{algorithm}

In addition, \citet{bhaskar2021approximate} also presented a further variant (Algorithm~\ref{alg:DoubleEnvyCycleElimination}) of the envy-cycle elimination algorithm that produces an EF1 allocation for doubly monotone instances, thus allowing for both goods and chores. For convenience, we refer to this variant as {\em double envy-cycle elimination} because we are unaware of an established name for this algorithm in the literature.

Similarly to double round-robin (Algorithm~\ref{alg:DoubleRoundRobin}), double envy-cycle elimination (Algorithm~\ref{alg:DoubleEnvyCycleElimination}) also proceeds in two phases, one for objective chores and one for the other items. In the first phase, the set of items that are a good to at least one agent are allocated using envy-cycle elimination, where the envy graph is defined only using the vertices that correspond to agents for whom the item to be allocated is a good. Specifically, if the current round is attempting to allocate an item $o$, then we construct the envy graph using only agents $i$ to whom $o$ is a good, and ignore the other agents. In the second phase, the objective chores are allocated using top-trading envy-cycle elimination (Algorithm~\ref{alg:TopTradingEnvyCycleElimination}).

\begin{algorithm}
	\caption{\citep{bhaskar2021approximate} $\textsc{DoubleEnvyCycleElimination}(I)$}
	\label{alg:DoubleEnvyCycleElimination}
	
	\hspace*{\algorithmicindent} \textbf{Input:} A doubly monotone instance $I = (N, M, U)$.
	
	\hspace*{\algorithmicindent} \textbf{Output:} An EF1 allocation $\pi = (\pi_1, \pi_2, \dots, \pi_n)$.
	\begin{algorithmic}[1] %[1] enables line numbers
		%\Procedure{Objectivize}{$G, u$}
		
		\State $M^- \gets \{o \in M \mid \forall i \in N, u_i(o) \leq 0\}$ \Comment{i.e.\ objective chores}
		\State $M^+ \gets M \setminus M^-$
		
		\For{each agent $i = 1, 2, \dots, n$}
		\State $\pi_i \gets \emptyset$ 
		\EndFor
		
		\Statex \textit{// Phase 1}
		\For{each item $o \in M^+$}
		\State $V \gets \{i \in N \mid o \text{ is a good to } i\}$
		\State $G \gets$ the subgraph of the envy graph $G(\pi)$ induced by $V$
		
		\While{$G$ contains an envy-cycle}
		\State $C \gets$ an arbitrary envy-cycle
		\State Shift the bundles of $\pi$ along $C$
		\State{Reconstruct the graph $G$}
		\EndWhile
		
		\State $i \gets$ a vertex of $G$ of zero in-degree
		\State $\pi_i \gets \pi_i \cup \{o\}$
		
		\EndFor

		\Statex \textit{// Phase 2}
		\For{each item $o \in M^-$}
		\State Construct the top-trading envy graph $G^\text{top}(\pi) \gets$
		
		\While{$G^\text{top}(\pi)$ contains a top-trading envy-cycle}
		\State $C \gets$ an arbitrary top-trading envy-cycle
		\State Shift the bundles of $\pi$ along $C$
		\State{Reconstruct the graph $G^\text{top}(\pi)$}
		\EndWhile
		
		\State $i \gets$ a vertex of $G$ of zero in-degree
		\State $\pi_i \gets \pi_i \cup \{o\}$
		
		\EndFor
		
		\State \Return $\pi = (\pi_1, \pi_2, \dots, \pi_n)$
		%\EndProcedure
	\end{algorithmic}
\end{algorithm}

\begin{theorem}[\citet{bhaskar2021approximate}]
	If $I$ is a doubly monotone instance, then double envy-cycle elimination (Algorithm~\ref{alg:DoubleEnvyCycleElimination}) returns an EF1 allocation. \qed
\end{theorem}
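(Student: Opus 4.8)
The plan is to prove that the invariant ``the current partial allocation is EF1'' is maintained after each item is placed, and then conclude that the final complete allocation is EF1. Since double envy-cycle elimination processes $M^+$ in Phase~1 and $M^-$ in Phase~2, I would argue the two phases separately and chain them: Phase~2 begins from the allocation produced by Phase~1, so it suffices that each phase preserves EF1 and that Phase~1 leaves behind an EF1 allocation. Both phases terminate because each shift strictly decreases the total envy $\sum_{i,j} \max(u_i(\pi_j) - u_i(\pi_i), 0)$, exactly as in the earlier proofs.

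For Phase~1 I would actually maintain a \emph{stronger} invariant than EF1: whenever an agent $i$ envies an agent $j$, the envy is eliminable by deleting a single \emph{good} (with respect to $i$) from $\pi_j$, i.e.\ there is $g \in \pi_j$ with $u_i(\pi_i) \ge u_i(\pi_j \setminus \{g\})$. The reason for this strengthening is that this ``goods-style'' condition is robust under the ordinary cycle-shifting of Algorithm~\ref{alg:EnvyCycleElimination}: if $i$ trades up to a bundle $\pi'_i$ with $u_i(\pi'_i) \ge u_i(\pi_i)$ while the multiset of bundles is merely permuted, the same good $g$ still witnesses the condition, just as in the proof of Theorem~\ref{thm:envy-cycle-elimination}. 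It then remains to check that placing the current item $o$ on a zero-in-degree vertex $i$ of the restricted graph $G$ preserves the invariant, which I would do by splitting on the two types of agents: for an agent $k$ to whom $o$ is a good (so $k \in V$ and $k$ did not envy $i$ before placement), any new envy of $k$ toward $i$ is witnessed by the freshly added good $o$ itself, since $u_k(\pi_k) \ge u_k(\pi_i) = u_k((\pi_i \cup \{o\}) \setminus \{o\})$; for an agent $k$ to whom $o$ is a chore, adding $o$ to $\pi_i$ cannot raise $u_k(\pi_i)$, so $k$'s situation only improves and the old witness still works. This shows Phase~1 ends with an allocation that is in particular EF1.

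For Phase~2 I would revert to the plain EF1 invariant (which the output of Phase~1 satisfies) and add the objective chores one at a time. The two operations to verify are top-trading cycle-shifting and the placement of a new chore. For shifting, I would use the defining feature of the top-trading graph: an agent on a top-trading cycle receives a bundle it values at least as highly as any bundle, so after the shift it envies no one and is trivially EF1 toward everyone, while every unaffected pair inherits its EF1 status from the permuted bundles; this is precisely the repair that plain shifting fails to achieve for chores, as discussed before Algorithm~\ref{alg:TopTradingEnvyCycleElimination}. For placement, I would give the chore $o$ to an agent $i$ that envies no one (a sink of the now-acyclic top-trading graph). Since $o$ is an objective chore, adding it to $\pi_i$ cannot increase any other agent's valuation of $\pi_i$, so no new envy toward $i$ is created; and since $i$ envied no one beforehand, any envy $i$ now feels is absorbed by deleting the chore $o$ from its own bundle, i.e.\ $u_i((\pi_i \cup \{o\}) \setminus \{o\}) = u_i(\pi_i) \ge u_i(\pi_j)$ for every $j$.

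The step I expect to be the main obstacle is isolating the right invariant for Phase~1 and seeing why it, rather than plain EF1, is what survives ordinary shifting. The delicate cross-type pairs --- where the item just placed is a good to one agent of a pair and a chore to the other --- are where mixed instances genuinely differ from pure goods or pure chores instances, and getting the bookkeeping right there is the crux. A secondary subtlety is the hand-off between the phases: one must confirm that the goods-style guarantee from Phase~1 degrades gracefully to ordinary EF1 so that the top-trading analysis of Phase~2 starts cleanly, and that introducing chores never reactivates envy among the goods allocated in Phase~1 beyond what the single-chore-deletion clause of EF1 can absorb.
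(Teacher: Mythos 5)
Your proof is correct, and there is in fact nothing in the dissertation to compare it against: the theorem is stated without proof, deferring to \citet{bhaskar2021approximate}. Your reconstruction is essentially the original argument. The key insight --- that Phase~1 must maintain the strengthened ``goods-style'' invariant (envy always eliminable by deleting a single good from the \emph{envied} bundle), precisely because that witness sits in a bundle that ordinary cycle-shifting merely permutes, while the envier's own bundle only improves --- is exactly what makes the doubly monotone case go through, and your case split at the placement step (agents in $V$ gain at most the fresh witness $o$; agents outside $V$ see $\pi_i$ only get worse, so old witnesses persist) is sound. The Phase~2 analysis (top-trading shifts leave on-cycle agents envying no one, off-cycle pairs inherit EF1 since the condition depends only on the two bundles, and a chore handed to a non-envious agent is absorbed by the own-bundle-deletion clause) is likewise correct.

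One point deserves emphasis, because your proof silently corrects what appears to be a transcription error in Algorithm~\ref{alg:DoubleEnvyCycleElimination} (and in Algorithm~\ref{alg:TopTradingEnvyCycleElimination}). In Phase~2 you give each chore to an agent that \emph{envies no one}, i.e.\ a vertex of zero \emph{out}-degree (a sink) of the now-acyclic top-trading envy graph, whereas the printed pseudocode selects a vertex of zero \emph{in}-degree. Your rule is the right one, and with the printed rule the theorem is false: take two agents and two objective chores $c_1, c_2$, each of utility $-1$ to both agents. Phase~2 first gives $c_1$ to (say) agent $1$; then agent $1$ envies agent $2$, so the only edge of $G^{\text{top}}(\pi)$ is $(1,2)$, the unique zero in-degree vertex is agent $1$, and the printed rule forces $c_2$ onto agent $1$ as well, yielding the allocation $(\{c_1, c_2\}, \emptyset)$, which is not EF1. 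Under your sink rule the chore goes to agent $2$ instead, and EF1 is preserved exactly as you argue: an objective chore creates no new envy toward its receiver, and the receiver's own new envy is absorbed by deleting the chore just received. So your argument is sound as written, but be aware that it proves the theorem for the algorithm as \citet{bhaskar2021approximate} actually define it; the pseudocode's ``zero in-degree'' in Phase~2 should read ``zero out-degree'' (equivalently, an agent that envies no one) for the statement to hold.
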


Notably, double envy-cycle elimination (Algorithm~\ref{alg:DoubleEnvyCycleElimination}) is a strict improvement over double round-robin (Algorithm~\ref{alg:DoubleRoundRobin}). This is because while double round-robin only does so for additive instances as shown by Theorem~\ref{thm:double-round-robin} and Example~\ref{ex:round-robin-fail-non-additive}, double envy-cycle elimination produces EF1 allocations for doubly monotone instances, which are more general than additive instances. (Recall that double round-robin is equivalent to round-robin for goods instances.)

\subsection{EFX Allocations}

EFX allocations are another popular relaxation of EF allocations. They are similar to EF1 allocations in spirit, but are much more stringent. The definition of EFX allocations we give in this section only apply to {\em additive instances}. Adapting the definition to account for non-additive instances require additional technical considerations that are beyond the scope of this work. For a discussion on EFX allocations for non-additive instances, we refer the reader to \citet{berczi2024envy}.

As warm-up, we first give the original definition of EFX allocations for goods instances, before giving its generalizations to chores instances.

\begin{definition}[EFX$_-$ Allocations of Goods \citep{gourves2014near,caragiannis2019unreasonable}]
	We say that an allocation $\pi = (\pi_1, \pi_2, \dots, \pi_n)$ of goods is {\em envy-free up to any positively-valued good (EFX$_-$)} if for each pair of agents $i, j$ and each item $o_j \in \pi_j$ such that $u_i(o_j) > 0$, we have $u_i(\pi_i) \geq u_i(\pi_j \setminus \{o_j\})$.
\end{definition}

The idea is to allow agent $i$ to envy agent $j$ to a small degree --- agent $i$'s envy must be alleviated if agent $i$ ignores a good in agent $j$'s bundle with the least positive marginal utility to $i$. A stronger version of EFX called EFX$_0$ was introduced by \citet{plaut2020almost}. In the stronger version, agent $i$'s envy must be alleviated if agent $i$ ignores a good in agent $j$'s bundle with the least marginal utility to $i$, even such a good has zero marginal utility.

\begin{definition}[EFX$_0$ Allocations of Goods \citep{plaut2020almost}]
	We say that an allocation $\pi = (\pi_1, \pi_2, \dots, \pi_n)$ of goods is {\em envy-free up to any good (EFX$_0$)} if for each pair of agents $i, j$ and each item $o_j \in \pi_j$, we have $u_i(\pi_i) \geq u_i(\pi_j \setminus \{o_j\})$.
\end{definition}

In this dissertation, we mainly consider the stronger version, EFX$_0$.

The idea behind EFX allocations can adapted to chores instances naturally by allowing agent $i$ to envy agent $j$ as long as agent $i$'s envy can be alleviated by ignoring a chore in agent $i$'s bundle. Similarly to goods, two versions of EFX can be defined.

\begin{definition}[EFX$_-$ Allocations of Chores]
	We say an allocation $\pi = (\pi_1, \pi_2, \dots, \pi_n)$ of chores is {\em envy-free up to any negatively-valued chore (EFX$_-$)} if for each pair of agents $i, j$ and each item $o_i \in \pi_i$ such that $u_i(o_i) < 0$, we have $u_i(\pi_i \setminus \{o_i\}) \geq u_i(\pi_j)$.
\end{definition}

\begin{definition}[EFX$_0$ Allocations of Chores]
	We say an allocation $\pi = (\pi_1, \pi_2, \dots, \pi_n)$ of chores is {\em envy-free up to any chore (EFX$_0$)} if for each pair of agents $i, j$ and each item $o_i \in \pi_i$, we have $u_i(\pi_i \setminus \{o_i\}) \geq u_i(\pi_j)$.
\end{definition}

We remark that the notion of EFX can be further generalized to the mixed instances, but we do not explicitly discuss this case as very little is known, and it is beyond the scope of our work. We refer the interested reader to \citet{berczi2024envy}.

EFX$_0$ allocations can alternatively be defined using {\em strong envy}.

\begin{definition}[Strong Envy]
	Let $\pi = (\pi_1, \pi_2, \dots, \pi_n)$ be an allocation. We say that agent $i$ {\em strongly envies} agent $j$ if one of the following conditions hold:
	\begin{itemize}
		\item for some item $o_j \in \pi_j$ that is a good to $i$, we have $u_i(\pi_i) < u_i(\pi_j \setminus \{o_j\})$; or
		\item for some item $o_i \in \pi_i$ that is a chore to $i$, we have $u_i(\pi_i \setminus \{o_i\}) < u_i(\pi_j)$.
	\end{itemize}
\end{definition}

The next proposition follows immediately.

\begin{proposition}
	For both goods and chores instances, an allocation is EFX$_0$ if and only if no agent strongly envies another. \qed
\end{proposition}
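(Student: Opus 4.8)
The plan is to prove the biconditional by directly unfolding the two definitions, treating goods instances and chores instances separately. The crucial observation, which does all the work, is that in a goods instance every item is a good to every agent, so no bundle $\pi_i$ can contain an item that is a chore to agent $i$; symmetrically, in a chores instance every item is a chore, so no bundle $\pi_j$ can contain an item that is a good to agent $i$. Consequently, in each instance type exactly one of the two clauses in the definition of strong envy is ever applicable, and that surviving clause coincides verbatim with the corresponding EFX$_0$ condition.

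First I would handle the goods case. Fix a goods instance and an allocation $\pi$. Because every item is a good to every agent, the second clause of the strong-envy definition---which requires an item $o_i \in \pi_i$ that is a chore to $i$---can never be triggered, since $\pi_i$ contains no item that is a chore to $i$. Hence agent $i$ strongly envies $j$ precisely when there is an item $o_j \in \pi_j$, which is automatically a good to $i$, satisfying $u_i(\pi_i) < u_i(\pi_j \setminus \{o_j\})$. Negating this over all ordered pairs $i, j$ and all $o_j \in \pi_j$ gives exactly $u_i(\pi_i) \geq u_i(\pi_j \setminus \{o_j\})$, which is the definition of EFX$_0$ for goods. Thus $\pi$ is EFX$_0$ if and only if no agent strongly envies another.

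The chores case is entirely symmetric. Fix a chores instance and an allocation $\pi$. Since every item is a chore to every agent, the first clause of the strong-envy definition---requiring an item $o_j \in \pi_j$ that is a good to $i$---is vacuous, because $\pi_j$ contains no such item. Therefore agent $i$ strongly envies $j$ exactly when there is an item $o_i \in \pi_i$ with $u_i(\pi_i \setminus \{o_i\}) < u_i(\pi_j)$, and negating this over all pairs and all such items yields precisely the EFX$_0$ condition for chores.

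I expect no genuine obstacle here: the statement is an immediate consequence of the definitions once one records that the instance type suppresses one of the two strong-envy clauses. The only point deserving a moment's care is to confirm that in a goods instance the qualifier "that is a good to $i$" (resp.\ "that is a chore to $i$" in a chores instance) is satisfied by \emph{every} item of the relevant bundle, so that the restricted quantifier appearing in the strong-envy definition collapses to the unrestricted quantifier appearing in the EFX$_0$ definition.
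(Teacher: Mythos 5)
Your overall approach---unfolding the two definitions and observing that the instance type renders one of the two strong-envy clauses inoperative---is exactly the argument the paper intends: the paper offers no proof at all (the proposition is stated as following "immediately"), and your write-up is that immediate argument made explicit. The quantifier-collapse half of your argument is sound: since "good" is defined by a weak inequality, every item of $\pi_j$ in a goods instance is a good to $i$, so the restricted quantifier in the first strong-envy clause ranges over all of $\pi_j$ and matches the EFX$_0$ definition for goods; symmetrically for the second clause in chores instances.

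The gap is in the suppression claim. Under the paper's definitions, "good to $i$" and "chore to $i$" are \emph{not} mutually exclusive: an item whose marginal utility is always zero satisfies both weak inequalities, so it is simultaneously a good and a chore. Hence your assertion that in a goods instance "no bundle $\pi_i$ can contain an item that is a chore to agent $i$" is false whenever zero-valued items exist, and the second clause can in fact fire. Concretely, take two agents with additive utilities $u_1(o_1)=u_2(o_1)=1$, $u_1(o_2)=u_2(o_2)=0$ and the allocation $\pi=(\{o_2\},\{o_1\})$. This is a goods instance and $\pi$ satisfies the EFX$_0$ definition for goods, since $u_1(\pi_1)=0\geq u_1(\pi_2\setminus\{o_1\})=0$; yet $o_2\in\pi_1$ is (weakly) a chore to agent $1$ and $u_1(\pi_1\setminus\{o_2\})=0<1=u_1(\pi_2)$, so agent $1$ strongly envies agent $2$. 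A symmetric example (a zero-valued item in $\pi_j$ in a chores instance) breaks the other case. So, read literally, the direction "EFX$_0$ implies no strong envy" fails---both for your proof and for the proposition as stated; the flaw is inherited from the definitions rather than introduced by you. The statement, and your proof verbatim, become correct under the evidently intended reading in which the two categories are exclusive (e.g., "chore to $i$" in the strong-envy definition is reserved for items with strictly negative marginal utility somewhere, and symmetrically for "good to $i$"), or if zero-marginal-utility items are excluded. You should make that reading explicit rather than deriving the exclusivity from the definitions, since the definitions do not deliver it.
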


It follows immediately from the definition that both EFX$_0$ and EFX$_-$ allocations are also EF1. On the other hand, EF1 allocations are not necessarily EFX$_-$ or EFX$_0$.

\begin{example}[An EF1 Allocation that is not EFX]
	Consider the instance represented by the following matrix.
	$$\begin{blockarray}{cccc}
		& o_1 & o_2 & o_3 \\
		\begin{block}{c[ccc]}
			1&3 & \mathbf{\underline{2}} & 1 \\
			2&\mathbf{\underline{3}} & 2 & \mathbf{\underline{1}}\\
		\end{block}
	\end{blockarray}$$
	The allocation $\pi = (\{o_2\}, \{o_1, o_3\})$ indicated by the bold underlined entries is EF1 but neither EFX$_-$ nor EFX$_0$. In $\pi$, agent 1 envies agent 2, but not vice versa. This is because agent 1's envy toward agent 2 can be alleviated by ignoring the good $o_1 \in \pi_2$ but not the good $o_3 \in \pi_2$.
\end{example}

Unfortunately, the existence problem of EFX allocations has proven to be notoriously difficult and has even been referred to as "fair division's most enigmatic question" by \citet{procaccia2020technical}. To date, the existence of EFX allocations have only been established in several limited settings, and no counterexample is known for additive instances to the best of our knowledge.

We now highlight some positive results that are currently known.

\begin{theorem}[\citet{plaut2020almost}]\label{thm:efx-goods}
	A goods instance has an EFX$_0$ allocation if any of the following hold.
	\begin{enumerate}
		\item agents have identical monotone utility functions;
		\item $n=2$ and agents have monotone utility functions.
	\end{enumerate}
\end{theorem}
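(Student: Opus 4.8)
The plan is to prove each part by constructing an allocation that minimizes strong envy, then arguing minimality forces EFX$_0$. For part (1), since all agents share a common utility function $u$, I would consider the collection of all complete allocations and select one that is optimal according to a carefully chosen ordering. The natural choice is to pick an allocation $\pi$ maximizing the minimum bundle value $\min_i u(\pi_i)$, and among those, maximizing the second-smallest, and so on — that is, take the leximin-optimal allocation with respect to the sorted vector of bundle utilities. I would then argue by contradiction: suppose some agent $i$ strongly envies agent $j$, so there is a good $o \in \pi_j$ with $u(\pi_i) < u(\pi_j \setminus \{o\})$. The key move is to transfer $o$ from $j$ to $i$, forming a new allocation $\pi'$ with $\pi'_i = \pi_i \cup \{o\}$ and $\pi'_j = \pi_j \setminus \{o\}$. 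Since utilities are identical, I would show both $u(\pi'_i) > u(\pi_i)$ and $u(\pi'_j) \geq u(\pi_i)$ (the latter from the strong envy inequality together with monotonicity giving $u(\pi_j \setminus \{o\}) > u(\pi_i)$), which should contradict leximin-optimality because the sorted utility vector strictly improves in the leximin order.

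\medskip

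\textbf{Part 2: Two agents with monotone utility functions}

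For part (2), with $n=2$ but distinct utility functions $u_1, u_2$, the leximin argument breaks down because the two agents disagree on bundle values, so I would instead use a cut-and-choose style argument tailored to EFX$_0$. The idea is to have agent $1$ partition $M$ into two bundles $(A, B)$ that are EFX$_0$ from agent $1$'s own perspective — meaning agent $1$ would not strongly envy either bundle regardless of which they receive — and then let agent $2$ choose their preferred bundle, leaving the other for agent $1$. To produce agent $1$'s partition, I would again invoke a minimization: among all partitions of $M$ into two parts, agent $1$ selects one that is \emph{balanced} in the sense of being EFX$_0$-feasible for themselves. The existence of such a self-EFX partition for a single monotone valuation is where part (1)'s single-agent reasoning can be reused, since a one-agent instance is trivially identical-valuation. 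After agent $2$ picks, agent $2$ clearly does not strongly envy agent $1$, and agent $1$ does not strongly envy agent $2$ because the partition was constructed to be EFX$_0$ from agent $1$'s viewpoint no matter which part they end up with.

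\medskip

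The main obstacle I anticipate is making the self-EFX partition in part (2) precise enough that it survives agent $2$'s adversarial choice. It is not obvious that a single agent can always split the items into two bundles each of which they regard as EFX$_0$-acceptable whichever one they keep; this requires care because monotone (non-additive) valuations can behave erratically, and the naive balanced partition may fail the EFX$_0$ condition on one side. I expect the resolution to hinge on choosing, among all partitions $(A,B)$ with (say) $u_1(A) \geq u_1(B)$, one that minimizes $u_1(A)$ — the most balanced partition from agent $1$'s view — and then showing that removing any single good from the larger bundle $A$ cannot leave it strictly more valuable than $B$, which is exactly EFX$_0$. Verifying this minimality-forces-EFX step for general monotone valuations, and confirming it interacts correctly with agent $2$'s choice, is the crux of the proof.
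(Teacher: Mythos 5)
Your overall architecture---optimize over allocations for part (1), then cut-and-choose for part (2)---is the same as the paper's, but the specific objective you chose for part (1), plain leximin on the sorted utility vector, does not work for monotone valuations, and this is a genuine gap rather than a technicality. The failure is exactly at the strict inequality $u(\pi'_i) > u(\pi_i)$ you need after transferring the good $o$: a monotone valuation can assign $o$ zero marginal utility with respect to $\pi_i$, so $u(\pi_i \cup \{o\}) = u(\pi_i)$ is possible, and then the transfer does not improve the leximin vector (it can even strictly worsen it, since $u(\pi_j \setminus \{o\})$ may lie far below $u(\pi_j)$), so no contradiction with leximin-optimality arises. Concretely, take five goods, two agents, and the monotone valuation $u(S) = 1$ if $|S| \geq 3$ and $u(S) = 0$ otherwise. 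Every allocation has sorted utility vector $(0,1)$, so the allocation $(\{o_1\}, \{o_2,o_3,o_4,o_5\})$ is leximin-optimal; yet the first agent strongly envies the second, because removing any single good from the large bundle still leaves utility $1 > 0$. This zero-marginal-utility issue is precisely why the paper (following \citet{plaut2020almost}) uses \emph{leximin++} rather than leximin: ties in utility are broken by bundle \emph{size}, so transferring $o$ weakly increases the utility and strictly increases the size of the relevant worst-ranked bundle while $\pi_j \setminus \{o\}$ stays strictly above $u(\pi_i)$, and the leximin++ objective strictly improves. Your argument needs that tie-breaking; without it the contradiction evaporates.

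Part (2) of your proposal is structurally the paper's proof: cut-and-choose in which the cutter's partition is self-EFX$_0$, obtained by applying part (1) to the two-agent instance where \emph{both} agents carry $u_1$ (note it is a two-agent identical-valuation instance, not a ``one-agent'' instance), and this reduction is sound once part (1) is repaired. However, the fallback you propose for constructing the self-EFX partition---among partitions with $u_1(A) \geq u_1(B)$, minimize $u_1(A)$---fails for the same reason as part (1). In the example above, every partition has larger-side value $1$, so $(M, \emptyset)$ is a minimizer; but removing any good from $M$ leaves value $1 > 0 = u_1(\emptyset)$, so minimality does not force EFX$_0$. Repairing this requires secondary tie-breaking (e.g.\ among minimizers, maximize $u_1(B)$, then compare sizes), which essentially reconstructs leximin++. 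So keep your cut-and-choose skeleton, but derive the cutter's partition from the corrected part (1) rather than from the balanced-partition argument.
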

\begin{proof}
	(1): \citet{plaut2020almost} showed that a special kind of allocations called leximin++ allocation is EFX$_0$ in this case. A leximin++ allocation is defined using an optimization problem. First, maximize the utility that the worst-off agent receives, and then subject to that, maximize the size of the bundle allocated to the worst-off agent. Then, maximize the utility that the second worst-off agent receives, and then subject to that, maximize the size of the bundle allocated to the second worst-off agent. Continue this procedure until all agents have been considered. An allocation is leximin++ if it can be obtained this way.
	
	To see that a leximin++ allocation $\pi$ is EFX$_0$, assume $u_i(\pi_1) \leq u_i(\pi_2) \leq \dots \leq u_i(\pi_n)$ for all agents $i$ without loss of generality. If $\pi$ is not EFX$_0$, then there is a pair $i < j$ and a good $o_j \in \pi_j$ such that $u_i(\pi_i) < u_i(\pi_j \setminus \{o_j\}$. In this case, one can move the item $o_j$ from $\pi_j$ to $\pi_i$ and increase both the utility and the size of the $i$-th worst bundle in $\pi$, contradicting the choice of $\pi$ as a leximin++ allocation.
	
	(2): A divide-and-choose approach can be applied here. By pretending both agents have the same utility function, we can use (1) to obtain an allocation $\pi$ in which agent 1 does not strongly agent 2 regardless of which of the two bundles agent 1 receives. By letting agent 2 pick their preferred bundle in $\pi$ first and giving agent 1 the other bundle, we ensure that agent 1 does not strongly envy agent 2 and that agent 2 does not envy agent 1. Such an allocation is EFX$_0$.
\end{proof}

EFX$_0$ allocations also exist for additive 2-valued goods instances (i.e.\ there exist $a \geq b \geq 0$ such that for each agent $i$ and each good $o_j$, we have $u_i(o_j) \in \{a, b\}$).

\begin{theorem}[\citet{amanatidis2021maximum}]\label{thm:efx-additive-2-valued}
	EFX$_0$ allocations exist for additive 2-valued goods instances. \qed
\end{theorem}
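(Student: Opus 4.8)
The plan is to first normalize the instance and dispose of the degenerate cases, then exhibit a single concrete candidate allocation and verify it is EFX$_0$ by an exchange argument. First I would scale every utility function so that the two values become $\{1,\alpha\}$ with $0 \le \alpha \le 1$; if the larger value is $0$ then every good is worthless and any allocation is trivially EFX$_0$. When $\alpha=1$ every good is worth $1$ to every agent, so all agents in fact share one identical valuation and Theorem~\ref{thm:efx-goods}(1) supplies an EFX$_0$ allocation. When $\alpha=0$ the instance is binary, and when $m<n$ one can hand out singletons (each nonempty bundle is a single good, so deleting any good from it leaves value $0$, trivially satisfying EFX$_0$). This reduces the real work to the regime $0<\alpha<1$ with bundles that may contain several goods, where each agent partitions the goods into \emph{big} goods (worth $1$ to that agent) and \emph{small} goods (worth $\alpha$), with the big/small labels differing across agents.

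For the main case I would take a Maximum Nash Welfare allocation $\pi$, i.e.\ one maximizing $\NW(\pi)=\prod_{i\in N} u_i(\pi_i)$ (maximizing the number of agents with positive utility first, if some bundle is empty). The defining optimality of $\pi$ is that no single-good transfer can increase the product: for every pair $i,j$ and every $g\in\pi_j$,
\[
\bigl(u_i(\pi_i)+u_i(g)\bigr)\bigl(u_j(\pi_j)-u_j(g)\bigr)\le u_i(\pi_i)\,u_j(\pi_j),
\]
which rearranges to $u_i(g)\,u_j(\pi_j)\le u_j(g)\,\bigl(u_i(\pi_i)+u_i(g)\bigr)$.

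Next I would argue by contradiction. Suppose $\pi$ is not EFX$_0$, so some agent $i$ strongly envies $j$; let $g^\ast\in\pi_j$ be a good of \emph{least} value to $i$ in $\pi_j$, witnessing $u_i(\pi_i)<u_i(\pi_j)-u_i(g^\ast)$, i.e.\ $u_i(\pi_i)+u_i(g^\ast)<u_i(\pi_j)$. Plugging $g=g^\ast$ into the rearranged optimality inequality and combining gives
\[
\frac{u_i(g^\ast)}{u_i(\pi_j)}<\frac{u_j(g^\ast)}{u_j(\pi_j)}.
\]
Now I would exploit the bivalued structure, splitting on the labels of $g^\ast$ under $u_i$ and $u_j$. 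The cleanest case is when $g^\ast$ is big for $i$: then every good of $\pi_j$ is big for $i$, so $u_i(\pi_j)=|\pi_j|$ and the left side equals $1/|\pi_j|$; if moreover $g^\ast$ is small for $j$, the right side is $\alpha/u_j(\pi_j)\le \alpha/(\alpha|\pi_j|)=1/|\pi_j|$ (using $u_j(\pi_j)\ge\alpha|\pi_j|$, since every good is worth at least $\alpha$ to $j$), contradicting strictness. This illustrates how the two values force the required contradiction.

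The hard part will be the residual cases in which a single-good transfer does \emph{not} immediately close the argument --- notably when $i$ and $j$ both regard $g^\ast$ as small (there the inequality only yields $u_j(\pi_j)<u_i(\pi_j)$, i.e.\ $i$ sees strictly more big goods in $\pi_j$ than $j$ does) and when both regard it as big. The obstacle is exactly the discrepancy between how $i$ and $j$ value the same bundle, which single-good moves cannot fully control. To overcome this I would strengthen the choice of $\pi$ by a secondary criterion --- for instance breaking ties among Nash-optimal allocations leximin-wise on the sorted utility profile --- and analyze a \emph{two-good} local move (transfer a big-for-$i$ good of $\pi_j$ to $i$ while returning a small good), showing it strictly improves the refined objective and so cannot exist at the optimum. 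Verifying that this refined local optimality rules out every remaining case, using only the relations among $1$, $\alpha$, and the big/small counts of the two bundles, is where the bulk of the casework lies and is the crux of the proof.
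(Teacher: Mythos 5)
Your proposal takes the same route the paper itself relies on --- existence via Maximum Nash Welfare, i.e.\ reconstructing the cited fact that MNW implies EFX$_0$ for additive 2-valued goods instances (Theorem~\ref{thm:2-valued-efx}) --- but it stops exactly where that theorem's difficulty begins, so there is a genuine gap. The single-transfer optimality inequality plus strong envy closes only the case where $g^\ast$ is big for $i$ and small for $j$; everything else is deferred to ``residual cases,'' and the proposed remedy (a leximin tie-break among MNW allocations plus a two-good exchange) is a direction, not an argument: no claim is verified, and it is precisely this exchange analysis that constitutes the cited proof. In fact the both-big case can still be handled with single transfers chosen more cleverly (your own inequality $u_j(\pi_j)<|\pi_j|$ forces a $j$-small, $i$-big good $g'\in\pi_j$, and transferring $g'$ gives $\alpha|\pi_j|\le u_j(\pi_j)\le\alpha\bigl(u_i(\pi_i)+1\bigr)<\alpha|\pi_j|$, a contradiction), which shows the truly hard case is when $g^\ast$ is small for $i$: there, one-sided transfers provably cannot close the argument, two-sided swaps are needed, and your sketch contains none of that casework. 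As written, the proposal is a plan for a proof rather than a proof.

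There are also two technical lapses that would need repair even in the parts you do argue. First, the first-order MNW inequality
$\bigl(u_i(\pi_i)+u_i(g)\bigr)\bigl(u_j(\pi_j)-u_j(g)\bigr)\le u_i(\pi_i)\,u_j(\pi_j)$
presupposes the plain product is being maximized over positive utilities; under the paper's actual (lexicographic) definition of MNW --- first maximize the set of agents with positive utility, then the product --- the inequality can fail or be vacuous when $u_i(\pi_i)=0$ or $u_j(\pi_j)=u_j(g)$, and the strong-envy scenario with $u_i(\pi_i)=0$ is exactly one you must confront. Second, the binary case $\alpha=0$ is never actually disposed of: you only treat $m<n$, yet your main-case manipulations divide by $u_j(\pi_j)$ and use $\alpha/(\alpha|\pi_j|)$, which degenerate at $\alpha=0$, so binary instances with $m\ge n$ fall through both halves of your case split.
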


Another important setting in which EFX$_0$ allocations have been shown to exist is when the agents rank the goods in the same order. In fact,
envy-cycle elimination (Algorithm~\ref{alg:EnvyCycleElimination}) produces such an allocation.

\begin{theorem}[\citet{plaut2020almost}]
	If $I$ is an additive goods instances with SOP, then envy-cycle elimination (Algorithm~\ref{alg:EnvyCycleElimination})  returns an EFX$_0$ allocation. \qed
\end{theorem}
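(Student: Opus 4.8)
The plan is to prove a stronger invariant by induction on the rounds of Algorithm~\ref{alg:EnvyCycleElimination}: the partial allocation maintained by the algorithm is always EFX$_0$, not merely EF1. To make this work I would assume the goods are processed in the SOP order, so that under the convention $u_i(o_1) \geq u_i(o_2) \geq \dots \geq u_i(o_m)$ the good introduced in each round is no more valuable (to every agent simultaneously) than any good allocated in a previous round. The base case is the empty allocation, which is vacuously EFX$_0$. Each round consists of two operations --- a sequence of bundle shifts along envy-cycles, followed by allocating the current good to an unenvied agent --- so it suffices to show that each operation preserves the EFX$_0$ invariant.

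First I would show that shifting bundles along an envy-cycle preserves EFX$_0$; this step does not use SOP. The key observation is that shifting only permutes the multiset of bundles among the agents and leaves the bundles themselves unchanged, and whether an agent $k$ strongly envies a particular bundle $B$ depends only on $k$'s own bundle and on $B$, not on which agent currently holds $B$. Agents off the cycle keep their bundle and face the same multiset of rival bundles, so their situation is unchanged. An agent $k$ on the cycle receives a bundle it strictly preferred, so $u_k$ of its new bundle strictly exceeds $u_k$ of its old bundle; combined with the inductive EFX$_0$ inequalities $u_k(\text{old bundle}) \geq u_k(B \setminus \{g\})$ for every rival bundle $B$ and every $g \in B$, this shows $k$ does not strongly envy any bundle it does not hold.

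The heart of the argument is the second operation: allocating the current good $o$ to an agent $i$ of zero in-degree (an unenvied agent). Writing $\pi_i' = \pi_i \cup \{o\}$, I would check the EFX$_0$ condition for every ordered pair. Pairs not involving $i$ are unaffected. For pairs where $i$ is the potential envier, $i$'s bundle only grew, so $u_i(\pi_i') \geq u_i(\pi_i)$ and the inductive inequalities carry over. The only delicate case is a third agent $\ell$ who might begin to strongly envy $i$. Since $i$ was unenvied we have $u_\ell(\pi_\ell) \geq u_\ell(\pi_i)$. Removing the good $o$ itself already gives $u_\ell(\pi_\ell) \geq u_\ell(\pi_i'\setminus\{o\}) = u_\ell(\pi_i)$; the remaining subcase removes some $g \in \pi_i$, and here I would invoke SOP and additivity: every good currently in $\pi_i$ was allocated in an earlier round, hence $u_\ell(g) \geq u_\ell(o)$, so
\begin{equation*}
u_\ell(\pi_i' \setminus \{g\}) = u_\ell(\pi_i) - u_\ell(g) + u_\ell(o) \leq u_\ell(\pi_i) \leq u_\ell(\pi_\ell).
\end{equation*}
Thus $\ell$ does not strongly envy $i$, completing the induction.

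I expect this last subcase to be the main obstacle, since it is the only place where the argument can break and it is exactly where SOP is indispensable: if $o$ were more valuable than some good already held by $i$, then swapping $o$ in for that good could inflate $\pi_i$ from $\ell$'s viewpoint and manufacture strong envy. Processing goods in non-increasing order guarantees $o$ is the least valuable good seen so far, which together with additivity neutralizes this threat. A minor point to verify is the boundary case $\pi_i = \emptyset$, where $\pi_i' = \{o\}$ and the only EFX$_0$ check reduces to $u_\ell(\pi_\ell) \geq u_\ell(\emptyset) = 0$, which holds for goods.
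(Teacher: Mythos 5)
Your proof is correct. The paper itself states this theorem without proof (it is cited to \citet{plaut2020almost} and marked as proved elsewhere), and your induction --- processing the goods in the common non-increasing order, observing that envy-cycle shifts only permute bundles while strictly improving agents on the cycle, and using the fact that the newly allocated good is the least valuable so far (so that $u_\ell(g) \geq u_\ell(o)$ kills the only delicate case, a third agent strongly envying the unenvied receiver) --- is exactly the standard argument from that reference, with all cases, including the EFX$_0$-specific zero-value removals, handled correctly.
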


EFX$_0$ allocations are also known to exist when there are a small number of agents and when there are only a few more goods than agents.

\begin{theorem}[\citet{chaudhury2020efx}]
	Additive goods instances for which $n \leq 3$ have EFX$_0$ allocations. \qed
\end{theorem}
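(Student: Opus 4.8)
The plan is to dispatch the easy cases $n \le 2$ using results already in hand, and then to concentrate all of the effort on $n = 3$, which is the substantive case. For $n = 1$, allocating every good to the single agent is vacuously EFX$_0$, since there is no other agent to envy. For $n = 2$, additive utility functions are in particular monotone, so Theorem~\ref{thm:efx-goods}(2) already guarantees an EFX$_0$ allocation via the divide-and-choose argument. Thus I may assume $n = 3$ for the remainder.

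For three agents I would follow a partial-allocation-with-potential strategy. The objects I track are \emph{partial} EFX$_0$ allocations: triples $X = (X_1, X_2, X_3)$ of pairwise-disjoint bundles, together with a pool $P = M \setminus (X_1 \cup X_2 \cup X_3)$ of as-yet-unallocated goods, subject to the requirement that no agent strongly envies another among the three allocated bundles. The empty allocation (with $P = M$) is trivially a partial EFX$_0$ allocation, so this class is nonempty. The goal is to drive the pool down to $\emptyset$ while preserving the partial EFX$_0$ property, since a partial EFX$_0$ allocation with empty pool is exactly a complete EFX$_0$ allocation.

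To make progress I would equip the set of partial EFX$_0$ allocations with a potential $\phi$ --- for instance the utility profile $(u_1(X_1), u_2(X_2), u_3(X_3))$ sorted into nondecreasing order and compared lexicographically, or equivalently a Nash-welfare-type product --- and prove the key \emph{update lemma}: if $X$ is a partial EFX$_0$ allocation with $P \ne \emptyset$, then there is another partial EFX$_0$ allocation $X'$ with $\phi(X') > \phi(X)$. Because there are only finitely many allocations, $\phi$ takes finitely many values and can strictly increase only finitely often, so iterating the update lemma must terminate at an allocation with $P = \emptyset$, which is the desired complete EFX$_0$ allocation.

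The entire difficulty is concentrated in the update lemma, and this is where I expect the main obstacle. The naive move --- dumping a pool good $g$ into the bundle of an unenvied agent --- fails, because adding $g$ can create \emph{strong} envy toward that agent and thereby violate EFX$_0$; this is precisely the gap between EF1 and EFX$_0$. The remedy, following \citet{chaudhury2020efx}, is to analyse the envy graph on the three agents together with the notion of a \emph{most-envied} (champion) agent: before committing $g$, one identifies for each agent the inclusion-minimal subset of a target bundle together with $g$ that the agent would accept without introducing strong envy, and then reassigns bundles around the (at most three) agents so that each participant weakly improves, at least one strictly improves, and no strong envy is created. The boundedness of the number of agents is what keeps the case analysis finite: one enumerates the possible shapes of the envy relation among three vertices and, in each case, exhibits an explicit reassignment that raises $\phi$. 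Verifying that EFX$_0$ is preserved in every branch, and that $\phi$ genuinely increases rather than merely staying level, is the delicate bookkeeping that constitutes the heart of the argument.
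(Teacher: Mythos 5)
Your reductions for $n \leq 2$ are fine: the $n=1$ case is vacuous, and for $n=2$ additive goods valuations are monotone, so Theorem~\ref{thm:efx-goods}(2) applies. Note, for context, that the dissertation itself states this theorem without proof, as a citation to \citet{chaudhury2020efx}; so the only question is whether your $n=3$ argument stands on its own.

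It does not: there is a genuine gap, and it is exactly where you say ``the entire difficulty is concentrated.'' Your update lemma --- given a partial EFX$_0$ allocation with a nonempty pool, produce another partial EFX$_0$ allocation of strictly larger potential --- \emph{is} the theorem; everything else in your write-up (the potential decreasing only finitely often, termination, the empty allocation as a starting point) is routine. You describe the resolution only in outline (``identify the most-envied agent, enumerate the envy-graph shapes on three vertices, exhibit a reassignment''), but this is not how the actual argument goes, and the claims you make about it are not achievable as stated. In particular, you assert that the reassignment can be chosen so that ``each participant weakly improves''; in the proof of \citet{chaudhury2020efx} this is false --- some updates strictly \emph{decrease} the utility of one of the agents, and that is precisely why the choice of potential is delicate (it must single out whose utility is never sacrificed, rather than track a leximin or welfare-type aggregate). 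Relatedly, the two potentials you offer as ``equivalent'' --- the sorted utility profile under lexicographic comparison and a Nash-welfare product --- induce different orderings on allocations, so at minimum one of them would have to be shown to increase, and neither is the potential that actually works. Finally, the case analysis is not a finite check over envy digraphs on three vertices: it requires new machinery (most envious agents with respect to inclusion-minimal ``challenging'' subsets, champion relations, and several distinct swap operations), and it occupies the bulk of a long paper. As written, your proposal is a correct roadmap of the cited proof with its core step asserted rather than established, so it cannot be accepted as a proof of the statement.
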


\begin{theorem}[\citet{mahara2024extension}]\label{thm:mahara-n+3}
	Monotone goods instances for which $m \leq n+3$ have EFX$_0$ allocations. \qed
\end{theorem}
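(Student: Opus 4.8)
The plan is to induct on the number of agents $n$, keeping the invariant $m \le n+3$ intact at every level. Each inductive step will \emph{peel off} one agent together with exactly one good, producing a sub-instance with $n-1$ agents and $m-1 \le (n-1)+3$ goods, to which the inductive hypothesis applies. The base case is $m \le n$: allocate the goods injectively, so that every agent receives at most one good and the rest receive $\emptyset$. Then for any pair $i,j$ and any $o \in \pi_j$ we have $\pi_j \setminus \{o\} = \emptyset$, so $u_i(\pi_i) \ge u_i(\emptyset) = u_i(\pi_j \setminus \{o\})$ by monotonicity; hence no agent strongly envies another and the allocation is EFX$_0$.

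The engine of the inductive step is the observation that \emph{an agent holding a single good can never be strongly envied}: deleting that good leaves the empty bundle, and $u_k(\pi_k) \ge u_k(\emptyset)$ for every $k$ by monotonicity. So to peel an agent $a$ with a good $g$, it suffices to ensure that $a$ itself does not strongly envy anyone in the EFX$_0$ allocation $\pi'$ of $M \setminus \{g\}$ among $N \setminus \{a\}$ obtained from the recursion. This is automatic whenever $u_a(g) \ge u_a(M \setminus \{g\})$: every reduced bundle $\pi'_j \setminus \{o\}$ is a subset of $M \setminus \{g\}$, so $u_a(\pi'_j \setminus \{o\}) \le u_a(M \setminus \{g\}) \le u_a(g) = u_a(\pi_a)$, and $a$ does not even envy, let alone strongly envy. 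Thus in the clean case where some agent values a single good at least as much as all the remaining goods combined, I would peel that pair and recurse.

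The main obstacle is the complementary \emph{dense} case, in which $u_i(g) < u_i(M \setminus \{g\})$ for every agent $i$ and every good $g$, so that no single good dominates and no clean peeling is available. This is exactly where the slack "$+3$" must be spent. I expect to handle it by strengthening the inductive hypothesis: instead of merely asserting that \emph{some} EFX$_0$ allocation of the sub-instance exists, I would demand one in which a designated agent is left unenvied, or in which all bundles are small, so that the peeled good $g$ can be chosen large enough to dominate every reduced bundle $\pi'_j \setminus \{o\}$. The point is that with only three extra goods the recursively produced bundles are nearly balanced, which is what should make such a dominating choice possible; controlling this interaction between the recursion's bundles and the peeled agent's valuation is the crux of the whole argument.

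If the peeling framework stalls in the dense case, the fallback is a direct structural argument exploiting that $m \le n+3$. Seeking an allocation that gives each agent at least one good, the $m-n \le 3$ surplus goods force at most three agents to hold size-$2$ bundles while all others hold singletons; the only nontrivial strong-envy constraints then involve these at most three distinguished agents, which I would attempt to isolate into an auxiliary instance on at most three agents and resolve using the $n \le 3$ existence result of \citet{chaudhury2020efx}. The delicate point, and the step I expect to be hardest, is that singleton agents can still strongly envy a size-$2$ bundle, so the reduction to three agents must be carried out carefully rather than by simply discarding the singleton agents.
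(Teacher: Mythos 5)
The parts you actually prove are correct, but they are the easy parts: the base case $m \le n$, the observation that an agent holding a single good can never be strongly envied, and the peeling step under the hypothesis $u_a(g) \ge u_a(M \setminus \{g\})$ are all fine. The genuine gap is the dense case, which you explicitly defer, and it is not a deferrable technicality --- it is the entire theorem, and your singleton-peeling framework provably cannot resolve it. Take $n=2$, $m=5=n+3$, and identical additive valuations with every good worth $1$. The bundle-size profiles $(0,5)$ and $(1,4)$ both fail EFX$_0$: the poorer agent has utility at most $1$, while the richer bundle minus any single good is still worth at least $3$. Only the profile $(2,3)$ is EFX$_0$, so no EFX$_0$ allocation of this instance contains a singleton bundle. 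But every allocation your induction can output leaves the peeled agent holding exactly the one good it was peeled with, so no choice of the pair $(a,g)$ and no strengthening of the recursive guarantee can make the scheme succeed on this instance. The strengthenings you float are also unavailable: ``all bundles small'' is false (with identical additive utilities $(10,1,1,1,1)$ and $n=2$, the unique EFX$_0$ structure is $\{10\}$ against the four $1$'s, a bundle of size $4$), and ``a designated agent is left unenvied'' is asserted without proof and, as the all-ones example shows, would not rescue singleton peeling even if true.

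The fallback fails independently, for three reasons. Its structural premise --- an EFX$_0$ allocation giving every agent at least one good, with at most three agents holding two goods --- need not exist: the $(10,1,1,1,1)$ instance forces sizes $(1,4)$, and for $n=3$, $m=6$, identical additive utilities $(10,1,1,1,1,1)$ force sizes $(1,2,3)$. Next, the constraints do not localize: any of the $n$ agents, including singleton holders, can strongly envy a holder of a size-$2$ bundle, so the problem cannot be compressed into an instance on at most three agents (you acknowledge this but offer no way around it). Finally, the result of \citet{chaudhury2020efx} you plan to invoke holds for \emph{additive} valuations, whereas Theorem~\ref{thm:mahara-n+3} concerns \emph{monotone} valuations, so it could not be applied even after a correct isolation. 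For context, the dissertation itself does not prove this theorem; it cites \citet{mahara2024extension}, whose argument runs through partial EFX allocations and envy-graph machinery rather than any peeling induction --- an indication that the dense case genuinely requires different tools rather than additional bookkeeping on top of your scheme.
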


Very little is known about the existence of EFX$_0$ allocations for four agents and beyond. However, one direction that has been explored is to allow for some goods to remain unallocated. In fact, leaving a linear number of goods unallocated is sufficient.

\begin{theorem}[\citet{berger2022almost}]\label{thm:additive-n-2}
	Additive goods instances have a partial EFX$_0$ allocation that leaves at most $n-2$ goods unallocated. \qed
\end{theorem}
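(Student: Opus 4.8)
The plan is to prove a stronger statement that is easier to maintain as an invariant: there exists a partial EFX$_0$ allocation $X = (X_1, \dots, X_n)$ whose pool $P := M \setminus \bigcup_i X_i$ of unallocated goods is itself \emph{unenvied in the EFX sense}, i.e.\ for every agent $i$ and every good $g \in P$ we have $u_i(X_i) \geq u_i(P \setminus \{g\})$. First I would set up an extremal/potential argument over the finite set of such ``charitable'' partial EFX allocations. Since there are finitely many partitions of $M$, I can select one, call it $X$, that maximizes a lexicographic potential $\phi(X)$ defined as the vector $(u_1(X_1), \dots, u_n(X_n))$ sorted in non-increasing order (breaking remaining ties by preferring allocations with a smaller pool). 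The whole argument then reduces to showing that this extremal $X$ must satisfy $|P| \leq n-2$.

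The main structural tool I would use is the envy graph together with envy-cycle elimination. A key observation is that shifting bundles along an envy cycle (as in Algorithm~\ref{alg:EnvyCycleElimination} and Theorem~\ref{thm:envy-cycle-elimination}) preserves not only EF1 but also EFX$_0$: after a shift each agent holds a bundle of strictly greater utility than before while the multiset of bundles is unchanged, so if $i$ did not strongly envy the holder of a bundle $B$ before, the inequality $u_i(X_i) \geq u_i(B \setminus \{g\})$ only improves. Moreover each shift strictly increases $\phi$ in lexicographic order, so at an extremal $X$ the envy graph among agents is acyclic and hence contains an agent $s$ of in-degree zero, an \emph{unenvied} agent. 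This unenvied source is the lever for the augmentation step.

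Next I would argue by contradiction that $|P| \geq n-1$ is impossible. Suppose the pool is large. I would attempt to hand a single good $g \in P$ to the unenvied agent $s$, or, more generally, to route goods from $P$ into bundles via a chain of ``most-envied'' reallocations (a champion/cut-and-augment argument): find an agent whose bundle can absorb a good or be swapped into the pool so that the resulting allocation is still a partial EFX allocation with an unenvied pool but strictly larger $\phi$, contradicting extremality. The counting in this step is where the pigeonhole on $|P| \geq n$ versus $n$ agents is used to guarantee that some such improving move exists. Pushing the bound the final notch from $n-1$ down to $n-2$ is the delicate refinement: with exactly $n-1$ pool goods one must exploit a finer structural property — typically identifying a specific agent and a good in $P$ whose transfer can be combined with a bundle swap so as to keep \emph{both} the inter-agent EFX$_0$ condition and the pool-unenvied condition intact while strictly raising $\phi$.

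The hard part will be precisely this simultaneous preservation: every improving move must keep the allocated bundles mutually EFX$_0$, keep the (now smaller) pool EFX-unenvied, and strictly increase the potential, and the three requirements pull against each other. I expect the case analysis for the $n-1 \to n-2$ step — handling whether the good to be moved is more valuable than the worst good already in a recipient's bundle, and whether displacing a bundle back into the pool re-creates pool envy — to be the genuine obstacle, since the weaker $|P| < n$ guarantee follows from the cleaner source-agent augmentation but the extra saved good requires tracking a champion structure among the agents adjacent to the pool.
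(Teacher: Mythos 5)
You have not produced a proof; you have produced a plan whose two load-bearing steps are missing, and one of them is unsound as stated. (For reference, the paper itself gives no argument for this theorem --- it is cited to \citet{berger2022almost} and stated with \qed --- so the comparison below is against what a correct proof actually requires.) The unsound step is the augmentation lever: handing a pool good $g$ to an unenvied (in-degree-zero) agent $s$ preserves EF1 but \emph{not} EFX$_0$. Concretely, let $X_i = \{a\}$, $X_s = \{b\}$ with $u_i(a) = u_i(b) = 1$, and let the pool be $P = \{g\}$ with $u_i(g) = 10$. This satisfies your invariant (the allocation is EFX$_0$, the pool is EFX-unenvied, and $s$ is unenvied), yet after the move $X_s = \{b, g\}$ we get $u_i(X_s \setminus \{b\}) = 10 > 1 = u_i(X_i)$, so $i$ strongly envies $s$. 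This failure is precisely why the theorem is nontrivial and why envy-cycle elimination yields EF1 rather than EFX$_0$ (Theorem~\ref{thm:envy-cycle-elimination}). The correct replacement, which is the engine of the actual proofs (Chaudhury--Garg--Mehlhorn for the $n-1$ bound, refined by \citet{berger2022almost} to $n-2$), is the \emph{most-envious-agent / minimal-envied-subset} technique: when some agent envies a subset of the pool, one takes an inclusion-wise minimal envied subset $S \subseteq P$ and awards $S$ to an agent who envies it, returning that agent's old bundle to the pool; minimality of $S$ is exactly what guarantees that no agent strongly envies the new bundle, and the recipient's utility strictly rises, driving the potential. You name ``champion'' but never formulate this rule, and without it your process has no valid progress step.

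The second gap is that, even granting a sound augmentation, you never derive either bound on $|P|$. The pigeonhole comparing $|P| \geq n$ against $n$ agents is asserted, not argued --- note that the champion update described above does not by itself shrink the pool (it swaps a bundle into the pool), so bounding $|P|$ requires an additional structural argument about what the terminal pool can look like, which you do not supply. And the improvement from $n-1$ to $n-2$, which is the entire content of the cited theorem beyond prior work, you explicitly defer as ``the genuine obstacle''; in \citet{berger2022almost} this step is a genuine combinatorial argument analyzing cycles in the champion graph of an extremal allocation, not a routine case check. So the proposal identifies a reasonable potential function and the right family of invariants, but both the progress step and the counting that yields $n-2$ are absent, and the progress step you do propose would break the EFX$_0$ property it is required to maintain.
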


\citet{mahara2024extension} generalized the above to the class of monotone goods instances.

\begin{theorem}[\citet{mahara2024extension}]\label{thm:monotone-n-2}
	Monotone goods instances have a partial EFX$_0$ allocation that leaves at most $n-2$ goods unallocated. \qed
\end{theorem}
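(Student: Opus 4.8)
The plan is to generalize the charity-based argument behind Theorem~\ref{thm:additive-n-2} from additive to monotone valuations, isolating the places where additivity is actually used and replacing them with purely ordinal comparisons that remain valid under monotonicity. Throughout I would work with partial EFX$_0$ allocations $X = (X_1, \dots, X_n)$ together with their unallocated pool $P = M \setminus \bigcup_i X_i$, and maintain the invariant that $X$ is EFX$_0$ while treating $P$ as goods withheld ``for charity.'' The goal is to show that if $|P| \geq n-1$, then $X$ can always be replaced by another partial EFX$_0$ allocation that is strictly better in a suitable potential order; since the potential cannot improve indefinitely, the process must terminate with $|P| \leq n-2$.

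Concretely, I would fix a potential $\Phi(X)$ that is lexicographic: first the number $\sum_i |X_i|$ of allocated goods, and then a tie-breaking term (for instance a fixed ordering on the profile of bundle utilities). Among all partial EFX$_0$ allocations I would take one maximizing $\Phi$, and also arrange that no agent strongly envies the pool $P$; this ``non-envy toward the pool'' invariant can be restored by repeatedly moving an EFX$_0$-feasible subset of $P$ into the bundle of an envying agent, which only increases $\Phi$ by monotonicity. The augmentation step then mirrors the additive proof: using the envy graph on the agents I would first run envy-cycle elimination (Algorithm~\ref{alg:EnvyCycleElimination}) to obtain an unenvied ``source'' agent, and then apply a champion argument in which, for each pool good $g$ and each agent $j$, the agent who most strongly envies a minimal subset $Z \subseteq X_j \cup \{g\}$ is the \emph{champion} of $j$ for $g$. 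Giving such a minimal $Z$ to its champion preserves EFX$_0$ by minimality, and a pigeonhole argument over champions shows that when $|P| \geq n-1$ one either directly allocates a pool good or closes a cycle in the champion graph, in both cases producing a partial EFX$_0$ allocation with strictly larger $\Phi$ and contradicting maximality.

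The reason the argument transfers from the additive setting is that every comparison it relies on is ordinal and monotone: ``adding a good to a bundle cannot decrease its utility'' is exactly monotone increasingness, and the notion of a minimal strongly-envied subset $Z$ (a set the champion strongly envies but no proper subset of which it strongly envies) is well defined for any monotone valuation. In particular, the EFX$_0$-preservation of the champion move uses only that removing the last-added good from $Z$ drops its value below the champion's own bundle, which is a statement about $u_i$ on two specific sets rather than about any additive decomposition. I would verify each lemma of the additive proof against this criterion and substitute the additive value comparisons accordingly.

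The hard part will be recovering the sharp bound of $n-2$ rather than the easier $n-1$: obtaining at most $n-1$ unallocated goods follows from the basic champion/source argument, but saving the extra good requires exploiting that the source agent of the envy graph and the most-envied agent provide two distinct agents that can be handled without consuming a pool good simultaneously. Making this two-agents-for-free argument work with only monotone valuations --- in particular ensuring that the minimal strongly-envied subsets used by the two special agents can be chosen disjointly or composed without breaking EFX$_0$ --- is the delicate step, and is where a careful extension of the additive analysis is genuinely needed. I would therefore concentrate the effort on formulating the champion-graph counting so that it yields a cycle or a double improving move precisely at the threshold $|P| = n-1$, which is what forces termination one good earlier.
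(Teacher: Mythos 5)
The paper does not actually prove this statement: Theorem~\ref{thm:monotone-n-2} is quoted from \citet{mahara2024extension} and closed with \qed, so there is no internal proof to compare against. Your proposal must therefore stand or fall on its own, and as written it does not stand: it is an architecture for a proof, not a proof. The framework you set up (charity pool $P$, lexicographic potential $\Phi$, envy-cycle elimination to find a source, minimal strongly-envied subsets and champions) is indeed the standard machinery, and your observation that these notions are ordinal and survive the passage from additive to monotone valuations is correct in spirit --- this is essentially the insight behind Mahara's extension. But the theorem's content is the bound $n-2$, and that is exactly the step you defer. Your own text concedes that the basic champion/source argument yields only $n-1$ unallocated goods, and that ``saving the extra good'' requires a ``two-agents-for-free'' argument whose monotone version you describe as delicate, genuinely needed, and something you ``would concentrate the effort on.'' A proof attempt whose crucial quantitative step is replaced by a statement of where the difficulty lies has a genuine gap, not a fixable presentation issue: everything before that point establishes a strictly weaker theorem.

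Two further points you should be aware of if you try to close the gap. First, the pigeonhole claim in your second paragraph --- that $|P| \geq n-1$ forces either a direct allocation or a cycle in the champion graph, each strictly increasing $\Phi$ --- is not a routine counting argument even in the additive case; in the proof of Theorem~\ref{thm:additive-n-2} the improvement at the threshold comes from a careful case analysis of the most-envious-agent relation, and the configurations in which no single champion move helps are precisely the hard ones. Second, your proposed fix of choosing the minimal strongly-envied subsets of the two special agents ``disjointly or composed without breaking EFX$_0$'' is where monotonicity (as opposed to additivity) genuinely bites: for a non-additive monotone valuation, removing goods from one agent's bundle can change which subsets of another bundle are minimally envied, so disjointness of the two witnesses cannot be arranged by a generic exchange argument and must be engineered into the potential or the invariants. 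Until that step is written out, the correct assessment is that you have reproved (in outline) the $n-1$ charity bound for monotone valuations, not Theorem~\ref{thm:monotone-n-2}.
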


Theorem~\ref{thm:additive-n-2} can be improved for the special case of $n=4$.

\begin{theorem}[\citet{berger2022almost}]
	Additive goods instances for which $n = 4$ have a partial EFX$_0$ allocation $\pi$ that leaves at most one good unallocated. Moreover, if $o_j$ is the unallocated good, then $u_i(o_j) \leq u_i(\pi_i)$ for every agent $i$. \qed
\end{theorem}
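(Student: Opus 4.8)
The plan is to improve upon the generic bound of Theorem~\ref{thm:additive-n-2} (which for $n=4$ leaves at most $n-2=2$ goods unallocated) by an extremal-allocation argument. Fix a \emph{potential} $\Phi$ on the set of partial EFX$_0$ allocations --- which is nonempty, since the allocation putting everything in the pool is vacuously EFX$_0$ --- defined to first maximize the number of allocated goods, and among all such allocations to take the lexicographically largest sorted profile of the values $u_1(\pi_1), \dots, u_4(\pi_4)$. Let $\pi$ maximize $\Phi$ and let $P$ be its pool of unallocated goods. The whole theorem reduces to two claims proved by contradiction against maximality of $\Phi$: (i) $|P| \le 1$, and (ii) if $P = \{g\}$ then $u_i(g) \le u_i(\pi_i)$ for every agent $i$. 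In each case a violation is turned into a partial EFX$_0$ allocation of strictly larger potential.

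The key structural device is a \emph{champion graph} on the four agents. For a pooled good $g$ and an agent $b$, form the augmented bundle $\pi_b \cup \{g\}$ and call an agent $a$ a \emph{$g$-champion of $b$} if $a$ most strongly covets this augmented bundle, in the sense that $a$ can be handed $\pi_b \cup \{g\}$ with its least $a$-valued good removed and remain EFX$_0$ toward everyone else. Overlaying the ordinary envy edges ($a \to b$ whenever $a$ envies $b$) with these champion edges yields a directed graph on only four vertices, whose possible shapes are few enough to enumerate. The engine of the argument is that each shape either contains an \emph{unenvied agent}, to whom a pooled good can simply be handed while preserving EFX$_0$ (allocating one more good outright), or admits a directed \emph{champion cycle} along which bundles can be passed --- each agent receiving the next agent's bundle augmented by a pooled good and trimmed of one item --- so that after the rotation strictly more goods are allocated and $\Phi$ strictly increases. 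Thus $|P| \ge 2$ is impossible, giving (i).

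Claim (ii) follows from the same toolkit. If $P = \{g\}$ but some agent $i$ had $u_i(g) > u_i(\pi_i)$, consider adding $g$ to $\pi_i$. If the result is EFX$_0$ we have allocated one more good, contradicting the maximality of $\Phi$; otherwise some agent strongly envies the augmented bundle $\pi_i \cup \{g\}$, which furnishes a $g$-champion and hence a reallocation that again allocates $g$ while staying EFX$_0$. Either way the pool is emptied with strictly more goods allocated, a contradiction. Hence the surviving good, if any, satisfies $u_i(g) \le u_i(\pi_i)$ for every agent $i$ --- precisely the claimed ``unwanted'' property.

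I expect the decisive obstacle to be the verification that a champion-cycle rotation preserves EFX$_0$ and strictly raises $\Phi$, carried out over the full list of graph configurations that can arise when $|P| = 2$. Showing that \emph{some} favorable configuration must occur --- in particular ruling out the stubborn case in which every agent is envied while the champion relation forms no exploitable cycle --- is where the restriction $n = 4$ is genuinely used, and where the bulk of the casework and the most delicate EFX$_0$-preservation arguments reside. The remaining bookkeeping, such as confirming that adding a zero-marginal good never breaks EFX$_0$ and that each rotation leaves a valid partial allocation, is routine by comparison.
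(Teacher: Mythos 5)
The paper does not actually prove this statement: it is quoted from \citet{berger2022almost} with a \qed{} immediately after the statement, so the benchmark is the original argument of that work, which is a long and delicate case analysis built on ``EFX-feasibility'' and most-envious-agent decompositions. Your sketch borrows its vocabulary (champions, rotations, potentials) but its stated engine contains a step that is simply false. You claim that any configuration either admits a champion cycle or contains an unenvied agent ``to whom a pooled good can simply be handed while preserving EFX$_0$.'' Handing a pooled good $g$ to an unenvied agent $a$ preserves EF1 --- that is exactly the envy-cycle elimination argument (Theorem~\ref{thm:envy-cycle-elimination}) --- but it does \emph{not} preserve EFX$_0$: from $u_b(\pi_b) \geq u_b(\pi_a)$ it does not follow that $u_b(\pi_b) \geq u_b\bigl((\pi_a \cup \{g\}) \setminus \{o\}\bigr) = u_b(\pi_a) + u_b(g) - u_b(o)$ for every $o \in \pi_a$; an item $o$ of small value to $b$ makes $b$ strongly envy the augmented bundle. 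This failure is not a technicality to be absorbed in casework --- it is precisely why EFX existence is hard and why \citet{berger2022almost} work with champions of carefully chosen \emph{subsets} of augmented bundles rather than with unenvied agents. Your champion rotation also has an accounting problem against your own potential: along a cycle each bundle gains one pooled good but sheds one trimmed good back into the pool, so the number of allocated goods --- the first coordinate of $\Phi$ --- need not increase, and with $|P| \leq 2$ you cannot even supply a distinct pooled good to every bundle on a cycle of length $3$ or $4$.

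Beyond this, you explicitly defer ``the bulk of the casework and the most delicate EFX$_0$-preservation arguments'' to future verification. In the cited work that casework \emph{is} the proof --- it occupies most of the paper and is where the restriction $n=4$ does its real work. What remains here is a plan whose skeleton already breaks at the unenvied-agent step, so the proposal cannot be repaired by filling in details; the reallocation primitive itself has to be replaced by the substantially more intricate machinery of \citet{berger2022almost}.
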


Although less is known about chores division, some results have been recently discovered. The following is a chores analogue to Theorem~\ref{thm:efx-goods}.

\begin{theorem}[\citet{berczi2024envy}]
	A chores instance has an EFX$_0$ allocation if any of the following hold.
\begin{enumerate}
	\item agents have identical monotone utility functions;
	\item $n=2$ and agents have monotone utility functions. \qed
\end{enumerate} 
\end{theorem}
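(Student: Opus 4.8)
The plan is to establish the two cases as direct chores-analogues of Theorem~\ref{thm:efx-goods}, dualizing its proof. For part (1) I would introduce the chores counterpart of the leximin++ allocation and argue that it is EFX$_0$; for part (2) I would run a divide-and-choose argument on top of part (1), exactly as the goods proof derived its two-agent case from its identical-agents case.

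For part (1), fix the common monotone utility function $u$ and call an allocation \emph{leximin-optimal} if it maximizes the vector of bundle utilities sorted in non-decreasing order, lexicographically (so the worst-off agent's utility is maximized first, then the second worst-off, and so on), breaking ties by minimizing the corresponding sorted vector of bundle sizes --- the mirror image of the ``$++$'' rule, since for chores a worse-off agent should hold \emph{fewer} items. I would take such an allocation $\pi$ and, assuming $u(\pi_1) \le u(\pi_2) \le \dots \le u(\pi_n)$, suppose toward a contradiction that $\pi$ is not EFX$_0$. Then some agent $i$ strongly envies some agent $j$ via a chore $o_i \in \pi_i$ with $u(\pi_i \setminus \{o_i\}) < u(\pi_j)$; I would choose $o_i$ to be the chore whose removal helps $i$ the least, i.e.\ minimizing $u(\pi_i \setminus \{o_i\})$, so that this is precisely the binding EFX$_0$ constraint. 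Since $u(\pi_i) \le u(\pi_i \setminus \{o_i\}) < u(\pi_j)$, agent $i$ is strictly worse off than $j$. I would then transfer $o_i$ from $\pi_i$ to $\pi_j$: the bundle of $i$ weakly improves in utility and strictly shrinks in size, while the bundle of $j$ weakly worsens. The goal is to show this transfer either strictly raises the sorted utility vector, or leaves it unchanged while strictly reducing the size of a worst-off bundle, contradicting leximin-optimality.

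The main obstacle is precisely the step where the goods and chores arguments cease to be symmetric. In the goods proof, moving the contested good to the worse-off agent keeps \emph{both} affected bundles above the old bottom value $u(\pi_i)$, so the bottom of the sorted vector can only rise; for chores, dumping $o_i$ onto $j$ risks \emph{overshooting}, pushing $u(\pi_j \cup \{o_i\})$ below $u(\pi_i)$ and thereby \emph{lowering} the leximin vector. Controlling this is the crux. For additive $u$ it resolves cleanly: choosing $o_i$ as the least costly chore in $\pi_i$ turns the strong-envy inequality into $u(\pi_j) - u(\pi_i) > |u(o_i)|$, which is exactly the statement that $u(\pi_j \cup \{o_i\}) = u(\pi_j) - |u(o_i)| > u(\pi_i)$, ruling out overshoot. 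For general monotone $u$ the decoupling of the marginal of $o_i$ at $\pi_i$ from its marginal at $\pi_j$ must be handled with more care, and this is where I expect the real work to lie.

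Finally, part (2) follows by divide-and-choose. I would apply part (1) to the identical instance in which both agents use agent $1$'s function $u_1$, obtaining an allocation $(\pi_1, \pi_2)$ that is EFX$_0$ under $u_1$; by the symmetry of the identical-valuation EFX$_0$ condition, agent $1$ fails to strongly envy the other bundle \emph{whichever} of the two bundles it is left with. I would then let agent $2$ keep the bundle it prefers under $u_2$ and give agent $1$ the other. Agent $2$ then weakly prefers its own bundle, so it does not envy (hence does not strongly envy) agent $1$, while agent $1$ does not strongly envy agent $2$ by the property just noted; the resulting allocation is EFX$_0$.
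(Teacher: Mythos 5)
Your plan for part (1) rests on an intermediate claim that is false for general monotone utilities, so the ``overshoot'' problem you flagged is not a technical step awaiting more care --- it is fatal to the approach. Concretely: a leximin-optimal allocation (under your size tie-break or any other) need \emph{not} be EFX$_0$ for identical monotone chores. Consider two agents, three chores $a,b,c$, and the identical utility function $u$ with $u(\emptyset)=0$, $u(\{a\})=u(\{b\})=-9$, $u(\{c\})=-8$, $u(\{a,b\})=-10$, $u(\{a,c\})=u(\{b,c\})=-12$, $u(\{a,b,c\})=-20$; every superset has weakly smaller utility, so $u$ is monotone and every item is a chore. The sorted utility vectors of the four bipartitions are $(-10,-8)$ for $(\{a,b\},\{c\})$, $(-12,-9)$ for $(\{a,c\},\{b\})$ and $(\{b,c\},\{a\})$, and $(-20,0)$ for $(\{a,b,c\},\emptyset)$. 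Hence $(\{a,b\},\{c\})$ is the unique leximin optimum up to swapping the agents, so no tie-break can intervene; yet it is not EFX$_0$, because removing either chore from $\{a,b\}$ leaves utility $-9<-8=u(\{c\})$, i.e.\ the holder of $\{a,b\}$ strongly envies the other agent. The theorem itself is not contradicted --- $(\{a,c\},\{b\})$ \emph{is} EFX$_0$ --- but that allocation is leximin-dominated, so no variant of your exchange argument can reach it. Indeed, the transfer your proof would attempt (move $a$ onto the agent holding $\{c\}$) creates the bundle $\{a,c\}$ of utility $-12<-10$: exactly the overshoot you feared, and here it is unavoidable since nothing beats $(\{a,b\},\{c\})$ in the leximin order. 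So the proof of Theorem~\ref{thm:efx-goods} does not dualize; your argument is sound only in the additive case you worked out, which is strictly weaker than what the theorem asserts.

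Your part (2) is a correct reduction conditional on part (1): for identical utilities the EFX$_0$ condition is symmetric in the two bundles, the choosing agent cannot envy its preferred bundle, and by monotonicity non-envy rules out strong envy --- this mirrors the paper's goods argument exactly. But since it invokes part (1), it inherits the gap. Note also that the dissertation does not prove this statement at all; it cites \citet{berczi2024envy} and marks it \qed, so there is no internal proof to compare against. The takeaway is that the chores case genuinely requires a different selection rule or argument than leximin++, which is precisely why it stands as a separate result in the literature rather than as a remark dualizing \citet{plaut2020almost}.
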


Very recently, some researchers have been successful in discovering examples of instances that do not have EFX allocations. \citet{christoforidis2024pursuit} found a 3-agent 6-chore instance in which agents have superadditive utility functions, for which EFX$_0$ allocations fail to exist. \citet{berczi2024envy} found a non-monotone 2-agent 3-item instance in which agents have identical utility functions that, for which EFX$^+_-$  allocations\footnote{EFX$^+_-$ is a generalization of EFX to non-monotone mixed instances. As it is beyond the scope of this work, we refer the reader to \citet{berczi2024envy} for the precise definition.} fail to exist. \citet{hosseini2023fairly} found a 4-agent 7-item instance in which agents have lexicographic\footnote{We refer the reader to \citet{hosseini2023fairly} for the definition of lexicographic utility functions.} utility functions that fails to have an EFX$^+_-$ allocation. Interestingly, all monotone 4-agent 7-good instances have EFX$_0$ allocations as a consequence of Theorem~\ref{thm:mahara-n+3}.

Importantly, all of these examples involve either chores or non-monotone utility functions. Thus, the most important case of additive goods instances is still open at the time of writing.

\subsection{MMS Allocations}\label{subsec:mms}

We now shift our attention to MMS allocations. The MMS criterion differs fundamentally from both EF1 and EFX allocations in the sense that it belong to the class of {\em threshold-based} fairness criteria, as opposed to {\em comparison-based} criteria such as EF1 and EFX. For an instance $I = (N, M, U)$ of the fair allocation problem, we use $\Pi_N(M)$ to denote the set of all possible allocations of $M$ among the agents in $N$. 

\begin{definition}[MMS Thresholds \citep{budish2011combinatorial}]
	The {\em maximin share (MMS) threshold} of an agent $i$ in an instance $I$ is defined to be
	$$
	u_i^\text{MMS}(I) \coloneqq \max_{\pi \in \Pi_N(M)} \min_{j \in N} u_i(\pi_j)
	$$
\end{definition}

In other words, the MMS threshold of an agent is the best possible utility that they can receive, if they were to choose an allocation but must receive the worst bundle. For additive instances in particular, the MMS threshold of an agent $i$ is trivially bounded above by $u_i(M)/n$.

\begin{observation}\label{obs:mms-upper-bound}
	If $I$ is an additive instance, then for each agent $i$, we have $$u_i^\text{MMS}(I) \leq \frac{u_i(M)}{n}$$
\end{observation}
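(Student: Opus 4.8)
The plan is to prove the bound via a simple averaging (pigeonhole) argument applied inside the maximization that defines $u_i^\text{MMS}(I)$. The key observation is that for any single fixed allocation, additivity forces the utilities of the $n$ bundles to sum to a fixed quantity, namely $u_i(M)$, so the smallest of these $n$ values cannot exceed their average.

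Concretely, I would fix an agent $i$ and consider an arbitrary allocation $\pi = (\pi_1, \pi_2, \dots, \pi_n) \in \Pi_N(M)$. Since $\pi$ is a complete allocation, its bundles are pairwise disjoint and satisfy $\bigcup_{j \in N} \pi_j = M$. Repeatedly applying the additivity of $u_i$ across this partition then gives $\sum_{j \in N} u_i(\pi_j) = u_i(M)$. Because the minimum of $n$ real numbers is at most their arithmetic mean, I obtain
$$\min_{j \in N} u_i(\pi_j) \leq \frac{1}{n} \sum_{j \in N} u_i(\pi_j) = \frac{u_i(M)}{n}.$$

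The final step is to observe that this inequality holds for \emph{every} $\pi \in \Pi_N(M)$, while its right-hand side $u_i(M)/n$ does not depend on $\pi$. Taking the maximum over all allocations therefore preserves the bound, yielding $u_i^\text{MMS}(I) = \max_{\pi \in \Pi_N(M)} \min_{j \in N} u_i(\pi_j) \leq u_i(M)/n$, as required. There is no real obstacle in this argument; the only point worth stating carefully is that it is additivity, rather than mere monotonicity, that guarantees the bundle utilities sum \emph{exactly} to $u_i(M)$. This is precisely why the statement is restricted to additive instances, and the same averaging step would fail for general subadditive or superadditive utilities where only an inequality between $\sum_j u_i(\pi_j)$ and $u_i(M)$ is available.
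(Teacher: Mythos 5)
Your proof is correct and takes essentially the same approach as the paper: both arguments use additivity to equate $\sum_{j \in N} u_i(\pi_j)$ with $u_i(M)$ and then apply the min-versus-average (pigeonhole) inequality, the only cosmetic difference being that the paper fixes the maximizing allocation and sums the bound $u_i(\pi_j) \geq u_i^\text{MMS}(I)$, whereas you bound the minimum for an arbitrary allocation and pass the uniform bound through the maximum. Your closing remark on why additivity (rather than monotonicity or subadditivity) is needed is accurate and consistent with the paper's restriction to additive instances.
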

\begin{proof}
	Let $\pi = (\pi_1, \pi_2, \dots, \pi_n)$ be an allocation that maximizes the minimum utility of a bundle according to agent $i$, that is, the value of $\min_{j \in N} u_i(\pi_j)$ is maximized. By the definition of MMS thresholds, we have $u_i(\pi_j) \geq u_i^\text{MMS}(I)$ for each bundle $\pi_j$ of $\pi$. The additivity of $u_i$ implies $u_i(M) = \sum_{j \in N} u_i(\pi_j)$. Thus, we have
	\begin{align*}
		u_i(M) &= \sum_{j \in N} u_i(\pi_j) \\
		&\geq \sum_{j \in N} u_i^\text{MMS}(I) \\
		&= n \cdot u_i^\text{MMS}(I)
	\end{align*}
	The desired inequality follows from dividing both sides by $n$.
\end{proof}

The MMS threshold is defined independently for each agent, so different agents can have different MMS thresholds, as exemplified by Example~\ref{ex:mms-thresholds}.

\begin{example}[MMS Thresholds]\label{ex:mms-thresholds}
	Consider the additive instance $I$ given by the following matrix.
	$$\begin{blockarray}{ccccccc}
		& o_1 & o_2 & o_3 & o_4 & o_5 & o_6 \\
		\begin{block}{c[cccccc]}
			1&1 & 2 & 3 & 4 & 5 & 6 \\
			2&1 & 10 & 6 & 0 & 0 & 0 \\
			3&10 & 1 & 1 & 1 & 1 & 1 \\
		\end{block}
	\end{blockarray}$$
	Since there are 3 agents, an allocation must contain 3 bundles. For agent 1, letting the bundles be $\{o_1, o_6\}$, $\{o_2, o_5\}$, and $\{o_3, o_4\}$ ensures that each bundle has utility 7, which matches the upper bound given by Observation~\ref{obs:mms-upper-bound} because $u_i(M)/3 = 7$. Thus, $u_1^\text{MMS}(I) = 7$.
	
	For agent 2, the last three items $o_4, o_5, o_6$ have no utility, so they do not affect the minimum utility of a bundle. The worst bundle in the allocation $(\{o_1\}, \{o_2\}, \{o_3, o_4, \dots, o_6\})$ has utility 1. It is clearly not possible to improve the worst bundle, so $u_2^\text{MMS}(I) = 1$.
	
	For agent 3, the worst bundle in the allocation $(\{o_1\}, \{o_2, o_3\}, \{o_4, o_5, o_6\})$ has utility 2. This is the best possible worst utility because if three of the items with utility 1 are put together into one bundle, then there are only two such items left. Thus, $u_3^\text{MMS}(I) = 2$.
\end{example}

An MMS allocation is one that provides each agent a bundle whose utility meets their respective MMS threshold.

\begin{definition}[MMS Allocations \citep{budish2011combinatorial}]
	We say that an allocation $\pi = (\pi_1, \pi_2, \dots, \pi_n)$ is {\em maximin share (MMS)} for the instance $I$ if for each agent $i$, we have $u_i(\pi_i) \geq u_i^\text{MMS}(I)$.
\end{definition}

\begin{example}[MMS Allocation]\label{ex:mms-alloc}
	In Example~\ref{ex:mms-thresholds}, we determined that in the additive instance given by the following matrix, the MMS thresholds for agents 1, 2, 3 to be 7, 1, 2, respectively.
	$$\begin{blockarray}{ccccccc}
		& o_1 & o_2 & o_3 & o_4 & o_5 & o_6 \\
		\begin{block}{c[cccccc]}
			1&\underline{\mathbf{1}} & 2 & 3 & 4 & 5 & \underline{\mathbf{6}} \\
			2&1 & \underline{\mathbf{10}} & 6 & 0 & 0 & 0 \\
			3&10 & 1 & \underline{\mathbf{1}} & \underline{\mathbf{1}} & \underline{\mathbf{1}} & 1 \\
		\end{block}
	\end{blockarray}$$
	The allocation $\pi = (\{o_1, o_6\}, \{o_2\}, \{o_3, o_4, o_5\})$ indicated by the bold and underlined entries is MMS because $u_1(\pi_1) = 7 \geq 7 = u_1^\text{MMS}(I)$, $u_2(\pi_2) = 10 \geq 1 = u_2^\text{MMS}(I)$, and $u_3(\pi_3) = 3 \geq 2 = u_3^\text{MMS}(I)$.
\end{example}

It is often useful to refer to allocations that are MMS in the perspective of an agent.

\begin{definition}[MMS for an Agent]
		We say an allocation $\pi = (\pi_1, \pi_2, \dots, \pi_n)$ is {\em maximin share (MMS) for the agent $i$} if for each bundle $\pi_j$, we have $u_i(\pi_j) \geq u_i^\text{MMS}(I)$.
\end{definition}

Any allocation $\pi$ that maximizes $\min_{j \in N} u_i(\pi_j)$ is MMS for the agent $i$. On the other hand, such an allocation is not necessarily unique, as shown by Example~\ref{ex:multi-mms}.

\begin{example}[Multiple MMS Allocations for the Same Agent]\label{ex:multi-mms}
	Let $I$ be the 2-agent 4-good additive instance given by the following matrix.
	$$\begin{blockarray}{ccccc}
		& o_1 & o_2 & o_3 & o_4 \\
		\begin{block}{c[cccc]}
			1&1 & 2 & 3 & 5 \\
			2&1 & 2 & 3 & 5\\
		\end{block}
	\end{blockarray}$$
	Both $\pi = (\{o_1, o_4\}, \{o_2, o_3\})$ and $\pi' = (\{o_1, o_2, o_3\}, \{o_4\})$ are MMS for agent 1. This is because in both of them, the worst bundle of the two has utility 5 to agent 1, which is the best possible utility of a worst bundle when there needs to be two bundles.
\end{example}

Unfortunately, MMS allocations present two problems. First, computing the MMS threshold of an agent can be shown to be NP-hard via a straightforward reduction from the \textsc{Partition} problem, although a \abbrevs{polynomial-time approximation scheme}{PTAS} was found by \citet{woeginger1997polynomial}. Second, MMS allocations can fail to exist even for additive instances as shown by \citet{KPW16}. On the positive side, instances for which they fail to exist seem to be rare --- \citet{BL16} generated 48000 random additive instances with $3 \leq n \leq 5$ and $n \leq m \leq 11$, by drawing item utilities from the uniform distribution over $[0,1]$ and the normal distribution with mean 0.5 and standard deviation 0.1, and showed that MMS allocations exist for all of them.

Because of this, it is interesting to find conditions that guarantee the existence of MMS allocations. It is straightforward to show that MMS allocations exist under the following conditions. The first condition is given without citation because it follow immediately from the definition of MMS allocations.

\begin{theorem}\label{thm:mms-2agents}
	An additive goods instance $I$ has an MMS allocation if
	\begin{enumerate}
		\item Every agent has the same utility function; or
		\item $n = 2$ (\citet{BL16}).  
		%\item (\citet{feige2021tight}) $m \leq n+5$.
	\end{enumerate}
\end{theorem}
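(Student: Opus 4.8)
The plan is to handle the two cases separately, with the first being essentially immediate from the definitions and the second using a divide-and-choose argument.

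For case (1), suppose every agent shares the same utility function $u$. Then every agent has the same MMS threshold; call it $t$. I would take an allocation $\pi^* = (\pi_1^*, \dots, \pi_n^*)$ witnessing the maximum in the definition of this threshold, so that $\min_{j \in N} u(\pi_j^*) = t$ and hence $u(\pi_j^*) \geq t$ for every $j$. Allocating bundle $\pi_j^*$ to agent $j$ then gives each agent a bundle of utility at least $t = u_j^\text{MMS}(I)$, so $\pi^*$ is MMS. The point is simply that the partition realizing the maximin value is simultaneously good for everyone when preferences coincide.

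For case (2) with $n = 2$, I would use divide-and-choose. First, let agent $1$ form a maximin partition $\{B_1, B_2\}$ of $M$, i.e.\ a partition witnessing $u_1^\text{MMS}(I)$, so that $u_1(B_1), u_1(B_2) \geq u_1^\text{MMS}(I)$. Then let agent $2$ take whichever of $B_1, B_2$ has the larger $u_2$-value, and give the remaining bundle to agent $1$. Agent $1$ is satisfied because either bundle meets their threshold. The crux is to verify agent $2$ is also satisfied: since $B_1, B_2$ partition $M$ and $u_2$ is additive, $u_2(B_1) + u_2(B_2) = u_2(M)$, so the larger of the two bundles has $u_2$-value at least $u_2(M)/2$. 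By Observation~\ref{obs:mms-upper-bound} applied with $n = 2$, we have $u_2^\text{MMS}(I) \leq u_2(M)/2$, and hence agent $2$'s chosen bundle has utility at least $u_2(M)/2 \geq u_2^\text{MMS}(I)$.

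The argument is short, and the only real subtlety is the final inequality for agent $2$: it is tempting to try to reason directly about agent $2$'s own maximin partition, but that partition need not align with agent $1$'s partition $\{B_1, B_2\}$, so such a comparison goes nowhere. The clean resolution, which I expect to be the main obstacle conceptually, is to avoid comparing partitions altogether and instead combine additivity (the larger bundle exceeds the average $u_2(M)/2$) with the MMS upper bound from Observation~\ref{obs:mms-upper-bound}. Once the upper bound is invoked, the step reduces to a one-line computation.
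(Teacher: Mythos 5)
Your proposal is correct and follows essentially the same argument as the paper: case (1) allocates the bundles of a common maximin partition, and case (2) uses divide-and-choose where one agent's maximin partition satisfies that agent on either bundle, while the chooser's larger bundle has utility at least $u(M)/2 \geq u^\text{MMS}(I)$ by additivity and Observation~\ref{obs:mms-upper-bound}. The only difference is cosmetic: the paper has agent 2 divide and agent 1 choose, whereas you swap the roles, which changes nothing.
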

\begin{proof}
	(1): If every agent has the same utility function, then an allocation that is MMS for one agent is MMS for every agent. Such an allocation is MMS.
	
	(2): The divide-and-choose algorithm can be used to find an MMS allocation when $n=2$ in the following way. Let $\pi$ be an MMS allocation for agent 2. By possibly exchanging $\pi_1$ and $\pi_2$, we can assume $u_1(\pi_1) \geq u_1(\pi_2)$ without loss of generality. So, $u_1(\pi_1) \geq u_1(M)/2$. Moreover, $u_1^\text{MMS}(I) \leq u_1(M)/2$ by Observation~\ref{obs:mms-upper-bound}. Combining these two inequalities yields $u_1(\pi_1) \geq u_1^\text{MMS}(I)$. On the other hand, we have $u_2(\pi_2) \geq u_2^\text{MMS}(I)$ by the choice of $\pi$. Thus, $\pi$ is MMS.
\end{proof}

It is also known that MMS allocations exist when the number of goods is only slightly greater than the number of agents.

\begin{theorem}\label{thm:mms-bounds}
	Let $I$ be an additive goods instance. Then, $I$ has an MMS allocation if:
	\begin{enumerate}
		\item $m \leq n+3$ (\citet{BL16});
		\item $m \leq n+4$ (\citet{KPW16}); or
		\item $m \leq n+5$ (tight) (\citet{feige2021tight}). \qed
	\end{enumerate}
\end{theorem}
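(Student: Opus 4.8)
The plan is to prove all three bounds by a single inductive reduction, treating $m \le n+k$ for $k \in \{3,4,5\}$ uniformly and isolating small ``base'' configurations that must be handled directly. Two normalizations come first. Since each agent's MMS threshold is invariant under positive scaling of that agent's utilities, I would rescale so that $u_i(M) = n$ (equivalently $u_i^{\mathrm{MMS}}(I) \le 1$ by Observation~\ref{obs:mms-upper-bound}) for every agent $i$. Second, I would reduce to \emph{ordered} (SOP) instances: replacing each agent's value vector by its decreasing rearrangement leaves every MMS threshold unchanged, since thresholds depend only on the multiset of item values, and a standard greedy relabelling argument converts an MMS allocation of the ordered instance back into one of the original. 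This lets the remaining analysis assume $u_i(o_1) \ge u_i(o_2) \ge \dots \ge u_i(o_m)$ for all $i$.

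The engine of the induction is a \emph{valid reduction}. I claim that removing one agent together with one item never hurts the others: for any agent $k$ and any item $g$, the optimal $n$-partition for $k$ can be turned into an $(n-1)$-partition of $M \setminus \{g\}$ by discarding the part containing $g$ and dumping its surviving items into another part, which for goods only raises bundle values; hence $u_k^{\mathrm{MMS}}(N \setminus \{i\}, M \setminus \{g\}) \ge u_k^{\mathrm{MMS}}(I)$. Consequently, if some agent $i$ owns an item $g$ with $u_i(g) \ge u_i^{\mathrm{MMS}}(I)$, I can award $g$ to $i$ and recurse on $(N \setminus \{i\}, M \setminus \{g\})$, whose thresholds for the surviving agents only increase. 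Such a reducible pair is guaranteed whenever $m < 2n$: in any agent's optimal $n$-partition, pigeonhole forces a part with at most one item, and that part already meets the threshold, so either the threshold is $0$ (any item works) or the part is a singleton $\{g\}$ with $u_i(g) \ge u_i^{\mathrm{MMS}}(I)$.

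Inducting on $n$, each reduction sends $(n, m) \mapsto (n-1, m-1)$ and preserves the gap $m - n$, so the hypothesis $m \le n+k$ is maintained. Since $m < 2n \iff n > m-n$, the reduction applies as long as $n > k$, driving the instance down to the boundary $n = k$, $m = 2k$; the cases $n \le 2$ are settled outright by Theorem~\ref{thm:mms-2agents}. What remains are the finitely many stuck configurations with $n \le k$ and $m \ge 2n$, the essential one being $n = k$, $m = 2k$, that is, $3$ agents and $6$ goods, $4$ agents and $8$ goods, and $5$ agents and $10$ goods for the three successive bounds.

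The hard part will be these base cases, and their difficulty is exactly what separates the three statements. For each, using the ordered and normalized form, I would attempt an explicit partitioning or bag-filling scheme: seed $k$ bags with the top items, top them up with the cheaper items while monitoring each agent's threshold, and argue by an exchange/averaging argument that every agent can be assigned a bag above their threshold. The $n=3$, $m=6$ analysis is short, but the case analysis balloons as $k$ grows, since more items must be co-packed and the interaction between distinct agents' orderings becomes delicate; the $n = 5$, $m = 10$ case underlying the tight $n+5$ bound requires the most intricate argument and is the genuine obstacle. I would expect the tightness claim to be certified separately by exhibiting a $3$-agent, $9$-good instance (gap $m-n = 6$) admitting no MMS allocation, matching the positive bound exactly.
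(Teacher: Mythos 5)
Your reduction skeleton is sound and is in fact the standard first half of all three cited proofs (and of the machinery this dissertation builds in Chapter~\ref{chapter:MMS-Mixed-Manna} for the harder mixed-manna setting): the SOP normalization is Proposition~\ref{prop:sop}, your one-agent-one-item deletion claim is Proposition~\ref{prop:delete_one_good} and your re-proof of it for goods is correct, and the pigeonhole observation that $m<2n$ forces a bundle of size at most one (hence a satisfying singleton, or a zero threshold) is correct, as is the preservation of the gap $m-n$ under the reduction. Note the paper itself states Theorem~\ref{thm:mms-bounds} without proof, as a cited result, so the comparison must be against the techniques it develops elsewhere.

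The genuine gap is that the base cases, which are the entire mathematical content of the theorem and the reason three separate papers exist, are not proven but only gestured at with ``bag-filling plus exchange/averaging.'' Moreover your triage of those cases is backwards. For $k=5$ the stuck configurations are $(n,m)\in\{(3,6),(3,7),(3,8),(4,8),(4,9),(5,10)\}$, and in all of them except $(3,8)$, once no bundle is a singleton, at least $n-1$ bundles of an agent's MMS partition have size exactly $2$ (shapes $(2,2,2)$, $(2,2,3)$, $(2,2,2,2)$, $(2,2,2,3)$, $(2,2,2,2,2)$); these fall to a Hall's-theorem matching argument, which is exactly Lemma~\ref{lemma:atmost2reduction} of this dissertation. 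So $(5,10)$ is \emph{not} the obstacle. The case that resists is $(3,8)$, where each MMS partition has shape $(2,3,3)$: only one bundle has size at most $2$, the matching lemma is inapplicable, and the known resolution is the delicate structural analysis of $3$-agent, $9$-item instances whose MMS partitions consist of three $3$-bundles (Proposition 23 of \citet{feige2021tight}, reproduced here as Proposition~\ref{prop:39}), which even this dissertation imports rather than reproves when establishing Theorem~\ref{thm:38}. Bag-filling with exchange arguments is the standard tool for \emph{approximate} MMS guarantees; nothing in your sketch indicates how it would yield exact MMS in these configurations, so as written the proposal establishes only the easy reduction half of all three bounds and none of the content that separates $n+3$ from $n+4$ from $n+5$.
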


\citet{feige2021tight} showed the last bound to be tight by exhibiting an additive 3-agent 9-good instance for which an MMS allocation fails to exist.

\subsection{Pareto Optimality}\label{subsection:PO}

Thus far, we have limited our attention to the goal of fairness. However, another equally important consideration is that of economic efficiency. This is the idea of maximizing potential and minimizing waste. Fairness alone is always trivial to achieve. Consider the extreme example of goods division in which all of the goods are purposefully discarded. In such an allocation, every agent receives nothing, so nobody is envious of another. However, this does not seem to be a desirable solution because the utility of the goods is wasted. For this reason, researchers have considered various notions of economic efficiency in conjunction with fairness.

In fact, we have already seen such a notion. Almost all of the results that we have presented so far require allocations to be complete instead of partial, that is, every item must be allocated. By doing so, one can avoid the problem of goods being wasted illustrated by the above example. Although most of the literature focus on complete allocations, a notable exception is the idea of donating a small subset of goods to an external charity in order to achieve a higher level of fairness than what is otherwise possible. For example, this is done in Theorems~\ref{thm:additive-n-2} and \ref{thm:monotone-n-2}. For a selection of other such examples, see \citep{caragiannis2019envy,caragiannis2022little,chaudhury2021improving,chaudhury2021little}.

Another commonly-sought criterion is that of Pareto optimality (also called Pareto efficiency). This is the idea that if it is possible to redistribute items in a way that at least one agent is better off than before, and no agent is worse off than before, then it is desirable to do so. Such a redistribution is known as a Pareto improvement.

\begin{definition}[Pareto Improvements and Pareto Domination]
	Let $\pi = (\pi_1, \pi_2, \dots, \pi_n)$ and $\pi' = (\pi_1', \pi_2', \dots, \pi_n')$ be allocations. We say that $\pi$ is a {\em Pareto improvement} over $\pi'$ (or, $\pi$ {\em Pareto-dominates} $\pi'$) if the following two statements hold:
	\begin{itemize}
		\item There exists an agent $i$ such that $u_i(\pi_i) > u_i(\pi_i')$; and
		\item For each agent $i$, we have $u_i(\pi_i) \geq u_i(\pi_i')$.
	\end{itemize}
\end{definition}

\begin{definition}[Pareto Optimality/Efficiency]
	Let $\pi$ be an allocation. If no allocation is a Pareto improvement over $\pi$, then we say $\pi$ is {\em \abbrevs{Pareto optimal}{PO}} (or, $\pi$ is {\em Pareto efficient}).
\end{definition}

A stronger variant of PO, called {\em fractional Pareto optimality (fPO)} has also been studied. In order to define fPO, we need to define {\em fractional allocations}. Essentially, while an allocation requires that each item be assigned to exactly one agent, a fractional allocation allows for each item to be split arbitrarily between agents. In this sense, fractional allocations are a relaxation of allocations. The definition of Pareto improvements can be generalized for fractional allocations by letting $\pi$ and $\pi'$ be fractional allocations.

\begin{definition}[Fractional Pareto Optimality/Efficiency]
	Let $\pi$ be an allocation. If no fractional allocation is a fractional Pareto improvement over $\pi$, then we say $\pi$ is {\em \abbrevs{fractionally Pareto optimal}{fPO}} (or, $\pi$ is {\em fractionally Pareto efficient}).
\end{definition}

As mentioned above, economic efficiency is usually considered in conjunction with fairness. An important example of this is due to \citet{caragiannis2019unreasonable}, who showed that EF1+PO allocations always exist for additive goods instances. Here, the notation EF1+PO means that EF1 and PO are satisfied simultaneously. \citet{caragiannis2019unreasonable} obtained their result using an elegant solution based on maximizing a social welfare function called {\em Nash welfare} (also called {\em Nash social welfare}).

\begin{definition}[Nash Welfare \citep{nash1950bargaining,kaneko1979nash}]
	The {\em Nash welfare} of an allocation $\pi = (\pi_1, \pi_2, \dots, \pi_n)$ is the product
	$$
	\NW(\pi) \coloneqq \prod_{i \in N} u_i(\pi_i)
	$$
\end{definition}

An important technical consideration is that it is sometimes the case that the Nash welfare of every allocation is zero. For goods instances, this can occur when there are fewer goods than agents. Because of this, the definition of a maximum Nash welfare allocation first maximizes the number of agents who receive a bundle with positive utility, then maximizes the Nash welfare subject to that.

\begin{definition}[MNW Allocations  \citep{caragiannis2019unreasonable}]
	An allocation $\pi$ is said to be {\em \abbrevs{maximum Nash welfare}{MNW}} if it maximizes $\NW(\pi)$ among all possible allocations. In the case that every allocation has zero Nash welfare, we first find a largest subset $S \subseteq N$ of agents to whom we can guarantee positive utility. Then, we say $\pi$ is MNW if it maximizes the product $\prod_{i \in S} u_i(\pi_i)$.
\end{definition}

\begin{theorem}[\citet{caragiannis2019unreasonable}]\label{thm:MNW-EF1-PO}
	For additive goods instances, MNW implies EF1+PO. \qed
\end{theorem}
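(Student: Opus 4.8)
The plan is to prove the two properties separately, starting with Pareto optimality, which is essentially immediate, and then devoting the bulk of the effort to EF1.

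For PO, I would argue by contradiction. Let $\pi$ be an MNW allocation and suppose some $\pi'$ Pareto-dominates it. Let $S \subseteq N$ be the largest set of agents to whom positive utility can be simultaneously guaranteed, as in the definition of MNW, so that $u_i(\pi_i) > 0$ exactly for $i \in S$. Since a Pareto improvement weakly increases every utility, each agent of $S$ still receives positive utility under $\pi'$; and no agent outside $S$ can acquire positive utility, as this would yield a strictly larger set of simultaneously positively-valued agents, contradicting the maximality of $|S|$. Hence the positive-utility set of $\pi'$ is again $S$, the strict improvement must occur for some agent in $S$, and therefore $\prod_{i \in S} u_i(\pi'_i) > \prod_{i \in S} u_i(\pi_i)$, contradicting the choice of $\pi$.

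For EF1, I again argue by contradiction: suppose the MNW allocation $\pi$ is not EF1, witnessed by a pair of agents $i, j$ with $u_i(\pi_i) < u_i(\pi_j \setminus \{g\})$ for every good $g \in \pi_j$. I would first clear away the degenerate cases in which $i$ or $j$ lies outside $S$ (has zero utility): there, a suitable good transferred from $j$ to $i$ constitutes either a Pareto improvement or an enlargement of the positively-valued set, contradicting PO (just established) or the maximality of $|S|$. This lets me assume $u_i(\pi_i), u_j(\pi_j) > 0$. The key idea is then to transfer from $j$ to $i$ the good $g \in \pi_j$ \emph{minimizing the ratio} $u_j(g)/u_i(g)$ over all goods with $u_i(g) > 0$ (such a good exists because $i$ envies $j$, so $u_i(\pi_j) > 0$), producing an allocation $\pi'$ that differs from $\pi$ only in the bundles of $i$ and $j$.

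It then suffices to show $u_i(\pi'_i)\,u_j(\pi'_j) > u_i(\pi_i)\,u_j(\pi_j)$, which contradicts the maximality of $\NW(\pi)$ since every other bundle is unchanged. The minimality of the chosen ratio, summed over the goods in $\pi_j$ that $i$ values positively, gives $u_j(g)/u_i(g) \le u_j(\pi_j)/u_i(\pi_j)$; the strong-envy hypothesis gives $u_i(\pi_i) + u_i(g) < u_i(\pi_j)$. A short manipulation of these two inequalities, after expanding the product, yields the required strict increase, and incidentally shows $u_j(\pi_j) > u_j(g)$, so that $j$ retains positive utility and the comparison is legitimate. The main obstacle is precisely this EF1 argument: recognizing that the ratio-minimizing good is the correct one to move, and carrying out the algebra that converts the two inequalities into a strict gain in the product $u_i \cdot u_j$; the degenerate zero-utility cases require some additional bookkeeping but follow from the PO property already in hand.
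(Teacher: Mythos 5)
Your proposal is correct. The paper itself states this theorem without proof (it is quoted from \citet{caragiannis2019unreasonable} and marked with a \qed), so there is no internal proof to compare against; what you have written is essentially the original argument from that cited work: Pareto optimality via the positive-utility set $S$ staying fixed under any Pareto improvement, and EF1 via transferring the ratio-minimizing good $g^* \in \argmax$\,---\,more precisely $g^* \in \arg\min_{g \in \pi_j,\, u_i(g)>0} u_j(g)/u_i(g)$\,---\,and checking that the product $u_i(\pi_i')\,u_j(\pi_j')$ strictly increases, using $u_j(g^*)/u_i(g^*) \leq u_j(\pi_j)/u_i(\pi_j)$ together with the EF1-violation inequality $u_i(\pi_i) + u_i(g^*) < u_i(\pi_j)$. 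Both the algebra and your handling of the zero-utility degenerate cases (reducing them to PO or to the maximality of $|S|$) are sound.
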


\citet{amanatidis2021maximum} showed that for the special case of additive 2-valued goods instances, MNW allocations are in fact EFX$_0$.

\begin{theorem}[\citet{amanatidis2021maximum}]\label{thm:2-valued-efx}
	For additive 2-valued goods instances, MNW implies EFX$_0$. \qed
\end{theorem}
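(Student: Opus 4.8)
The plan is to assume $\pi$ is an MNW allocation that fails EFX$_0$ and to derive a contradiction by exhibiting a reallocation with strictly larger Nash welfare. By Theorem~\ref{thm:MNW-EF1-PO}, $\pi$ is EF1 and PO, and I may scale the instance so that $u_i(o) \in \{a,b\}$ with $a \ge b \ge 0$ for all agents $i$ and goods $o$. If $\pi$ is not EFX$_0$, then some agent $i$ strongly envies some agent $j$; choosing $g \in \pi_j$ to minimise $u_i(g)$ gives $u_i(\pi_i) < u_i(\pi_j) - u_i(g)$, while EF1 applied to the good of $\pi_j$ of largest $i$-value gives $u_i(\pi_i) \ge u_i(\pi_j) - \max_{o \in \pi_j} u_i(o)$. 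Hence $\min_{o \in \pi_j} u_i(o) < \max_{o \in \pi_j} u_i(o)$, so in the two-valued world $\pi_j$ must contain a good $h$ with $u_i(h) = a$ and a good $g$ with $u_i(g) = b$, and $u_i(\pi_i)$ is trapped in the window $[\,u_i(\pi_j) - a,\ u_i(\pi_j) - b\,)$.

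Next I would set up the exchange inequalities forced by maximality. Writing $X = u_i(\pi_i)$ and $Y = u_j(\pi_j)$, transferring any single good $o \in \pi_j$ to $i$ cannot increase Nash welfare, which (for $X,Y>0$) amounts to
\[
u_i(o)\,Y \;\le\; u_j(o)\,\bigl(X + u_i(o)\bigr).
\]
Summing this over all $o \in \pi_j$ with $u_i(o)>0$ merely reproduces the EF1 bound $u_i(\pi_i) \ge u_i(\pi_j) - a$, so a sharper move is required to cross from EF1 to EFX$_0$. My plan is to classify the goods of $\pi_j$ by their type $(u_i(o), u_j(o)) \in \{a,b\}^2$ and to use the window above together with these per-good inequalities to locate either a single good whose transfer strictly raises the product $XY$, or a good $o' \in \pi_i$ that can be swapped back to $j$ --- choosing $o'$ with $u_j(o')$ large and $u_i(o')$ small, which is meaningful precisely because the same two values $a,b$ are shared across all agents.

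The main obstacle is that a naive one-way transfer of the high-value good $h$ improves Nash welfare only when $Y > X + u_j(h)$, and this need not hold; when $j$'s bundle is too valuable to $j$ the transfer over-penalises $j$. Resolving this is the crux: I expect to split on the value $u_j(h)$ (and $u_j(g)$), and in the problematic branch to compensate $j$ by handing back a suitable good from $\pi_i$, verifying via the swap version of the inequality above that the net Nash welfare strictly increases. Two bookkeeping points must be handled throughout: MNW first maximises the number of agents with positive utility, so any move that would zero out $u_j(\pi_j)$ needs separate treatment; and the degenerate case $b = 0$ should be dispatched first, since there EFX$_0$ demands outright envy-freeness toward any $j$ whose bundle contains a good $i$ values at $0$, which collapses the window to the single equality $u_i(\pi_i) = u_i(\pi_j) - a$ and makes the transfer of an $a$-valued good go through directly.
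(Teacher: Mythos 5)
You should know at the outset that the paper contains no proof of this theorem: it is imported from \citet{amanatidis2021maximum} and stated without proof, so your proposal has to stand entirely on its own. As written, it does not. Your setup is correct --- strong envy plus EF1 does trap $X = u_i(\pi_i)$ in the window $[\,u_i(\pi_j)-a,\ u_i(\pi_j)-b\,)$, forces $\pi_j$ to contain goods of $i$-value $a$ and of $i$-value $b$, and the per-good inequality $u_i(o)\,Y \le u_j(o)\,(X+u_i(o))$ is the right consequence of MNW-maximality --- but everything after that is "my plan is" and "I expect to": the case analysis that is supposed to produce the contradiction is never carried out. Worse, the one concrete mechanism you offer for the hard branch is unsound: the compensating good $o' \in \pi_i$ with $u_j(o')=a$ and $u_i(o')=b$ need not exist, since neither MNW nor strong envy gives you any control over the composition of $\pi_i$. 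The branch can in fact be closed with single transfers only, by pairing the transfer inequalities with a \emph{lower bound on} $Y=u_j(\pi_j)$ rather than with a swap (assume $X,Y>0$; the positivity bookkeeping you flag is real but routine). Write $n_a, n_b \ge 1$ for the number of goods in $\pi_j$ of $i$-value $a$ and $b$. If some $i$-value-$a$ good has $j$-value $b$, the transfer inequality gives $aY \le b(X+a)$, while $Y \ge b(n_a+n_b)$ and strong envy gives $X < an_a+bn_b-b$; chaining these yields $(a-b)n_b < a-b$, i.e.\ $n_b<1$, a contradiction (for $b>0$). If instead every $i$-value-$a$ good in $\pi_j$ has $j$-value $a$, the transfer inequality gives $Y \le X+a$; then either some $i$-value-$b$ good has $j$-value $b$, whence $Y \le X+b$ while $Y \ge an_a+bn_b > X+b$ by strong envy, a contradiction, or every good of $\pi_j$ has $j$-value $a$, whence $Y=a(n_a+n_b) \le X+a < an_a+bn_b+a-b$ again forces $n_b<1$. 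No good from $\pi_i$ is ever touched.

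The second gap is your claim that the degenerate case $b=0$ "goes through directly." It does not, and in fact the statement as this paper writes it (EFX$_0$, with MNW defined by maximizing the product) is \emph{false} at $b=0$: take two agents, three goods worth $1$ to both, and one good $g$ worth $0$ to both. The allocation $(\{h_1\},\{h_2,h_3,g\})$ maximizes Nash welfare (the product $1\cdot 2=2$ is optimal), yet agent $1$ still envies agent $2$ after removing $g$, so it is not EFX$_0$; and transferring an $a$-good only turns the product $1\cdot 2$ into $2\cdot 1$, so nothing "goes through directly." The result of \citet{amanatidis2021maximum} concerns the variant in which only positively-valued goods may be removed (EFX$_-$ in this paper's notation); for $a>b>0$ that notion coincides with EFX$_0$, which is exactly the regime where the transfer argument above works. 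A correct proof of the EFX$_0$ statement must therefore either assume $b>0$ or add a hypothesis such as every good having positive value to at least one agent --- under which Pareto optimality of MNW ensures every good in $\pi_j$ has $j$-value $a$, and the same counting as above closes the $b=0$ case. (This added hypothesis is implicitly available in the paper's own application of the theorem to binary graphical instances in Chapter~\ref{chapter:EFX-Multigraphs}.)
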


As a consequence of Theorems~\ref{thm:MNW-EF1-PO} and \ref{thm:2-valued-efx}, we obtain the following corollary, which is an improvement over Theorem~\ref{thm:efx-additive-2-valued}.

\begin{corollary}\label{cor:efx-additive-2-valued}
	For additive 2-valued goods instances, MNW implies EFX$_0$+PO. \qed
\end{corollary}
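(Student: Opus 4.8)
The plan is to combine the two preceding theorems, applying both to a single MNW allocation. Fix an additive 2-valued goods instance $I$ and let $\pi$ be any MNW allocation of $I$. The claim is that this same $\pi$ is simultaneously EFX$_0$ and PO, which is exactly what EFX$_0$+PO means.

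First I would observe that an additive 2-valued goods instance is, in particular, an additive goods instance --- it is simply the special case in which the item utilities are drawn from a set of size at most two. Consequently, Theorem~\ref{thm:MNW-EF1-PO} applies to $I$, and it tells us that $\pi$ is EF1+PO. From this I extract only the PO conclusion (discarding the EF1 half, which will be superseded), so $\pi$ is PO. Second, I would invoke Theorem~\ref{thm:2-valued-efx}, which applies directly to additive 2-valued goods instances and guarantees that $\pi$ is EFX$_0$.

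The crucial point is that both conclusions are statements about the \emph{same} allocation $\pi$: Theorem~\ref{thm:2-valued-efx} certifies that $\pi$ is EFX$_0$, while Theorem~\ref{thm:MNW-EF1-PO} certifies that $\pi$ is PO. Since a single MNW allocation inherits both guarantees, it is EFX$_0$+PO, completing the argument.

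There is essentially no obstacle here; the content of the corollary lies entirely in the two theorems it cites, and the only thing to verify is the class inclusion that lets Theorem~\ref{thm:MNW-EF1-PO} be applied to the 2-valued setting. If anything, the one subtlety worth a sentence is that EFX$_0$ implies EF1 (as noted earlier in the excerpt), so the weaker EF1 guarantee from Theorem~\ref{thm:MNW-EF1-PO} is indeed redundant once EFX$_0$ is established, which is why the combined conclusion is stated as EFX$_0$+PO rather than EF1+EFX$_0$+PO.
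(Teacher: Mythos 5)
Your proposal is correct and is exactly the paper's own (implicit) argument: the paper derives the corollary as a direct consequence of Theorem~\ref{thm:MNW-EF1-PO} (yielding PO, since a 2-valued additive goods instance is an additive goods instance) and Theorem~\ref{thm:2-valued-efx} (yielding EFX$_0$), applied to the same MNW allocation. Nothing further is needed.
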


Unfortunately, computing an MNW allocation is NP-hard \citep{ramezani2009nash,roos2010complexity}, so Theorem~\ref{thm:MNW-EF1-PO} does not lead to an efficient algorithm for computing an EF1+PO allocation. Efforts have been made to find ways to approximate MNW in polynomial-time. \citet{cole2015approximating} found a polynomial-time algorithm with an approximation factor of approximately 0.346. Later, \citet{barman2018finding} improved the approximation factor to 0.692. Unfortunately, \citet{lee2017apx} showed that the problem of maximizing Nash welfare is APX-hard, implying that no polynomial-time algorithm can achieve an approximation ratio better than approximately 0.999 unless $P = NP$. \citet{garg2018approximating} later improved the upper bound of the best possible approximation ratio to $1/\sqrt{8/7} < 0.936$. To the best of our knowledge, these approximation bounds are the state-of-the-art at the time of writing.

On the other hand, some researchers have discovered restricted settings for which MNW allocations of goods can be computed efficiently. \citet{amanatidis2021maximum} showed that computing MNW allocations can be done in polynomial time for binary additive instances, but remains NP-hard for 3-valued instances. \citet{garg2022tractable} showed that computing MNW allocations can be done in polynomial time for monotone instances in which each agent considers at most $k = 2$ items to have positive utility, but becomes NP-hard when $k=3$.

Another line of research is to compute EF1+PO directly, without relying on MNW allocations. \citet{barman2018finding} found a pseudopolynomial-time algorithm for computing EF1+PO allocations for additive goods instances. \citet{murhekar2021fair} improved the guarantee to EF1+fPO for the same setting.

On the chores side, \citet{garg2022fair} found a strongly polynomial-time algorithm for computing an EF1+PO allocation for 2-valued additive instances. \citet{garg2023new} found strongly polynomial-time algorithms for computing EF1+fPO allocations for additive instances for which $n \leq 3$, and for additive instances for which there are only two {\em types} of agents (two agents are said to be of the same {\em type} if they share the same utility function). Most recently, \citet{mahara2025existence} showed that EF1+fPO allocations exist for all additive instances, and that EF1+PO allocations can computed in polynomial-time if the number of agents is constant.

Researchers have also studied EFX+PO allocations \citep{hosseini2023fairly,garg2023new,camacho2023generalized,garg2023computing,tao2025existence} and MMS+PO allocations \citep{mcglaughlin2020improving,kulkarni2021indivisible,ebadian2022fairly}.

\section{The Graphical Model}

The graphical model of fair division introduced by \citet{christodoulou2023fair} is subsumed by the general model defined above, and can be considered to be a special case.

Let $I = (N, M, U)$ be a instance of the fair division problem. It is often the case that not every agent cares about every item. Such a situation can arise when items are located in physical space and agents have geographical constraints. This suggests the following definition.

\begin{definition}[Relevant Items]
	An item $o_j$ is said to be {\em relevant to the agent} $i$ if it has non-zero marginal utility to agent $i$, that is, there exists a subset $S \subseteq M$ of items such that $u_i(S \cup \{o_j\}) \neq u_i(S)$.
\end{definition}

The graphical model of fair division makes the assumption that each item is relevant to at most two agents, leading to the so-called {\em graphical instances}.

\begin{definition}[Graphical Instances]
	An instance $I$ of the fair division problem is said to be {\em graphical} if each item is relevant to at most two agents.
\end{definition}

Graphical instances are so named because they can be  represented as a multigraph $G$ on $n$ vertices and $m$ edges, such that each vertex (resp.\ edge) is identified with an agent (resp.\ item), and an edge is incident to a vertex if the edge's corresponding item is relevant to the vertex's corresponding agent. For convenience, we call $G$ the {\em underlying multigraph} of the instance.

\begin{definition}[Underlying Multigraph]
	The multigraph representing a graphical instance is called the {\em underlying multigraph} of the instance.
\end{definition}

It is convenient to describe a graphical instance $I=(N,M,U)$ using its underlying multigraph by writing $I=(G,U)$.

\begin{example}[An Underlying Multigraph]
	The following multigraph represents a graphical instance with 5 agents and 8 items. There are two items that are relevant to both agents 2 and 3 represented by the two parallel edges between them, and one item that is relevant only to agent 3 represented by the self-loop at vertex 3.
	$$\begin{tikzpicture}
			\node[redvertex] (1) at (0,0) {};
			\node[redvertex] (2) at (1.5,0) {};
			\node[redvertex] (3) at (1.5,1.5) {};
			\node[redvertex] (4) at (0,1.5) {};
			\node[redvertex] (5) at (0.75,2.2) {};
			\node (l3) at (1.75,1.2) {$3$};
			\node (l4) at (-0.3,1.5) {$2$};
			\node (l5) at (0.75, 2.5) {$1$};
			\node (l1) at (-0.2, -0.2) {$4$};
			\node (l2) at (1.7, -0.2) {$5$};
			\draw (1) to (2) (2) to (3) (1) to (4) (4) to (5) (5) to (3);
			\draw (4) to[bend left=25] (3)
			(4) to[bend right=25] (3);
			\draw (3) edge[out=-20, in=50, loop] (3);
		\end{tikzpicture}$$
\end{example}

It is desirable to allocate items only to agents to whom they are relevant --- if an agent does not find an item to be relevant, it is wasteful to allocate it to them. These types of allocations are particularly important and are given a special name.

\begin{definition}[Orientations]
	An allocation $\pi = (\pi_1, \pi_2, \dots, \pi_n)$ is said to be an {\em orientation} if for each bundle $\pi_i$ and each item $o_j \in \pi_i$, the item $o_j$ is relevant to the agent $i$.
\end{definition}

Allocations $\pi$ of a graphical instance $I$ that are orientations correspond to orientations of the underlying multigraph $G$ of $I$ in a natural way. Specifically, $\pi$ corresponds to the orientation of $G$ in which an edge $e$ is directed toward a vertex $i$ if and only if the item corresponding to $e$ is in the bundle $\pi_i$ of $\pi$.

\begin{example}[An Orientation]
	The following graph orientation represents an allocation that is an orientation, in which the item represented by the edge between 1 and 2 is allocated to agent 2, and the item represented by the edge between 2 and 5 is allocated to agent 5.
	$$\begin{tikzpicture}
		\node[redvertex] (1) at (0,0) {};
		\node[redvertex] (2) at (0,2) {};
		\node[redvertex] (3) at (1,1) {};
		\node[redvertex] (4) at (2,0) {};
		\node[redvertex] (5) at (2,2) {};
		\node at (-0.2,-0.2) {1};
		\node at (-0.2,2.2) {2};
		\node at (1,1.3) {3};
		\node at (2.2,-0.2) {4};
		\node at (2.2,2.2) {5};
		\draw[arc] (1) to (2);
		\draw[arc] (2) to (3);
		\draw[arc] (4) to (3);
		\draw[arc] (5) to (4);
		\draw[arc] (3) to (1);
		\draw[arc] (1) to (4);
		\draw[arc] (2) to (5);
	\end{tikzpicture}$$
\end{example}

We note that our definition of graphical instances is slightly more general than the original from \citep{christodoulou2023fair}. The original definition requires each item to be relevant to exactly two agents and there to be at most one common relevant item between each pair of agents. Thus, while our definition of graphical instances correspond to multigraphs with self-loops, the graphical instances from \citep{christodoulou2023fair} correspond to simple graphs. We choose the more general definition because some of our results apply to the more general setting.

Recently, many researchers have been interested in the existence of fair allocations and orientations in the graphical model, and the computational complexity of finding such allocations. Below, we survey a small selection of particularly relevant results.

By a {\em graph of goods} (resp.\ {\em graph of chores}), we mean the underlying graph of a graphical instance containing only goods (resp.\ chores).

\begin{theorem}[\citet{christodoulou2023fair}]
	For simple graphs of goods, EFX$_0$ allocations always exist, but deciding whether EFX$_0$ orientations exist is NP-hard. \qed
\end{theorem}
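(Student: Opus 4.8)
The statement bundles two independent claims, and I would prove them separately after first establishing a common structural characterization.

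\textbf{Setup and characterization.} First I would reformulate EFX$_0$ through strong envy and specialize it to the graph structure. Since $G$ is a simple graph of goods, agent $i$ assigns positive value only to the edges incident to its vertex, and between any two vertices there is at most one edge. The key observation is that in an \emph{orientation}, any bundle $\pi_j$ with $j \neq i$ can contain at most one good that $i$ values positively, namely the edge $\{i,j\}$: any edge in $\pi_j$ incident to $i$ must be oriented toward $j$, forcing its other endpoint to be $j$. Hence $u_i(\pi_j) \in \{0, u_i(\{i,j\})\}$, and $i$ strongly envies $j$ precisely when $\{i,j\}$ is oriented toward $j$, the bundle $\pi_j$ holds some further (worthless-to-$i$) good, and $u_i(\pi_i) < u_i(\{i,j\})$. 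This reduces EFX$_0$ for orientations to a transparent per-edge condition, which is the backbone of both parts.

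\textbf{Existence of EFX$_0$ allocations.} Because deciding EFX$_0$ orientations is NP-hard, orientations cannot always be EFX$_0$, so the existence proof must exploit the extra freedom of allocations to route a troublesome edge to a non-endpoint. The plan is to argue by induction on the number of vertices (or edges): identify an agent that can be satisfied and removed, assign its incident edges---placing a conflicting edge $\{i,j\}$ on a third agent that currently holds no other good relevant to the envier---and recurse on the remaining instance, finally checking that reinserting the removed agent preserves EFX$_0$. The small-agent cases are already covered by the earlier theorems guaranteeing EFX$_0$ for $n \le 2$ and $n \le 3$, which serve as a sanity base. The main obstacle is controlling the third-party placements: I must never let a single foreign bundle accumulate two goods that are both positively valued by the same envier (two edges $\{i,a\}, \{i,b\}$ landing together would be unfixable in the EFX$_0$ sense), while simultaneously keeping each agent's own bundle value large enough to dominate the single-edge values it surrenders. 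A local-search alternative---start from an arbitrary orientation and, while strong envy persists, reorient or offload the offending edge to decrease a lexicographic envy potential---is also worth attempting, with termination at an EFX$_0$ allocation as the goal.

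\textbf{NP-hardness of EFX$_0$ orientations.} I would reduce from a standard NP-complete problem such as $3$-SAT or a restricted variant like monotone NAE-$3$-SAT. The per-edge characterization is ideally suited: I would build a graph whose edge orientations encode a truth assignment, using variable gadgets that force a binary choice (the orientation of a designated edge $=$ the truth value) and propagate it consistently, together with clause gadgets that admit a locally EFX$_0$ orientation if and only if the clause is satisfied. Carefully chosen edge values make the inequality $u_i(\pi_i) \ge u_i(\{i,j\})$ play the role of the logical constraint. The hard part will be the gadget design, because the EFX$_0$ condition couples a local edge to the \emph{global} bundle value $u_i(\pi_i)$: giving an agent more edges both helps it meet its own domination constraints and forces it to surrender edges elsewhere, so I must engineer this tension to encode exactly the intended logic while introducing no spurious strong envy anywhere in the construction. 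Verifying that the reduction is polynomial and that satisfying assignments correspond bijectively (at the level of feasibility) to EFX$_0$ orientations then yields NP-hardness.
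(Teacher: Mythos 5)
You are attempting to prove a statement that the dissertation itself does not prove: it is quoted from \citet{christodoulou2023fair} and stated with a terminal \qed, so the only internal point of comparison is the dissertation's own, closely related hardness result for multigraphs (Theorem~\ref{thm:contribution-2a}, proved in Chapter~\ref{chapter:EFX-Multigraphs}). Your structural setup is correct and genuinely useful: in an orientation of a simple graph, a bundle $\pi_j$ contains at most one edge that agent $i$ values (namely $\{i,j\}$), so $i$ strongly envies $j$ exactly when $\{i,j\}$ points to $j$, $\pi_j$ contains a second item, and $u_i(\pi_i) < u_i(\{i,j\})$. This per-edge reformulation is indeed the backbone of the known hardness arguments, including the one in Chapter~\ref{chapter:EFX-Multigraphs}.

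However, both halves of your proposal stop at the point where the actual mathematical content begins. For existence, you state an induction and a local-search alternative, but you give no rule that provably avoids the obstacle you yourself identify (two edges valued by the same envier landing in one foreign bundle, while every agent's own bundle stays above the value of each edge it surrenders), no potential function for the local search, and no termination argument; since orientations can fail to be EFX$_0$, this third-party routing is precisely the theorem, and it is missing. For hardness, you name a source problem and the intended encoding (orientation of a designated edge $=$ truth value) but construct no gadgets. Compare with the dissertation's Theorem~\ref{thm:contribution-2a}: there the reduction is from \textsc{CircuitSAT}, and essentially all of the proof consists of building explicit OR, NOT, TRUE, and duplication gadgets (Figure~\ref{fig:gadgets}), proving forced-orientation lemmas such as Lemma~\ref{lemma:subgadget}, and verifying that each gadget admits an EFX$_0$ orientation under exactly the intended truth values. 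You correctly anticipate the tension between a local edge and the global bundle value, but anticipating it is not engineering it; without concrete gadgets and their two-directional correctness proofs, NP-hardness is not established. As written, the proposal is a reasonable research plan whose two key constructions remain open.
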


\citet{zeng2024structure} found surprising connections between EFX$_0$ orientations and the chromatic number of a graph.

\begin{theorem}[\citet{zeng2024structure}]
	Bipartite simple graphs of goods have EFX$_0$ orientations, regardless of the utilities of the edges. On the other hand, EFX$_0$ orientations can fail to exist for simple graphs whose chromatic number is 3 or greater.
\end{theorem}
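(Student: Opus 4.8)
The statement has two parts, which I would handle separately. For the positive direction, I want to show every bipartite simple graph of goods admits an EFX$_0$ orientation regardless of edge utilities. The natural approach is to exploit the bipartition $V = A \cup B$ to decouple the orientation into two independent subproblems. The key structural fact is that in a bipartite graph, every edge has exactly one endpoint in $A$ and one in $B$. I would orient edges greedily in a way that controls strong envy. Concretely, I would process the vertices of $A$ (say) and for each vertex decide which incident edges to orient toward it, while orienting the rest toward their $B$-endpoints; then the envy relation decomposes because an agent in $A$ can only receive edges shared with agents in $B$ and vice versa. The central idea is that since each agent's bundle consists of edges, and strong envy toward another agent requires that agent to hold an edge the envier values, I can always route each edge to the endpoint that ``needs it least'' to avoid creating strong envy. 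A clean way to make this precise is to show that orienting every edge toward the endpoint whose current bundle value is smaller (breaking ties consistently) terminates in an EFX$_0$ orientation; the bipartite structure guarantees no odd alternating conflict arises, so the local improvement cannot cycle.

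For the negative direction, the claim is much stronger than exhibiting a single bad graph: I must show that \emph{for every} simple graph $G$ with $\chi(G) \geq 3$, there exist edge utilities for which no EFX$_0$ orientation exists. The plan is to leverage the hypothesis $\chi(G) \geq 3$ via the fact (stated earlier as a folklore proposition) that $\chi(G) \geq 3$ is equivalent to $G$ containing an odd cycle. So it suffices to reduce to the case where $G$ contains an odd cycle $C$ as a subgraph. I would then assign large, nearly-equal utilities to the edges of $C$ and negligible utilities to all other edges (so that the non-cycle edges can be oriented harmlessly toward a fixed endpoint and effectively ignored). On the odd cycle, the obstruction is parity: any orientation of a cycle of odd length $2k+1$ must have some vertex receiving two incident cycle-edges (by a counting or pigeonhole argument, since $2k+1$ edges distributed among $2k+1$ vertices as in-degrees, with the constraint that each edge goes to one of two consecutive vertices, cannot give every vertex in-degree at most one on the cycle). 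I would then choose the utilities so that a vertex holding two such high-value edges induces strong envy from its neighbor on the cycle who holds only one, violating EFX$_0$.

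The main obstacle is making the odd-cycle obstruction airtight. The subtle point is that strong envy is a \emph{two-sided} relation: when agent $i$ holds two cycle-edges and neighbor $j$ holds one, I need agent $j$ to strongly envy $i$, meaning $u_j(\pi_j) < u_j(\pi_i \setminus \{o\})$ for the less-valued of $i$'s two edges $o$. This requires calibrating the utilities so that even after removing one edge from $i$'s bundle, the remaining edge still outvalues $j$'s entire bundle in $j$'s eyes. I would set each cycle edge to have utility roughly $1$ to both its endpoints (with tiny perturbations to break ties), so that a vertex with two cycle-edges has bundle value near $2$ and a vertex with one has value near $1$; removing one edge leaves value near $1$, which must strictly exceed the single-edge bundle --- hence the perturbations must be oriented so the retained edge slightly outvalues the neighbor's edge. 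Verifying that \emph{some} assignment of perturbations simultaneously defeats \emph{all} orientations of the odd cycle is the crux; I expect this to follow because every orientation forces at least one ``doubled'' vertex, and by symmetry of the cycle I can make the perturbations uniform enough that whichever vertex is doubled, its under-endowed cycle-neighbor strongly envies it. Extending this from the odd cycle subgraph to all of $G$ requires checking that orienting the off-cycle edges toward arbitrary endpoints with negligible utility neither repairs the envy on $C$ nor creates a spurious EFX$_0$-satisfying configuration, which should be immediate from additivity once the off-cycle utilities are taken small enough.
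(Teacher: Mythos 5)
This theorem is quoted from \citet{zeng2024structure} and stated in the dissertation without proof, so your proposal can only be judged on its own merits --- and both directions have genuine gaps, the second one fatal.

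The negative direction fails at its core step. You claim that any orientation of an odd cycle $C_{2k+1}$ must leave some vertex with two incident cycle-edges "by a counting or pigeonhole argument." This is false: the cyclic orientation gives every vertex in-degree exactly one ($2k+1$ edges, $2k+1$ vertices, one each), and parity plays no role whatsoever --- the same works for even cycles. Worse, this sinks not just your proof but your reading of the statement. In that cyclic orientation every bundle is a singleton, and for goods a singleton bundle can never be strongly envied under EFX$_0$: removing its unique item leaves utility $0 \leq u_i(\pi_i)$. Hence a graph that \emph{is} an odd cycle, e.g.\ $C_5$ with $\chi(C_5) = 3$, admits an EFX$_0$ orientation for \emph{every} utility assignment (as does any pseudoforest, by the same singleton argument). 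So the universal claim you set out to prove --- "for every $G$ with $\chi(G) \geq 3$ there exist utilities defeating all orientations" --- is simply false, and the theorem must be read existentially: bipartiteness is sufficient, and the threshold is tight because there exist graphs of chromatic number $3$ with utilities admitting no EFX$_0$ orientation. Such nonexistence requires more structure than a bare odd cycle; compare Example~\ref{ex:po-alloc-not-orientation}, where $K_4$ with two disjoint heavy edges and zero-valued remaining edges has no EFX$_0$ orientation precisely because the zero edges cannot be parked harmlessly (under EFX$_0$, even a zero-valued item in a bundle fails to shield a heavy edge from removal). No "reduce to an odd cycle and perturb" argument can succeed.

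The positive direction is also not a proof as written. Your greedy rule "orient each edge toward the endpoint whose current bundle value is smaller" compares $u_i(\pi_i)$ with $u_j(\pi_j)$ across two different agents, and --- decisively --- the decision to hand edge $e$ to $j$ never consults $u_i(e)$, which is exactly the quantity strong envy depends on. Concretely, one can arrange (using a pendant edge or two to set up the bundles) that $b$ already holds one edge with $u_b(\pi_b) = 1$, that its neighbour $a$ holds a bundle worth $2$ to itself, and that $u_a(ab) = 100$; the greedy then sends $ab$ to $b$ since $1 < 2$, after which $b$ holds two edges and $a$ strongly envies $b$ because $u_a(\pi_a) = 2 < 100 = u_a(\pi_b \setminus \{e'\})$ for the other edge $e' \in \pi_b$. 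Your claim that "the bipartite structure guarantees no odd alternating conflict arises, so the local improvement cannot cycle" is asserted, not shown, and indeed your sketch never actually uses bipartiteness. The standard argument does, via a one-step construction: fix the bipartition $A \cup B$, give each vertex $a \in A$ its favourite incident edge (distinct $A$-vertices claim distinct edges, since each edge has one $A$-endpoint), and orient every remaining edge toward its $B$-endpoint. Then every $A$-vertex holds a singleton, so no one strongly envies an $A$-vertex; no $a \in A$ strongly envies a $B$-neighbour, since $u_a(\pi_a) \geq u_a(e)$ for every edge $e$ incident to $a$; and two $B$-vertices value each other's bundles at $0$ because they are non-adjacent. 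The key idea your sketch is missing is this asymmetric treatment of the two sides --- singletons on one side, maximal satisfaction on the other --- rather than any balancing of bundle values.
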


Some results regarding chores and mixed manna have also been obtained.

\begin{theorem}[\citet{zhou2024complete}]
	There exists a polynomial-time algorithm that decides whether an EFX$_-$ orientation exists for a simple graph of chores.
\end{theorem}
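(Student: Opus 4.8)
The plan is to show that, on a simple graph of chores, the EFX$_-$ orientations are \emph{exactly} the orientations in which every vertex receives at most one chore, and then to decide the existence of such an orientation by a Hall-type matching argument.

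First I would exploit the locality forced by simplicity. Fix an orientation and identify each bundle $\pi_i$ with the set of edges directed toward vertex $i$. Since $G$ is simple, the only edge relevant to agent $i$ that can lie in another agent's bundle $\pi_j$ is the unique edge $e_{ij}$ between $i$ and $j$; hence $u_i(\pi_j)=u_i(e_{ij})<0$ if $e_{ij}$ is directed toward $j$, and $u_i(\pi_j)=0$ otherwise. In particular $u_i(\pi_j)\le 0$ for every $j$. I would then argue both directions of the claimed equivalence. If $|\pi_i|\le 1$, then removing the (at most one) chore from $\pi_i$ leaves utility $0\ge u_i(\pi_j)$, so $i$ never strongly envies anyone. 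Conversely, if $|\pi_i|\ge 2$, pick any neighbour $j$ with $e_{ij}\in\pi_i$; then $u_i(\pi_j)=0$, while deleting the chore $o\in\pi_i$ of largest (least negative) value still leaves a strictly negative sum, so $u_i(\pi_i\setminus\{o\})<0=u_i(\pi_j)$ and $i$ strongly envies $j$. This establishes that an orientation is EFX$_-$ if and only if every vertex has in-degree at most one.

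It then remains to decide whether $G$ admits an orientation with maximum in-degree at most one. I would phrase this as assigning to each edge a distinct endpoint to serve as its recipient: build the bipartite incidence graph with the edges of $G$ on one side and $V(G)$ on the other, and seek a matching saturating every edge. By Hall's theorem (Theorem~\ref{thm:hall}), such a matching exists if and only if every set $F$ of edges touches at least $|F|$ vertices, i.e.\ if and only if every subgraph of $G$ has at least as many vertices as edges; equivalently, each connected component of $G$ contains at most one cycle (a pseudoforest). This condition is checkable in linear time by comparing the edge and vertex counts of each component, and when it holds an explicit orientation is obtained by orienting the tree edges away from a chosen root and orienting each unique cycle as a directed cycle, giving every vertex in-degree at most one. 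Running bipartite matching (or the component test) therefore decides EFX$_-$ orientability in polynomial time.

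The main obstacle I anticipate is the first reduction rather than the graph-theoretic bookkeeping: one must notice that simplicity collapses the EFX$_-$ condition to the purely combinatorial bound $|\pi_i|\le 1$, and verify carefully that a vertex of in-degree two always \emph{strongly} (not merely weakly) envies a suitable neighbour, using that the neighbour joined by an inward-oriented edge contributes zero relevant utility. Once this characterization is in hand, the existence question reduces to a standard orientation/matching problem, and the polynomial-time guarantee follows immediately.
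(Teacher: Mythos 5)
Your overall strategy (collapse EFX$_-$ on a simple graph to a degree constraint, then decide it by a matching/pseudoforest test) is the right one, but your key characterization is wrong, and it is wrong in exactly the place that makes EFX$_-$ different from EFX$_0$. You claim an orientation is EFX$_-$ if and only if every vertex has in-degree at most one. That equivalence only holds if every edge has \emph{strictly} negative utility to both endpoints. A graph of chores merely requires non-positive marginal utilities, so edges may have zero utility to one or both endpoints, and the EFX$_-$ condition quantifies only over items $o_i \in \pi_i$ with $u_i(o_i) < 0$. Concretely, take $K_4$ in which every edge has utility $0$ to both endpoints: every orientation is (vacuously) EFX$_-$, yet your algorithm rejects, since the unique component has $6$ edges and $4$ vertices and so fails the pseudoforest test. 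Your ``conversely'' step breaks at the same point: with $|\pi_i|\ge 2$, if the chore of largest value in $\pi_i$ has value $0$ it is not even eligible for removal under EFX$_-$, and if $\pi_i$ consists of one negative edge plus dummy edges, removing that negative edge leaves utility $0$, not a strictly negative sum, so no violation occurs.

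The correct characterization is: an orientation of a simple graph of chores is EFX$_-$ if and only if each vertex receives at most one edge of \emph{strictly negative} utility \emph{to it} (zero-utility edges may accumulate without limit on a vertex). This is the same condition as for EF1 (Proposition~\ref{prop:EF1-graphs}), so on simple graphs EFX$_-$ and EF1 orientations coincide. The repaired algorithm then needs a preprocessing step in the spirit of Proposition~\ref{prop:EF1-augment}: orient every edge having zero utility to at least one endpoint toward such an endpoint (this never creates a violation for the receiver and only weakly reduces the envy of the other endpoint), and apply your Hall/pseudoforest test only to the subgraph of edges that are strictly negative to both endpoints, checking that each of its components has at most as many such edges as vertices. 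With that correction your argument goes through and essentially reproduces the paper's own EF1 orientation algorithm (Theorem~\ref{thm:mainContributionThreeB}); without it, the algorithm gives wrong answers precisely on the instances with zero-utility chores that motivate the EFX$_-$/EFX$_0$ distinction.
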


\citet{zhou2024complete} also studied the EFX orientation of mixed manna, which is the types of instances that are allowed to contain both goods and chores. They obtained a variety of complexity results for different variants of EFX in this setting. As EFX allocations of mixed manna is beyond the scopes of our work, we refer the reader to \citep{zhou2024complete} for details.

\chapter{Our Contribution}\label{chap:our-contribution}

Our contribution is threefold. In this chapter, we highlight the previous work in the literature that motivated our work, and state our contributions formally.

\section{Existence of MMS Allocations of Mixed Manna}

Our first contribution (Chapter~\ref{chapter:MMS-Mixed-Manna}) concerns the existence of MMS allocations for additive mixed instances and has appeared in the proceedings of ECAI 2024 (see \citep{hsu2024existence}). Recall that these are the instances in which every agent has an additive utility function, and both goods and chores are permitted to exist. We establish the existence of MMS allocations for these instances under a small set of conditions.

As briefly mentioned in Chapter~\ref{chapter:theoretical-overview}, \citet{BL16} showed that when dividing goods, MMS allocations exist if $n=2$ or $m \leq n+3$, where $n, m$ are the numbers of agents and goods, respectively. \citet{KPW16} improved this bound to $m \leq n+4$ while also showing that for all $n \geq 3$, there exists an $n$-agent $(3n+4)$-good fair division instance that has no MMS allocations. A tight result was obtained by \citet{feige2021tight} who improved the bound to $m \leq n+5$ and showed that this is the best possible upper bound on $m$ of the form $n+k$ where $k \in \mathbb{Z}$ by exhibiting an instance with $(n, m) = (3, 9)$ for which an MMS allocation fails to exist.

All of these positive results depend crucially on the following proposition.

\begin{restatable}[\citet{BL16}]{proposition}{restateDeleteOneGood}\label{prop:delete_one_good}
	Let $I$ be an additive goods instance and $I^-$ be the instance obtained from $I$ by deleting an arbitrary agent and an arbitrary good. Then, for each agent $i$ that is not deleted, we have $u_i^\text{MMS}(I^-) \geq u_i^\text{MMS}(I)$. \qed
\end{restatable}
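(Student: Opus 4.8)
The plan is to take an optimal maximin partition witnessing $u_i^\text{MMS}(I)$ and surgically convert it into a partition witnessing a lower bound on $u_i^\text{MMS}(I^-)$. Write $I = (N, M, U)$ with $|N| = n$, let $o^*$ be the deleted good, and observe at the outset that $u_i^\text{MMS}(I)$ depends only on $u_i$, on $M$, and on the number $n$ of parts; the identity of the deleted agent is therefore irrelevant and only serves to reduce the number of bundles from $n$ to $n-1$.

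First I would fix a partition $\pi = (\pi_1, \dots, \pi_n) \in \Pi_N(M)$ that attains the maximum in the definition of $u_i^\text{MMS}(I)$, so that $u_i(\pi_j) \geq u_i^\text{MMS}(I)$ for every $j \in N$. The deleted good $o^*$ lies in exactly one bundle, say $\pi_a$. The goal is to build a partition of $M \setminus \{o^*\}$ into $n-1$ bundles, each of value at least $u_i^\text{MMS}(I)$; feeding such a partition into the definition of $u_i^\text{MMS}(I^-)$ then immediately yields the claimed inequality.

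The main step is the redistribution: discard the label $a$, keep the $n-1$ bundles $\{\pi_j : j \neq a\}$, and reassign the leftover goods $\pi_a \setminus \{o^*\}$ by appending them to any one of these surviving bundles (the choice does not matter). This produces a genuine partition of $M \setminus \{o^*\}$ into exactly $n-1$ parts. Since $I$ is a goods instance, each $u_i$ is monotone increasing, so adding goods to a bundle cannot decrease its value; consequently each untouched surviving bundle still has value at least $u_i^\text{MMS}(I)$, while the bundle that absorbed $\pi_a \setminus \{o^*\}$ has value at least its original value and hence also at least $u_i^\text{MMS}(I)$. The minimum bundle value of the new $(n-1)$-partition is therefore at least $u_i^\text{MMS}(I)$, as desired.

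There is essentially no serious obstacle here — the only point requiring care is that the deletion of a good must be compensated by \emph{removing a whole bundle} (matching the simultaneous deletion of an agent), so that the remaining goods can be absorbed ``for free'' via monotonicity. It is worth noting explicitly that the argument relies on the goods assumption: for chores the absorbed items would instead \emph{decrease} bundle values and the same construction would fail, which is precisely why this proposition is restricted to goods instances.
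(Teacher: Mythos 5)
Your proof is correct and is the standard argument: fix a maximin partition for agent $i$, discard the bundle containing the deleted good, and absorb its leftover items into a surviving bundle, using the fact that in an additive goods instance all items have non-negative utility so absorption cannot lower any bundle's value. The paper itself states this proposition without proof (citing \citet{BL16}), and your construction is precisely the argument from that source — including your closing remark about why it breaks for chores, which matches the paper's Example~\ref{ex:reduction-fails-for-chores}.
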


Proposition~\ref{prop:delete_one_good} allows us to give a good $o_j$ to an agent $i$ if $u_i(o_j) \geq u_i^\text{MMS}(I)$, and consider the resulting smaller instance instead, because the MMS thresholds of the remaining agents either stay the same or increase. Unfortunately, Proposition~\ref{prop:delete_one_good} fails when chores come into play, as shown by the following example.

\begin{example}[Proposition~\ref{prop:delete_one_good} Fails for Chores]\label{ex:reduction-fails-for-chores}
	Consider the four additive instances $I, I^+_1, I^+_2, I^+_3$ represented by the matrices below. The MMS threshold of agent 1 is shown on the right. Giving the fourth item to agent 2 in any of $I^+_1, I^+_2, I^+_3$ yields $I$. Observe that the MMS threshold of agent 1 decreases, stays the same, and increases if we do so in $I^+_1, I^+_2, I^+_3$, respectively.
	\begin{align*}
		M(I) &= \begin{bmatrix}
			-1 & -1 & -1
		\end{bmatrix} &u_1^\text{MMS}(I) &= -3 \\
		M(I^+_1) &= \begin{bmatrix}
			-1 & -1 & -1 & -1 \\
			-1 & -1 & -1 & -1
		\end{bmatrix} &u_1^\text{MMS}(I^+_1) &= -2 \\
		M(I^+_2) &= \begin{bmatrix}
			-1 & -1 & -1 & -3 \\
			-1 & -1 & -1 & -3
		\end{bmatrix} &u_1^\text{MMS}(I^+_2) &= -3 \\
		M(I^+_3) &= \begin{bmatrix}
			-1 & -1 & -1 & -4 \\
			-1 & -1 & -1 & -4
		\end{bmatrix} &u_1^\text{MMS}(I^+_3) &= -4
	\end{align*}
\end{example}

Thus, the question of whether MMS allocations exist for chores and mixed instances under the same condition of $m \leq n+5$ arises naturally. Our first contribution makes significant progress toward answering this question.

To state our contribution, we first give some definitions. Let $I$ be an additive instance. A {\em negative agent} (resp.\ {\em non-negative agent}) is an agent whose MMS threshold is negative (resp.\ non-negative). A {\em goods agent} (resp.\ {\em chores agent}) is one for whom every item is a good (resp.\ chore). We show the following.

\begin{restatable}{theorem}{firstMainThm}\label{thm:contribution-1}
	Let $I$ be an additive instance of the fair division problem. Suppose $m \leq n+5$ and at least one of the following conditions hold.
	\begin{enumerate}
		\item $I$ contains $n \leq 3$ agents.
		\item $I$ contains a non-negative agent.
		\item $I$ contains only chores agents.
	\end{enumerate}
	Then, $I$ admits an MMS allocation.
\end{restatable}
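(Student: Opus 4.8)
The plan is to reduce every instance satisfying one of the three conditions either to a pure goods instance with $m \le n+5$, where an MMS allocation is guaranteed by Theorem~\ref{thm:mms-bounds}, or to a base case small enough to resolve directly. The engine is a \emph{valid reduction}: if we can hand some bundle $B$ to an agent $i$ with $u_i(B) \ge u_i^{\mathrm{MMS}}(I)$ in such a way that every remaining agent $k$ satisfies $u_k^{\mathrm{MMS}}(I') \ge u_k^{\mathrm{MMS}}(I)$, where $I' = I - (i,B)$ is the instance on the other $n-1$ agents and the items $M \setminus B$, then any MMS allocation of $I'$ extends to one of $I$ by also giving $B$ to $i$. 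Since each valid reduction drops an agent and at least one item, the inequality $m \le n+5$ is preserved, so it suffices to reach an instance we already understand. For goods, Proposition~\ref{prop:delete_one_good} is precisely the statement that peeling off a single highly-valued good is a valid reduction; the entire difficulty is that this fails once chores appear, as Example~\ref{ex:reduction-fails-for-chores} shows.

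For condition~1 I would dispatch the base cases by hand. When $n=1$ the unique agent receives $M$ and $u_1^{\mathrm{MMS}}(I)=u_1(M)$, so the allocation is trivially MMS. When $n=2$ the divide-and-choose argument of Theorem~\ref{thm:mms-2agents}(2) carries over verbatim to mixed manna: letting agent $2$ present an MMS partition and agent $1$ take the preferred bundle, agent $1$ receives at least $u_1(M)/2 \ge u_1^{\mathrm{MMS}}(I)$ by Observation~\ref{obs:mms-upper-bound}, while agent $2$ keeps a bundle meeting its own threshold by construction. The real content of condition~1 is $n=3$ with $m \le 8$; here I would descend to $n=2$ by a valid reduction whenever some agent can be satisfied by a small bundle without harming the other two, and otherwise carry out a finite case analysis, exploiting that with at most eight items the relevant MMS partitions have very restricted shapes.

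Conditions~2 and~3 are where the mixed structure must be tamed, and the aim in both is to peel items one at a time by valid reductions until the instance becomes pure --- either goods-only, where Theorem~\ref{thm:mms-bounds} applies, or chores-only. For condition~3 the instance is already a chores instance, and I would establish and iterate a chores analogue of Proposition~\ref{prop:delete_one_good}: deleting an agent together with a carefully chosen (costly) chore cannot lower a remaining agent's maximin value, since for chores both removing an agent and removing a large chore tend to raise it. For condition~2 I would use the non-negative agent $i$ as an anchor. Because $u_i^{\mathrm{MMS}}(I) \ge 0$, agent $i$ can be satisfied by a non-negative bundle, and the goal is to show that some such assignment --- handing $i$ a good, or a minimal bundle meeting its threshold --- is a valid reduction, thereby shrinking the pool of goods and driving the instance toward the chores-only case handled by condition~3.

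The main obstacle, in every case beyond $n \le 2$, is proving that the proposed peeling steps are genuinely valid reductions, i.e.\ that no surviving agent's MMS threshold decreases. This is delicate precisely because the clean goods argument behind Proposition~\ref{prop:delete_one_good} --- merge the orphaned part into another, which can only help --- is unavailable once a bundle contains chores, and because, as a small computation shows, deleting the \emph{wrong} chore can strictly lower a threshold. Controlling which item to remove and which agent to satisfy so that the threshold bookkeeping works out is the crux; and the fact that this bookkeeping seems to require either a non-negative anchor agent or a purely-chores instance is exactly what leaves the all-negative mixed case with $n \ge 4$ beyond the reach of this method.
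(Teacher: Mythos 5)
Your proposal is a plan rather than a proof, and its two load-bearing steps fail. The most concrete problem is condition~3: the ``chores analogue of Proposition~\ref{prop:delete_one_good}'' that you propose to establish and iterate is false, and the paper's own Example~\ref{ex:reduction-fails-for-chores} refutes it. In the instance $I^+_1$ (two agents, four chores of utility $-1$ each), giving one chore to agent~2 satisfies agent~2 (since $-1 \geq u_2^{\text{MMS}}(I^+_1) = -2$), yet agent~1's threshold drops from $-2$ to $-3$; since all chores are identical, no ``careful choice'' of chore can avoid this. Your stated intuition is backwards: for chores, deleting an \emph{agent} tends to \emph{lower} the survivors' thresholds (the chores must now be split among fewer agents), so the net effect of removing an agent plus a chore can be strictly negative. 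This is precisely why the paper does not peel items via a delete-one-chore rule; instead it proves a dichotomy (Proposition~\ref{prop:choresagent}): either some agent has an MMS partition into $n-1$ bundles of size $1$ or $2$, in which case a Hall's-theorem argument (Lemma~\ref{lemma:atmost2reduction}) yields an MMS allocation or a valid reduction, or \emph{every} agent's MMS partition has an empty or singleton bundle, and only then is allocating the globally worst item $o_m$ to the chores agent valid (via SOP and Proposition~\ref{prop:1reduction_preserve}(1)).

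Condition~2 has a parallel gap: satisfying the non-negative anchor agent $i$ and recursing removes the anchor, and the residual instance may consist entirely of negative agents --- exactly the case your method (and the paper's) cannot handle, so the recursion has nowhere to go. The missing idea is the paper's \emph{mimicked instance} $I^{(i)}$ (Proposition~\ref{prop:mimic}): replace the utility function of every negative agent by $u_i$, so that \emph{all} agents become non-negative; positivity of all thresholds is then preserved by valid reductions, which is what makes the induction close. Your plan of ``driving the instance toward the chores-only case'' has no mechanism ensuring the reduced instances stay inside a tractable class. Finally, for $n=3$ your ``finite case analysis'' is not an argument: with arbitrary real utilities the case analysis is not finite in any obvious sense, and the paper's proof of Theorem~\ref{thm:38} genuinely needs the bundle-disjointness graphs $G(\pi^i,\pi^j)$ (Proposition~\ref{prop:edge}, Corollary~\ref{cor:2edges}) together with Proposition~\ref{prop:39} of \citet{feige2021tight} for the case where every MMS partition consists of one 2-bundle and two 3-bundles. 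Your $n\leq 2$ observations are fine, but everything beyond them is either missing or rests on a false lemma.
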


Notably, Theorem~\ref{thm:contribution-1}(3) is an existence result for the chores setting. On the other hand, because goods agents are necessarily non-negative, Theorem~\ref{thm:contribution-1}(2) strictly generalizes the $m \leq n+5$ result concerning goods division due to \citet{feige2021tight}.

The tools we develop in Chapter~\ref{chapter:MMS-Mixed-Manna} additionally imply that the only remaining case is the instances with $n \geq 4$ that contain only negative mixed agents. Settling this case positively would imply that the three auxiliary conditions in Theorem~\ref{thm:contribution-1} can be dropped. We discuss this in further detail in Section~\ref{sec:discussion-1}.

\section{EFX Orientations of Multigraphs}

Our second contribution (Chapter~\ref{chapter:EFX-Multigraphs}) concerns EFX$_0$ orientations of multigraphs in which every edge represents a good. We show that the problem of deciding whether an EFX$_0$ orientation exists is NP-hard, even for very restricted settings. To complement this, we also show that multigraphs that avoid a certain structure always have an EFX$_0$ orientation that can be found in polynomial-time. This work will appear in ECAI 2025 \citep{hsu2024efxmanuscript} and is also available on ArXiv \citep{hsu2024efxarxiv}.

Recall that the study of graphical instances was initiated recently by \citet{christodoulou2023fair}. These are the instances that can be represented by a graph in which vertices represent agents, edges represent goods, and a good can only have positive utility to an agent if its corresponding edge is incident to the agent's corresponding vertex. Graphical instances capture the setting where agents only care about goods that are "close" to them, which can arise when the goods are located in physical space and agents have geographical constraints. In such situations, a special kind of allocations called {\em orientations} are particularly desirable --- these are the allocations that allocate each good to an agent to whom it has positive utility, so that no good goes to waste.

Although EFX$_0$ orientations are desirable, their existence is harder to decide than EFX$_0$ allocations. Indeed, \citet{christodoulou2023fair} showed that although EFX$_0$ allocations always exist for simple graphs, deciding whether EFX$_0$ orientations exist is NP-hard. This leads one to ask whether there exists classes of graphs for which the existence of an EFX$_0$ orientation can be decided efficiently.

Recent work has begun to address this. First, \citet{zeng2024structure} found surprising connections between EFX$_0$ orientations and the chromatic number of a graph, showing that EFX$_0$ orientations exist for all bipartite simple graphs, and can only fail to exist for simple graphs whose chromatic number is 3 or greater.

The original graphical model introduced by \citet{christodoulou2023fair} was only defined for graphs. However, this model can be readily generalized to the setting of multigraphs. Our work is among the first to study the generalized setting, which has recently garnered much attention among researchers. For a non-exhaustive list of examples, see \citep{afshinmehr2024efx,amanatidis2024pushing,bhaskar2024efx,christodoulou2025exact,sgouritsa2025existence}.

The first part of our contribution is a hardness result regarding EFX$_0$ orientations. Since simple graphs are multigraphs of multiplicity $1$ without self-loops, the aforementioned result due to \citet{zeng2024structure} regarding simple graphs can be restated as: bipartite multigraphs $G$ of multiplicity $1$ have EFX$_0$ orientations. Thus, it seems plausible that bipartiteness is also sufficient for multigraphs of higher multiplicities $q$ to have EFX$_0$ orientations. Similarly, the problem of finding EFX$_0$ orientations could be tractable if there are few values that utility functions can assume, or if those values fall within a small interval. Our result implies that the problem remains NP-hard under a restrictive setting that incorporates the above considerations. Moreover, it suggests a structure that we call a {\em \abbrevs{non-trivial odd multitree}{NTOM}} that, as we will see, guarantees the existence of EFX$_0$ orientations when avoided.

To give our result, we first give a few definitions informally while deferring the formal definitions to Chapter~\ref{chapter:EFX-Multigraphs}. By a {\em bi-valued symmetric multigraph} we mean a multigraph $G$ such that for each edge $e$, both endpoints agree on the utility of $e$, which can only be one of two possible utilities. A {\em heavy edge} is an edge whose utility is the greater possibility of the two. A {\em heavy component} is a component of $G$ if we only considered the heavy edges and ignored the other edges.

A {\em non-trivial odd multitree (NTOM)} is a multigraph $T$ with the following properties:
\begin{enumerate}
	\item (non-trivial) $T$ contains at least 2 vertices;
	\item (odd) every edge of $T$ has odd multiplicity; and
	\item (multitree) the skeleton of $T$ is a tree (i.e.\ $T$ becomes a tree if we condense each set of parallel edges into a single edge).	
\end{enumerate}

We show the following hardness result.

\begin{restatable}{theorem}{restateContributionTwoA}\label{thm:contribution-2a}
	For any fixed $q \geq 2$, deciding whether a bi-valued symmetric multigraph $G$ of multiplicity $q$ has an EFX$_0$ orientation is NP-complete, even if the following hold:
	\begin{enumerate}
		\item $G$ is bipartite;
		\item $\alpha > q\beta$;
		\item The heavy edges of each heavy component of $G$ induce an NTOM.
	\end{enumerate}
\end{restatable}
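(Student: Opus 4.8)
The plan is to establish the two directions of NP-completeness separately. Membership in NP is routine: an orientation is a polynomial-size certificate and EFX$_0$ can be verified in polynomial time. The key simplification is that in a graphical instance an agent $i$ assigns nonzero utility only to edges incident to its own vertex, so for adjacent agents $i,j$ the value $u_i(\pi_j)$ equals the total value of the edges between $i$ and $j$ that are oriented toward $j$. Hence checking that $i$ does not strongly envy $j$ reduces to a local comparison involving only these shared edges and the value of $\pi_i$, and iterating over all ordered pairs of adjacent agents takes polynomial time.

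For hardness I would reduce from a suitable NP-complete constraint-satisfaction problem---for example a monotone or otherwise restricted variant of 3-SAT (or NAE-SAT) whose incidence structure can be realized bipartitely. The cornerstone is a variable gadget built from a single heavy multi-edge of odd multiplicity $2t+1$ joining two vertices $u$ and $v$; this is the simplest NTOM. If $a$ of these parallel heavy edges are oriented toward $u$ and $2t+1-a$ toward $v$, then because $\alpha > q\beta$ the light edges contribute strictly less than a single heavy edge and cannot disturb the following dichotomy: the EFX$_0$ conditions between $u$ and $v$ force $t \leq a \leq t+1$, so exactly one endpoint receives the majority $t+1$ and the other the minority $t$. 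This binary choice of which endpoint is the majority side encodes a Boolean variable, and it is precisely the \emph{oddness} of the multiplicity that makes the choice nontrivial---an even multiplicity would force the unique balanced split and offer no freedom. Each heavy component thus contributes a bounded number of Boolean variables.

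I would then encode clauses through the coupling that arises at vertices shared by several heavy multi-edges within an NTOM, supplemented by light edges (of value $\beta$, with $\alpha > q\beta$) used as weak connectors between components. The crucial mechanism is that a vertex lying on two or more variable gadgets accumulates value from each, which \emph{relaxes} its own EFX$_0$ inequalities toward its heavy neighbours; by calibrating the multiplicities one can make the feasibility of the local orientation depend on the joint majority/minority pattern of the incident variables, thereby realizing a clause constraint. Throughout, the gadgets must be laid out so that (i) the underlying multigraph is bipartite, (ii) every heavy edge retains odd multiplicity and the heavy edges of each component induce a tree-skeleton, i.e.\ an NTOM, and (iii) all heavy values equal $\alpha$ and all light values equal $\beta$ with $\alpha > q\beta$.

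For correctness I would prove the equivalence in both directions: a satisfying assignment yields a globally EFX$_0$ orientation by setting each variable gadget to its corresponding majority/minority split and orienting the light edges to repair any residual envy, and conversely any EFX$_0$ orientation induces, by the rigidity established above, a majority/minority pattern on the variable gadgets that must satisfy every clause. I expect the main obstacle to be the reverse direction together with the simultaneous satisfaction of the three structural restrictions. Concretely, the difficulty is to verify that no ``cheating'' orientation circumvents the intended binary choices---this requires a careful case analysis of all EFX$_0$ inequalities at the shared vertices to confirm that the coupling faithfully mirrors clause satisfaction---while keeping the construction bipartite and ensuring that each heavy component is \emph{exactly} an NTOM, which tightly constrains how the gadgets may be wired together.
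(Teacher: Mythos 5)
Your NP-membership argument and your variable-gadget observation are both sound (the forced majority/minority split $t \le a \le t+1$ across an isolated odd heavy multi-edge is correct, and the role of oddness is exactly right). But the proposal has a genuine gap at its core: the clause gadget is never constructed. You only gesture at ``calibrating the multiplicities'' at vertices shared by several variable gadgets, and you yourself flag the rigidity analysis as the main obstacle. That analysis \emph{is} the reduction; without an explicit gadget together with a proof that (i) every satisfying assignment extends to an EFX$_0$ orientation of the gadget and (ii) every EFX$_0$ orientation of the gadget induces a clause-satisfying pattern, there is no hardness proof. The paper's proof spends essentially all of its effort on precisely this: it reduces from \textsc{CircuitSAT} with explicit OR, NOT, TRUE, and duplication gadgets, and its correctness hinges on a standalone structural lemma (Lemma~\ref{lemma:subgadget}, about the subgadget $H$ of Figure~\ref{fig:subgadget}) that is invoked repeatedly to prove the gadgets rigid. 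You also omit any mechanism for variable consistency across clauses (a variable occurring in several clauses needs a copy/duplication gadget, since the tree-skeleton requirement on heavy components prevents one multi-edge from being heavy-wired into many clause gadgets), which the paper handles with its duplication gadget.

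There is a second, more structural problem: the theorem must hold for every fixed $q \ge 2$, but your mechanism cannot work at $q=2$. With multiplicity capped at $2$ and the NTOM condition forcing every heavy edge class to have odd multiplicity, all heavy multiplicities equal $1$; your encoding then degenerates to ``which endpoint gets the single heavy edge'' (which is fine, and is in fact what the paper uses), but the clause mechanism of ``calibrating multiplicities'' has no degrees of freedom left, so the proposal offers no way to encode clauses in exactly the regime the theorem demands. The paper sidesteps this by keeping all heavy edges at multiplicity $1$ and building all logic out of the arrangement of multiplicity-$1$ heavy edges and light edges (only light edges are duplicated up to $q$, e.g.\ in the TRUE gadget), with $\alpha > q\beta \ge 2\beta$ sufficing for the key lemma. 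To repair your approach you would either need to restrict to $q \ge 3$ (weakening the theorem) or abandon the multiplicity-calibration idea in favour of structural gadgets of the paper's kind.
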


We remark that \citet{deligkas2024ef1} also showed that it is NP-complete to decide whether a multigraph $G$ has an EFX$_0$ orientation via an elegant reduction from \textsc{Partition}, even if $G$ contains only 10 vertices. However, our work is independent and provides additional graph theoretical insight.

Theorem~\ref{thm:contribution-2a} suggests that the existence of a heavy component whose heavy edges induce an NTOM is a barrier to finding EFX$_0$ orientations. One might hope that EFX$_0$ orientations can still be found as long as there are few NTOMs. Unfortunately, we are able to find a simple class of examples showing this not to be the case.

\begin{observation}\label{obs:one-tree-example}
	For any $q \geq 1$, there exists a multigraph of multiplicity $q$ containing a unique heavy component whose heavy edges induce an NTOM, that fails to have an EFX$_0$ orientation.
\end{observation}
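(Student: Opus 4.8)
The plan is to exhibit, for each fixed $q\ge 1$, a single explicit multigraph (rather than a genuine family) and to verify the three required properties directly. The guiding idea is that a single heavy edge is already an NTOM: it has two endpoints, its multiplicity $1$ is odd, and its skeleton is a single edge, hence a tree. So it suffices to build a gadget around one heavy edge $uv$ that admits no EFX$_0$ orientation. The obstruction I want to engineer is symmetric: if $uv$ is directed to $u$ then $v$ should envy $u$ too strongly, and if $uv$ is directed to $v$ then $u$ should envy $v$ too strongly. Since every orientation must send $uv$ to one of its two endpoints, no orientation can then be EFX$_0$.

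Concretely, I would take two vertices $u,v$ joined by a single heavy edge of value $\alpha$, and attach $q$ light self-loops of value $\beta$ at each of $u$ and $v$, with $\alpha > q\beta$ (for instance $\beta=1$ and $\alpha = q+1$). The maximum edge multiplicity is then $q$, realized by the self-loops, so $G$ has multiplicity $q$; the only heavy edge is $uv$, so $\{u,v\}$ together with $uv$ is the unique heavy component, and its heavy edges induce an NTOM by the observation above. Thus every structural hypothesis appearing in Theorem~\ref{thm:contribution-2a} is met with a single NTOM present.

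To rule out an EFX$_0$ orientation, I would first record that a self-loop is relevant only to its unique endpoint, so in any orientation all $q$ loops at $u$ lie in $\pi_u$ and all $q$ loops at $v$ lie in $\pi_v$; in particular each of $\pi_u,\pi_v$ already holds $q\ge 1$ items before $uv$ is placed, and these loops contribute $0$ to the cross-valuations $u_v(\pi_u)$ and $u_u(\pi_v)$. Now orient $uv$. If $uv\in\pi_u$, then for a loop $o\in\pi_u$ at $u$ we have $u_v(\pi_u\setminus\{o\})=\alpha$ (the heavy edge survives and the loops are worthless to $v$), while $u_v(\pi_v)=q\beta<\alpha$, so the EFX$_0$ condition for the ordered pair $(v,u)$ fails. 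The case $uv\in\pi_v$ is identical with $u$ and $v$ interchanged. As these are the only two orientations of $uv$, no orientation is EFX$_0$.

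The one place that needs care — and the entire point of the gadget — is the role of the self-loops. They forbid the "easy" orientation in which the agent receiving the heavy edge holds nothing else, for in that degenerate case deleting the single heavy item would remove all envy and restore EFX$_0$ (this is exactly why the two-vertex, single-heavy-edge configuration alone is orientable). Forcing each of $u$ and $v$ to retain an extra item that is worthless to the other agent is what makes the heavy edge un-orientable, and using $q$ loops at each endpoint simultaneously pins the multiplicity at $q$ while preserving $\alpha>q\beta$. I would therefore emphasize this step, since the structural conditions and the two-case envy computation are otherwise routine.
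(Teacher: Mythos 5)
Your proposal is correct and is essentially identical to the paper's own proof: the paper uses exactly this gadget (two vertices joined by one heavy edge, with $q$ light self-loops at each endpoint, and $\alpha > q\beta$), and concludes that the vertex not receiving the heavy edge strongly envies the one that does. Your write-up just spells out the forced orientation of the self-loops and the two symmetric cases more explicitly than the paper does.
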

\begin{proof}
	Let $G$ be the multigraph shown in Figure~\ref{fig:one-tree-example}. By symmetry, $G$ has a unique orientation. By setting $\alpha > q\beta$, the vertex not receiving the heavy edge strongly envies the vertex that does.
\end{proof}

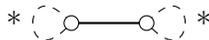
\begin{figure}
	\centering
	\begin{tikzpicture}
		\node[blkvertex] (1) at (0,0) {};
		\node[blkvertex] (2) at (1,0) {};
		\draw[thick] (1) to (2);
		\draw[dashed] (1) to [loop, in=135, out=225, looseness=15] node[left] {*} (1);
		\draw[dashed] (2) to [loop, in=45, out=315, looseness=15] node[right] {*} (2);
	\end{tikzpicture}
	
	\caption{Example of an NTOM with two vertices having no EFX$_0$ orientations. The solid edge represents a heavy edge. The dashed self-loops marked with * each represents $q \geq 1$ light edges. }
	\label{fig:one-tree-example}
\end{figure}

On the other hand, we show that if $G$ does not contain such a heavy component, then it has an EFX$_0$ orientation that can be found in polynomial time.

\begin{restatable}{theorem}{restateContributionTwoB}\label{thm:contribution-2b}
	Let $G$ be a bi-valued symmetric multigraph of multiplicity $q \geq 1$. If $G$ does not contain a heavy component whose heavy edges induce an NTOM, then $G$ has an EFX$_0$ orientation that can be found in polynomial time.
\end{restatable}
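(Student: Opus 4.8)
The plan is to reduce the existence of an EFX$_0$ orientation to a purely local, pairwise condition, and then to construct an orientation satisfying it by first orienting the heavy edges and afterwards the light edges. The first step is a characterization lemma describing exactly when an orientation is EFX$_0$ in the graphical model. The key observation is that an agent $i$ values only edges incident to its own vertex, so $u_i(\pi_j)$ counts only the $i$--$j$ edges oriented toward $j$. Because we work with EFX$_0$ rather than EFX$_-$, a subtle but decisive point arises: if $\pi_j$ contains any good that is \emph{irrelevant} to $i$ --- an edge from $j$ to a third vertex, or a self-loop at $j$ --- then that good has value $0$ to $i$, so removing it does not lower $u_i(\pi_j)$, and EFX$_0$ forces \emph{full} envy-freeness of $i$ toward $j$. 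If instead $\pi_j$ consists solely of $i$--$j$ edges, only the weaker ``up to one edge'' relaxation is required. I would isolate this dichotomy as a lemma, since it drives the whole argument and already explains the two-vertex counterexample of Observation~\ref{obs:one-tree-example}.

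Next I would exploit the hypothesis $\alpha > q\beta$, which says that a single heavy edge outweighs all light edges incident to a vertex. This decouples the two edge types: the binding constraints come from heavy edges, while light edges need only be split near-evenly. Concretely, I would show that the orientation is EFX$_0$ provided the heavy edges are oriented so that, for every pair $\{i,j\}$ joined by heavy edges, the endpoint receiving fewer of the shared heavy edges still receives at least one heavy edge from some \emph{other} bundle (unless the relaxation applies because the richer bundle is ``pure''), together with any near-even split of the light edges. The role of parity is exactly that an \emph{odd} heavy bundle forces one endpoint into a strict minority that must be compensated by a heavy edge elsewhere; a skeleton leaf has no such edge available, which is where the obstruction concentrates.

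The crux is then to construct a heavy-edge orientation meeting this balance condition on each heavy component $K$, using that the heavy edges of $K$ do \emph{not} induce an NTOM. I would split into the three ways a connected heavy multigraph can fail to be an NTOM: $K$ is a single vertex (nothing to orient); some heavy bundle has \emph{even} multiplicity; or the skeleton of $K$ contains a cycle. In the even case, splitting that bundle exactly in half kills its parity, effectively cutting the skeleton tree into two rooted subtrees in each of which the remaining parities can be pushed toward the now-available interior. In the cycle case, the parities along the cycle can be oriented cyclically, so that every vertex on the cycle simultaneously gives and receives a heavy edge and thus obtains the needed compensation. To make this uniform and polynomial-time, I would phrase the balance requirement as a degree-constrained orientation problem --- a prescribed lower bound on each vertex's heavy in-degree --- solvable by a single max-flow computation via a Hall-type feasibility condition, with the non-NTOM hypothesis furnishing exactly the feasibility of the demands.

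Finally I would orient the light edges (self-loops being forced) by a near-even split and verify, using $\alpha > q\beta$, that they create no strong envy on top of the balanced heavy orientation: a leftover light imbalance of one is always absorbed because the deficient endpoint already receives a heavy edge, or else the pure-bundle relaxation applies. Every ingredient --- the Eulerian/cyclic parity routing, the max-flow feasibility check, and the greedy light-edge assignment --- runs in polynomial time, giving the claimed algorithm. The main obstacle I anticipate is precisely the crux step: formulating the heavy-in-degree demands so that they are feasible \emph{exactly} when $K$ is not an NTOM, and dovetailing this with the up-to-one relaxation at skeleton leaves, where the local condition genuinely weakens and a naive uniform balance demand would be simultaneously too strong and unnecessary.
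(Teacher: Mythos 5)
Your proposal has a genuine gap, and it sits exactly where you lean hardest: the inequality $\alpha > q\beta$ is \emph{not} a hypothesis of Theorem~\ref{thm:contribution-2b}. It is a hypothesis of the hardness result (Theorem~\ref{thm:contribution-2a}) and of Observation~\ref{obs:one-tree-example}; the statement you are proving assumes only that $G$ is bi-valued symmetric, i.e.\ $\alpha > \beta \geq 0$, with no relation between $\alpha$ and $q\beta$. Your entire second step --- ``the binding constraints come from heavy edges, while light edges need only be split near-evenly'' --- is precisely the decoupling that $\alpha > q\beta$ was supposed to buy, so the argument collapses on the regime $\beta < \alpha \leq q\beta$ that the theorem also covers. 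Worse, even granting the extra hypothesis, your sufficiency lemma (``together with \emph{any} near-even split of the light edges'') is false because heavy and light deficits can stack at the same vertex against the same neighbour. Concretely: let $i,j$ be joined by three heavy edges and one light edge, let $i,k$ be joined by two heavy edges (so the heavy component $\{i,j,k\}$ has a path skeleton with an even bundle and is not an NTOM), and let $j$ have a light edge to a fourth vertex $l$. Orient the $i$--$k$ heavy edges one each way, the $i$--$j$ heavy edges $1$--$2$ with $j$ richer, the light $i$--$j$ edge toward $j$, and the $j$--$l$ edge toward $j$. Every one of your conditions holds: the deficient endpoint $i$ receives an outside heavy edge, and all light splits are near-even. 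Yet $\pi_j$ is impure, so by your own dichotomy lemma full envy-freeness is required, and $u_i(\pi_i) = 2\alpha < 2\alpha + \beta = u_i(\pi_j)$: strong envy, for every $\beta > 0$ (including choices with $\alpha > q\beta$). So the light split must be coordinated with the heavy deficits, your max-flow demands would have to encode that interaction, and the claim that feasibility holds ``exactly when $K$ is not an NTOM'' is not established. A separate loose end: when all heavy components are trivial but $\beta > 0$, there are no compensating heavy edges at all and per-pair near-even splitting is insufficient (a star whose centre receives all incident light edges is strongly envied); vertex-level constraints are needed there.

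For comparison, the paper's proof is engineered to avoid any weight-ratio assumption. Your dichotomy lemma and your parity/cycle analysis of heavy components are sound and do mirror Lemmas~\ref{lemma:non-odd-multitree} and \ref{lemma:not-multitree}: an even heavy bundle, a skeleton cycle, or a heavy self-loop lets one give every vertex of a non-trivial heavy component a baseline of one heavy edge (utility $\geq \alpha$). But from there the paper does not split remaining edges evenly; it repeatedly invokes the two-agent EFX$_0$ cut-and-choose result (Theorem~\ref{thm:2vertices}) inside an extension lemma (Lemma~\ref{lemma:extension-lemma}) to orient \emph{all} remaining edges between each pair while preserving envy-freeness for arbitrary $\alpha,\beta$, leaving unoriented only one light edge per special pair (Lemma~\ref{lemma:all_but_a_matching}); that final matching is oriented by a PEF argument (Lemma~\ref{lemma:light_matching}), and the all-trivial case is dispatched via MNW on binary instances (Theorem~\ref{thm:only-trivial}). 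If you want to salvage your route, replace the $\alpha > q\beta$ decoupling with a pairwise cut-and-choose step of this kind; the parity analysis you already have then supplies exactly the baseline it needs.
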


A limitation of Theorem~\ref{thm:contribution-2b} is that its highly restrictive setting limits its generality. However, such restrictions seem to be necessary in light of Observation~\ref{obs:one-tree-example}. Moreover, similar assumptions have been made in the existing literature. For example, symmetric valuations have been considered in \citep{christodoulou2023fair,afshinmehr2024efx,deligkas2024ef1}, and bi-valued valuations have been considered in \citep{amanatidis2021maximum,feige2022maximin,garg2023computing}.

\section{Polynomial-Time Algorithms for Fair Orientations of Chores}

Our third contribution (Chapter~\ref{chapter:Fair-Orientations-Chores}) concerns EFX orientations of graphs and multigraphs in which every edge represents a chore. We remark that the work presented in Chapter~\ref{chapter:Fair-Orientations-Chores} is joint work with Valerie King. This work will appear in ECAI 2025 \citep{hsu2025polynomialmanuscript} and is also available on ArXiv \citep{hsu2025polynomialarxiv}.

As previously mentioned, much interest regarding fair orientations of graphs has recently arisen among researchers. Among the many recent works, \citet{zhou2024complete} studied the EFX orientation problem for mixed instances, which is the types of fair division instances that are allowed to contain both goods and chores. They obtained a variety of complexity results for different variants of EFX in this setting, and concluded with a discussion on the special case of chores instances and left a conjecture regarding such instances.

Graphical chores instances can be used to model situations in which tasks are to be allocated to agents, but some tasks may not suitable to an agent because the agent is far away physically. As previous discussed, such a consideration is important for food delivery platforms, because orders should be assigned to workers who are closer to the task location rather than to workers who are far away. There are two variants of EFX that have been defined for chores. We recall their definitions below. Let $\pi$ be an allocation, $\pi_i$ denote the set of chores given to $i$, and $u_i$ denote the utility function of agent $i$. We say that $\pi$ is
\begin{itemize}
	\item EFX$_-$ if for each pair of agents $i \neq j$ such that $i$ envies $j$ and each $e \in \pi_i$ such that $u_i(e) < 0$, we have $u_i(\pi_i \setminus \{e\}) \geq u_i(\pi_j)$; and
	\item EFX$_0$ if for each pair of agents $i \neq j$ such that $i$ envies $j$ and each $e \in \pi_i$ such that $u_i(e) \leq 0$, we have $u_i(\pi_i \setminus \{e\}) \geq u_i(\pi_j)$.
\end{itemize}
The difference is that while EFX$_-$ only requires the envy be alleviated when ignoring chores with strictly negative marginal utility, EFX$_0$ requires that the envy be alleviated even if we ignore a chore with zero marginal utility. Clearly, EFX$_0$ is a strictly stronger condition than EFX$_-$.

\citet{zhou2024complete} found a polynomial-time algorithm that decides whether an EFX$_-$ orientation exists, and conjectured that deciding whether an EFX$_0$ orientation exists is NP-complete, similar to the case involving only goods. 

We resolve their conjecture negatively by giving a polynomial-time algorithm that decides the existence of an EFX$_0$ orientation of chores and outputs one if it exists (Theorem~\ref{thm:mainContributionThree}). We also give a polynomial-time algorithm for the EF1 orientation problem (Theorem~\ref{thm:mainContributionThreeB}).

\begin{restatable}{theorem}{mainContributionThree}\label{thm:mainContributionThree}
	There exists a $O((|V(G)|+|E(G)|)^2)$-time algorithm that decides whether a graph $G$ of chores has an EFX$_0$ orientation and outputs one if it exists.
\end{restatable}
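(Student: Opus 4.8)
The plan is to reduce the problem to a purely combinatorial orientation question via a tight structural characterization of EFX$_0$ orientations, and then to solve that question with a matching argument built on Hall's theorem (Theorem~\ref{thm:hall}). The starting point is that $G$ is simple, so any two agents $i,j$ share at most one common item, namely the edge $e_{ij}$ when it exists. Hence in any orientation the only item of $\pi_j$ that $i$ values is $e_{ij}$ when it is oriented toward $j$, which forces $u_i(\pi_j)$ to equal either $0$ or $u_i(e_{ij}) \le 0$. I would use this to characterize EFX$_0$ orientations by a local condition at each vertex. Suppose $i$ receives two or more chores; since $G$ is simple, $i$ then has at least two distinct in-neighbours, so there is an agent $j$ with $u_i(\pi_j)=0$. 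If some chore in $\pi_i$ is strictly negative to $i$, then $u_i(\pi_i) < 0 = u_i(\pi_j)$, and removing from $\pi_i$ the received chore of least magnitude (the binding case of EFX$_0$) still leaves $u_i(\pi_i \setminus \{e\}) < 0$, violating EFX$_0$. In the remaining configurations---a bundle of at most one chore, or a bundle every chore of which is worthless to $i$---a direct check shows no strong envy can arise.

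This yields the characterization I would isolate as the main lemma: an orientation of a simple chore graph is EFX$_0$ if and only if every vertex $v$ either has in-degree at most one or receives only chores of value $0$ to $v$. The decision problem therefore becomes: orient every edge toward one endpoint so that each vertex is \emph{light} (in-degree at most one) or a \emph{zero-sink} (all in-edges worthless to it). I would first dispatch the core case in which every chore is strictly negative to both of its endpoints, so the zero-sink option is unavailable and EFX$_0$ orientations coincide exactly with orientations of maximum in-degree one. Such an orientation corresponds to a matching that saturates the edge side of the bipartite incidence graph between $E(G)$ and $V(G)$, in which each edge is joined to its two endpoints and each vertex has capacity one; a matched pair is read as orienting that edge toward that vertex. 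By Theorem~\ref{thm:hall} (applied to the edge side), such a matching exists precisely when no set $S$ of edges is incident to fewer than $|S|$ vertices, equivalently when every connected component of $G$ spans at least as many vertices as edges. Computing the matching by repeated augmentation performs $O(|E(G)|)$ augmentations, each a search costing $O(|V(G)|+|E(G)|)$, giving the claimed $O((|V(G)|+|E(G)|)^2)$ running time and producing the orientation directly.

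The main obstacle, and the part requiring the most care, is the presence of chores that are worthless to exactly one endpoint, which activate the zero-sink escape hatch and make the problem strictly richer than a maximum-in-degree-one orientation. A vertex may absorb arbitrarily many worthless-to-it in-edges, but the moment it accepts a genuinely harmful chore it must accept nothing else. I would model this by augmenting the incidence network: each vertex keeps its unit-capacity \emph{light} channel for harmful in-edges but also gains an uncapacitated \emph{sink} channel that only worthless-to-it edges may enter, with the coupling constraint that a vertex routing two or more edges into its sink channel cannot simultaneously use its light channel. Enforcing this coupling without resorting to an exponential guess over which vertices are zero-sinks is the crux; I expect to handle it by a flow formulation in which routing a harmful edge into $v$ blocks $v$'s sink, and to argue that augmenting-path updates preserve feasibility so that the overall cost stays within $O((|V(G)|+|E(G)|)^2)$. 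Establishing that this network faithfully encodes the lemma's local condition, and that infeasibility can be certified and a valid orientation reconstructed within the stated time, is where the bulk of the work lies.
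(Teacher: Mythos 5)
Your structural lemma is correct and is essentially the paper's key observation (Proposition~\ref{prop:one-neg-or-all-dummies}, stated there for ``objective'' instances): an orientation is EFX$_0$ if and only if every vertex has in-degree at most one or receives only edges of zero value to it. Your handling of the all-negative core case is also sound --- the Hall-type condition that every component have at most as many edges as vertices is exactly the paper's test that no negative component $K$ contain more than $|V(K)|$ negative edges. The problem is the part you yourself identify as the crux, and there the proposal has a genuine gap rather than a worked-out argument.

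Concretely, the flow formulation for edges that are worthless to exactly one endpoint fails in two ways. First, the coupling constraint is mis-stated: it is not ``two or more sink edges block the light channel.'' Even a \emph{single} worthless-to-$v$ in-edge combined with a single harmful in-edge gives $v$ in-degree two with a chore of nonzero disutility, which violates your own lemma; the light and sink channels must be mutually exclusive. Second, and more fundamentally, mutual exclusivity --- ``if any flow uses the light channel, then no flow may enter the (uncapacitated) sink channel'' --- is a complementarity/implication constraint, not a capacity constraint. A standard flow or matching network cannot encode it: there is no shared bottleneck that a unit of light flow can exhaust while still leaving the sink channel unbounded, and you provide no gadget that simulates this. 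Encoding exactly these implications is what the paper's remaining machinery is for: it first subdivides each one-sided edge into a dummy edge plus a negative edge (making the instance objective), then shows that the negative components rigidly restrict who may serve as a zero-sink (no vertex of a component with $|V(K)|$ negative edges, at most one vertex of a tree component), and finally encodes the residual choice of which vertices absorb dummy edges as a vertex cover problem with ``at most one per part'' and ``forbidden vertex'' side constraints, solved by 2SAT --- a formalism in which such implications are native. Without either that structural analysis or a correctness proof for some substitute for the 2SAT step, your algorithm is unspecified precisely on the instances that make the problem nontrivial, so the proposal does not yet constitute a proof of the theorem.
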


\begin{restatable}{theorem}{mainContributionThreeB}\label{thm:mainContributionThreeB}
	There exists a $O(|V(G)|+|E(G)|)$-time algorithm that decides whether a graph $G$ of chores has an EF1 orientation and outputs one if it exists.
\end{restatable}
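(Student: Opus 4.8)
The plan is to reduce the EF1 orientation problem to a purely combinatorial orientation problem with in-degree constraints, via a clean local characterization of EF1 orientations that is special to the graphical chores setting.

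First I would establish the following characterization lemma: an orientation of a graph $G$ of chores is EF1 if and only if every agent $i$ receives at most one incident chore $e$ with $u_i(e) < 0$. The forward direction is immediate: if $\pi_i$ contains at most one strictly-negative chore $o_i$, then $u_i(\pi_i \setminus \{o_i\}) = 0 \ge u_i(\pi_j)$ for every agent $j$, since all chores are non-positive; hence the second EF1 clause holds for every pair. For the converse, suppose some agent $i$ receives two chores $e, e'$ with $u_i(e), u_i(e') < 0$. The key point is that in the graphical model the only chore in $\pi_j$ that $i$ values is the single edge between $i$ and $j$ (if present), so $u_i(\pi_j) \in \{0, u_i(e_{ij})\}$; consequently removing a chore from the envied agent's bundle (the first EF1 clause) can never do better than the no-envy condition, and only removal from $i$'s own bundle can repair envy. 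Since $e$ is oriented toward $i$, its other endpoint $j$ satisfies $u_i(\pi_j) = 0$, so $i$ envies $j$; but after deleting any single chore from $\pi_i$, agent $i$ still holds a strictly-negative chore, so $u_i(\pi_i \setminus \{o\}) < 0 = u_i(\pi_j)$ and EF1 fails. (The degenerate case $n = 1$, where EF1 is vacuous, is handled separately.)

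Next I would translate the characterization into an orientation constraint. Call an edge $e = \{i, j\}$ \emph{doubly-bad} if $u_i(e) < 0$ and $u_j(e) < 0$. A \emph{singly-bad} edge (negative to exactly one endpoint) can always be oriented toward the endpoint that values it at $0$, contributing to no one's count, and a doubly-zero edge can be oriented arbitrarily. Hence these edges impose no constraints, and the problem collapses to: orient the subgraph $H$ of doubly-bad edges so that every vertex has in-degree at most $1$. By Hakimi's theorem on constrained orientations with bound $b(v) = 1$ (equivalently, by a direct pseudoforest argument), such an orientation exists if and only if every connected component of $H$ contains at most as many edges as vertices, i.e.\ $H$ is a pseudoforest in which each component is a tree or is unicyclic. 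Self-loops, where present, force an in-unit on their vertex and are simply counted as edges of $H$ in this criterion.

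Finally I would assemble the linear-time algorithm: build $H$ in $O(|V(G)| + |E(G)|)$ time; test the pseudoforest condition by a single DFS or union-find pass that compares edge and vertex counts per component; and, if it holds, construct an in-degree-$\le 1$ orientation of $H$ by rooting each tree component and orienting its edges toward children, and in each unicyclic component orienting the unique cycle consistently while directing the attached trees away from it, with all remaining edges oriented toward their safe endpoint. I expect the main obstacle to be the characterization lemma, specifically its converse direction: one must argue that whenever an agent holds two strictly-negative chores there is a genuine EF1 violation that cannot be patched, which requires exhibiting the correct witness $j$ (the tail of one of the two chores, whose bundle $i$ values at $0$) and ruling out the first EF1 clause. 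Once this local reduction is in place, the pseudoforest step and the $O(|V(G)|+|E(G)|)$ bound follow routinely.
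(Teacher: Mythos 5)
Your proposal is correct and follows essentially the same route as the paper: the same local characterization (an orientation is EF1 iff each vertex receives at most one edge of strictly negative utility to it), the same reduction that discards edges valued at zero by some endpoint and keeps only the doubly-negative subgraph, the same per-component edge-versus-vertex count criterion, and the same linear-time tree/unicyclic construction. Your explicit treatment of the first EF1 clause and of self-loops is slightly more detailed than the paper's, but it is the identical argument.
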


For multigraphs, we show that the same problem becomes NP-complete, even for very simple cases (Theorem~\ref{thm:main-multi}). Finally, we show that the EF1 orientation problem remains NP-hard for multigraphs without self-loops (Theorem~\ref{thm:main-no-self-loop}).

\begin{restatable}{theorem}{mainContributionThreeC}\label{thm:main-multi}
	Deciding whether a multigraph $G$ of chores has an EF1 (resp.\ EFX$_0$) orientation is NP-complete, even if $G$ has only two vertices and utility functions are additive and symmetric.
\end{restatable}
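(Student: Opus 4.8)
The plan is to show the problem lies in NP and then reduce \textsc{Partition} to it, covering the EF1 and EFX$_0$ variants with a single gadget tuned in two different ways. For membership, observe that the self-loops are forced onto their unique incident vertex, so an orientation is completely determined by deciding, for each edge joining the two vertices, which endpoint receives it. This certificate has size $O(|E(G)|)$, and since the valuations are additive, both resulting bundles and all of the EF1 or EFX$_0$ inequalities can be evaluated in polynomial time.

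For hardness I would reduce from \textsc{Partition}: given positive integers $a_1,\dots,a_k$ with $\sum_i a_i=2T$, I build a two-vertex chores instance by placing $k$ parallel edges between vertices $1$ and $2$, the $i$-th having symmetric value $-a_i$, and attaching one self-loop of value $-s$ at each vertex, where $s$ is a weight to be calibrated. Every orientation keeps each self-loop at its own vertex and distributes the $k$ parallel edges, so writing $x$ for the total parallel-edge weight sent to vertex $1$, an orientation corresponds exactly to a subset $S\subseteq[k]$ with $x=\sum_{i\in S}a_i$. The conceptual heart of the construction is that \emph{without} self-loops some split is already EFX$_0$ (two-agent chore division admits EFX$_0$), so the problem would be trivial; the self-loops inject hardness precisely because each agent is forced to carry a load that is invisible to the other agent.

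A short computation gives the envy gap of agent $1$ toward agent $2$ as $g_1=(s+x)-(2T-x)=s+2x-2T$, and symmetrically $g_2=s+2T-2x$; by additivity the EFX$_0$ requirement for agent $1$ is $\min_{e\in\pi_1}|u_1(e)|\ge g_1$, while the weaker EF1 requirement is only $\max_{e\in\pi_1}|u_1(e)|\ge g_1$ (both vacuous when the gap is nonpositive). The idea is to make the self-loop the \emph{binding} chore in each regime. Setting $s=1$ makes the self-loop the minimum chore in every bundle, so the two EFX$_0$ inequalities become $1\ge s+2x-2T$ and $1\ge s+2T-2x$, i.e.\ $x\le T$ and $x\ge T$. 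Setting $s=2T+1$ instead makes the self-loop the maximum chore and forces both agents to envy, so the two EF1 inequalities collapse to the identical pair $x\le T$ and $x\ge T$. In either regime an EF1 (resp.\ EFX$_0$) orientation exists if and only if some split attains $x=T$, i.e.\ if and only if \textsc{Partition} is a yes-instance; together with NP membership this yields NP-completeness, and the instance is two-vertex, additive, and symmetric by construction.

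The algebra is routine, so I expect the main obstacle to be twofold. First is finding the right gadget: recognizing that self-loops are exactly what destroys the triviality of two-agent chore splitting, and then choosing the two extreme self-loop weights that turn the relaxed fairness condition into a rigid equality. Second is the careful bookkeeping of the envy qualifier: I must verify that in each regime the constraint placed on each agent is exactly one of $x\le T$ or $x\ge T$ (becoming vacuous precisely when that agent does not envy), so that the conjunction of the two agents' conditions is equivalent to an \emph{exact} partition rather than merely an approximately balanced one.
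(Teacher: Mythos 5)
Your proposal is correct and follows essentially the same route as the paper: a reduction from \textsc{Partition} to a two-vertex multigraph whose parallel edges encode the \textsc{Partition} elements, with one self-loop per vertex whose weight is calibrated so that the EF1 (resp.\ EFX$_0$) removal condition pins the split to an exact equipartition. The only cosmetic difference is the calibration: the paper sets the self-loop utility to $\alpha < -\max_i s_i$ for EF1 and to $0$ for EFX$_0$, while you use $-(2T+1)$ and $-1$ respectively, which serve the identical purpose of making the self-loop the binding chore in each regime.
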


\begin{restatable}{theorem}{mainContributionThreeD}\label{thm:main-no-self-loop}
	Deciding whether a multigraph $G$ of chores has an EF1 orientation is NP-complete, even if $G$ has 3 vertices, no self-loops, and utility functions are additive and symmetric.
\end{restatable}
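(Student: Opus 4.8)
The plan is to prove both membership in NP and NP-hardness. Membership is immediate: an orientation is a succinct certificate, and for each of the $O(n^2)$ ordered pairs of agents one checks the EF1 condition in linear time by comparing the agent's own total chore value against the other agent's bundle after deleting a single largest-magnitude chore from its own bundle; verification thus runs in polynomial time. The substance of the theorem is the hardness direction, for which I would reduce from \textsc{Partition}.

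Given positive integers $x_1,\dots,x_n$ with $X=\sum_i x_i$, I would build a chores multigraph on three vertices $a,b,c$ with no self-loops as follows: place $n$ parallel $ab$-edges whose chore magnitudes are $x_1,\dots,x_n$; place two parallel $ac$-edges each of magnitude $X$; and place two parallel $bc$-edges each of magnitude $X$. All valuations are additive and symmetric, and there are no self-loops, matching the hypotheses of Theorem~\ref{thm:main-no-self-loop}. Intuitively the two ``heavy'' pairs attached to $c$ play the role that forced self-loops would play in the two-vertex construction of Theorem~\ref{thm:main-multi}, but realized entirely through edges incident to the auxiliary vertex $c$.

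The heart of the argument is a \emph{forcing lemma}: in every EF1 orientation, exactly one edge of each heavy pair points to $c$ and the other points to $a$ (resp.\ $b$). I would prove this by eliminating the other possibilities for each pair. If both $ac$-edges point to $c$, then $c$ carries weight at least $2X$ while $a$ contributes nothing that $c$ can see, so even after deleting $c$'s single worst chore, $c$ still envies $a$, violating EF1; if both point to $a$, then $a$ carries weight at least $2X$ that is invisible to $c$, and after deleting $a$'s worst chore $a$ still envies $c$, again violating EF1. The symmetric statements hold for the $bc$-pair. Crucially these deductions are \emph{independent} of how the $ab$-edges are oriented, so the one-to-$c$/one-to-$a$ configuration is forced. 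This pins the load of $c$ that is visible to each of $a$ and $b$ at exactly $X$ and fixes $c$'s own bundle at $2X$; one then checks, using $X\ge\max_i x_i$, that all EF1 conditions \emph{involving} $c$ are satisfied automatically.

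With $c$'s configuration forced, the only remaining freedom is the split of the $ab$-edges; let $p$ be the total magnitude oriented to $a$. Since the forced gadget chore of magnitude $X$ is the largest chore in each of $a$'s and $b$'s bundles, the EF1 condition of $a$ toward $b$ reduces, after deleting that chore, to $p \le X-p$, while the EF1 condition of $b$ toward $a$ reduces to $X-p \le p$; together these force $p = X/2$. Hence an EF1 orientation exists if and only if the $x_i$ admit a subset of total magnitude $X/2$, which is precisely a \textsc{Partition} solution. The main obstacle I anticipate is the soundness (backward) direction: proving the forcing lemma cleanly and verifying that the auxiliary vertex $c$ introduces no slack that could rescue an unbalanced split. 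The forward direction should be a routine verification that the balanced orientation satisfies all six ordered-pair EF1 conditions.
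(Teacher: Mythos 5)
Your proposal is correct and takes essentially the same route as the paper: the identical three-vertex gadget (the \textsc{Partition} elements as parallel $ab$-edges, plus two parallel edges of weight $-X$ between $c$ and each of $a$ and $b$), the same forcing step for the heavy pairs, and the same balance argument showing EF1 between $a$ and $b$ forces the $ab$-edges to split as $p \le X-p$ and $X-p \le p$, i.e.\ an equipartition. If anything, your forcing lemma is slightly more careful than the paper's Lemma~\ref{lemma:multigraph-reduction}, since you explicitly rule out both heavy edges of a pair pointing to $a$ (resp.\ $b$), a configuration the paper's write-up dismisses only implicitly.
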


Remarkably, our results for chores stand in contrast with previous results for goods. While EF1 orientations of goods always exist \citep{deligkas2024ef1}, our results imply they can fail to exist for chores (see Example~\ref{ex:chores-no-ef1-graphical}). Second, while deciding whether EFX$_0$ orientations of goods exist is NP-complete \citep{christodoulou2023fair}, out results imply the same problem is polynomial-time decidable for chores.

\begin{example}[Chores Instances with no EF1 Orientations]\label{ex:chores-no-ef1-graphical}
	Consider the complete graph $K_4$ on four vertices, in which each edge represents a chore of utility $-1$ to both of its endpoints.
	$$\begin{tikzpicture}
		\node[redvertex] (1) at (0,0) {};
		\node[redvertex] (2) at (0,2.2) {};
		\node[redvertex] (3) at (2.2,2.2) {};
		\node[redvertex] (4) at (2.2,0) {};
		\draw (1) to node[left] {$-1$} (2);
		\draw (2) to node[above] {$-1$} (3);
		\draw (3) to node[right] {$-1$} (4);
		\draw (4) to node[below] {$-1$} (1);
		\draw (1) to node[right,pos=0.65] {$-1$} (3);
		\draw (2) to node[left,pos=0.35] {$-1$} (4);
	\end{tikzpicture}$$
	Since $K_4$ contains 6 edges and 4 vertices, the pigeonhole principle implies that any orientation $\pi$ contains a vertex $i$ with two edges directed toward it. Let $j \neq i$ be a vertex such that the edge $e = (j, i)$ is directed toward $i$. Then, $u_i(\pi_i \setminus \{e\}) \leq -1 < 0 = u_i(\pi_j)$, so $\pi$ is not EF1.
\end{example}

\chapter{Existence of MMS Allocations of Mixed Manna}\label{chapter:MMS-Mixed-Manna}

In this chapter, we prove our first result, which we recall below.

\firstMainThm*

In the remainder of this chapter, we assume all utility functions are additive.

\section{Preliminaries}\label{sec:pre-1}

In this section, we develop the tools required for proving Theorem~\ref{thm:contribution-1}.

\subsection{Same-Order Preference Instances}

An important tool used when studying MMS allocations is the observation made by \citet{BL16} that the instances in which every agent ranks the items in the same order (i.e.\ SOP instances) are the hardest instances to find MMS allocations for. Any instance can be transformed into a corresponding SOP instance as follows.

\begin{definition}[Corresponding SOP Instances]
	Let $I$ be an instance and $M(I)$ be its matrix representation. Let $M(I^\text{SOP})$ denote the matrix obtained from $M(I)$ by sorting the entries of each row in non-increasing order. We define the {\em SOP instance $I^\text{SOP}$ corresponding to} $I$ to be the instance represented by the matrix $M(I^\text{SOP})$.
\end{definition}

By construction, $I^\text{SOP}$ has SOP. 

\begin{example}[Corresponding SOP Instance]
	The following two matrices $M(I)$ and $M(I^\text{SOP})$ represent an additive instance $I$ and its corresponding SOP instance $I^\text{SOP}$, respectively. The two matrices have the same entries in each row, but the rows in $M(I^\text{SOP})$ are sorted in non-increasing order.
	$$M(I) = \begin{bmatrix}
		1 & 0 & 6 & 5 \\
		-1 & 2 & 5 & 2 \\
		5 & 1 & 8 & -5
	\end{bmatrix} \xrightarrow[]{\text{Sort Each Row}} \begin{bmatrix}
		6 & 5 & 1 & 0 \\
		5 & 2 & 2 & -1 \\
		8 & 5 & 1 & -5
	\end{bmatrix} = M(I^\text{SOP})
	$$
\end{example}

\begin{proposition}[\citet{BL16}]\label{prop:sop}
	Let $I$ be an additive instance and $I^\text{SOP}$ be its corresponding SOP instance. If $I^\text{SOP}$ has an MMS allocation, then $I$ does as well. \qed
\end{proposition}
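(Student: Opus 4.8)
The plan is to separate the statement into two essentially independent facts: that sorting the rows leaves every agent's MMS threshold unchanged, and that any MMS allocation of the sorted instance can be converted into one of the original instance by a single greedy pass over the items. The first fact reduces the proposition to a purely constructive claim, and the second supplies the construction.

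First I would observe that $u_i^\text{MMS}(I) = u_i^\text{MMS}(I^\text{SOP})$ for every agent $i$. By definition $u_i^\text{MMS}$ is the maximum over partitions of $M$ into $n$ bundles of the minimum bundle value, and since $u_i$ is additive the value of a bundle is just the sum of the numbers $u_i(o_j)$ it contains. Hence $u_i^\text{MMS}$ depends only on the multiset $\{u_i(o_1),\dots,u_i(o_m)\}$ of agent $i$'s item values. Sorting row $i$ of $M(I)$ merely permutes these numbers, so the multiset, and therefore the threshold, is identical in $I$ and $I^\text{SOP}$. Consequently it suffices to show the following: given an allocation $\pi=(\pi_1,\dots,\pi_n)$ of $I^\text{SOP}$ with $w_i(\pi_i)\ge u_i^\text{MMS}(I^\text{SOP})$ for all $i$, where $w_i$ denotes agent $i$'s valuation in $I^\text{SOP}$, one can build an allocation $\rho$ of $I$ with $u_i(\rho_i)\ge u_i^\text{MMS}(I)$.

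For the reconstruction I would identify the columns $1,\dots,m$ of $M(I^\text{SOP})$ with abstract ``positions'', so that agent $i$ values position $j$ at its $j$-th largest item value $v_{i,(j)} := w_i(\{j\})$, and each position $j$ lies in exactly one bundle $\pi_{k(j)}$. Then process the positions in increasing order $1,2,\dots,m$, and at position $j$ assign to its owner $k(j)$ the item of largest $u_{k(j)}$-value among the as-yet-unassigned items. Since there are $m$ positions and $m$ items this greedy pass is well defined and produces a complete allocation $\rho$ in which agent $k$ receives exactly the $|\pi_k|$ items chosen at the positions belonging to $\pi_k$.

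The key step, and the one I expect to carry the real content, is verifying $u_k(\rho_k)\ge w_k(\pi_k)$ for each agent $k$. Writing $\pi_k=\{j_1<\dots<j_d\}$, note that when position $j_t$ is processed exactly $j_t-1$ items have already been assigned, so at most $j_t-1$ of agent $k$'s $j_t$ most valuable items are gone; hence at least one survives, and agent $k$'s greedy pick has value at least $v_{k,(j_t)}$. Summing over $t$ yields $u_k(\rho_k)\ge \sum_{t=1}^{d} v_{k,(j_t)} = w_k(\pi_k) \ge u_k^\text{MMS}(I)$, so $\rho$ is MMS for $I$. I would emphasize that this counting bound is insensitive to the signs of the values, so the same argument will cover chores and mixed manna, not just goods. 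The genuinely delicate point is precisely this per-position estimate $v_{k,(j_t)}$: a more naive ``swap two items between two bundles'' repair can fail, since fixing one agent's deficit may create another's, and it is the global greedy ordering together with the counting argument that circumvents this obstacle.
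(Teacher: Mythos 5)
Your proof is correct, and it is essentially the standard argument behind this proposition: the paper itself states it with a citation to \citet{BL16} rather than proving it, and your construction (interpret the bundles of the sorted instance as a picking order over positions, then let each position's owner greedily take their best remaining item in $I$) together with the counting bound $u_k(\text{pick at position } j_t) \geq v_{k,(j_t)}$ is precisely the picking-order proof from that work. Your observation that the counting argument is sign-insensitive is worth making explicit, since the proposition here is asserted for arbitrary additive instances (mixed manna), which is exactly the generality in which Chapter~\ref{chapter:MMS-Mixed-Manna} invokes it.
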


Observe that the MMS threshold of each agent is the same in $I$ as in $I^\text{SOP}$. This is because the MMS threshold of an agent depends only on the number of agents and the item utilities, and not on the order of the item utilities.

A particularly useful observation is that introducing {\em dummy items} with zero utility to a given instance does not affect whether or not an MMS allocation exists. This is because the MMS threshold of each agent is unaffected by the addition of such dummy items. This allows us to assume $m = n+5$ without loss of generality, by adding an appropriate number of dummy items if $m < n+5$. We will often make use of this assumption when working with instances for which $m \leq n+5$.

We further assume that all instances have SOP without loss of generality in light of Proposition~\ref{prop:sop} in the remainder of the chapter.

%Our approach to finding MMS allocations is based on induction and requires Theorem~\ref{thm:mms-2agents}(2), which states that additive instances with $n=2$ have MMS allocations.

\subsection{Types of Agents}

We differentiate types of agents by their utility functions and MMS thresholds. An agent $i$ is said to be a {\em goods agent} if $u_i(o_j) \geq 0$ for each item $o_j \in M$, and a {\em chores agent} if $u_i(o_j) \leq 0$ for each item $o_j \in M$. It is possible for an agent to be neither a goods agent nor a chores agent. Such agents are called {\em mixed agents}. We also classify agents depending on the values of their MMS thresholds. An agent is said to be {\em positive}, {\em non-positive}, {\em negative}, or {\em non-negative} if its MMS threshold is positive, non-positive, negative, or non-negative, respectively.

%Given an instance $I$ of the fair division problem, we use $G(I)$ and $C(I)$ to denote the number of goods agents and chores agents, respectively. For the numbers of chores agents of different types, we use the symbol $M$ together with a subscript to specify MMS thresholds. For example, we use $M_{>0}$ and $M_{\leq0}$ to denote the number of positive mixed agents and non-positive mixed agents, respectively.

\subsection{Reduction Rules}

Another important tool that is used commonly when studying MMS allocations is that of {\em reduction rules}. Essentially, a reduction rule is an operation that reduces a given instance of the fair division problem into a smaller sub-instance by allocating some of the items to some of the agents.

\begin{definition}[Reduction Rules]
	A {\em reduction rule} $R$ is a function that maps an instance $I = (N, M, U)$ to a pair $(I_R, \pi)$ where
	\begin{enumerate}
		\item $\pi$ is an allocation of a nonempty subset of items $M' \subseteq M$ to a nonempty subset of agents $N' \subseteq N$; and
		\item $I_R = (N_R, M_R, U_R)$ is the instance of the fair division problem obtained from $I$ by deleting the items in $M'$ and the agents in $N'$.
	\end{enumerate}
\end{definition}

To preserve consistency between the agent and item indices in $I$ and $I_R$, we assume that $R$ always deletes agents and items with the greatest indices, starting from agent $n$ and item $o_m$. This assumption can be made without loss of generality by permuting these indices.

If the agents eliminated by $R$ are satisfied with respect to their respective MMS thresholds, and if the MMS thresholds of the remaining agents do not decrease after applying $R$, then we can inductively consider the smaller sub-instance. Such reduction rules are central to finding MMS allocations. We formally define them below.

\begin{definition}[Properties of Reduction Rules]
	Let $R$ be a reduction rule that maps $I = (N, M, U)$ to $(I_R, \pi)$. We say that 
	\begin{enumerate}
		\item $R$ {\em preserves the MMS threshold} of an agent $i \in N_R$ if $u_i^\text{MMS}(I_R) \geq u_i^\text{MMS}(I)$; 
		\item $R$ {\em satisfies} an agent $i \in N \setminus N_R$ if $u_i(\pi_i) \geq u_i^\text{MMS}(I)$ for the bundle $\pi_i$ that $\pi$ allocates to agent $i$;
		\item $R$ is a {\em valid reduction rule} if $R$ preserves the MMS threshold of every agent $i \in N_R$ and satisfies every agent $i \in N \setminus N_R$.
	\end{enumerate}
\end{definition}

\begin{proposition}\label{prop:validreduction}
	Let $I$ be an instance and $R$ be a valid reduction rule. If the sub-instance $I_R$ obtained from $I$ by applying $R$ admits an MMS allocation, then $I$ admits an MMS allocation.
\end{proposition}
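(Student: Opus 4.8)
The plan is to construct an MMS allocation of $I$ by gluing together the given MMS allocation of the sub-instance $I_R$ with the partial allocation $\pi$ produced by the reduction rule. First I would let $\sigma = (\sigma_i)_{i \in N_R}$ denote an MMS allocation of $I_R$, so that $u_i(\sigma_i) \geq u_i^\text{MMS}(I_R)$ for every $i \in N_R$. I would then define an allocation $\tau$ of $I$ by setting $\tau_i = \pi_i$ for each deleted agent $i \in N \setminus N_R$ and $\tau_i = \sigma_i$ for each remaining agent $i \in N_R$.

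Next I would verify that $\tau$ is a genuine complete allocation of $M$ among $N$. Writing $N'$ and $M'$ for the agents and items deleted by $R$, the sub-instance $I_R$ is obtained from $I$ by removing exactly these, so $N = N' \sqcup N_R$ and $M = M' \sqcup M_R$. The partial allocation $\pi$ distributes all of $M'$ among the agents of $N'$, while $\sigma$, being an allocation of $I_R$, distributes all of $M_R$ among the agents of $N_R$. Because the two item sets and the two agent sets are disjoint, $\tau$ assigns each item of $M$ exactly once and is therefore a complete allocation of $I$.

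Finally I would check the MMS guarantee agent by agent, splitting into the two groups. For a deleted agent $i \in N \setminus N_R$, the hypothesis that $R$ satisfies $i$ gives $u_i(\tau_i) = u_i(\pi_i) \geq u_i^\text{MMS}(I)$ directly. For a remaining agent $i \in N_R$, I would chain two inequalities: since $\sigma$ is MMS for $I_R$ we have $u_i(\tau_i) = u_i(\sigma_i) \geq u_i^\text{MMS}(I_R)$, and since $R$ preserves the MMS threshold of $i$ we have $u_i^\text{MMS}(I_R) \geq u_i^\text{MMS}(I)$; together these yield $u_i(\tau_i) \geq u_i^\text{MMS}(I)$. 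As every agent meets its threshold, $\tau$ is an MMS allocation of $I$.

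This argument is essentially a routine gluing of two allocations, so I do not expect a genuine obstacle. The only points requiring care are bookkeeping ones: confirming that $M'$ and $M_R$ partition $M$ (so that no item is dropped or doubly assigned when the two sub-allocations are combined) and keeping straight that $u_i^\text{MMS}(I_R)$ and $u_i^\text{MMS}(I)$ are thresholds computed in two different instances, which is precisely what the \emph{preserves the MMS threshold} condition reconciles.
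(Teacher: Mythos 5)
Your proof is correct and follows exactly the same approach as the paper's: the paper simply combines the partial allocation $\pi$ given by $R$ with an MMS allocation of $I_R$ and declares the result "clearly" MMS for $I$. Your write-up just makes explicit the bookkeeping (disjointness of deleted/remaining agents and items) and the two threshold inequalities that the paper leaves implicit.
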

\begin{proof}
	Let $\pi$ be the allocation that is implicitly defined by $R$ and $\pi'$ be an MMS allocation for the instance $I_R$. Clearly, applying both $\pi$ and $\pi'$ to the instance $I$ is an MMS allocation for $I$.
\end{proof}

We present two simple reduction rules.

\begin{proposition}\label{prop:1reduction_preserve}
	Let $I$ be an instance and $\pi = (\pi_1, \pi_2, \dots, \pi_n)$ be an allocation that is MMS for an agent $i$. The following two reduction rules preserve the MMS threshold of agent $i$.
	\begin{enumerate}
		\item Allocating an item $o_a$ to an agent $j \neq i$ if $\pi$ contains a bundle $\pi_\ell$ with $u_i(o_a) \leq u_i(\pi_\ell)$.
		\item Allocating two items $o_a, o_b$ to an agent $j \neq i$ if $u_i(\{o_a, o_b\}) \leq u_i^\text{MMS}(I)$.
	\end{enumerate}
\end{proposition}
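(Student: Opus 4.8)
The plan is to use the variational characterization of the MMS threshold directly. Since $u_i^\text{MMS}(I_R)$ equals $\max_{\sigma} \min_{k} u_i(\sigma_k)$ ranging over all allocations $\sigma$ of the reduced item set among the $n-1$ surviving agents, it suffices to exhibit one such allocation in which every bundle has $u_i$-value at least $t \coloneqq u_i^\text{MMS}(I)$. The given allocation $\pi$ is MMS for $i$, so each of its $n$ bundles already satisfies $u_i(\pi_k) \geq t$. The reduced instance has exactly one fewer agent, so the whole task reduces to transforming $\pi$ into an allocation of the reduced item set with $n-1$ bundles, each still of value $\geq t$. I would do this uniformly by a \emph{delete-then-merge} operation: delete the allocated item(s) from whichever bundles contain them, then merge two bundles into one to absorb the removed agent, using additivity of $u_i$ throughout to track bundle utilities.

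For the first rule, let $\pi_p$ be the bundle containing $o_a$ and let $\pi_\ell$ be the witnessing bundle with $u_i(o_a) \leq u_i(\pi_\ell)$. If $\ell \neq p$, I merge $\pi_\ell$ with $\pi_p \setminus \{o_a\}$; by additivity the merged bundle has value $u_i(\pi_p) + \bigl(u_i(\pi_\ell) - u_i(o_a)\bigr) \geq t + 0 = t$, while every other bundle is untouched and already has value $\geq t$. If instead $o_a \in \pi_\ell$ (so $\ell = p$), then $u_i(\pi_\ell \setminus \{o_a\}) = u_i(\pi_\ell) - u_i(o_a) \geq 0$, and merging this remnant with any other bundle yields value $\geq 0 + t = t$. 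Either way I obtain $n-1$ bundles of value $\geq t$, as required.

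For the second rule, I use the hypothesis $u_i(\{o_a, o_b\}) \leq t$. If both items lie in one bundle $\pi_p$, the remnant satisfies $u_i(\pi_p \setminus \{o_a, o_b\}) = u_i(\pi_p) - u_i(\{o_a,o_b\}) \geq t - t = 0$, and I merge it with any other bundle to get value $\geq t$. If the items lie in distinct bundles $\pi_p, \pi_q$, I merge $\pi_p \setminus \{o_a\}$ with $\pi_q \setminus \{o_b\}$; additivity gives the merged value $u_i(\pi_p) + u_i(\pi_q) - u_i(\{o_a, o_b\}) \geq t + t - t = t$. In both subcases the remaining bundles are unchanged and retain value $\geq t$.

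The argument is essentially routine once the delete-then-merge framing is fixed; the only points requiring care are the case split according to where the removed items sit relative to the witnessing bundle, and the observation that a merge partner always exists. The latter holds because the rules allocate items to some agent $j \neq i$, forcing $n \geq 2$, so at least one bundle remains after merging. I expect the main (minor) obstacle to be phrasing the merges cleanly enough that all cases are visibly covered without redundant computation; no deeper difficulty arises, since additivity reduces every utility estimate to the three given inequalities $u_i(\pi_k) \geq t$, $u_i(o_a) \leq u_i(\pi_\ell)$, and $u_i(\{o_a,o_b\}) \leq t$.
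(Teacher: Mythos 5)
Your proposal is correct and follows essentially the same route as the paper: both proofs delete the allocated item(s), merge the affected bundle with the witnessing (or an arbitrary) bundle, and use additivity to show the merged bundle still meets the threshold $u_i^\text{MMS}(I)$, which certifies $u_i^\text{MMS}(I_R) \geq u_i^\text{MMS}(I)$. The only cosmetic difference is that the paper compresses your case splits via a $\max/\min$ computation (for rule 1) and a relabeling that places both items in $\pi_{n-1} \cup \pi_n$ (for rule 2), whereas you enumerate the cases explicitly.
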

\begin{proof}
	(1): Without loss of generality, assume $o_a \in \pi_n$. If $\pi_\ell = \pi_n$, then we have $u_i(o_a) \leq u_i(\pi_n)$. Otherwise, we can relabel the bundles of $\pi$ so that $\pi_\ell = \pi_{n-1}$ and $u_i(o_a) \leq u_i(\pi_\ell) = u_i(\pi_{n-1})$. In either case, we have $u_i(o_a) \leq \max(u_i(\pi_{n-1}), u_i(\pi_n))$. We claim the allocation $\pi' = (\pi_1, \pi_2, \dots, \pi_{n-2}, \pi_{n-1} \cup \pi_n \setminus \{o_a\})$ certifies that $u_i^\text{MMS}(I_R) \geq u_i^\text{MMS}(I)$. Since $u_i$ is additive, we have
	\begin{align*}
		u_i(\pi_{n-1} \cup \pi_n \setminus \{o_a\}) &= u_i(\pi_{n-1}) + u_i(\pi_n) - u_i(o_a) \\
		&\geq u_i(\pi_{n-1}) + u_i(\pi_n) \\
		&- \max(u_i(\pi_{n-1}), u_i(\pi_n)) \\
		&= \min(u_i(\pi_{n-1}), u_i(\pi_n)) \\
		&\geq u_i^\text{MMS}(I)
	\end{align*}
	It follows that the utility of each bundle of $\pi'$ is at least $u_i^\text{MMS}(I)$, so $u_i^\text{MMS}(I_R) \geq u_i^\text{MMS}(I)$.
	
	(2): Without loss of generality, assume $o_a, o_b \in \pi_{n-1} \cup \pi_n$. We claim that $\pi' = (\pi_1, \pi_2, \dots, \pi_{n-2}, \pi_{n-1} \cup \pi_n \setminus \{o_a, o_b\})$ certifies that $u_i^\text{MMS}(I_R) \geq u_i^\text{MMS}(I)$. Since $u_i(\pi_n) \geq u_i^\text{MMS}(I) \geq u_i(\{o_a, o_b\})$, we have
	\begin{align*}
		u_i(\pi_{n-1} \cup \pi_n \setminus \{o_a, o_b\}) &= u_i(\pi_{n-1}) + u_i(\pi_n) - u_i(\{o_a, o_b\}) \\
		&\geq u_i(\pi_{n-1}) \\
		&\geq u_i^\text{MMS}(I)
	\end{align*}
	It follows that the utility of each bundle of $\pi'$ is at least $u_i^\text{MMS}(I)$, so $u_i^\text{MMS}(I_R) \geq u_i^\text{MMS}(I)$.
\end{proof}

Using Proposition~\ref{prop:1reduction_preserve}, we can derive two useful valid reduction rules when the allocations that are MMS for each agent have certain properties. The first valid reduction rule is as follows.

\begin{lemma}\label{lemma:1reduction}
	Let $I$ be an instance and suppose that for each agent $i$, there exists an allocation that is MMS for agent $i$ and contains a singleton bundle. Then, there exists a valid reduction rule that allocates a single item to a single agent.
\end{lemma}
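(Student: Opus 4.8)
The plan is to identify, among the singleton bundles guaranteed by the hypothesis, the one whose item is globally \emph{least} valuable, and to hand that item to its corresponding agent. Since we have assumed the instance has SOP, the items are indexed so that $u_i(o_1) \geq u_i(o_2) \geq \dots \geq u_i(o_m)$ for every agent $i$ simultaneously, so an item's index is a meaningful, agent-independent measure of its quality. For each agent $i$, I would fix an allocation $\pi^{(i)}$ that is MMS for $i$ and contains a singleton bundle $\{o_{s_i}\}$; because $\pi^{(i)}$ is MMS for $i$, this singleton satisfies $u_i(o_{s_i}) \geq u_i^\text{MMS}(I)$.

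First I would set $a := \max_i s_i$ and let $j$ be any agent attaining this maximum, so that $\{o_a\} = \{o_{s_j}\}$ is a singleton bundle of $\pi^{(j)}$. The reduction rule $R$ I propose simply allocates the single item $o_a$ to the single agent $j$ (deleting both from the instance); after relabeling we may assume $o_a$ and $j$ carry the largest indices as required by the convention on reduction rules. It then remains to verify that $R$ is valid, i.e.\ that it satisfies $j$ and preserves the MMS threshold of every other agent.

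Satisfying $j$ is immediate: since $\{o_a\}$ is a singleton bundle of the MMS-for-$j$ allocation $\pi^{(j)}$, we have $u_j(o_a) \geq u_j^\text{MMS}(I)$. For preservation, I would fix an agent $i \neq j$ and apply Proposition~\ref{prop:1reduction_preserve}(1) with the allocation $\pi^{(i)}$ and the bundle $\pi_\ell = \{o_{s_i}\}$. Here the crucial point is the choice of $a$ as the \emph{maximum} index: since $s_i \leq a$, the SOP ordering gives $u_i(o_a) \leq u_i(o_{s_i}) = u_i(\{o_{s_i}\})$, which is exactly the inequality needed to invoke Proposition~\ref{prop:1reduction_preserve}(1). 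Hence allocating $o_a$ to $j$ preserves $u_i^\text{MMS}$ for every $i \neq j$, and $R$ is a valid reduction rule allocating one item to one agent.

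The only genuinely delicate step is deciding \emph{which} item to give away, and I expect this to be the main obstacle. The naive guess of handing out the globally best item $o_1$ fails: one can build an SOP instance (e.g.\ two goods and one very costly chore, with $n=2$) that possesses an MMS allocation with a singleton bundle, yet has no MMS allocation in which $\{o_1\}$ is a singleton, so giving away $o_1$ need not preserve a threshold. Choosing instead the singleton item of \emph{largest} index reverses the relevant SOP inequality in our favour and lets each remaining agent's own singleton bundle serve as the witness bundle $\pi_\ell$ required by Proposition~\ref{prop:1reduction_preserve}(1). Once the correct item is selected, verifying the two validity conditions is routine.
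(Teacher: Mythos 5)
Your proposal is correct and follows essentially the same argument as the paper's proof: both select the largest-index item $o_a$ that appears as a singleton bundle in some agent's MMS allocation, give it to that agent (who is satisfied by the definition of MMS for that agent), and then use the SOP ordering to show $u_i(o_a) \leq u_i(o_{s_i})$ so that Proposition~\ref{prop:1reduction_preserve}(1) preserves every other agent's threshold. The only cosmetic difference is that you fix one singleton per agent and maximize over those, whereas the paper maximizes over all singleton bundles in the fixed allocations; this changes nothing in the argument.
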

\begin{proof}
	For each agent $i$, let $\pi^i$ denote an allocation that is MMS for agent $i$ and contains a singleton bundle. Let $o_a$ denote the item with maximum index $a$ such that $\{o_a\}$ is a bundle in some allocation $\pi^j$. We show that the reduction rule $R$ that allocates $o_a$ to agent $j$ is a valid reduction rule. Since $\{o_a\}$ is a bundle in $\pi^j$, we have $u_j(o_a) \geq u_j^\text{MMS}(I)$. So, $R$ satisfies agent $j$. On the other hand, by the choice of $o_a$, for any agent $i \neq j$, there exists a singleton bundle $\pi_\ell = \{o_b\}$ in $\pi^i$ such that $b \leq a$. The assumption that $I$ has SOP together with the inequality $b \leq a$ imply $u_i(o_a) \leq u_i(\pi_\ell)$. By Proposition~\ref{prop:1reduction_preserve}(1), $R$ preserves the MMS threshold of agent $i$. Thus, $R$ is a valid reduction rule.
\end{proof}

The second valid reduction rule that we will show requires the following technical result, which can be understood as an iterated version of Proposition~\ref{prop:1reduction_preserve}.

\begin{proposition}\label{prop:atmost2reduction}
	Let $I$ be an instance, $X$ be a set of $k$ agents, and $Y$ be a set of $k$ disjoint bundles each of size 1 or 2. Suppose that for some agent $j \notin X$, we have $u_j(\pi_s) \leq u_j^\text{MMS}(I)$ for each bundle $\pi_s$ in $Y$. Let $R$ be any reduction rule that allocates the bundles in $Y$ to agents in $X$ so that each agent in $X$ receives exactly one bundle in $Y$. Then, $R$ preserves the MMS threshold of agent $j$.
\end{proposition}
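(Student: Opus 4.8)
The plan is to realize $R$ as a composition of $k$ single-step reductions of the kind analyzed in Proposition~\ref{prop:1reduction_preserve}, and then to chain the resulting inequalities. Write the $k$ bundles of $Y$ as $Y_1, \dots, Y_k$, where the hypothesis gives $u_j(Y_s) \leq u_j^\text{MMS}(I)$ for each $s$, and suppose $R$ assigns $Y_s$ to the agent $x_s \in X$. Since the bundles of $Y$ are pairwise disjoint, I can peel them off one at a time: define a chain of instances $I = I^{(0)}, I^{(1)}, \dots, I^{(k)} = I_R$, where $I^{(s)}$ is obtained from $I^{(s-1)}$ by deleting the agent $x_s$ together with the (one or two) items of $Y_s$.

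The heart of the argument is the inductive claim that $u_j^\text{MMS}(I^{(s)}) \geq u_j^\text{MMS}(I)$ for every $s$. The base case $s=0$ is immediate. For the inductive step, assume $u_j^\text{MMS}(I^{(s-1)}) \geq u_j^\text{MMS}(I)$. By definition of the MMS threshold there is an allocation of $I^{(s-1)}$ attaining the maximin value, hence one that is MMS for $j$; call it $\pi$, so that $u_j(\pi_\ell) \geq u_j^\text{MMS}(I^{(s-1)})$ for every bundle $\pi_\ell$. Now invoke Proposition~\ref{prop:1reduction_preserve} with preserved agent $j$ and receiving agent $x_s$ (legitimate because $x_s \in X$ and $j \notin X$, so $x_s \neq j$): if $|Y_s| = 1$, part (1) applies, since every bundle $\pi_\ell$ satisfies $u_j(Y_s) \leq u_j^\text{MMS}(I) \leq u_j^\text{MMS}(I^{(s-1)}) \leq u_j(\pi_\ell)$; if $|Y_s| = 2$, part (2) applies, since $u_j(Y_s) \leq u_j^\text{MMS}(I) \leq u_j^\text{MMS}(I^{(s-1)})$. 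In either case the step preserves the MMS threshold of $j$, so $u_j^\text{MMS}(I^{(s)}) \geq u_j^\text{MMS}(I^{(s-1)}) \geq u_j^\text{MMS}(I)$. Taking $s = k$ gives $u_j^\text{MMS}(I_R) \geq u_j^\text{MMS}(I)$, which is exactly the claim that $R$ preserves the MMS threshold of $j$.

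The main subtlety to keep track of is that the hypothesis $u_j(Y_s) \leq u_j^\text{MMS}(I)$ is stated against the \emph{original} threshold, whereas the preconditions of Proposition~\ref{prop:1reduction_preserve} at step $s$ must be checked against the threshold of the \emph{current} sub-instance $I^{(s-1)}$. The induction hypothesis $u_j^\text{MMS}(I^{(s-1)}) \geq u_j^\text{MMS}(I)$ is exactly what bridges this gap, which is why the inequalities must be organized in this monotone chain rather than applied in isolation. A second, purely bookkeeping, point is the index-relabelling convention for reduction rules (always deleting the largest indices first); since relabelling agents and items leaves all utilities unchanged, it can be applied before each step and safely suppressed. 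Notably, neither SOP nor any property of additivity beyond what Proposition~\ref{prop:1reduction_preserve} already relies on is needed here.
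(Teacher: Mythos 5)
Your proof is correct and takes essentially the same route as the paper's: the paper also argues by induction on $k$, peeling off one bundle at a time and applying Proposition~\ref{prop:1reduction_preserve}(1) or (2) at each step, then chaining the resulting threshold inequalities. If anything, your explicit statement of the bridging inequality $u_j(Y_s) \leq u_j^\text{MMS}(I) \leq u_j^\text{MMS}(I^{(s-1)})$ makes precise a point the paper leaves implicit when it invokes the induction hypothesis on the intermediate instance.
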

\begin{proof}
	We prove this by induction on $k$ by repeatedly applying Proposition~\ref{prop:1reduction_preserve}. If $k=0$, then there is nothing to show. Suppose $k=1$ and let $\pi_s$ be the unique bundle in $Y$. Let $\pi$ be an allocation that is MMS for agent $j$ and $\pi_\ell$ be any bundle in $\pi$. By the choice of $\pi_\ell$, we have $u_j(\pi_s) \leq u_j^\text{MMS}(I) \leq u_j(\pi_\ell)$. If $\pi_s$ has size 1, then allocating $\pi_s$ to the agent in $X$ preserves the MMS threshold of agent $j$ by Proposition~\ref{prop:1reduction_preserve}(1). Otherwise if $\pi_s$ has size 2, then allocating $\pi_s$ to the agent in $X$ preserves the MMS threshold of agent $j$ by Proposition~\ref{prop:1reduction_preserve}(2). Assume the proposition holds for all $k < k'$ for some $k'$. We proceed to show that it holds for $k = k'$ as well.
	
	Suppose $R$ allocates some bundle $\pi_s$ in $Y$ to some agent $i$ in $X$. We can view $R$ as being the composition of two reduction rules $R_1$ and $R_2$, where $R_1$ allocates $\pi_s$ to $i$ and $R_2$ allocates $Y \setminus \pi_s$ to $X \setminus \{i\}$ in the same way as $R$. Let $I_1$ be the instance obtained from $I$ by applying $R_1$, and $I_2$ be the instance obtained from $I_1$ by applying $R_2$. According to the induction hypothesis, $R_1$ preserves the MMS threshold of agent $j$, so $u_j^\text{MMS}(I_1) \geq u_j^\text{MMS}(I)$. Similarly, the induction hypothesis implies that $R_2$ preserves the MMS threshold of agent $j$ (when applied to the instance $I_1$), so $u_j^\text{MMS}(I_2) \geq u_j^\text{MMS}(I_1)$. Together, we have $u_j^\text{MMS}(I_2) \geq u_j^\text{MMS}(I)$, so applying $R_1$ followed by $R_2$ on the instance $I$ preserves the MMS threshold of agent $j$. The proposition follows from the fact that applying $R_1$ followed by $R_2$ on the instance $I$ is equivalent to applying $R$ on the instance $I$.
\end{proof}

Using the above proposition, we can derive the following valid reduction rule.

\begin{lemma}\label{lemma:atmost2reduction}
	Let $I$ be an instance and suppose that for some agent $i$, there exists an allocation $\pi$ that is MMS for $i$ and contains $n-1$ bundles each of size 1 or 2. Then, $I$ admits an MMS allocation or there exists a valid reduction rule that allocates to $k$ agents at least $k$ items for some $k>0$.
\end{lemma}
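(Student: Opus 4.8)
The plan is to recast the problem as a bipartite matching question and apply Hall's theorem (Theorem~\ref{thm:hall}). Let $B$ denote the one bundle of $\pi$ that need not have size $1$ or $2$ (the ``big'' bundle; if several bundles are small, designate any one as $B$), and call the other $n-1$ bundles the \emph{small bundles}. Form a bipartite graph $H$ whose left side is the set $L := N \setminus \{i\}$ of the other $n-1$ agents and whose right side is the set $R$ of $n-1$ small bundles, placing an edge between agent $j$ and small bundle $\pi_s$ exactly when $u_j(\pi_s) \geq u_j^\text{MMS}(I)$, i.e.\ when $\pi_s$ would satisfy $j$. Since $|L|$ equals $|R|$, Theorem~\ref{thm:hall} gives a clean dichotomy: either $H$ has a perfect matching, or some subset of agents violates Hall's condition.

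First I would handle the perfect-matching case. If $H$ has a perfect matching, assign each of the $n-1$ other agents the small bundle it is matched to (each is then satisfied), and assign the big bundle $B$ to agent $i$. Because $\pi$ is MMS for $i$, every bundle of $\pi$ --- in particular $B$ --- has utility at least $u_i^\text{MMS}(I)$ to $i$, so $i$ is satisfied as well. This complete allocation meets every agent's threshold, hence is an MMS allocation of $I$, and we are done.

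In the remaining case $H$ has no perfect matching, and I would extract a valid reduction rule from a \emph{maximum} matching $M^*$ rather than from a bare Hall violator. Let $L_U$ be the (nonempty) set of unmatched left vertices, let $L_1$ be the set of agents reachable from $L_U$ by $M^*$-alternating paths, and let $R_1$ be the analogously reachable small bundles. The standard alternating-path analysis underlying Hall's theorem yields $N_H(L_1) = R_1$, that $R_1$ is entirely matched with $|R_1| = |L_1| - |L_U|$, and that the agents in $L \setminus L_1$ are matched by $M^*$ into the remaining small bundles $R_0 := R \setminus R_1$, where $|R_0| = |L \setminus L_1| + |L_U|$. The crucial consequences are that every agent of $L_1$ values every bundle of $R_0$ strictly below its threshold (no edge from $L_1$ reaches $R_0$), and that $R_0$ contains at least one more bundle than there are agents in $L \setminus L_1$. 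I then define the reduction $R$ eliminating the $k := n - |L_1| \geq 1$ agents of $\{i\} \cup (L \setminus L_1)$: each agent of $L \setminus L_1$ receives its $M^*$-matched bundle in $R_0$ and is thereby satisfied, while agent $i$ receives any leftover bundle of $R_0$, which satisfies $i$ since $i$ is satisfied by every bundle of $\pi$; such a spare bundle exists because $|R_0| > |L \setminus L_1|$. All allocated bundles lie in $R_0$ and so have size at most $2$, and every remaining agent $j \in L_1$ values each of them at most $u_j^\text{MMS}(I)$; hence Proposition~\ref{prop:atmost2reduction}, applied once for each such $j$, shows $R$ preserves every remaining threshold. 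Since $R$ also satisfies each eliminated agent and allocates $k$ small bundles (hence at least $k$ items) to $k$ agents, it is the desired valid reduction rule, with the big bundle $B$ and all bundles of $R_1$ retained in the reduced instance.

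The main obstacle, and the reason for the particular design above, is agent $i$: because $\pi$ is MMS for $i$, agent $i$ values \emph{every} bundle at least $u_i^\text{MMS}(I)$, so $i$ can never be a \emph{remaining} agent while any bundle is given away, as Proposition~\ref{prop:atmost2reduction} would then force $u_i$ of the allocated bundles to equal the threshold exactly. The fix is to always place $i$ among the eliminated agents, paired with a small bundle from $R_0$ that the surviving agents of $L_1$ do not value, while keeping the big bundle $B$ --- whose size may exceed $2$, so that Proposition~\ref{prop:atmost2reduction} does not apply to it --- inside the reduced instance rather than allocating it. Verifying that a maximum matching always matches $L \setminus L_1$ into $R_0$ and leaves a spare $R_0$-bundle for $i$ is exactly what the alternating-path (König-type) structure supplies, and this is the technical heart of the argument.
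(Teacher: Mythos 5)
Your proof is correct, and it shares the paper's overall skeleton: form a bipartite ``satisfaction'' graph between agents and bundles, observe that a perfect matching immediately yields an MMS allocation, and otherwise extract a valid reduction rule whose threshold preservation is certified by Proposition~\ref{prop:atmost2reduction}. Where you genuinely differ is in how agent $i$, the one possibly-large bundle, and the failure of Hall's condition are handled. The paper keeps both special objects inside the graph (all $n$ agents versus all $n$ bundles): when no perfect matching exists it takes a \emph{minimal} Hall violator $Y$ (a set of bundles with $|N(Y)| < |Y|$), notes $|Y| \geq 2$ because $i$ is adjacent to every bundle, deletes the at most one large bundle from $Y$ to obtain $Y'$, and uses minimality to conclude that $Y'$ and $N(Y')$ are joined by a perfect matching; the rule allocates $Y'$ to $N(Y')$, and since $i \in N(Y')$ always holds, agent $i$ is automatically eliminated and satisfied. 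You instead remove $i$ and the large bundle from the graph up front and analyze a \emph{maximum} matching through alternating-path reachability (a K\"onig-type decomposition), eliminating $\{i\} \cup (L \setminus L_1)$ with bundles drawn from $R_0$; your structural claims ($N_H(L_1) = R_1$, $R_1$ fully matched, $L \setminus L_1$ matched into $R_0$, and $|R_0| = |L \setminus L_1| + |L_U| > |L \setminus L_1|$) are indeed the standard consequences of maximality, and the application of Proposition~\ref{prop:atmost2reduction} to the surviving agents of $L_1$ goes through exactly as you say. What each route buys: the paper's minimal-violator argument is shorter, dispatching the large bundle with a one-line trimming step; yours is more explicitly constructive (maximum matchings and alternating forests are polynomial-time objects) and makes visible why agent $i$ must always land among the eliminated agents, a fact left implicit in the paper. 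Two cosmetic points: your opening parenthetical should say that when \emph{all} $n$ bundles have size 1 or 2 any one of them may be designated as $B$; and ``$i$ can never be a remaining agent'' is slightly overstated (it could remain if $u_i(\pi_s) = u_i^\text{MMS}(I)$ happened to hold for every allocated bundle), though as a design principle it is exactly right and your rule does eliminate $i$.
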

\begin{proof}
	We begin by defining a bipartite graph using the allocation $\pi$. Let $G(\pi) = (A \cup B, E)$ be the bipartite graph where $A$ is the set of agents in the instance $I$ and $B$ is the set of bundles in the allocation $\pi$. For each agent $j$ and each bundle $\pi_k$, the graph $G(\pi)$ contains the edge $\{j, \pi_k\}$ if and only if the bundle $\pi_k$ satisfies the agent $j$. If $G(\pi)$ contains a perfect matching, then allocating each bundle to the agent it is matched with yields an MMS allocation, so we are done.
	
	Assume $G(\pi)$ does not contain a perfect matching. By Hall's Theorem (Theorem~\ref{thm:hall}) there exists a set $Y \subseteq B$ such that $|N(Y)| < |Y|$, where $N(Y) \coloneqq \{j \in A \mid \{j, \pi_k\} \in E \text{ for some } \pi_k \in Y\}$ is the neighbourhood of $Y$. We choose such a set $Y$ with minimal size, that is, $|N(Y)| < |Y|$ and no smaller set $Y'$ has the property that $|N(Y')| < |Y'|$. Note that $|Y| \geq 2$ because every vertex of $Y$ is adjacent to at least one vertex of $X$, namely, the vertex representing agent $i$ for whom $\pi$ is MMS. Since $\pi$ is MMS for agent $i$, the vertex $i$ is adjacent to every vertex of $B$. In particular, $Y$ does not contain an isolated vertex, so we have $0 < |N(Y)| < |Y|$. Hence, $|Y| \geq 2$.
	
	Since $\pi$ contains $n-1$ bundles each of size 1 or 2, at most one bundle has size not equal to 1 or 2. In particular, at most one bundle of $Y$ has size not equal to 1 or 2, so it is possible to delete one bundle from $Y$ to obtain a subset $Y' \subset Y$ such that $|Y'| = |Y| - 1 \geq 1$ such that each bundle of $Y'$ has size 1 or 2. Moreover, by the minimality of $Y$ and Hall's theorem, the subgraph of $G(\pi)$ induced by $Y'$ and $N(Y')$ contains a perfect matching $M$. We claim the reduction rule $R$ that allocates bundles in $Y'$ to agents in $N(Y')$ according to the matching $M$ is a valid reduction rule that allocates to $k$ agents at least $k$ items for some $k > 0$.
	
	Since $|Y'| \geq 1$ and each bundle of $Y'$ is non-empty, the number of items $R$ allocates is at least the number of agents $R$ allocates items to. Hence, $R$ allocates to $k$ agents at least $k$ items for some $k > 0$. Moreover, $R$ satisfies the agents in $N(Y')$ because $M$ matches each of these agents to a bundle that satisfies that agent. It remains to show that $R$ preserves the MMS thresholds of the agents not in $N(Y')$. Let $j$ be an agent not in $N(Y')$. By construction, $Y'$ is a set of disjoint bundles each of size 1 or 2. Moreover, we have $u_j(\pi_k) < u_j^\text{MMS}(I)$ for each bundle $\pi_k$ of $Y'$ by the definition of $G(\pi)$ because $i$ is not in $N(Y')$. Thus, Proposition~\ref{prop:atmost2reduction} implies that $R$ preserves the MMS threshold of agent $j$, so $R$ is valid.
\end{proof}

We remark that the idea involving bundles of sizes 1 and 2 and Hall's Theorem has also been applied independently by Hummel in the proof of Lemma 23 of \citep{hummel2023lower}, in which a version of Lemma~\ref{lemma:atmost2reduction} for goods is implicitly obtained.

\section{Existence of MMS Allocations}

In this section, we prove Theorem~\ref{thm:contribution-1}. Each condition of Theorem~\ref{thm:contribution-1} requires a different technique, so we divide the proof into three subsections.

\subsection{$I$ contains $n \leq 3$ agents}

The proof of Theorem~\ref{thm:contribution-1} relies on applying valid reduction rules to reduce a given instance to a smaller instance. This requires the base cases for the induction to first be shown separately. The $n=2$ case has been settled by Theorem~\ref{thm:mms-2agents}(2). In this subsection, we additionally show the $n=3$ case.

The outline of the proof is as follows. For each agent $i \in [3]$ in an instance $I$, we fix an allocation $\pi^i$ that is MMS for $i$. For each pair of allocations $\pi^i, \pi^j$, we define a special bipartite graph $G(\pi^i, \pi^j)$. By analyzing this graph, we show that whenever some $\pi^i$ contains a bundle of size at most one or two 2-bundles (i.e.\ bundle of size 2), then $I$ admits an MMS allocation. This leaves the remaining case in which each of the three allocations $\pi^i$ contains one 2-bundle and two 3-bundles. In this case, we will reduce the instance to a 3-agent 9-item instance for which each agent $i$, there is an allocation that is MMS for $i$ and contains three 3-bundles. Finally, we will use a proposition due to \citet{feige2021tight} that implies that such an instance admits an MMS allocation.

We now introduce the special bipartite graph. For each agent $i \in [3]$, we fix an allocation $\pi^i$ that is MMS for $i$. For each pair of agents $i \neq j$, we define the bipartite graph $G(\pi^i, \pi^j) = (A \cup B, E)$ as follows. The vertex sets $A, B$ consist of the bundles of $\pi^i, \pi^j$, respectively. The edge set $E$ contains the edge $\{\pi^i_x, \pi^j_y\}$ if and only if the bundles $\pi^i_x, \pi^j_y$ are disjoint.

\begin{proposition}\label{prop:edge}
	Let $I$ be a 3-agent instance. For each agent $i$, let $\pi^i = (\pi^i_1, \pi^i_2, \pi^i_3)$ be an allocation that is MMS for agent $i$. Suppose that for some $i \neq j$, the graph $G(\pi^i, \pi^j)$ contains an edge $\{\pi^i_x, \pi^j_y\}$ and there exists a bundle $\pi^i_z \neq \pi^i_x$ such that $u_i(\pi^i_z) \geq u_i(\pi^j_y)$. Then, $I$ admits an MMS allocation.
\end{proposition}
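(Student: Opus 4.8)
The goal is to produce an allocation giving each of the three agents a bundle of value at least its MMS threshold. Writing $k$ for the third agent (so $\{i,j,k\} = [3]$), I would keep agent $i$'s partition $\pi^i = (\pi^i_x, \pi^i_z, \pi^i_w)$ as the primary source of candidate bundles, since agent $i$ is satisfied by any of its own bundles, and then try to serve agents $j$ and $k$ from the available material.

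The first and cleanest attempt is a matching argument in the spirit of Lemma~\ref{lemma:atmost2reduction}. Form the bipartite graph $H$ with the three agents on one side and the three bundles $\pi^i_x, \pi^i_z, \pi^i_w$ on the other, joining an agent $a$ to a bundle $B$ whenever $u_a(B) \geq u_a^{\text{MMS}}(I)$. A perfect matching of $H$ is exactly an MMS allocation, because it assigns each agent a distinct bundle of the partition $\pi^i$ meeting its threshold. Agent $i$ is adjacent to all three bundles, and by the averaging bound $u_a^{\text{MMS}}(I) \le u_a(M)/3$ (Observation~\ref{obs:mms-upper-bound}) together with additivity we get $\max_t u_a(\pi^i_t) \ge u_a(M)/3 \ge u_a^{\text{MMS}}(I)$, so every agent is adjacent to at least one bundle. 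Hence by Hall's theorem (Theorem~\ref{thm:hall}) the only way a perfect matching can fail is if agents $j$ and $k$ are jointly adjacent to a single common bundle. The plan therefore splits into two cases.

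If $H$ has a perfect matching, we are done. Otherwise both $j$ and $k$ accept exactly one bundle of $\pi^i$ and reject the other two. In this case I would instead use the edge together with the inequality to set up a valid reduction (Proposition~\ref{prop:validreduction}): allocate $\pi^j_y$ to agent $j$ (which satisfies $j$), and reduce to the two-agent instance $I'$ on agents $\{i,k\}$ with item set $M \setminus \pi^j_y$, to which the two-agent existence result (Theorem~\ref{thm:mms-2agents}(2), whose divide-and-choose proof applies verbatim to mixed manna) applies. The inequality is precisely what makes this reduction preserve agent $i$'s threshold: since $\pi^i_x \cap \pi^j_y = \emptyset$ and $\pi^i$ partitions $M$, we have $\pi^j_y \subseteq \pi^i_z \cup \pi^i_w$, so additivity gives $u_i\big((\pi^i_z \cup \pi^i_w)\setminus \pi^j_y\big) = u_i(\pi^i_z) + u_i(\pi^i_w) - u_i(\pi^j_y) \ge u_i(\pi^i_w) \ge u_i^{\text{MMS}}(I)$, and together with $u_i(\pi^i_x) \ge u_i^{\text{MMS}}(I)$ the pair $\big(\pi^i_x,\ (\pi^i_z \cup \pi^i_w)\setminus \pi^j_y\big)$ witnesses $u_i^{\text{MMS}}(I') \ge u_i^{\text{MMS}}(I)$.

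The main obstacle is the third agent $k$: I must show that the same reduction preserves $u_k^{\text{MMS}}$, i.e. that $M \setminus \pi^j_y$ can be split into two parts each worth at least $u_k^{\text{MMS}}(I)$ to $k$. Unlike the goods-only setting, removing a bundle can strictly lower a third agent's MMS threshold (cf.\ Example~\ref{ex:reduction-fails-for-chores} and the failure of Proposition~\ref{prop:delete_one_good} for chores), so this cannot hold unconditionally; there are instances meeting the hypothesis for which committing $\pi^j_y$ by itself leaves $k$ unsatisfied, which is exactly why one must first try the matching $H$ and only enter this branch when $H$ has no perfect matching. The payoff of that case split is that Hall's failure forces a rigid rejection structure—agent $k$ attains its threshold on only one bundle of $\pi^i$—and reconciling this rejection information with the disjointness $\pi^i_x \cap \pi^j_y = \emptyset$ to rebuild two $k$-acceptable bundles of $M \setminus \pi^j_y$ is the step I expect to be the most delicate. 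This is also where the size restrictions maintained by the surrounding $n=3$ argument (bundles of size $1$ or $2$) are likely to be invoked to control $u_k(\pi^j_y)$.
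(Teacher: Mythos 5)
Your first branch (the matching over the bundles of $\pi^i$) is sound, and your computation showing that the pair $\bigl(\pi^i_x,\ (\pi^i_z \cup \pi^i_w) \setminus \pi^j_y\bigr)$ certifies $u_i^{\text{MMS}}(I') \geq u_i^{\text{MMS}}(I)$ is exactly the inequality the hypothesis is designed to provide. The problem is the second branch, and it is a genuine gap, not just a delicate step: you must show that deleting agent $j$ together with $\pi^j_y$ does not lower agent $k$'s maximin threshold, i.e.\ that $M \setminus \pi^j_y$ admits a $2$-partition in which both parts are worth at least $u_k^{\text{MMS}}(I)$ to $k$. For mixed manna this is precisely the kind of deletion-monotonicity that fails --- Proposition~\ref{prop:delete_one_good} breaks down in the presence of chores (Example~\ref{ex:reduction-fails-for-chores}) --- and the structure you get from Hall's failure (that $k$ rejects two bundles of $\pi^i$) gives no control whatsoever over $u_k(\pi^j_y)$ or over how removing a bundle of $\pi^j$ reshapes $k$'s optimal partitions. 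Your hoped-for rescue, invoking ``size restrictions maintained by the surrounding $n=3$ argument,'' is not available: Proposition~\ref{prop:edge} is a standalone statement with no size hypotheses, and it is applied (via Corollary~\ref{cor:2edges} and inside Theorem~\ref{thm:38}) to allocations whose bundles need not have size $1$ or $2$. So as written the proof does not go through.

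The paper's proof avoids any reduction and any threshold-preservation claim. It directly forms the three-bundle partition $\pi = \bigl(\pi^i_x,\ \pi^j_y,\ M \setminus (\pi^i_x \cup \pi^j_y)\bigr)$, well-defined because the edge means $\pi^i_x \cap \pi^j_y = \emptyset$, and counts acceptances: agent $i$ accepts the first bundle trivially and the third because $u_i\bigl(M \setminus (\pi^i_x \cup \pi^j_y)\bigr) = u_i(\pi^i_z) + u_i(\pi^i_w) - u_i(\pi^j_y) \geq u_i(\pi^i_w) \geq u_i^{\text{MMS}}(I)$ (the same use of the hypothesis as yours); agent $j$ accepts the second bundle trivially and, since $u_j(\pi^i_x) + u_j\bigl(M \setminus (\pi^i_x \cup \pi^j_y)\bigr) = u_j(M \setminus \pi^j_y) \geq 2u_j^{\text{MMS}}(I)$ by additivity, at least one of the other two; and agent $k$ accepts at least one bundle by the averaging bound (Observation~\ref{obs:mms-upper-bound}). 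Letting $k$ pick first, every case leaves an assignment of the remaining two bundles that satisfies $i$ and $j$. If you want to keep your two-branch structure, you would need to replace your second branch with an argument of this constructive kind rather than a reduction to the two-agent problem.
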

\begin{proof}
	Without loss of generality, assume $G(\pi^1, \pi^2)$ contains the edge $\{\pi^1_1, \pi^2_1\}$ and that $u_1(\pi^1_2) \geq u_1(\pi^2_1)$. Consider the allocation $\pi = (\pi^1_1, \pi^2_1, M \setminus (\pi^1_1 \cup \pi^2_1))$. This allocation is well-defined because $\pi^1_1$ is disjoint with $\pi^2_1$. We have
	\begin{align*}
		u_1(M \setminus (\pi^1_1 \cup \pi^2_1)) &= u_1(M \setminus \pi^1_1) - u_1(\pi^2_1) \\
		&= u_1(\pi^1_2 \cup \pi^1_3) - u_1(\pi^2_1) \\
		&\geq u_1(\pi^1_2 \cup \pi^1_3) - u_1(\pi^1_2) \\
		&= u_1(\pi^1_3) \geq u_1^\text{MMS}(I)
	\end{align*}
	So, the first and third bundles of $\pi$ both satisfy agent 1. On the other hand, since the bundle $\pi_1^2$ of $\pi$ is also a bundle of $\pi^2$, the additivity of $u_2$ implies that one of the bundles $\pi_1^1$ and $M \setminus (\pi^1_1 \cup \pi^2_1)$ of $\pi$ satisfies agent 2. Without loss of generality, assume $\pi_1^1$ satisfies agent 2. Hence, the first and second bundles of $\pi$ both satisfy agent 2. Finally, at least one of the three bundles of $\pi$ satisfies agent 3. Let agent 3 pick a bundle in $\pi$ satisfying them first. Regardless of which bundle agent 3 picks, it is clearly possible to assign the remaining two bundles to agents 1 and 2 in a way that satisfies them both.
\end{proof}

\begin{corollary}\label{cor:2edges}
	Let $I$ be a 3-agent instance. For each agent $i$, let $\pi^i$ be an allocation that is MMS for agent $i$. If for a pair of agents $i \neq j$, the graph $G(\pi^i, \pi^j)$ contains two edges, then $I$ admits an MMS allocation. In particular, if any $\pi^i$ contains a bundle of size at most one, then $I$ admits an MMS allocation.
\end{corollary}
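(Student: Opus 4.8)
The plan is to reduce the whole statement to a single application of Proposition~\ref{prop:edge}. The ``in particular'' clause follows at once from the main assertion: if some $\pi^i$ has a bundle $\pi^i_x$ with $|\pi^i_x|\le 1$, then $\pi^i_x$ meets at most one bundle of $\pi^j$, hence is disjoint from --- and therefore joined by an edge to --- at least two bundles of $\pi^j$, so $G(\pi^i,\pi^j)$ already contains two edges. It thus suffices to show that the presence of two edges forces an MMS allocation, and I would do this by producing a single edge to which Proposition~\ref{prop:edge} applies (in direction $(i,j)$ or in the reversed direction $(j,i)$).

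First I would record a convenient sufficient condition for an edge to be usable. Fix an edge $\{\pi^i_x,\pi^j_y\}$. Since $M\setminus\pi^i_x$ is the union of the two remaining bundles of $\pi^i$, each of value at least $u_i^\text{MMS}(I)$ to agent $i$, we have $\max_{z\ne x}u_i(\pi^i_z)\ge u_i^\text{MMS}(I)$; consequently $u_i(\pi^j_y)\le u_i^\text{MMS}(I)$ already yields $\max_{z\ne x}u_i(\pi^i_z)\ge u_i(\pi^j_y)$, which is exactly the hypothesis of Proposition~\ref{prop:edge} in direction $(i,j)$. Symmetrically, $u_j(\pi^i_x)\le u_j^\text{MMS}(I)$ makes the edge usable in direction $(j,i)$. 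Hence the only edges I cannot immediately exploit are the ``bad'' ones, for which $u_i(\pi^j_y)>u_i^\text{MMS}(I)$ and $u_j(\pi^i_x)>u_j^\text{MMS}(I)$ simultaneously.

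I would then assume both given edges are bad (otherwise we are done) and case-split on whether they share a bundle of $\pi^i$, share a bundle of $\pi^j$, or form a matching. The cleanest case is the matching $\{\pi^i_{x_1},\pi^j_{y_1}\},\{\pi^i_{x_2},\pi^j_{y_2}\}$ with $x_1\ne x_2$ and $y_1\ne y_2$: adding the two badness inequalities $u_i(\pi^i_{x_t})+u_i(\pi^j_{y_t})>u_i(M)-u_i^\text{MMS}(I)$ for agent $i$ and using $\sum_y u_i(\pi^j_y)=u_i(M)$ gives $u_i(\pi^i_{x_3})+u_i(\pi^j_{y_3})<2u_i^\text{MMS}(I)$, where $x_3,y_3$ are the unused indices; since $u_i(\pi^i_{x_3})\ge u_i^\text{MMS}(I)$ this forces $u_i(\pi^j_{y_3})<u_i^\text{MMS}(I)$. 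By the sufficient condition above, the pair $\{\pi^i_{x_3},\pi^j_{y_3}\}$ is a usable edge --- provided it actually is an edge, i.e.\ provided $\pi^i_{x_3}\cap\pi^j_{y_3}=\emptyset$. In the shared-bundle configurations the same summation, together with the inclusion $\pi^i_x\subseteq\pi^j_{y_3}$ forced by disjointness, instead pins the surplus $u_i(\pi^j_{y_3}\setminus\pi^i_x)$ to be strictly negative, which I would turn into a usable third edge in an analogous way.

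The main obstacle is precisely the overlap $\pi^i_{x_3}\cap\pi^j_{y_3}\ne\emptyset$: in the mixed-manna setting the two unused bundles can share a heavily negative item, so they fail to form an edge and the naive monotonicity estimate $u(S)\le u(T)$ for $S\subseteq T$ breaks down. Indeed, if one drops the same-order assumption one can build two-edge instances in which \emph{every} edge is bad, yet an MMS allocation still exists --- realized through a bundle native to neither $\pi^i$ nor $\pi^j$ --- so Proposition~\ref{prop:edge} alone genuinely does not suffice there. To close this case I would lean on the standing SOP normalization: after sorting, all agents rank the items alike, the shared surplus items are the globally lowest-ranked, and I expect a short exchange argument --- relocating these shared low items out of one of the two bundles --- to either separate $\pi^i_{x_3}$ and $\pi^j_{y_3}$ into a genuine usable edge or to directly assemble a partition meeting all three thresholds. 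Checking that this exchange preserves the MMS guarantees of $\pi^i$ and $\pi^j$ is where I anticipate the real work to lie.
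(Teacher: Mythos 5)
Your handling of the ``in particular'' clause and your sufficient condition for a usable edge are both correct, but the core argument has two genuine gaps. First, the algebra in the matching case does not go through: your own definition of a bad edge only gives $u_i(\pi^j_{y_t})>u_i^\text{MMS}(I)$, which together with $u_i(\pi^i_{x_t})\geq u_i^\text{MMS}(I)$ yields $u_i(\pi^i_{x_t})+u_i(\pi^j_{y_t})>2u_i^\text{MMS}(I)$ --- not the inequality $u_i(\pi^i_{x_t})+u_i(\pi^j_{y_t})>u_i(M)-u_i^\text{MMS}(I)$ that you sum. By Observation~\ref{obs:mms-upper-bound} we have $u_i(M)\geq 3u_i^\text{MMS}(I)$, so the inequality you need is \emph{strictly stronger} than the one badness provides whenever $u_i(M)>3u_i^\text{MMS}(I)$, which is entirely typical (agent 2 in Example~\ref{ex:mms-thresholds} has $u_2(M)=17$ while $u_2^\text{MMS}(I)=1$). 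What badness actually gives after summing is $u_i(\pi^j_{y_3})<u_i(M)-2u_i^\text{MMS}(I)$, which need not lie below $u_i^\text{MMS}(I)$; so even in the clean case where the third pair of bundles is disjoint, you cannot conclude it is a usable edge. Second, the shared-bundle configurations and the case $\pi^i_{x_3}\cap\pi^j_{y_3}\neq\emptyset$ are deferred to an unspecified exchange argument that you yourself flag as ``where the real work lies''; that is the bulk of the proof, and it is left missing.

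Both problems come from insisting that Proposition~\ref{prop:edge} be applied to \emph{some} edge. The paper's proof instead extracts value from the \emph{failure} of Proposition~\ref{prop:edge}: for the two given edges of $G(\pi^1,\pi^2)$ it invokes the proposition with explicit witness bundles of $\pi^1$ (the witness $\pi^1_2$ for both edges in the shared-endvertex case; witnesses $\pi^1_2$ and $\pi^1_1$ in the matching case), and if both invocations fail, the negated hypotheses read $u_1(\pi^2_{y_1})>u_1(\pi^1_z)\geq u_1^\text{MMS}(I)$ and $u_1(\pi^2_{y_2})>u_1(\pi^1_{z'})\geq u_1^\text{MMS}(I)$, i.e.\ agent 1 is satisfied by two bundles of $\pi^2$. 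Since agent 2 is satisfied by all three bundles of $\pi^2$, and since additivity together with Observation~\ref{obs:mms-upper-bound} guarantees that at least one bundle of \emph{any} three-bundle partition satisfies agent 3, letting the agents pick bundles of $\pi^2$ in the order $3,1,2$ yields an MMS allocation. This side-steps entirely the need for a third disjoint pair, which is exactly where your approach gets stuck.
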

\begin{proof}
	Without loss of generality, assume $G(\pi^1, \pi^2)$ contains two edges.  Suppose the two edges share an endvertex. Again without loss of generality, assume $\{\pi^1_1, \pi^2_1\}$ and $\{\pi^1_1, \pi^2_2\}$ are edges. By Proposition~\ref{prop:edge}, if $u_1(\pi^1_2) \geq u_1(\pi^2_1)$ or if $u_1(\pi^1_2) \geq u_1(\pi^2_2)$, then $I$ admits an MMS allocation. Otherwise, $u_1(\pi^2_1) > u_1(\pi^1_2)$ and $u_1(\pi^2_2) > u_1(\pi^1_2)$, so the two bundles $\pi^2_1$ and $\pi^2_2$ of the allocation $\pi^2$ both satisfy agent 1. Using the picking order 3, 1, 2 on the $\pi^2$ (i.e.\ let agent 3 pick a bundle satisfying them first, then agent 1, then agent 2) results in an MMS allocation.
	
	Otherwise, the two edges do not share an endvertex. Without loss of generality, assume $\{\pi^1_1, \pi^2_1\}$ and $\{\pi^1_2, \pi^2_2\}$ are edges. Proposition~\ref{prop:edge} implies that if $u_1(\pi^1_2) \geq u_1(\pi^2_1)$ or if $u_1(\pi^1_1) \geq u_1(\pi^2_2)$, then $I$ admits an MMS allocation. Otherwise, we have $u_1(\pi^2_1) > u_1(\pi^1_2)$ and $u_1(\pi^2_2) > u_1(\pi^1_1)$. Again, $\pi^2_1$ and $\pi^2_2$ both satisfy agent 1. Using the picking order 3, 1, 2 on the $\pi^2$ results in an MMS allocation.
	
	Suppose some $\pi^i$ contains a bundle of size at most one, say $\pi^i_1$. For any $j \neq i$, the bundle $\pi^i_1$ can only intersect with at most one bundle of $\pi^j$ because it contains at most one item. Equivalently, the bundle $\pi^i_1$ is disjoint with at least two bundles of $\pi^j$. Hence, the graph $G(\pi^i, \pi^j)$ contains two edges, so $I$ admits an MMS allocation.
\end{proof}

We also make use of the following proposition from \citet{feige2021tight}. Although their work is concerned with goods division, the proof of this proposition depends only on the additivity of the utility functions and the fact that $I$ has SOP --- whether the items are goods or chores makes no difference to its validity. We include their proof below for the sake of completeness.

\begin{proposition}[Proposition 23 in \citep{feige2021tight}]\label{prop:39}
	Let $I$ be a 3-agent 9-item instance. For each agent $i$, let $\pi^i$ be an allocation that is MMS for agent $i$. If each $\pi^i$ contains three 3-bundles, then $I$ admits an MMS allocation.
\end{proposition}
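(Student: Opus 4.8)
The plan is to reduce to a rigid structural situation using the machinery already developed, and then finish by a direct construction that exploits the standing SOP assumption. Throughout, fix the MMS partitions $\pi^1,\pi^2,\pi^3$ as in the statement, each consisting of three bundles of size $3$. Recall that $\pi^i$ satisfies agent $i$ (every one of its bundles meets agent $i$'s threshold), and that by Observation~\ref{obs:mms-upper-bound} every partition contains at least one bundle satisfying any fixed agent $k$, since the best of its three bundles has value at least $u_k(M)/3 \geq u_k^\text{MMS}(I)$.

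First I would analyze the disjointness graphs $G(\pi^i,\pi^j)$ through the $3\times 3$ intersection matrix $A^{ij}$ with entries $A^{ij}_{xy} = |\pi^i_x \cap \pi^j_y|$. Since all bundles have size $3$, every row and column of $A^{ij}$ sums to $3$, and an edge of $G(\pi^i,\pi^j)$ (a disjoint pair of bundles) corresponds exactly to a zero entry. A short counting argument shows that such a matrix cannot have exactly one zero entry: a single zero would force the remaining eight entries to be seven $1$'s and one $2$, and one checks this cannot simultaneously satisfy all the row and column sum constraints. Hence each $G(\pi^i,\pi^j)$ has either zero edges or at least two edges. By Corollary~\ref{cor:2edges}, if some pair of partitions yields at least two edges we are already done, so we may assume every $G(\pi^i,\pi^j)$ is edgeless. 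Edgelessness means every entry of $A^{ij}$ is positive, and together with the row/column sums this forces every entry to equal $1$. Thus the three partitions are pairwise \emph{orthogonal}: each bundle of $\pi^i$ meets each bundle of $\pi^j$ in exactly one item.

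This lets me identify the nine items with the cells of a $3\times 3$ grid so that $\pi^1$ is the set of rows, $\pi^2$ the set of columns, and $\pi^3$ a set of three transversals (a Latin-square decomposition). I would then try to use one of the three partitions directly as the allocation: fixing, say, $\pi^1$, agent $1$ accepts every row, so by Hall's theorem (Theorem~\ref{thm:hall}) an MMS assignment exists unless agents $2$ and $3$ are each satisfied by one and the same single row. The crucial point is that this deficient configuration must be ruled out or repaired using SOP. For fully general valuations the direct attempt can genuinely fail on all three partitions at once (so that no row/column/transversal partition works and a genuinely mixed partition is required), which shows that the common ordering $o_1 \geq \dots \geq o_9$ is essential here. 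I would use SOP to pin down the very limited family of orthogonal grids that can survive the reduction — in particular SOP forces the largest items into distinct cells in a controlled way — and then exhibit a suitable partition, either by verifying the Hall condition on one of the $\pi^i$ or by exchanging a single pair of items between two bundles to obtain a partition on which the assignment succeeds.

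I expect the structural reduction (the intersection-matrix counting combined with Corollary~\ref{cor:2edges}) to be clean and essentially routine. The hard part will be the final step in the orthogonal case: leveraging SOP to enumerate the surviving grids and to produce, in each, a partition together with an assignment meeting all three thresholds at once. This is where the full force of SOP and a careful, possibly case-based, exchange argument is needed, and it is the crux of the proposition.
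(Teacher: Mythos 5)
Your structural reduction is sound, and it is a clean repackaging of the first two cases of the paper's proof. The counting is correct: with all bundles of size $3$, the $3\times 3$ intersection matrix $A^{ij}$ has row and column sums equal to $3$, so an entry of $2$ or $3$ forces at least two zero entries, and exactly one zero entry is impossible (the two rows avoiding the zero would each have to be $(1,1,1)$, making the zero's column sum to $2$). Hence each $G(\pi^i,\pi^j)$ is edgeless or has at least two edges; Corollary~\ref{cor:2edges} handles the latter, and edgelessness forces every entry of $A^{ij}$ to equal $1$. This correctly isolates the orthogonal case, which is exactly the paper's remaining case (the paper instead treats identical bundles and bundles sharing two items by separate direct arguments before reaching it).

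The genuine gap is that the orthogonal case --- which you yourself call the crux --- is never proved: you only outline a plan (Hall's theorem on one of the $\pi^i$, then an SOP-guided enumeration of ``surviving grids'' and exchange arguments), and that plan is both vague and pointed in the wrong direction, since no enumeration of grid configurations is needed. The paper finishes with a short uniform construction: by SOP there is a common least-valuable item $o_9$; relabel bundles so that $o_9 \in \pi^i_1$ for each $i \in [3]$. Orthogonality gives $\pi^i_1 \cap \pi^j_1 = \{o_9\}$ for $i \neq j$, so $\pi^1_1 \cup \pi^2_1 \cup \pi^3_1$ contains only seven items, leaving two items $o_x, o_y$ outside all three, each worth at least $u_\ell(o_9)$ to every agent $\ell$. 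Assigning $\pi^3_1$ to agent $3$, $(\pi^2_1\setminus\{o_9\})\cup\{o_x\}$ to agent $2$, and $(\pi^1_1\setminus\{o_9\})\cup\{o_y\}$ to agent $1$ yields a complete allocation in which each agent's bundle is worth at least a bundle of their own MMS partition, hence meets their threshold. Without this (or an equally concrete) resolution of the orthogonal case, your proposal establishes only the reduction, not the proposition.
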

\begin{proof}
	(This proof is due to \citet{feige2021tight}.) Assume $I$ has SOP. For each agent $i$, let $\pi^i = (\pi^i_1, \pi^i_2, \pi^i_3)$ be an allocation that is MMS for $i$.
	
	Suppose that for a pair of agents $i, j$, there exist two bundles $\pi^i_a$ and $\pi^j_b$ that are identical. Without loss of generality, assume $\pi^1_1 = \pi^2_1$. By the additivity of $u_1$, agent $1$ is satisfied with at least one of the other two bundles of $\pi^2$. Hence, agent $1$ is satisfied with two bundles of $\pi^2$ (including $\pi^2_1$), and agent $2$ is satisfied with three bundles of $\pi^2$ by definition. Thus, using the picking order $3, 1, 2$ results in an MMS allocation.
	
	Otherwise, suppose that for a pair of agents $i, j$, there exist two bundles $\pi^i_a$ and $\pi^j_b$ that share exactly two items, that is, they differ in exactly one item. Without loss of generality, assume this is true for bundles $\pi^1_1$ and $\pi^2_1$. Since $I$ has same-order preference, every agent agrees on which of these two bundles is at least as valuable as the other. Assume $u_\ell(\pi^2_1) \geq u_\ell(\pi^1_1)$ for every agent $\ell \in [3]$. Since $u_2(\pi^2_1) \geq u_2(\pi^1_1)$, the additivity of $u_2$ implies that $u_2(\pi^1_2 \cup \pi^1_3) \geq u_2(\pi^2_2 \cup \pi^2_3) \geq 2u_2^\text{MMS}(I)$. This allows us to assume $u_2(\pi^1_2) \geq u_2^\text{MMS}(I)$ without loss of generality.
	
	If $u_3(\pi^1_1) \geq u_3^\text{MMS}(I)$, giving $\pi^1_1$ to agent 3, $\pi^1_2$ to agent 2, and $\pi^1_3$ to agent 1 results in an MMS allocation. Otherwise, $u_3(\pi^1_1) < u_3^\text{MMS}(I)$. In this case, we first give $\pi^1_1$ to agent 1, satisfying them. Without loss of generality, assume the item $o_x \in \pi^1_1 \setminus \pi^2_1$ is in the bundle $\pi^2_2$ of $\pi^2$. Agent 2 is able to partition the remaining items into two bundles that both satisfy agent 2, by replacing the item $o_x$ in $\pi^2_2$ with the item $o_y \in \pi^2_1 \setminus \pi^1_1$ (which is at least as good as $o_x$). One of these two resulting bundles must satisfy agent 3 because $\frac{u_3(M) - u_3(\pi^1_1)}{2} \geq u_3^\text{MMS}(I)$. Thus, an MMS allocation can be obtained by letting agent 3 pick first from these two bundles.
	
	Finally, we consider the remaining case where the intersection between each pair of bundles $\pi^i_a, \pi^j_b$ where $i \neq j$ contains exactly one item. Without loss of generality, assume the least valuable item $o_9$ is in $\pi^i_1$ for each $i \in [3]$. Except for containing $o_9$, the three bundles $\pi^i_1$ are disjoint. Hence, there are two items $o_x$ and $o_y$ that are in none of these bundles, each having utility at least that of $o_9$ to every agent. Giving $\pi^3_1$ to agent 3, $(\pi^2_1 \setminus \{o_9\}) \cup \{o_x\}$ to agent 2, and $(\pi^1_1 \setminus \{o_9\}) \cup \{o_y\}$ to agent 1 results in an MMS allocation.
\end{proof}

\begin{theorem}\label{thm:38}
	Let $I$ be an instance of the fair division. Suppose $n=3$ and $m=8$. Then, $I$ admits an MMS allocation.
\end{theorem}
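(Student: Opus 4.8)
The plan is to build on the bipartite-graph machinery already developed, namely Proposition~\ref{prop:edge} and Corollary~\ref{cor:2edges}, and to reduce the genuinely hard configuration to the $9$-item result of \citet{feige2021tight} (Proposition~\ref{prop:39}). First I would fix, for each agent $i \in [3]$, an allocation $\pi^i$ that is MMS for $i$. By Corollary~\ref{cor:2edges}, if any $\pi^i$ has a bundle of size at most $1$ we are immediately done, so I may assume every bundle of every $\pi^i$ has size at least $2$. Since the three bundles of $\pi^i$ partition the $8$ items, the only size-multisets with all parts at least $2$ are $\{2,2,4\}$ and $\{2,3,3\}$. This splits the argument into two cases.

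In the case that some $\pi^i$ has two size-$2$ bundles (the $\{2,2,4\}$ pattern), I would argue directly that the two-edge hypothesis of Corollary~\ref{cor:2edges} holds. Fix any $j \neq i$. A size-$2$ bundle of $\pi^i$ meets at most $2$ of the three bundles of $\pi^j$, hence is disjoint from at least one of them, producing an edge of $G(\pi^i,\pi^j)$. The two size-$2$ bundles are distinct vertices on the $A$-side, so they contribute two distinct edges, and Corollary~\ref{cor:2edges} yields an MMS allocation.

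The remaining case is when every $\pi^i$ has the pattern $\{2,3,3\}$, and this is where I expect the real work to lie. The idea is to pad $I$ with a single dummy item $o_9$ of zero utility to every agent, obtaining a $3$-agent $9$-item instance $I^+$; because dummy items change neither MMS thresholds nor MMS-solvability, it suffices to find an MMS allocation of $I^+$. For each agent $i$, placing $o_9$ into the unique size-$2$ bundle of $\pi^i$ converts it into a $3$-bundle of unchanged $u_i$-utility, so agent $i$ now has an MMS allocation consisting of three $3$-bundles. If all three such allocations live in an SOP instance, Proposition~\ref{prop:39} finishes the proof.

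The main obstacle is that adding $o_9$ can destroy the same-order-preference structure that Proposition~\ref{prop:39} relies on: the zero-valued dummy may sit above the negative items of one agent but below the positive items of another, so no common ordering of the nine items exists. I would resolve this by passing to the corresponding SOP instance $(I^+)^{\text{SOP}}$ via Proposition~\ref{prop:sop}, which preserves every agent's MMS threshold. The key observation is that ``agent $i$ admits an MMS allocation into three $3$-bundles'' depends only on agent $i$'s multiset of item values and the threshold $u_i^{\text{MMS}}$ --- it asks whether that $9$-element multiset can be split into three triples each summing to at least $u_i^{\text{MMS}}$ --- both of which are invariant under the per-row sorting that defines $(I^+)^{\text{SOP}}$. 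Hence each agent still has a three-$3$-bundle MMS allocation in $(I^+)^{\text{SOP}}$, Proposition~\ref{prop:39} applies, and unwinding the reductions (Proposition~\ref{prop:sop}, then deletion of the dummy) delivers an MMS allocation of $I$.
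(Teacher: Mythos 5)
Your proof is correct and follows essentially the same route as the paper's: eliminate bundles of size at most one via Corollary~\ref{cor:2edges}, handle the $\{2,2,4\}$ pattern by exhibiting two edges of $G(\pi^i,\pi^j)$, and reduce the $\{2,3,3\}$ pattern to Proposition~\ref{prop:39} by padding with a dummy ninth item. The only difference is that you explicitly confront the fact that the padded instance may no longer have SOP and justify passing to $(I^+)^{\mathrm{SOP}}$ via the multiset-invariance of the ``three $3$-bundles meeting the threshold'' hypothesis --- a subtlety the paper's proof passes over with ``Clearly'' --- so your write-up is, if anything, more complete on that step.
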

\begin{proof}
	For each agent $i$, let $\pi^i$ be an allocation that is MMS for $i$. If some $\pi^i$ contains a bundle of size at most one, then we are done by Corollary~\ref{cor:2edges}. Otherwise, each $\pi^i$ contains only bundles of size at least two. Note that because $m=8$, the number of 2-bundles each $\pi^i$ contains is at most two.
	
	We distinguish two cases. In the first case, some $\pi^i$ contains two 2-bundles. Clearly, $\pi^i$ contains two 2-bundles and one 4-bundle. For any $\pi^j$, each of the two 2-bundles of $\pi^i$ intersects with at most two bundles of $\pi^j$. In other words, each of them is disjoint with a bundle of $\pi^j$, so the graph $G(\pi^i, \pi^j)$ contains two edges. By Corollary~\ref{cor:2edges}, $I$ admits an MMS allocation.
	
	In the second case, no $\pi^i$ contains two 2-bundles, so each $\pi^i$ contains one 2-bundle and two 3-bundles. Let $I'$ be the instance obtained from $I$ by adding a dummy item with zero utility to every agent. For each $i$, let ${\pi^i}'$ be the allocation obtained from $\pi^i$ by adding the dummy item to the 2-bundle of $\pi^i$. Clearly, for each agent $i$, the allocation ${\pi^i}'$ is MMS for $i$ for the instance $I'$ and contains three 3-bundles. By Proposition~\ref{prop:39}, $I'$ admits an MMS allocation. Since the addition of a dummy item does not affect the existence of MMS allocations, $I$ admits an MMS allocation as well.
	
\end{proof}

\subsection{$I$ contains a non-negative agent}

In this subsection, we prove Theorem~\ref{thm:main1}, which constitutes the second part of Theorem~\ref{thm:contribution-1}. Our proof uses the following preprocessing operation. Let $I$ be an instance of the fair division problem that contains an agent $i$ for whom $u_i^\text{MMS}(I) \geq 0$. We define the {\em mimicked instance} $I^{(i)}$ {\em of $I$ with respect to agent $i$} as the instance obtained from $I$ by replacing the utility function $u_j$ with $u_i$ for each agent $j$ such that $u_j^\text{MMS}(I) < 0$. Note that every agent in the resulting instance $I^{(i)}$ has non-negative MMS threshold.

\begin{proposition}\label{prop:mimic}
	Let $I$ be an instance of the fair division problem and $i$ be an agent for whom $u_i^\text{MMS}(I) \geq 0$. If $I^{(i)}$ admits an MMS allocation, then $I$ does as well.
\end{proposition}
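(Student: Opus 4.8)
The plan is to start from an MMS allocation $\pi$ of $I^{(i)}$ and explicitly build an MMS allocation $\pi'$ of $I$, rather than trying to certify that $\pi$ itself already works. Write $S = \{j \in N : u_j^\text{MMS}(I) < 0\}$ for the negative agents and $T = N \setminus S$ for the non-negative ones; since $u_i^\text{MMS}(I) \geq 0$ we have $i \in T$, and $i$ is among the agents whose utility function is left untouched when forming $I^{(i)}$. The first observation I would record is that the MMS threshold of an agent depends only on its own utility function together with $M$ and $n$, all of which are identical in $I$ and $I^{(i)}$. Hence every agent $r \in S \cup \{i\}$, who carries the utility function $u_i$ in $I^{(i)}$, satisfies $u_r^\text{MMS}(I^{(i)}) = u_i^\text{MMS}(I)$, while every $t \in T$ satisfies $u_t^\text{MMS}(I^{(i)}) = u_t^\text{MMS}(I)$.

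The construction itself would dump the mimicking agents' bundles onto $i$ and leave the negative agents empty-handed. Concretely, set $\pi'_i = \bigcup_{r \in S \cup \{i\}} \pi_r$, set $\pi'_j = \emptyset$ for every $j \in S$, and set $\pi'_t = \pi_t$ for every $t \in T \setminus \{i\}$. This is a genuine complete allocation, since the bundles involved are pairwise disjoint parts of $\pi$ and their union is all of $M$. The verification then splits into three routine checks. For $t \in T \setminus \{i\}$ nothing changes, so $u_t(\pi'_t) = u_t(\pi_t) \geq u_t^\text{MMS}(I^{(i)}) = u_t^\text{MMS}(I)$. For $j \in S$ I would use additivity to note $u_j(\emptyset) = 0 > u_j^\text{MMS}(I)$, so the empty bundle already clears a strictly negative threshold. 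For $i$, each $\pi_r$ with $r \in S \cup \{i\}$ has $u_i(\pi_r) \geq u_i^\text{MMS}(I) \geq 0$, so summing over these bundles gives $u_i(\pi'_i) \geq u_i^\text{MMS}(I)$.

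The step I expect to be the real conceptual obstacle is resisting the natural but doomed temptation to convert $\pi$ into an MMS allocation of $I$ by merely re-matching the existing bundles of $\pi$ to the agents (for instance through a Hall-type argument on the bundles of $\pi$). A negative agent may find \emph{every} bundle that $\pi$ hands to $S \cup \{i\}$ unacceptable under its true utility $u_j$, so no permutation of the bundles of $\pi$ need work, and one is forced to re-partition the items. The insight that unlocks the proof is to abandon matching and instead exploit two facts at once: a negative agent is trivially satisfied by the empty set because its threshold is strictly negative, and agent $i$ can safely absorb the union of all the mimicking bundles because each of them is worth at least $u_i^\text{MMS}(I) \geq 0$ to $i$, so merging them only helps $i$. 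Once this is recognized, the remaining calculations are immediate consequences of additivity and the threshold-invariance recorded above.
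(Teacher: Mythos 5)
Your proof is correct and follows essentially the same strategy as the paper's: reassign the mimicking (negative) agents' bundles to agent $i$, leave those agents with empty bundles (which clear their strictly negative thresholds), and use additivity together with $u_i^\text{MMS}(I) \geq 0$ to show that $i$ can absorb a union of bundles each worth at least $u_i^\text{MMS}(I)$ to $i$. The only difference is that the paper moves a negative agent's bundle only when that agent is unsatisfied under its true utility function, which forces an extra case split and a short contradiction argument to show that every moved bundle indeed belongs to a mimicking agent, whereas you move all of them unconditionally --- a mild but valid simplification.
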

\begin{proof}
	For each agent $j$, we use $u_j$ and $u_j'$ to denote its utility function in $I$ and $I^{(i)}$, respectively. Let $\pi' = (\pi_1', \pi_2', \dots, \pi_n')$ be an MMS allocation for the instance $I^{(i)}$. We construct a new allocation from $\pi'$ as follows. For each agent $j \neq i$ such that $u_j(\pi_j') < u_j^\text{MMS}(I)$, we move all of the items in the bundle $\pi_j'$ to the bundle $\pi_i'$. We denote the allocation that results when no more items can be moved this way by $\pi = (\pi_1, \pi_2, \dots, \pi_n)$.
	
	We show that $\pi$ is MMS for the instance $I$ by showing that $u_j(\pi_j) \geq u_j^\text{MMS}(I)$ for each agent $j$. The first observation is that each agent $j \neq i$ with $u_j^\text{MMS}(I) \geq 0$ has the same utility function in $I$ and $I^{(i)}$, and hence has the same MMS threshold. Thus, $u_j(\pi_j') \geq u_j^\text{MMS}(I)$ and $\pi$ allocates $\pi_j'$ to agent $j$. Therefore, the only agents that remain to be considered are agents $j \neq i$ such that $u_j^\text{MMS}(I) < 0$ and agent $i$.
	
	First consider the agents $j \neq i$ such that $u_j^\text{MMS}(I) < 0$. In this case, we have $u_j' = u_i$. If $u_j(\pi_j') \geq u_j^\text{MMS}(I)$, then the items in the bundle $\pi_j'$ were not moved to the bundle $\pi_i'$ during the construction of $\pi$ from $\pi'$. Hence, $\pi$ allocates $\pi_j'$ to agent $j$ and satisfies agent $j$ because $u_j(\pi_j') \geq u_j^\text{MMS}(I)$. Otherwise, we have $u_j(\pi_j') < u_j^\text{MMS}(I)$ instead. In this case, the items in the bundle $\pi_j'$ were moved to the bundle $\pi_i'$ during the construction of $\pi$, so $\pi_j = \emptyset$ and we have $u_j(\pi_j) = 0 > u_j^\text{MMS}(I)$.
	
	It remains to consider agent $i$. In the allocation $\pi$, agent $i$ receives $\pi_i'$ together with every bundle $\pi_j'$ such that $u_j(\pi_j') < u_j^\text{MMS}(I)$. We claim that the inequality $u_i(\pi_j') \geq u_i^\text{MMS}(I)$ holds for any such bundle $\pi_j'$. Fix any such bundle $\pi_j'$. Since $\pi'$ is MMS for the instance $I^{(i)}$, we have $u_j'(\pi_j') \geq u_j^\text{MMS}(I^{(i)})$. This implies that $u_j \neq u_j'$ because otherwise, we would have $u_j^\text{MMS}(I) > u_j(\pi_j') = u_j'(\pi_j') \geq u_j^\text{MMS}(I^{(i)}) = u_j^\text{MMS}(I)$, a contradiction. Since $u_j \neq u_j'$, we have $u_j' = u_i$ by construction, so $u_j^\text{MMS}(I^{(i)}) = u_i^\text{MMS}(I^{(i)})$. Thus, $u_i(\pi_j') = u_j'(\pi_j') \geq u_j^\text{MMS}(I^{(i)}) = u_i^\text{MMS}(I^{(i)}) = u_i^\text{MMS}(I)$ as claimed. On the other hand, we also have $u_i(\pi_i') \geq u_i^\text{MMS}(I^{(i)}) = u_i^\text{MMS}(I)$ because $\pi'$ is MMS for $I^{(i)}$. Thus, $\pi_i$ is a union of bundles that each has utility at least $u_i^\text{MMS}(I)$ to agent $i$. Since $u_i^\text{MMS}(I) \geq 0$ and $u_i$ is additive, we have $u_i(\pi_i) \geq u_i^\text{MMS}(I)$.
\end{proof}

\begin{theorem}\label{thm:main1}
	Let $I$ be an instance of the fair division problem. Suppose $m \leq n+5$ and $I$ contains a non-negative agent. Then, $I$ admits an MMS allocation.
\end{theorem}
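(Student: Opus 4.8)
The plan is to reduce to the case in which \emph{every} agent is non-negative, and then run an induction on $n$ powered by the reduction rules of Section~\ref{sec:pre-1}. The first step is to invoke Proposition~\ref{prop:mimic}: letting $i$ be the guaranteed non-negative agent, it suffices to produce an MMS allocation for the mimicked instance $I^{(i)}$, in which every agent has a non-negative MMS threshold. Hence I may assume from the outset that every agent is non-negative, and, as usual (Proposition~\ref{prop:sop} together with the standard dummy-item reduction), that the instance has SOP and $m = n+5$. The induction is on $n$, with base cases $n \le 3$ supplied by Theorem~\ref{thm:38} (and Theorem~\ref{thm:mms-2agents} for $n \le 2$); these apply verbatim, as they impose no sign restriction on the items.

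For the inductive step ($n \ge 4$) I would exhibit a valid reduction rule and conclude by Proposition~\ref{prop:validreduction} and the induction hypothesis: the reduced instance still satisfies $m' \le n'+5$, has fewer agents, and retains a non-negative agent, since a valid reduction never decreases the surviving agents' MMS thresholds. To produce the reduction I feed Lemma~\ref{lemma:atmost2reduction} the non-negative agent $i$; what remains is to exhibit an allocation that is MMS for $i$ and contains $n-1$ bundles each of size $1$ or $2$, i.e.\ at most one bundle of size outside $\{1,2\}$. The hypotheses constrain such allocations strongly: writing $\mu := u_i^\text{MMS}(I) \ge 0$, any MMS-for-$i$ allocation has all bundles of value at least $\mu$, and when $\mu > 0$ every bundle is nonempty, so the bundle sizes exceed $1$ by a total of exactly $5$ and at most five bundles are non-singletons. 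The task therefore reduces to rebalancing such a partition into one with at most one bundle of size $\ge 3$ (and, when $\mu = 0$, no empty bundle).

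The main obstacle is precisely this rebalancing, and it is where non-negativity, SOP, and the slack $m - n \le 5$ must be used together. The naive move --- shifting a single item out of an oversized bundle into a singleton --- can fail when a bundle mixes a large good with one or more chores: removing the good drops the bundle below $\mu$, while absorbing the chore into a singleton drops \emph{that} bundle below $\mu$. I expect to resolve this by a more careful exchange that relocates a good together with a chore, choosing the pivot item to lie in the interval (non-empty because each affected bundle and singleton already has value at least $\mu$) that keeps both touched bundles at value at least $\mu$; the bound of $5$ on the total excess caps the number of oversized bundles by a small constant, reducing the residual analysis to finitely many configurations in the spirit of Proposition~\ref{prop:39}. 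Once this structural claim is established, Lemma~\ref{lemma:atmost2reduction} returns either an MMS allocation directly or a valid reduction to a strictly smaller instance, which the induction hypothesis handles, completing the argument.
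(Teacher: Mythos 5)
Your skeleton (mimick via Proposition~\ref{prop:mimic}, then induct with valid reduction rules, with base cases from Theorem~\ref{thm:mms-2agents}(2) and Theorem~\ref{thm:38}) matches the paper's, but your inductive step has a genuine gap, and it is exactly the step you flag as ``the main obstacle.'' The paper never proves, and never needs, your rebalancing claim. It first splits into two cases: if \emph{every} agent is non-positive, then the designated agent $i$ has threshold exactly $0$ and allocating all of $M$ to $i$ is already MMS; otherwise some agent has \emph{strictly positive} threshold, and mimicking with respect to \emph{that} agent is what drives the induction. Inside the induction it then uses a dichotomy you omit entirely: either some agent $j$ has an MMS-for-$j$ allocation with \emph{no singleton} bundle --- in which case positivity of $u_j^{\text{MMS}}$ rules out empty bundles, so every bundle has size at least $2$, forcing $2n \leq m \leq n+5$, hence $(n,m) \in \{(4,9),(5,10)\}$ and a size profile of $(3,2,2,2)$ or $(2,2,2,2,2)$, which hands Lemma~\ref{lemma:atmost2reduction} its hypothesis by pure counting --- or else \emph{every} agent has an MMS allocation containing a singleton, and then Lemma~\ref{lemma:1reduction} (which you never invoke) supplies the valid reduction. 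Nothing is ever ``rebalanced''; the work is done by choosing which of the two lemmas to apply.

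Your route cannot be completed as stated, because the structural claim you need is false once $\mu = u_i^{\text{MMS}}(I) = 0$, and mimicking only replaces \emph{negative} agents, so zero-threshold agents survive your reduction to ``all agents non-negative.'' Concretely, take $n = 4$, $m = 9 = n+5$, and $u_i = (3, 3, 0, -1, -1, -1, -1, -1, -1)$. Then $\mu = 0$: it is attained by $\{3,-1,-1,-1\}, \{3,-1,-1,-1\}, \{0\}, \emptyset$, and $\mu \leq u_i(M)/4 = 0$. In any allocation that is MMS for $i$, every $-1$ item must share a bundle with a $+3$ item (a bundle with a $-1$ and no $+3$ has negative utility), and each $+3$ can cover at most three $-1$'s. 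So either the two $+3$'s lie in distinct bundles, each of which then has size at least $4$, leaving the other two bundles with at most the single $0$ item between them, or both $+3$'s lie in one bundle of size at least $8$ and at least two bundles are empty. Either way at most one bundle has size in $\{1,2\}$, so no MMS-for-$i$ allocation has $n-1 = 3$ bundles of size $1$ or $2$: the hypothesis of Lemma~\ref{lemma:atmost2reduction} is unattainable for this agent, no matter what exchange argument you devise. Even in the regime $\mu > 0$ (which the paper secures via its case split before inducting), your proposal only gestures at the exchange argument (``I expect to resolve this by\dots'') without proving it, so the heart of the proof is missing; the repair is not a sharper rebalancing lemma but the case split plus the singleton dichotomy described above.
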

\begin{proof}
	Without loss of generality, assume $m = n+5$. Since $I$ contains a non-negative agent, there exists an agent $i$ for whom $u_i^\text{MMS}(I) \geq 0$. First suppose that every agent has non-positive MMS threshold. In particular, this implies that $u_i^\text{MMS}(I) = 0$. We show that the allocation $\pi = (\pi_1, \pi_2, \dots, \pi_n)$ which allocates every item to agent $i$ is an MMS allocation. Each agent $j \neq i$ receives no items, so $u_j(\pi_j) = 0 \geq u_j^\text{MMS}(I)$. On the other hand, let $\pi^i$ be an MMS allocation for agent $i$. By definition, we have $u_i(\pi_j^i) \geq u_i^\text{MMS}(I) = 0$ for each bundle $\pi_j^i$ of $\pi^i$. By the additivity of $u_i$, we have $u_i(M) = \sum_{\pi_j^i \in \pi^i} u_i(\pi_j^i) \geq 0 = u_i^\text{MMS}(I)$. Thus, $\pi$ is an MMS allocation for the instance $I$.

	Otherwise, $I$ contains an agent $i$ with positive MMS threshold. Let $I^{(i)}$ be the mimicked instance of $I$ with respect to agent $i$. Note that by construction, $I^{(i)}$ is an instance in which every agent has positive MMS threshold. Proposition~\ref{prop:mimic} implies that if $I^{(i)}$ admits an MMS allocation, then $I$ does as well, so it suffices to show that $I^{(i)}$ admits an MMS allocation.
	
	We show that $I^{(i)}$ admits an MMS allocation by induction on the number $n$ of agents. If $n \leq 3$, then $I^{(i)}$ admits an MMS allocation by Theorem~\ref{thm:mms-2agents}(2) or Theorem~\ref{thm:38}. Assume $n \geq 4$ and that for all $n' < n$, any $n'$-agent $m'$-item instance $I'$ of the fair division problem with $m' \leq n' + 5$ in which every agent has positive MMS threshold admits an MMS allocation.
	
	Suppose that for some agent $j$, there exists an allocation $\pi^j$ that is MMS for agent $j$ and contains no singletons. Since $u_j^\text{MMS}(I^{(i)}) > 0$, every bundle of $\pi^j$ has positive utility to agent $j$. In particular, $\pi^j$ contains no empty bundles, so every bundle of $\pi^j$ contains at least two items. Since $\pi^j$ contains exactly $n$ bundles, we have $2n \leq m \leq n+5$, so $(n, m)$ is one of $(4, 9)$ and $(5, 10)$. It follows that $\pi^j$ contains $n-1$ bundles of size 2. By Lemma~\ref{lemma:atmost2reduction}, $I$ admits an MMS allocation or there exists a valid reduction rule $R$ that allocates to $k$ agents at least $k$ items for some $k > 0$. In the former case, we are done. In the latter case, let $I_R$ be the $n_R$-agent $m_R$-agent instance obtained from $I^{(i)}$ by applying $R$. We claim that $I_R$ admits an MMS allocation. Since $R$ preserves the MMS thresholds of each agent it does not delete from $I^{(i)}$, every agent in the instance $I_R$ is positive. Clearly, $m_R \leq n_R + 5$ and $n_R < n$. Thus, $I_R$ admits an MMS allocation by the induction hypothesis. By Proposition~\ref{prop:validreduction}, the instance $I^{(i)}$ admits an MMS allocation.
	
	Otherwise, there is no agent $j$ for which there exists an allocation $\pi_j$ that is MMS for agent $j$ and contains no singletons. So, for each agent $i$, there exists an allocation that is MMS for agent $i$ and contains a singleton bundle. By Lemma~\ref{lemma:1reduction}, there exists a valid reduction rule $R$ that allocates a single item to a single agent. let $I_R$ be the $n_R$-agent $m_R$-agent instance obtained from $I^{(i)}$ by applying $R$. Clearly, we have $m_R \leq n_R+5$ and every agent in $I_R$ has positive MMS threshold because $R$ is a valid reduction rule. Thus, $I_R$ admits an MMS allocation by the induction hypothesis. By Proposition~\ref{prop:validreduction}, the instance $I^{(i)}$ admits an MMS allocation.
\end{proof}

\subsection{$I$ contains only chores agents}

In this subsection, we prove Theorem~\ref{thm:main2}, which constitutes the third part of Theorem~\ref{thm:contribution-1}. This proof relies on the observation that instances with chores agents always have valid reduction rules as long as $m \leq 2n+1$, regardless of what other types of agents it contains.

\begin{proposition}\label{prop:choresagent}
	Let $I$ be an instance with $m \leq n+5$. If there is a chores agent, then $I$ admits an MMS allocation or there exists a valid reduction rule that allocates to $k$ agents at least $k$ items for some $k > 0$.
\end{proposition}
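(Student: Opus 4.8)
The plan is to reduce the statement to a single application of Lemma~\ref{lemma:atmost2reduction}, taking the chores agent as the distinguished agent $i$. First I would dispose of the small cases: when $n \leq 3$ an MMS allocation exists outright by Theorem~\ref{thm:mms-2agents}(2) and Theorem~\ref{thm:38} (padding with zero-utility dummy items to reach $m=8$ when $n=3$), so the conclusion holds trivially. Hence I may assume $n \geq 4$, in which case the hypothesis $m \leq n+5$ upgrades to $m \leq 2n+1$; this is the bound that actually drives the argument. Fix a chores agent $i$. Since every item is a chore to $i$, an allocation that is MMS for $i$ is exactly an optimal solution to the makespan-minimization problem with $n$ machines and job sizes $p_j := -u_i(o_j) \geq 0$: maximizing $\min_j u_i(\pi_j)$ is the same as minimizing $\max_j \sum_{o \in \pi_j} p_o$. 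Write $C^\ast$ for this optimal makespan, so $u_i^\text{MMS}(I) = -C^\ast$, every MMS-for-$i$ partition has all bundle loads at most $C^\ast$, and in particular every single item satisfies $p_j \leq C^\ast$.

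The goal is to produce an MMS-for-$i$ allocation containing $n-1$ bundles of size exactly $1$ or $2$, since Lemma~\ref{lemma:atmost2reduction} then yields either an MMS allocation or the desired valid reduction rule. I would establish this in two steps. The first is a structural claim: when $m \leq 2n+1$ there is an optimal (minimum-makespan) partition with at most one \emph{big} bundle, i.e.\ at most one bundle of size $\geq 3$. The bound $m \leq 2n+1$ is precisely the combinatorial threshold for this, since a profile with $n-1$ bundles of size $\leq 2$ and one further bundle accommodates at most $2(n-1)+3 = 2n+1$ items, whereas for $m = 2n+2$ (for instance $n$ machines and $2n+2$ unit-cost jobs) every partition attaining the optimal makespan $3$ is forced to have at least two big bundles. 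The second step removes empty bundles: given an optimal partition with an empty bundle and a bundle of size $\geq 2$, moving one item from the latter into the former keeps every load at most $C^\ast$ (the moved item has cost $\leq C^\ast$) and hence preserves optimality; since $m \geq n$ (assumed by dummy padding), iterating makes all bundles nonempty without creating a new big bundle. Combining the two steps gives an MMS-for-$i$ partition in which $n-1$ bundles have size $1$ or $2$, which is exactly the hypothesis of Lemma~\ref{lemma:atmost2reduction}.

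The main obstacle is the structural claim. The natural approach is extremal: among all optimal partitions choose one minimizing the number of big bundles (tie-broken, say, by the number of items lying in big bundles), and suppose for contradiction that two big bundles remain. A short count shows that under $m \leq 2n+1$ at least one bundle has size $\leq 1$, so one would like to move the lightest item of a big bundle into such a small bundle, strictly decreasing the potential. The difficulty is that this single-item move can be blocked: the target small bundle may be a singleton holding a very heavy item (of cost exceeding $2C^\ast/3$), so that absorbing even the lightest item of a big bundle overshoots $C^\ast$. Resolving this requires a more careful exchange that either relocates items between two big bundles or consolidates several light items into a single bundle, together with an argument that a blocked configuration cannot persist under every choice at once — which is where the slack afforded by $m \leq 2n+1$ must be spent. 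I expect essentially all of the subtlety to lie in making this exchange argument airtight; the reduction to Lemma~\ref{lemma:atmost2reduction} and the handling of empty bundles are routine by comparison.
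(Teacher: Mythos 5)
Your reduction to Lemma~\ref{lemma:atmost2reduction} rests on the structural claim that whenever $m \leq 2n+1$, the chores agent has a minimum-makespan partition with at most one bundle of size $\geq 3$. This claim is false, and the blocked configuration you flag (a singleton holding an item of cost greater than $2C^\ast/3$) is not a technicality to be engineered around --- it is realized by instances in which \emph{every} optimal partition has two big bundles. Take $n = 4$, $m = 9 = n+5 = 2n+1$, and let the chores agent $i$ assign utility $-3$ to one item and $-1$ to each of the remaining eight. The optimal makespan is $C^\ast = 3$ (so $u_i^\text{MMS}(I) = -3$), and since the heavy item cannot share a bundle with anything ($3+1 > 3$), every optimal partition consists of a singleton containing the heavy item plus the eight unit items split across the other three bundles, each holding at most three units; as $8 = 3+3+2$ is the only admissible decomposition, the bundle sizes are forced to be $(1,3,3,2)$. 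Hence every allocation that is MMS for agent $i$ has exactly two bundles of size 1 or 2, whereas Lemma~\ref{lemma:atmost2reduction} needs $n-1 = 3$. So no exchange argument, however careful, can produce the partition your plan requires: the hypothesis of Lemma~\ref{lemma:atmost2reduction} is simply unsatisfiable for this agent (and for every agent, if all four share this utility function), even though the proposition itself holds for this instance.

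This is exactly why the paper's proof does not go through Lemma~\ref{lemma:atmost2reduction} alone. It splits into two cases over \emph{all} agents: if some agent has an MMS allocation in which every bundle has size at least 2, then counting gives $2n \leq m \leq n+5$, so $(n,m) \in \{(4,9),(5,10)\}$ and that allocation automatically has $n-1$ bundles of size exactly 2, so Lemma~\ref{lemma:atmost2reduction} applies; otherwise every agent has an MMS allocation containing an empty or singleton bundle, and the paper invokes a different, single-item reduction rule --- allocate the worst item $o_m$ to the chores agent --- whose validity follows from SOP and Proposition~\ref{prop:1reduction_preserve}(1): the chores agent is satisfied because $u_1(o_m) \geq u_1(\pi^1_1) \geq u_1^\text{MMS}(I)$ for the bundle $\pi^1_1$ containing $o_m$, and every other agent $j$'s threshold is preserved because $u_j(o_m) \leq u_j(\pi^j_\ell)$ for their empty or singleton bundle $\pi^j_\ell$. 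In the counterexample above it is this second rule, not Lemma~\ref{lemma:atmost2reduction}, that fires. Your base cases and empty-bundle cleanup are fine, but the heart of the proposition requires this second reduction rule (or something like it); it cannot be obtained from a single application of Lemma~\ref{lemma:atmost2reduction} to the chores agent.
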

\begin{proof}
	Without loss of generality, assume $m = n+5$ and agent 1 is a chores agent. If $I$ contains a non-negative agent or $n \leq 3$, then one of  Theorem~\ref{thm:main1}, Theorem~\ref{thm:mms-2agents}(2), and Theorem~\ref{thm:38} implies $I$ admits an MMS allocation and we are done. Thus, we assume $n \geq 4$ and every agent in $I$ is negative. Note that in this case, $I$ having SOP implies $o_m$ is a chore to every agent.
	
	For each agent $i$, let $\pi^i$ be an allocation that is MMS for $i$. Suppose there exists an agent $i$ such that every bundle in the allocation $\pi^i$ contains at least two items. Counting the number of items yields the inequality $m \geq 2n$. Hence, $2n \leq m = n+5$, so $4 \leq n \leq 5$. If $(n,m) = (4,9)$, then $\pi^i$ contains three bundles of size 2 and one bundle of size 3. Otherwise, $(n,m) = (5,10)$ and $\pi^i$ contains five bundles of size 2. In either case, $\pi^i$ contains $n-1$ bundles each of size 2, so Lemma~\ref{lemma:atmost2reduction} implies $I$ admits an MMS allocation or there exists a valid reduction rule that allocates to $k$ agents at least $k$ items for some $k > 0$ and we are done.
	
	Otherwise, for every agent $i$, the allocation $\pi^i$ contains an empty bundle or a singleton bundle. For each $i$, let $\pi^i_\ell$ be a bundle in $\pi^i$ that is either empty or a singleton. We proceed to show that the reduction rule $R$ that allocates $o_m$ to agent $1$ is a valid reduction rule. Assume $o_m$ belongs to the bundle $\pi^1_1$ of $\pi^1$. Since agent 1 considers every item to be a chore, we have $u_1(o_m) \geq u_1(\pi^1_1) \geq u_1^\text{MMS}(I)$, so $R$ satisfies agent 1. It remains to show that $R$ preserves the MMS threshold of every agent $i \neq 1$. Fix any agent $i \neq 1$. Suppose the bundle $\pi^i_\ell$ of $\pi^i$ is empty. Since $o_m$ is a chore to every agent, we have $u_i(o_m) \leq 0 = u_i(\pi^i_\ell)$. Otherwise, $\pi^i_\ell$ is a singleton. In this case, we again have $u_i(o_m) \leq u_i(\pi^i_\ell)$ because $I$ has SOP. By Proposition~\ref{prop:1reduction_preserve}(1), $R$ preserves the MMS threshold of agent $i$.
\end{proof}

\begin{theorem}\label{thm:main2}
	Let $I$ be an instance of the fair division problem. Suppose $m \leq n+5$ and $I$ contains only chores agents. Then, $I$ admits an MMS allocation.
\end{theorem}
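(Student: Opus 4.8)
The plan is to prove the statement by strong induction on the number $n$ of agents, using Proposition~\ref{prop:choresagent} as the engine that drives each reduction step. Following the conventions of the preceding subsections, I would first invoke the standing assumptions that $m = n+5$ (padding with dummy items, which are chores of zero utility and therefore preserve both the MMS thresholds and the property of being a chores-agents instance) and that $I$ has SOP by Proposition~\ref{prop:sop}. The key observation is that ``contains only chores agents'' is a much stronger hypothesis than what Proposition~\ref{prop:choresagent} needs, which is merely the existence of a single chores agent; so that proposition applies immediately at every stage.

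For the base cases $n \leq 3$, I would dispatch $n=1$ trivially (the single agent receives all items), $n=2$ by the divide-and-choose argument of Theorem~\ref{thm:mms-2agents}(2), and $n=3$ by Theorem~\ref{thm:38} after padding to $m=8$. For the inductive step, assume $n \geq 4$ and that every chores-agents instance on fewer than $n$ agents with at most five more items than agents admits an MMS allocation. Since $I$ contains a chores agent, Proposition~\ref{prop:choresagent} gives two cases: either $I$ already admits an MMS allocation, and we are done, or there is a valid reduction rule $R$ that allocates to some $k > 0$ agents at least $k$ items. In the latter case I would pass to the sub-instance $I_R = (N_R, M_R, U_R)$ and verify that the three invariants required to apply the induction hypothesis survive the reduction. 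The agent count strictly drops, since $n_R = n - k < n$; the item budget persists, since $R$ removes at least $k$ items and hence $m_R \leq m - k \leq (n+5) - k = n_R + 5$; and $I_R$ is again a chores-agents instance, because deleting agents and items from an additive instance leaves every surviving single-item utility $u_i(o_j) \leq 0$ unchanged. The induction hypothesis then yields an MMS allocation of $I_R$, and Proposition~\ref{prop:validreduction} lifts it to an MMS allocation of $I$.

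Most of the genuine technical content has already been absorbed into Proposition~\ref{prop:choresagent} and the reduction-rule machinery of Section~\ref{sec:pre-1}, so the theorem itself is essentially an inductive wrapper around them. The step I would watch most carefully is the simultaneous bookkeeping of all three invariants: in particular that the $m \leq n+5$ budget is never exceeded, which holds precisely because any reduction produced here removes at least as many items as agents, and that no degenerate sub-instance with items but no agents can arise. The latter is guaranteed because in each case produced by Proposition~\ref{prop:choresagent} one has $k \leq n-1$ (for the $o_m$-to-agent-$1$ rule $k=1$, and for the rule from Lemma~\ref{lemma:atmost2reduction} one has $k = |Y'| \leq n-1$), so $n_R \geq 1$ always. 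Once these checks are in place, the induction closes with no further case analysis.
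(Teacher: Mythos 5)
Your proposal is correct and follows essentially the same route as the paper's own proof: strong induction on $n$ with base cases settled by Theorem~\ref{thm:mms-2agents}(2) and Theorem~\ref{thm:38}, and an inductive step that invokes Proposition~\ref{prop:choresagent} to either finish directly or produce a valid reduction rule allocating at least $k$ items to $k$ agents, after which Proposition~\ref{prop:validreduction} lifts the MMS allocation of $I_R$ back to $I$. The extra bookkeeping you spell out --- the inequality $m_R \leq m - k \leq n_R + 5$, the preservation of the chores-agents property, and the non-degeneracy bound $k \leq n-1$ --- is exactly what the paper compresses into the word ``clearly,'' so there is no substantive difference.
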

\begin{proof}
	Let $I$ be an instance with $m \leq n+5$ and suppose $I$ contains only chores agents. If $n=2$ or $n=3$, then $I$ admits an MMS allocation by Theorem~\ref{thm:mms-2agents}(2) or Theorem~\ref{thm:38}. Assume $n \geq 4$ and that for all $n' < n$, any $n'$-agent $m'$-item instance $I'$ of the fair division problem with $m' \leq n'+5$ that contains only chores agents admits an MMS allocation.
	
	Since $n+5 \leq 2n+1$ for all $n \geq 4$, Proposition~\ref{prop:choresagent} applies to $I$ and implies that $I$ admits an MMS allocation or there exists valid reduction rule $R$ that allocates to $k$ agents at least $k$ items for some $k > 0$. In the first case, we are done. Otherwise, let $I_R$ be the $n_R$-agent $m_R$-item instance obtained from $I$ by applying $R$. Clearly, $m_R \leq n_R+5$. Moreover, since reduction rules do not change the utilities of the remaining items to the remaining agents, a chores agent of $I$ that is not deleted by $R$ remains a chores agent in the instance $I_R$. In particular, $I_R$ contains only chores agents. By the induction hypothesis, $I_R$ admits an MMS allocation. By Proposition~\ref{prop:validreduction}, the instance $I$ admits an MMS allocation.
\end{proof}

\section{Discussion}\label{sec:discussion-1}

In this chapter, we have studied the existence of MMS allocations for the mixed manna setting. Specifically, we showed that for an instance $I$ of the fair allocation problem, as long as $m \leq n+5$ and some auxiliary conditions hold, then an MMS allocation exists. We also showed that these auxiliary conditions can be dropped if $n \leq 3$. Ultimately, it would be interesting to show whether or not these auxiliary conditions can still be eliminated for $n \geq 4$. Showing that they can be eliminated would imply a full generalization of the $m \leq n+5$ bound obtained by \citet{feige2021tight} to the mixed manna setting.

\begin{problem}\label{prob1}
	Let $I$ be an additive instance of the fair allocation problem such that $m \leq n+5$. Is it true that $I$ admits an MMS allocation?
\end{problem}

As an immediate consequence of Theorem~\ref{thm:contribution-1} and Proposition~\ref{prop:choresagent}, the only remaining part of the puzzle is the instances with $n \geq 4$ that contain only negative mixed agents. We can assume they are mixed because negative non-mixed agents are chores agents, and we can use Proposition~\ref{prop:choresagent} to handle chores agents.

Handling negative mixed agents requires strategies of a different flavour than those used in the previous cases, and seems to be particularly difficult. A source of the difficulty is the fact that if an allocation $\pi^i$ is MMS for a negative mixed agent $i$, then it is possible that $\pi^i$ contains an empty bundle. As an example, consider an instance for which $n=4$, $m=9$, and $(u_1(o_1), u_1(o_2), \dots, u_1(o_9)) = (1, 1, 1, 1, 1, 1, -3, -3, -3)$. In this case, the only allocations that are MMS for agent 1 must have one empty bundle and three non-empty bundles, each of which contains two items of utility 1 and one item of utility $-3$. The presence of an empty bundle means that neither Lemma~\ref{lemma:1reduction} nor Lemma~\ref{lemma:atmost2reduction} applies. This is problematic because we can no longer use the valid reduction rules that we have obtained. It would be interesting to know if these remaining instances admit MMS allocations.

\chapter{EFX Orientations of Multigraphs}\label{chapter:EFX-Multigraphs}

In this chapter, we prove the following two results.

\restateContributionTwoA*

\restateContributionTwoB*

\section{Preliminaries}

\subsection{The Multigraph Model}

Recall that in the graphical model, an instance of the fair division problem is represented as a tuple $(G, U)$. Here, $G$ is a multigraph that possibly contains self-loops, in which each vertex represents an agent and each edge represents an item. In this chapter, we are only interested in {\em goods instances}, that is, each edge represents a {\em good} with non-negative marginal utility. 

We make two assumptions about $u_i$. First, each $u_i$ is {\em additive}, i.e.\ for each subset $S \subseteq E$, we have $u_i(S) = \sum_{e \in S} u_i(e)$. Second, if an edge $e$ is not incident to a vertex $i$, then $u_i(e) = 0$.

An instance $(G, U)$ is said to be {\em symmetric} if $u_i(e) = u_j(e)$ whenever $i, j$ are both endpoints of an edge $e$. In this case, we refer to $u_i(e)$ as the {\em weight} of the edge $e$. By abusing terminology, we also say that the multigraph $G$ is symmetric in this case. If there are only two possible edge weights $\alpha > \beta \geq 0$, we say $G$ is a {\em bi-valued symmetric multigraph}. We reserve the symbols $\alpha, \beta$ for edge weights and call edges of weight $\alpha$ {\em heavy edges} and edges of weight $\beta$ {\em light edges}.

A {\em multitree} is a multigraph that is a tree if each set of parallel edges is treated as a single edge. A multitree is {\em non-trivial} if it has more than one vertex and {\em odd} if every edge has odd multiplicity. We are particularly interested in heavy components of $G$ whose heavy edges induce a non-trivial odd multitree (NTOM).

\subsection{Private Envy-Freeness}

Let $G = (V, E)$ be a multigraph. We define the notion of {\em \abbrevs{private envy-freeness}{PEF}}. Let $i \neq j$ be two vertices and $E_{ij}$ be the set of edges between $i$ and $j$, not including self-loops. For a (possibly partial) orientation $\pi$ of $G$ and two vertices $i$ and $j$, we say $\pi$ is {\em privately envy-free (PEF)} between $i$ and $j$ if $u_i(\pi_i \cap E_{ij}) \geq u_i(\pi_j \cap E_{ij})$ and $u_j(\pi_i \cap E_{ij}) \geq u_j(\pi_j \cap E_{ij})$, i.e.\ $i$ and $j$ do not envy each other if we only consider the edges between them. Clearly, if $\pi$ is PEF between $i$ and $j$, then neither $i$ nor $j$ envies the other.

\subsection{Circuit Satisfiability}

A Boolean circuit is a directed acyclic graph with three types of vertices:
\begin{enumerate}
	\item {\em Input vertices} with zero in-degree that are labelled with either a unique Boolean variable or with \texttt{true};
	
	\item {\em Internal vertices} that are each labelled with AND, OR, or NOT;
	
	\item A designated internal vertex called the {\em output vertex}.
\end{enumerate}
Moreover, we require that each internal vertex labelled AND or OR to have exactly 2 in-neighbours, and each internal vertex labelled NOT to have a unique in-neighbour. Clearly, a Boolean circuit represents a Boolean formula in a natural way.

Given a Boolean circuit, \textsc{CircuitSAT} asks whether it is possible to assign Boolean truth values to the input vertices representing Boolean variables in such a way that the output vertex is true (i.e.\ the Boolean formula represented by the Boolean circuit is satisfied).

\section{Proof of Theorem~\ref{thm:contribution-2a}}\label{section:hardness}

In this section, we prove Theorem~\ref{thm:contribution-2a}. Similarly to \citet{christodoulou2023fair}, we also show a reduction from the NP-complete problem \textsc{CircuitSAT} \citep{cook1971complexity}.

There are two notable differences between the multigraphs that our reduction and their reduction produce. First, while their reduction can result in a non-bipartite multigraph $G$ as a consequence of their NOT and TRUE gadgets, we circumvent this by designing new gadgets that ensure $G$ is bipartite. This is important because while bipartite graphs always have EFX$_0$ orientations \citep{zeng2024structure}, our reduction implies that deciding whether a bipartite multigraph has an EFX$_0$ orientations is NP-hard. Second, by replacing the TRUE gadget due to \citet{christodoulou2023fair} with a novel TRUE gadget, we ensure that each vertex of $G$ is incident to at most only two light edges, rather than the three in their reduction. Doing so tightens the gap between $\alpha$ and $\beta$ required to guarantee NP-hardness from $\alpha >3\beta$ to $\alpha >2\beta$.

We first prove a lemma concerning the multigraphs $H$ shown in Figure~\ref{fig:subgadget}, which appears as an induced submultigraph in some of the gadgets we use in Theorem~\ref{thm:contribution-2a}.

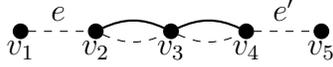
\begin{figure}
	\centering
	\begin{tikzpicture}
		\node[redvertex] (1) at (0,0) {};
		\node[redvertex] (2) at (1,0) {};
		\node[redvertex] (3) at (2,0) {};
		\node[redvertex] (4) at (3,0) {};
		\node[redvertex] (5) at (4,0) {};
		\node (l1) at (0, -0.25) {$v_1$};
		\node (l2) at (1, -0.25) {$v_2$};
		\node (l3) at (2, -0.25) {$v_3$};
		\node (l4) at (3, -0.25) {$v_4$};
		\node (l5) at (4, -0.25) {$v_5$};
		\draw[thick] (2) to[bend left=25] (3)
		(3) to[bend left=25] (4);
		\draw[dashed] (1) to node[above] {$e$} (2)
		(4) to node[above] {$e'$} (5)
		(2) to[bend right=25] (3)
		(3) to[bend right=25] (4);
	\end{tikzpicture}
	\caption{The multigraph $H$. Each solid (resp.\ dashed) edge represents a heavy (resp.\ light) edge.}
	\label{fig:subgadget}
\end{figure}

\begin{lemma}\label{lemma:subgadget}
	Let $G$ be a bi-valued symmetric multigraph that contains $H$(see Figure~\ref{fig:subgadget}) as an induced submultigraph. If $\alpha > 2\beta$ and none of $v_2, v_3, v_4$ are adjacent to a vertex in $G-H$, then $e$ is directed toward $v_1$ or $e'$ is directed toward $v_5$ in any EFX$_0$ orientation of $G$.
\end{lemma}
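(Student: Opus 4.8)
The plan is to argue by contradiction: suppose $\pi$ is an EFX$_0$ orientation in which $e$ is directed toward $v_2$ and $e'$ is directed toward $v_4$, and derive a strong envy between two neighbours. The hypothesis that none of $v_2,v_3,v_4$ is adjacent to $G-H$ is what makes the argument purely local: every edge these three vertices receive lies in $H$, so $\pi_{v_2}\subseteq\{e\}\cup\{\text{the two }v_2v_3\text{ edges}\}$, $\pi_{v_4}\subseteq\{e'\}\cup\{\text{the two }v_3v_4\text{ edges}\}$, and $\pi_{v_3}$ is contained in the four parallel middle edges. Write $h,\ell$ for the heavy and light edge joining $v_2,v_3$ and $h',\ell'$ for those joining $v_3,v_4$. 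We never need to inspect $\pi_{v_1}$ or $\pi_{v_5}$, which are the only bundles that might contain edges outside $H$.

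Two facts drive everything. First, since the instance is symmetric and goods-only, an edge not incident to a vertex $w$ is a (weakly) good item of value $0$ to $w$, hence a legitimate choice of removed item in the EFX$_0$/strong-envy test. Consequently, whenever a bundle $\pi_{w'}$ contains both a heavy edge shared with $w$ and some $0$-value edge for $w$ (such as $e$, $e'$, or a cross edge), the strong-envy test from $w$ toward $w'$ collapses to plain envy $u_w(\pi_w)\ge u_w(\pi_{w'})$ after deleting that $0$-value edge. Second, $\alpha>2\beta$ means one heavy edge outweighs the at most two light edges any of $v_2,v_3,v_4$ can accumulate; in particular $u_{v_2}(\pi_{v_2})\le 2\beta<\alpha$ whenever $h$ leaves $v_2$, and symmetrically for $v_4$.

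I would then split on the orientation of the two heavy edges. \textbf{Both away from $v_3$:} $\pi_{v_3}$ holds only light edges, so $u_{v_3}(\pi_{v_3})\le 2\beta<\alpha$, while $v_2$ holds $h$ together with the $0$-value edge $e$; deleting $e$ leaves $h$ of value $\alpha$, so $v_3$ strongly envies $v_2$. \textbf{Both to $v_3$:} $\pi_{v_3}\supseteq\{h,h'\}$ while $u_{v_2}(\pi_{v_2})\le 2\beta$; as $h'$ is a cross edge of value $0$ to $v_2$, deleting it from $\pi_{v_3}$ still leaves $h$, so $v_2$ strongly envies $v_3$. \textbf{Exactly one to $v_3$}, say $h\to v_3$, $h'\to v_4$ (the other case is symmetric under $v_2\leftrightarrow v_4$, $e\leftrightarrow e'$): here $u_{v_2}(\pi_{v_2})\le 2\beta<\alpha$. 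If $\pi_{v_3}$ contains the cross edge $\ell'$, deleting it leaves $h$ and $v_2$ strongly envies $v_3$. Otherwise $\pi_{v_3}\subseteq\{h,\ell\}$; if also $\ell\in\pi_{v_3}$, then $\pi_{v_2}=\{e\}$ has value $\beta$ and deleting $\ell$ from $\pi_{v_3}$ leaves $h$ of value $\alpha>\beta$, so again $v_2$ strongly envies $v_3$. The only surviving configuration is $\pi_{v_3}=\{h\}$ of value $\alpha$ with $\pi_{v_4}\supseteq\{e',h',\ell'\}$; deleting the $0$-value edge $e'$ from $\pi_{v_4}$ leaves $h',\ell'$ of value $\alpha+\beta>\alpha=u_{v_3}(\pi_{v_3})$, so $v_3$ strongly envies $v_4$.

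In every case a vertex strongly envies a neighbour, contradicting EFX$_0$; hence $e$ points to $v_1$ or $e'$ points to $v_5$. The step I expect to be the crux is the last configuration of the one-heavy-edge case: it is genuinely tight, and the contradiction there uses $\alpha+\beta>\alpha$, i.e.\ $\beta>0$ (this is exactly where a weight-$0$ light edge would break the statement), whereas all other branches need only $\alpha>2\beta$ (and $\alpha>\beta$). I would therefore either invoke the standing assumption that light edges carry positive weight or note that the degenerate case $\beta=0$ is excluded. The remaining work is bookkeeping: verifying that a $0$-value pendant or cross edge is an admissible item to delete in the EFX$_0$ condition, and confirming that the non-adjacency hypothesis truly confines $\pi_{v_2},\pi_{v_3},\pi_{v_4}$ to edges of $H$.
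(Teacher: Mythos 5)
Your proof is correct and follows essentially the same route as the paper's: the identical three-way case split on the orientations of the two heavy edges, the same use of $\alpha > 2\beta$ to cap the bundles of $v_2, v_3, v_4$ at $2\beta$, and the same tight final configuration $\pi_{v_3} = \{h\}$, $\pi_{v_4} \supseteq \{h', \ell', e'\}$ in which $v_3$ strongly envies $v_4$. Your flag that this last step needs $\beta > 0$ is accurate and is a subtlety the paper's own proof passes over silently --- it likewise deduces that $v_3$ envies $v_4$ from $\alpha + \beta > \alpha$, and for $\beta = 0$ the lemma genuinely fails (the orientation $\pi_{v_2} = \{e, \ell\}$, $\pi_{v_3} = \{h\}$, $\pi_{v_4} = \{h', \ell', e'\}$ is then EFX$_0$), so excluding that degenerate case, as you propose, is indeed necessary rather than mere bookkeeping.
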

\begin{proof}
	Fix any EFX$_0$ orientation of $G$ and suppose for contradiction that $e, e'$ are directed toward $v_2, v_4$, respectively. Consider the two heavy edges of $H$. Suppose neither are directed toward $v_3$. Because $v_3$ is not adjacent to a vertex in $G-H$, the maximum utility it can receive is $2\beta$ if it receives all its incident light edges. Since $\alpha > 2\beta$, the vertex $v_3$ envies $v_2$. Since both $e$ and the heavy edge between $v_2$ and $v_3$ are directed toward $v_2$, the vertex $v_3$ still envies $v_2$ even if we disregard $e$, that is, $v_3$ strongly envies $v_2$, a contradiction.
	
	Suppose instead the two heavy edges are both directed in the same direction with respect to Figure~\ref{fig:subgadget}. Without loss of generality, assume they are both directed rightward. In this case, $v_2$ envies $v_3$ regardless of whether it receives the light edge between $v_2$ and $v_3$. This is because $v_2$ is incident to exactly one heavy edge (directed toward $v_3$) and two light edges, as $v_2$ is not adjacent to any vertex in $G-H$. Since $v_2$ does not strongly envy $v_3$ and $v_3$ receives the heavy edge between $v_2$ and $v_3$, all of the light edges incident to $v_3$ are directed away from $v_3$, causing $v_3$ to envy $v_4$. However, because $e'$ is directed toward $v_4$, the vertex $v_3$ strongly envies $v_4$, a contradiction.
	
	Otherwise, both heavy edges are directed toward $v_3$, and we have a contradiction because $v_2$ strongly envies $v_3$.
\end{proof}

\begin{figure}[tb]
	\begin{subfigure}{0.45\textwidth}
		\centering
		\begin{tikzpicture}
			\node[blkvertex] (w) at (0, 0) {};
			\node[redvertex] (v) at (-1, 0) {};
			\node[blkvertex] (a') at (-2, 0) {};
			\node[redvertex] (u) at (1, 0) {};
			\node[blkvertex] (b') at (2, 0) {};
			\node[redvertex] (a) at (-2, 1) {};
			\node[redvertex] (b) at (2, 1) {};
			\node[redvertex] (c) at (0, -1) {};
			\node[blkvertex] (v') at (-1, -1) {};
			\node[blkvertex] (u') at (1, -1) {};
			\node[blkvertex] (c') at (0, -2) {};
			%\node (lw) at (0, 0.25) {$w$};
			%\node (lv) at (-1, 0.25) {$v$};
			%\node (la') at (-2, -0.25) {$a'$};
			%\node (lu) at (1, 0.25) {$u$};
			%\node (lb') at (2, -0.25) {$b'$};
			%\node (la) at (-2, 1.25) {$a$};
			%\node (lb) at (2, 1.25) {$b$};
			%\node (lc) at (-0.25, -1) {$c$};
			%\node (lv') at (-1.25, -1) {$v'$};
			%\node (lu') at (1.25, -1) {$u'$};
			%\node (lc') at (0, -2.25) {$c'$};
			\draw[thick]	(a) to node [right] {$x$} (a') 
			(b') to node [left] {$y$} (b)
			(v) to (w)
			(w) to (u)
			(v) to (v')
			(u) to (u')
			(c) to node [right] {$x \vee y$} (c');
			\draw[dashed]	(a') to (v)
			(w) to (c)
			(u) to (b')
			(a) to[bend right=70] (c')
			(b) to[bend left=70] (c');
	\end{tikzpicture}
	\caption{OR gadget}
	\end{subfigure}
	\begin{subfigure}{0.45\textwidth}
		\centering
		\begin{tikzpicture}
			\node[redvertex] (a) at (0, 0) {};
			\node[blkvertex] (u) at (1, 0) {};
			\node[redvertex] (v) at (2, 0) {};
			\node[blkvertex] (w) at (3, 0) {};
			\node[redvertex] (b) at (4, 0) {};
			
			\node[blkvertex] (a') at (0, -1) {};
			\node[redvertex] (u') at (1, -1) {};
			\node[blkvertex] (v') at (2, -1) {};
			\node[redvertex] (w') at (3, -1) {};
			\node[blkvertex] (b') at (4, -1) {};
			
			\node (la) at (0, 0.25) {$v_1$};
			\node (lu) at (1, 0.25) {$v_2$};
			\node (lv) at (2, 0.25) {$v_3$};
			\node (lw) at (3, 0.25) {$v_4$};
			\node (lb) at (4, 0.25) {$v_5$};
			
			\node (la') at (0, -1.25) {$w_5$};
			\node (lu') at (1, -1.25) {$w_4$};
			\node (lv') at (2, -1.25) {$w_3$};
			\node (lw') at (3, -1.25) {$w_2$};
			\node (lb') at (4, -1.25) {$w_1$};
			
			\draw[thick]	(a) to node[left] {$x$} (a')
			(b) to node[right] {$\neg x$} (b')
			(u) to[bend left=25] (v)
			(v) to[bend left=25] (w)
			(u') to[bend left=25] (v')
			(v') to[bend left=25] (w');
			\draw[dashed]	(a) to (u)
			(w) to (b)
			(a') to (u')
			(w') to (b')
			(u) to[bend right=25] (v)
			(v) to[bend right=25]  (w)
			(u') to[bend right=25] (v')
			(v') to[bend right=25] (w');
			
	\end{tikzpicture}
	\caption{NOT gadget of multiplicity $q \geq 2$}
	\end{subfigure}
	
	\begin{subfigure}{0.45\textwidth}
		\centering
		\begin{tikzpicture}
			\node[redvertex] (a) at (0, 1) {};
			\node[blkvertex] (b) at (0, 0) {};
			\node[redvertex] (c) at (1, 1) {};
			\node[blkvertex] (d) at (1, 0) {};
			\draw[thick]	(a) to node[left] {$x$} (b)
			(c) to node[right] {$x$}(d);
			\draw[dashed]	(a) to (d)
			(b) to (c);
	\end{tikzpicture}
	\caption{Duplication gadget}
	\end{subfigure}
	\begin{subfigure}{0.45\textwidth}
		\centering
		\begin{tikzpicture}
			\node[blkvertex] (1) at (1, 0) {};
			\node[redvertex] (2) at (0.5, 0.866) {};
			\node[blkvertex] (3) at (-0.5, 0.866) {};
			\node[redvertex] (4) at (-1, 0) {};
			\node[blkvertex] (5) at (-0.5, -0.866) {};
			\node[redvertex] (6) at (0.5, -0.866) {};
			\node[redvertex] (7) at (2, 0) {};
			\node[blkvertex] (8) at (3, 0) {};
			\node[redvertex] (9) at (4, 0) {};
			\node (l1) at (1.25, -0.25) {$v_1$};
			\node (l2) at (0.75, 1.066) {$v_2$};
			\node (l3) at (-0.75, 1.066) {$v_3$};
			\node (l4) at (-1.3, 0) {$v_4$};
			\node (l5) at (-0.75, -1.066) {$v_5$};
			\node (l6) at (0.75, -1.066) {$v_6$};
			\node (l7) at (2, -0.25) {$v_7$};
			\node (l7) at (3, -0.25) {$v_8$};
			\node (l7) at (4, -0.25) {$v_9$};
			\draw[thick] (8) to node[above] {\texttt{true}} (9)
			(2) to[bend left=25] (3)
			(3) to[bend left=25] (4)
			(5) to (6)
			(1) to (7);
			\draw[dashed] (1) to (2)
			(4) to (5)
			(6) to node[left] {$*$} (1)
			(7) to (8)
			(2) to[bend right=25] (3)
			(3) to[bend right=25] (4);
	\end{tikzpicture}
	\caption{TRUE gadget}
	\end{subfigure}
	\caption{The gadgets used in Theorem~\ref{thm:contribution-2a}. Each solid (resp.\ dashed) edge represents a heavy (resp.\ light) edge, except for the dashed edge marked with * in (d) which represent $q$ light edges.}
	\label{fig:gadgets}
\end{figure}
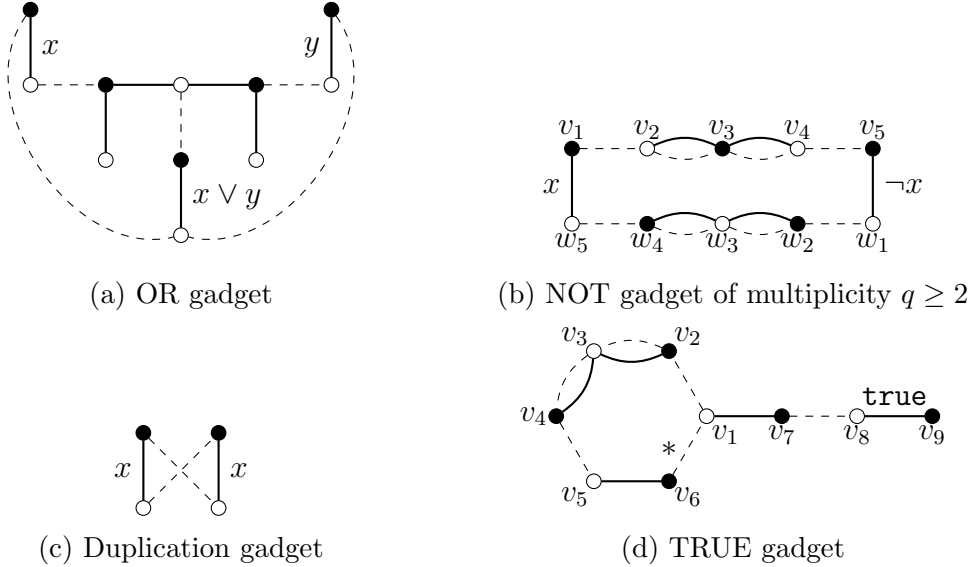

We are now ready to prove Theorem~\ref{thm:contribution-2a}.

\restateContributionTwoA*
\begin{proof}
	To see that this problem is in NP, suppose one is given an orientation $\pi$ of a bi-valued symmetric multigraph $G$. To verify that $\pi$ is EFX$_0$, one can check the EFX$_0$ condition between every pair of agents $i$ and $j$. Determining $u_i(\pi_i)$ and $u_i(\pi_j)$ and determining whether $u_i(\pi_i) \geq u_i(\pi_j \setminus \{e\})$ for each edge $e \in \pi_j$ takes polynomial-time, because it can be done by looping over the set of edges of $G$ a constant number of times.
	
	Before giving our reduction, we first make two assumptions about instances $C$ of \textsc{CircuitSAT}. The first assumption is that an instance uses the TRUE gate at least once --- if it does not, we can replace the Boolean formula $\phi$ represented by $C$ with $\phi \land \texttt{true}$ without affecting whether a satisfying assignment exists. The second assumption is that an instance contains no AND gates, because AND gates can be simulated using a combination of NOT and OR gates. Both of these assumptions can be made without loss of generality.
	
	Fix any $q \geq 2$ and two edge weights such that $\alpha > q\beta$. Given an instance $C$ of \textsc{CircuitSAT}, we represent each Boolean variable $x$ with a heavy edge $e_x$ between a black and a white vertex. We use the duplication gadget in Figure~\ref{fig:gadgets}(c) to duplicate the edge $e_x$ as many times as the literals $x$ and $\overline{x}$ appear in $C$. The other gadgets shown in Figure~\ref{fig:gadgets} show how to combine Boolean expressions together according to $C$ to construct a multigraph $G$.
	
	We show the properties given in the theorem hold for $G$. First, any time we combine the gadgets in Figure~\ref{fig:gadgets}, we do not introduce edges between vertices of the same colour, so $G$ is bipartite. On the other hand, we have $\alpha > q\beta$ by construction. Moreover, by inspection of the gadgets in Figure~\ref{fig:gadgets}, each heavy component of $G$ is an NTOM.
	
	We claim that $C$ has a satisfying assignment if and only if $G$ has an EFX$_0$ orientation. The truth value assigned to a variable corresponds to the orientation of the heavy edge representing that variable. Specifically, an assignment of \texttt{true} (resp.\ \texttt{false}) corresponds to the heavy edge being directed toward its black (resp.\ white) endpoint. For convenience, we will directly refer to a heavy edge as being {\em true} or {\em false} depending on its orientation. We proceed to show that each of the gadgets correctly corresponds to the stated Boolean functions. The OR and duplication gadgets are those used in \citep{christodoulou2023fair}, so we only consider the NOT and TRUE gadgets.
	
	Consider the NOT gadget. Fix any EFX$_0$ orientation. We show that the edge $\neg x$ is false if and only if the edge $x$ is true. By symmetry, we only need to show one direction. Suppose the edge $x$ is true, i.e.\ it is directed toward $v_1$. Then, $w_5$ envies $v_1$. Since $w_5$ does not strongly envy $v_1$, the light edge between $v_1$ and $v_2$ must be directed toward $v_2$. The submultigraph induced by the vertices $v_i$ is exactly $H$ in Figure~\ref{fig:subgadget}. Since none of $v_2, v_3, v_4$ are adjacent to a vertex in $G-H$, Lemma~\ref{lemma:subgadget} implies the light edge between $v_4$ and $v_5$ is oriented toward $v_5$. If the edge $\neg x$ is oriented toward $v_5$, then $w_1$ would strongly envy $v_5$. Hence, the edge $\neg x$ is oriented toward $w_1$, i.e.\ it is false.
	
	Moreover, we must show that regardless of the orientation of $x$, the NOT gadget has an EFX$_0$ orientation. By symmetry, assume $x$ is directed toward $v_1$ without loss of generality. It is straightforward to check that the orientation in which every edge is oriented in a clockwise manner, except for the light edge between $v_2, v_3$ and the light edge between $w_2, w_3$ which are oriented counterclockwise, is EFX$_0$.
	
	Now consider the TRUE gadget. Fix any EFX$_0$ orientation. We show that the edge between $v_8$ and $v_9$ is directed toward $v_9$, i.e.\ it is true. Suppose instead that it is directed toward $v_8$. Because $v_9$ does not strongly envy $v_8$, the light edge between $v_7$ and $v_8$ is directed toward $v_7$. Because $v_1$ does not strongly envy $v_7$, the heavy edge between $v_7$ and $v_1$ is directed toward $v_1$. Since $v_7$ does not strongly envy $v_1$, all $q+1$ light edges incident to $v_1$ are directed away from $v_1$. The submultigraph induced by $v_1, v_2, \dots, v_5$ is $H$ and none of $v_2, v_3, v_4$ are adjacent to any vertex in $G-H$. Hence, Lemma~\ref{lemma:subgadget} implies the light edge between $v_4$ and $v_5$ is directed toward $v_5$. Since the light edges between $v_1$ and $v_6$ are directed toward $v_6$, regardless how the heavy edge between $v_5$ and $v_6$ is directed, one of $v_5, v_6$ strongly envies the other, a contradiction. Thus, the edge between $v_8$ and $v_9$ is directed toward $v_9$.
	
	It remains to exhibit an EFX$_0$ orientation of the TRUE gadget. Orient all of the edges on the path between $v_1$ and $v_9$ rightward in the direction of $v_9$.  The remaining edges all belong to a cycle of length 6. Orient all of them counterclockwise along the cycle, except for the light edge between $v_2$ and $v_3$, which is oriented in a clockwise manner. It is straightforward to verify that this is an EFX$_0$ orientation.
\end{proof}

\section{EFX$_0$ Orientations}\label{section:EFX-orientations}

In this section, we prove Theorem~\ref{thm:contribution-2b}, which together with Observation~\ref{obs:one-tree-example} imply that an NTOM is a problematic structure preventing an EFX$_0$ orientation from existing --- as long as a symmetric bi-valued multigraph does not contain an NTOM, then it has an EFX$_0$ orientation. If it does contain an NTOM, then it is possible that no EFX$_0$ orientations exist.

We assume multigraphs are connected. Otherwise, we can apply Theorem~\ref{thm:contribution-2b} to each component.

\subsection{Technical Lemmas}\label{subsec:technical-lemmas}

Our approach to proving Theorem~\ref{thm:contribution-2b} consists of three steps. First, we find a (possibly partial) EF orientation of a small subset of the edges of $G$ that satisfies certain properties (Lemmas~\ref{lemma:non-odd-multitree} and \ref{lemma:not-multitree}). Second, we apply a technical lemma (Lemma~\ref{lemma:extension-lemma}) to finish orienting almost all of the remaining edges while maintaining envy-freeness, while leaving a matching of light edges unoriented. Finally, we finish orienting the matching of light edges (Lemma~\ref{lemma:light_matching}).

As $G$ does not have a heavy component whose heavy edges induce an NTOM, each heavy component $K$ is one of 3 types:
\begin{enumerate}
	\item $K$ is non-trivial and the heavy edges of $K$ induce a non-odd multitree;
	\item $K$ is non-trivial and the heavy edges of $K$ do not induce a multitree;
	\item $K$ is trivial.
\end{enumerate}

First, we handle the simple case in which every heavy component of $G$ is of type 3 (i.e.\ trivial). Observe that any vertex with a self-loop must be unenvied in any EFX$_0$ orientation. So, increasing the weight of a self-loop in an EFX$_0$ orientation preserves EFX$_0$. These observations together allow us to assume without loss of generality that all self-loops of $G$ have weight $\beta$. Thus, all edges (including self-loops) of $G$ have weight $\beta \geq 0$.

In the case that $\beta = 0$, any orientation is EFX$_0$. Suppose $\beta > 0$. Because EFX$_0$ is scale-invariant (i.e. scaling any agent's utility function by a multiplicative factor does not affect whether an EFX$_0$ allocation or orientation exists), we may assume $\beta = 1$ without loss of generality. So, for each vertex $i$ and each edge $e$, we have $u_i(e) \in \{0,1\}$. It is known that for such instances (called {\em binary instances}), a MNW allocation is EFX$_0$+PO (see Corollary~\ref{cor:efx-additive-2-valued}), and can be found in polynomial time \citep{amanatidis2021maximum}. PO allocations of a symmetric multigraph whose edges have positive weight are orientations\footnote{For a more thorough discussion on the relationship between PO and orientations, see Chapter~\ref{chapter:po-and-orientations}.}, so this can be stated as:

\begin{theorem}\citep{darmann2015maximizing,barman2018greedy}\label{thm:only-trivial}
	Let $G$ be a bi-valued symmetric multigraph. If every heavy component of $G$ is trivial, then $G$ has an EFX$_0$ orientation that can be found in polynomial time. \qed
\end{theorem}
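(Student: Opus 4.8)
The plan is to reduce the problem to the case in which every edge of $G$ has weight $\beta$, and then appeal to the theory of maximum Nash welfare allocations for binary instances. Since every heavy component of $G$ is trivial, no heavy edge can join two distinct vertices, so the only edges of weight $\alpha$ that can occur are self-loops. The first step is to neutralize these heavy self-loops. A self-loop at a vertex $v$ is incident only to $v$, so in any orientation it must be allocated to $v$, and it contributes zero utility to every other agent. Consequently, raising the weight of such a self-loop in an EFX$_0$ orientation leaves every other agent's valuation of $v$'s bundle unchanged (so it cannot create strong envy toward $v$) while only raising $v$'s own utility (so it cannot create envy from $v$). Hence increasing self-loop weights preserves the EFX$_0$ property, and it suffices to find an EFX$_0$ orientation of the multigraph $G'$ obtained from $G$ by setting every self-loop weight to $\beta$: any such orientation is automatically EFX$_0$ for $G$.

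Now every edge of $G'$ has weight $\beta \geq 0$. If $\beta = 0$ then every orientation is vacuously EFX$_0$, so I would assume $\beta > 0$. Because the existence of an EFX$_0$ orientation is invariant under scaling all utilities by a common positive factor, I would normalize $\beta = 1$, so that $u_i(e) \in \{0,1\}$ for every vertex $i$ and edge $e$; that is, $G'$ is a binary instance. For binary instances, Corollary~\ref{cor:efx-additive-2-valued} guarantees that any MNW allocation is simultaneously EFX$_0$ and PO, and an MNW allocation can be computed in polynomial time \citep{amanatidis2021maximum}. Let $\pi$ be such an allocation.

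It then remains to argue that $\pi$ is an orientation, i.e.\ that each edge is assigned to one of its endpoints. This follows from Pareto optimality together with positivity of the weights: if an edge $e$ were allocated to an agent $c$ that is not an endpoint of $e$, then $u_c(e) = 0$, whereas reassigning $e$ to an endpoint $a$ strictly increases $u_a$ (since $u_a(e) = \beta > 0$) and leaves every other agent's utility unchanged, contradicting PO. Hence $\pi$ assigns every edge (including every self-loop) to an endpoint, so $\pi$ is an EFX$_0$ orientation of $G'$, and therefore of $G$ by the first step. Every step runs in polynomial time.

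The main obstacle is the first step: correctly justifying the reduction to all-weight-$\beta$ edges. One must check that the self-loop reweighting argument is valid in exactly the direction needed, namely lifting an orientation from $G'$ back to $G$, and that it interacts correctly with the EFX$_0$ condition, since a self-loop is the one kind of edge whose reweighting affects precisely one agent's utility while leaving every other agent's valuation of that agent untouched. The remaining steps are essentially bookkeeping once the cited results (Corollary~\ref{cor:efx-additive-2-valued} and the polynomial-time MNW computation for binary instances) are invoked, the only other point requiring care being the \emph{PO-implies-orientation} argument, which relies crucially on every weight being strictly positive.
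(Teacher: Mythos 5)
Your proposal is correct and takes essentially the same approach as the paper: eliminate heavy self-loops by the reweighting argument, dispose of the $\beta = 0$ case, rescale to a binary instance, apply Corollary~\ref{cor:efx-additive-2-valued} together with the polynomial-time MNW computation of \citet{amanatidis2021maximum}, and use Pareto optimality (with all weights strictly positive) to conclude that the MNW allocation is an orientation. The only differences are presentational: the paper justifies the self-loop step via the observation that a vertex owning a self-loop must be unenvied in any EFX$_0$ orientation, whereas you argue directly that reweighting a self-loop changes no other agent's valuations, and you spell out the PO-implies-orientation argument that the paper cites as known.
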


The remainder of this section handles the case in which $G$ contains a heavy component of type 1 or 2. The following two lemmas show the first step in orienting such components. A {\em (partial) orientation} of a heavy component $K$ of a multigraph $G$ is a (partial) orientation of $G$ that only orients edges of $K$.

\begin{lemma}\label{lemma:non-odd-multitree}
	If $K$ is a type 1 heavy component, then there exists a (possibly partial) EF orientation $\pi^K$ of $K$ and two vertices $v \neq w$ in $K$ such that
	\begin{enumerate}
		\item for each vertex $i \in K$, we have $u_i(\pi^K_i) \geq \alpha$;
		\item all edges between $v$ and $w$ are oriented except possibly one light edge;
		\item $\pi^K$ is PEF between $v$ and $w$; and
		\item at most one edge is oriented between each pair of vertices, except for the pair $v, w$.
	\end{enumerate}
	Moreover, $\pi^K$ can be found in polynomial time.
\end{lemma}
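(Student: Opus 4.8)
The plan is to exploit the key structural feature of type~1 components together with the fact that, in the graphical model, envy is \emph{local}: since $u_i$ vanishes on every edge not incident to $i$, we have $u_i(\pi_j) = u_i(\pi_j \cap E_{ij})$, so agent $i$ can envy $j$ only through the edges between them. Consequently an orientation is EF as soon as, for every adjacent pair $i,j$, the total utility of each endpoint is at least the utility of the shared edges handed to the other endpoint. This reduces the whole problem to controlling what happens along single edges (plus the bundle between $v$ and $w$). First I would record that the heavy edges of $K$ form a multitree with skeleton tree $T$ on $V(K)$, and that being type~1 (non-odd) yields a skeleton edge $\{v,w\}$ of even heavy-multiplicity $2t$ with $t \geq 1$; this is the pair I will designate as $v,w$.

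Next I would construct $\pi^K$ explicitly. On the bundle $E_{vw}$ I orient the $2t$ heavy edges $t$ toward $v$ and $t$ toward $w$, and, writing $s$ for the number of light edges between $v$ and $w$, I orient $\lfloor s/2\rfloor$ of them toward each of $v$ and $w$, leaving one light edge unoriented exactly when $s$ is odd. This gives $u_v(\pi^K_v \cap E_{vw}) = t\alpha + \lfloor s/2\rfloor\beta = u_v(\pi^K_w \cap E_{vw})$, so $\pi^K$ is PEF between $v$ and $w$ (condition~3) and all their shared edges are oriented except possibly one light edge (condition~2). Removing the bridge $\{v,w\}$ splits $T$ into subtrees $T_v \ni v$ and $T_w \ni w$; rooting them at $v$ and $w$, I orient, for each skeleton edge, exactly one parallel heavy edge from parent to child, and leave every remaining edge unoriented. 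Then each non-root vertex receives precisely one heavy edge (utility $\alpha$) while $v,w$ receive utility $\geq t\alpha \geq \alpha$, giving condition~1; at most one edge is oriented between any pair other than $\{v,w\}$, giving condition~4. It is worth emphasising \emph{why the even edge is indispensable here}: a tree on $n_K$ vertices has only $n_K-1$ edges, so orienting one edge per skeleton pair can give in-degree at least one to at most $n_K-1$ vertices; the extra heavy edge available at $\{v,w\}$ is exactly what lets both roots be covered, so all $n_K$ vertices reach utility $\alpha$ under the one-edge-per-pair restriction.

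Finally I would verify EF pairwise using locality. Between $v$ and $w$, PEF already precludes envy. For any oriented parent-to-child heavy edge $p \to c$, the only shared oriented edge lies in $\pi^K_c$: then $u_p(\pi^K_c \cap E_{pc}) = \alpha \leq u_p(\pi^K_p)$ since $p$ itself holds at least $\alpha$, so $p$ does not envy $c$; and $\pi^K_p \cap E_{pc} = \emptyset$, so $c$ does not envy $p$. Non-adjacent pairs share no oriented edge, and since every vertex holds utility $\geq \alpha > 0$ (as $\alpha > \beta \geq 0$), no such envy arises; unoriented edges, including any light self-loops, contribute nothing and so cannot create envy. Polynomial time is immediate, as the whole procedure is a rooting of two trees and a balanced split of one edge bundle.

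The step I expect to be the main obstacle is reconciling conditions~2 and~3 with global envy-freeness: PEF forces $v$ and $w$ to receive \emph{equal}, hence at least $t\alpha$, utility, which may strictly exceed the bare $\alpha$ held by the leaves, and a naive reading would suggest those leaves envy $v$ or $w$. The resolution — and the crux of the argument — is precisely the locality of envy in the graphical model: a low-utility vertex is never adjacent to $v$ or $w$ through any \emph{oriented} edge (its sole oriented incident edge is the one it received from its parent), so it simply cannot "see" their large bundles. I would therefore foreground this locality observation early, since every verification reduces to it.
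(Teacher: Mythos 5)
Your proposal is correct and follows essentially the same construction as the paper: select a skeleton pair $v,w$ whose heavy multiplicity is even (guaranteed by the non-odd multitree property), split the $v$--$w$ edges evenly between the two endpoints, and orient one heavy edge per remaining skeleton edge away from $\{v,w\}$ so that every other vertex receives exactly one heavy edge. Your explicit locality argument ($u_i(\pi_j) = u_i(\pi_j \cap E_{ij})$) and the counting remark about why the even-multiplicity edge is indispensable merely spell out what the paper's terse verification of (1)--(4) and EF leaves implicit.
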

\begin{proof}
	Since $K$ is type 1, it is non-trivial and the heavy edges of $K$ induce a non-odd multitree. Let $v \neq w$ be a pair of vertices of $K$ with an even number $h \geq 2$ of heavy edges in between and $\ell$ denote the number of light edges between $v, w$. We construct $\pi^K$ in two steps. First, orient $h/2$ heavy edges and $\lfloor \ell/2 \rfloor$ light edges between $v, w$ to each of $v, w$. Then, fix a spanning tree $T$ of $K$ consisting of only heavy edges. $T$ contains exactly one heavy edge $e$ between $v, w$, which is already oriented. Orient the tree edges of $T$ (except for $e$) in the direction away from $v, w$.
	
	Clearly, $\pi^K$ satisfies (1)--(4) and can be found in polynomial time. To see that $\pi^K$ is EF, observe that neither of $v, w$ envies the other by (3), and for any other pair of vertices $i \neq j$, (1) and (4) ensure neither envies the other.
\end{proof}

\begin{lemma}\label{lemma:not-multitree}
	If $K$ is a type 2 heavy component, then there exists a (possibly partial) EF orientation $\pi^K$ of $K$ such that
	\begin{enumerate}
		\item for each vertex $i \in K$, we have $u_i(\pi^K_i) \geq \alpha$; and
		\item at most one edge is oriented between each pair of vertices.
	\end{enumerate}
	Moreover, $\pi^K$ can be found in polynomial time.
\end{lemma}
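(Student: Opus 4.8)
The plan is to exploit the fact that a type~2 heavy component has "enough" heavy edges to hand every vertex its own incoming heavy edge, in contrast to the multitree case of Lemma~\ref{lemma:non-odd-multitree}. First I would record the structural meaning of type~2: since $K$ is a heavy component its heavy-edge skeleton (obtained by condensing each bundle of parallel heavy edges, treating a heavy self-loop as a loop) is connected, and since the heavy edges do not induce a multitree this skeleton is not a tree; hence it contains a cycle, where I allow a heavy self-loop to count as a degenerate length-one cycle. The whole proof then rests on a single clean observation that makes envy-freeness automatic, so that conditions (1) and (2) become the entire task.

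Second, I would isolate that observation: if $\pi^K$ is an orientation of $K$ satisfying (1) and (2), then it is already EF, and no separate envy argument is needed. Fix distinct $i,j\in K$. Because $\pi^K_j$ consists of edges oriented toward $j$, the only such edges that $i$ values are those between $i$ and $j$, and by (2) at most one of them is oriented; using that $G$ is symmetric and bi-valued this gives $u_i(\pi^K_j)\le\alpha$. On the other hand (1) gives $u_i(\pi^K_i)\ge\alpha$, so $u_i(\pi^K_i)\ge u_i(\pi^K_j)$, and symmetrically $u_j(\pi^K_j)\ge u_j(\pi^K_i)$. Thus it suffices to produce an orientation of a subset of the heavy edges in which every vertex has in-degree at least one and at most one heavy edge is oriented between each pair.

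Third, I would obtain such an orientation via the graph-theoretic fact that a connected multigraph that is not a forest admits an orientation of a subset of its edges in which every vertex has in-degree at least one. Concretely, in the heavy skeleton I would pick a cycle $C$ and orient it so that every vertex of $C$ receives exactly one incoming edge (a self-loop simply gives its vertex in-degree one); then I would take a spanning tree of the skeleton rooted on $C$ and orient each of its remaining tree edges away from $C$, from parent to child, so that every vertex outside $C$ also receives at least one incoming edge. Mapping each oriented skeleton edge back to a single actual heavy edge between the corresponding pair, and leaving all other edges (parallel heavy edges and every light edge) unoriented, yields a partial orientation $\pi^K$ in which each vertex receives a heavy edge---hence $u_i(\pi^K_i)\ge\alpha$, giving (1)---and in which at most one heavy edge and no light edge is oriented between any pair, giving (2). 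Cycle detection together with the rooted spanning tree are computable in $O(|V(K)|+|E(K)|)$ time, so $\pi^K$ is found in polynomial time.

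The routine but careful part will be the bookkeeping around self-loops: I must make sure a heavy self-loop is permitted to play the role of the guaranteeing cycle, so the construction does not break when the only surplus heavy edge is a self-loop, and that self-loops never contribute to any $u_i(\pi^K_j)$ with $i\neq j$, which holds because a self-loop at $j$ is not incident to $i$. The genuinely conceptual point---and the place to emphasize the contrast with Lemma~\ref{lemma:non-odd-multitree}---is precisely that the cycle supplies the one extra edge that a tree lacks: a spanning tree on $|V(K)|$ vertices has only $|V(K)|-1$ edges and so cannot give every vertex in-degree at least one, which is exactly why the multitree case was forced to orient two parallel edges between a single pair $v,w$ and settle for PEF there, whereas in a type~2 component no pair ever needs to be doubled.
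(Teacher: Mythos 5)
Your proof is correct and takes essentially the same approach as the paper's: both orient a subset of the heavy edges so that every vertex of $K$ receives exactly one incoming heavy edge, which yields (1) and (2), and both observe that EF is then automatic since, by (2) and symmetry, a vertex can value another's bundle at most $\alpha$. The only cosmetic difference is how the in-degree-one orientation is realized --- the paper roots a heavy spanning tree at a vertex $v$ covered by the surplus heavy edge (a heavy self-loop at $v$, or a heavy edge between $v$ and some $w$ not joined by a tree edge) and orients the tree away from $v$, whereas you orient a skeleton cycle cyclically and hang the spanning tree off that cycle; the two constructions are interchangeable.
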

\begin{proof}
	Since $K$ is type 2, it is non-trivial and the heavy edges of $K$ do not induce a multitree. Let $T$ be a spanning tree of $K$ that consists of only heavy edges. Since the heavy edges of $K$ do not induce a multitree, there is a heavy self-loop at some vertex $v$ or there are two vertices $v \neq w$ that are joined by a heavy edge and not joined by a tree edge of $T$. Orient all tree edges of $T$ in the direction away from $v$. Then, orient either a self-loop at $v$ toward $v$ in the first case, or exactly one heavy edge between $v, w$ toward $v$ in the second case.
	
	In either case, each vertex receives exactly one incident heavy edge, so (1) holds. The construction implies (2). Moreover, (1)--(2) imply $\pi^K$ is EF. On the other hand, it is clear that $\pi^K$ can be constructed in polynomial time.
\end{proof}

Before proceeding, we introduce a technical result.

\begin{theorem}\citep{plaut2020almost}\label{thm:2vertices}
	For two vertices $i \neq j$ with additive utility functions, there exists an EFX$_0$ allocation in which $i$ does not envy $j$. Moreover, such an allocation can be found in polynomial time.\qed
\end{theorem}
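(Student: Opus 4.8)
The plan is to prove the statement by a \emph{divide-and-choose} argument in which $j$ plays the role of the cutter and $i$ the role of the chooser, so that $i$ is automatically spared any envy toward $j$. Concretely, I would first pretend that both agents share the single additive valuation $u_j$ and produce a partition of the goods into two bundles that is EFX$_0$ with respect to $u_j$; this is exactly the identical-valuations case, whose existence is guaranteed by Theorem~\ref{thm:efx-goods}(1). The key point for the running-time claim is that for \emph{additive} identical valuations such a partition can be built greedily: process the goods in non-increasing order of $u_j$-value and assign each good to the bundle of currently smaller $u_j$-value (the longest-processing-time rule). This runs in $O(m \log m)$ time.

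The first substantive step is to verify that this greedy partition $\{A,B\}$ is EFX$_0$ under $u_j$. Writing $X$ for the bundle of larger $u_j$-value and $Y$ for the other, I would track the last good $g^\ast$ that the algorithm places into $X$. At the moment $g^\ast$ is placed, $X$ is the smaller bundle, so its value just before placement is at most the value of the other bundle at that time; since $X$ receives nothing afterward and $Y$ only grows, this yields $u_j(X)-u_j(Y)\le u_j(g^\ast)$. Because goods are processed in non-increasing order and $g^\ast$ is the last good added to $X$, every good of $X$ has value at least $u_j(g^\ast)$, so $u_j(g^\ast)=\min_{g\in X}u_j(g)$. Hence $u_j(X)-u_j(Y)\le \min_{g\in X}u_j(g)$, which, by additivity, says $u_j(Y)\ge u_j(X\setminus\{g\})$ for every $g\in X$, i.e.\ precisely the EFX$_0$ condition. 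Crucially this inequality depends only on the partition, not on who holds which bundle: whoever receives $Y$ does not strongly envy $X$, and whoever receives $X$ does not envy $Y$ at all.

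The second step is the choice. I would let $i$ take the bundle $C\in\{A,B\}$ with larger $u_i$-value and give the remaining bundle $D$ to $j$. Since $u_i(C)\ge u_i(D)$, agent $i$ does not envy $j$, and in particular does not strongly envy $j$. For agent $j$, the inequality established above applies with the common valuation $u_j$: whichever of $X,Y$ the bundle $D$ happens to be, $j$ does not strongly envy $i$, because $j$ either holds the $u_j$-larger bundle (and so has no envy) or holds the $u_j$-smaller bundle (and the EFX$_0$ inequality forbids strong envy). Combining the two directions shows that neither agent strongly envies the other, so the allocation is EFX$_0$, and $i$ does not envy $j$ as required; every step is polynomial.

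I expect the main obstacle to be the EFX$_0$ analysis of the greedy partition, specifically the argument that the last good placed into the heavier bundle is simultaneously the smallest good processed up to that point and the minimum-value good of that bundle; getting the ``which bundle grows afterward'' bookkeeping exactly right is the delicate point. The divide-and-choose combination and the polynomial-time bound are then routine.
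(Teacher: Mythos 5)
Your proposal is correct, and its skeleton is the same divide-and-choose argument the paper sketches: build a two-bundle partition that is EFX$_0$ from $j$'s perspective, then let $i$ choose first, so that $i$'s non-envy is automatic and $j$'s non-strong-envy is inherited from the partition. Where you genuinely add something is in the partition step. The paper's text (following \citet{plaut2020almost}) simply says to ``compute an EFX$_0$ allocation from the perspective of $j$'' and leaves the computation to the citation; the related existence result in the paper (Theorem~\ref{thm:efx-goods}(1)) goes through leximin++ allocations, which are defined by a global optimization and do not by themselves yield a polynomial-time procedure. You instead construct the cutter's partition explicitly with the longest-processing-time greedy rule and prove it EFX$_0$ under the common additive valuation: the bound $u_j(X)-u_j(Y)\le u_j(g^\ast)=\min_{g\in X}u_j(g)$ via the last good placed in the heavier bundle is exactly right, and it correctly gives $u_j(Y)\ge u_j(X\setminus\{g\})$ for all $g\in X$ independently of which agent ends up holding which bundle. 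This makes the polynomial-time claim self-contained rather than deferred to the cited work (and it is, in essence, a reconstruction of the algorithm Plaut and Roughgarden use for the additive two-agent case). Two trivial points you may want to note for completeness: the degenerate case where the heavier bundle receives no goods only occurs when all values are zero, where everything is vacuous; and the chooser's step uses that items are goods (non-negative values), so that deleting an item from $D$ cannot raise $u_i(D\setminus\{g\})$ above $u_i(D)$ --- which is the setting in which the paper invokes this theorem.
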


Their idea is to compute an EFX$_0$ allocation $\pi$ from the perspective of $j$, that is, an allocation of the items into two bundles such that $j$ would not envy $i$ regardless of which bundle $j$ receives. Letting $i$ pick a bundle of $\pi$ before $j$ yields an EFX$_0$ allocation in which $i$ does not envy $j$.

We can now show Lemma~\ref{lemma:extension-lemma}, which allows us to extend the partial orientations resulting from Lemmas~\ref{lemma:non-odd-multitree} and \ref{lemma:not-multitree} while preserving envy-freeness. 

\begin{lemma}\label{lemma:extension-lemma}
	Let $G$ be a bi-valued symmetric multigraph of multiplicity $q \geq 1$ and $\pi' = (\pi'_1, \pi'_2, \dots, \pi'_n)$ be a (possibly partial) EF orientation of $G$. If $\pi'$ orients at most one edge between a pair of vertices $i \neq j$ and
	\begin{enumerate}
		\item $u_i(\pi'_i) \geq \alpha$ and $u_j(\pi'_j) \geq \alpha$; or
		\item $u_i(\pi'_i) \geq \beta$ and $u_j(\pi'_j) \geq \beta$ and there are only light edges between $i, j$.
	\end{enumerate}
	or if $\pi'$ orients no edges between $i \neq j$ and
	\begin{enumerate}
		\item[3.] $u_i(\pi'_i) \geq \alpha$; or
		\item[4.] $u_i(\pi'_i) \geq \beta$ and there are only light edges between $i, j$.
	\end{enumerate}
	then there exists an EF extension $\pi$ of $\pi'$ such that all edges between $i, j$ and self-loops at $i, j$ are oriented. Moreover, $\pi$ can be found in polynomial time.
\end{lemma}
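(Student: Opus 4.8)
The plan is to reduce the entire statement to controlling the mutual envy between $i$ and $j$ alone, and then to orient the remaining edges of $E_{ij}$ by a near-balanced split whose surplus we can direct to whichever vertex needs it. First I would dispose of the self-loops and of all third parties. A remaining self-loop at $i$ (resp.\ $j$) can only be oriented toward $i$ (resp.\ $j$), and such an edge has zero utility to the other vertex while strictly raising its owner's utility; hence orienting every remaining self-loop toward its endpoint can only \emph{relax} both of the inequalities $u_i(\pi_i)\ge u_i(\pi_j)$ and $u_j(\pi_j)\ge u_j(\pi_i)$. Next, since the extension only enlarges the two bundles, $\pi_i\supseteq\pi_i'$ and $\pi_j\supseteq\pi_j'$, while leaving every bundle $\pi_k$ with $k\neq i,j$ untouched, and since every newly oriented edge is incident only to $\{i,j\}$ and therefore has zero utility to any such $k$, no vertex $k$ changes its valuation of $\pi_i$ or $\pi_j$, and $i,j$ only gain utility. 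As $\pi'$ is already EF, no envy involving a third vertex can be created. Thus it suffices to orient the still-unoriented edges of $E_{ij}$ so that $i$ and $j$ do not envy each other.

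Because $G$ is symmetric, $i$ and $j$ assign every edge of $E_{ij}$ the same weight, so the relevant subinstance is one of equally-valued parallel goods. I would apply Theorem~\ref{thm:2vertices} to the two-agent instance consisting only of the unoriented edges of $E_{ij}$, designating as the non-envier whichever of $i,j$ I wish to favour; giving $X$ to $i$ and $Y$ to $j$. Since both agents value these edges identically, the resulting EFX$_0$ split satisfies $0 \le w(X)-w(Y) \le w_{\max}$, where $w(\cdot)$ denotes total weight and $w_{\max}$ is the largest weight present among these edges, with the designated agent receiving the heavier share. (A direct greedy balancing achieves the same bound.) In short, I can make the split even up to $w_{\max}$ and choose which of $i,j$ absorbs the surplus.

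Write $a_i=u_i(\pi_i')$, $a_j=u_j(\pi_j')$, and let $b_i=u_i(\pi_j')$, $b_j=u_j(\pi_i')$ record the weight of the at-most-one pre-oriented edge of $E_{ij}$ as seen from the opposite side (so at most one of $b_i,b_j$ is nonzero, and it is zero in conditions (3)--(4)). After adding the split and the self-loops, the two non-envy requirements reduce to $a_i+w(X)\ge b_i+w(Y)$ and $a_j+w(Y)\ge b_j+w(X)$, the nonnegative self-loops having been dropped. I then direct the surplus to the agent ``at risk'': if the pre-oriented edge lies in $\pi_j'$ (so $b_i>0$, $b_j=0$) favour $i$, giving $0\le w(X)-w(Y)\le w_{\max}$; if it lies in $\pi_i'$ favour $j$; and under (3)--(4), where only $i$'s endowment is guaranteed and there is no pre-oriented edge, favour $j$. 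A one-line verification then closes each case: under (1),(3) one inequality holds because $w_{\max}=\alpha\le a_i$ (or $\le a_j$), and the other because the surplus points the right way while the term $b=\alpha$ is cancelled by the matching bound $a\ge\alpha$; under (2),(4) all edges of $E_{ij}$ are light, so $w_{\max}=\beta$, and the identical computation goes through with the weaker bounds $a_i,a_j\ge\beta$ and $b\le\beta$.

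The main obstacle is precisely the directional control of the split together with the correct choice, in each case, of which agent receives the surplus: the surplus must always go to the agent whose non-envy is \emph{not} already secured by a large endowment or who already ``sees'' the pre-oriented edge in its opponent's bundle. Once the surplus is pinned to the at-risk agent, all four inequality checks follow immediately from $|w(X)-w(Y)|\le w_{\max}$ combined with the endowment hypotheses. Since the split is obtained greedily and the self-loops are oriented trivially, the whole extension is computable in polynomial time, as required.
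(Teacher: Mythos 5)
Your proposal is correct and follows essentially the same route as the paper's proof: reduce to the pair $i,j$ (self-loops oriented to their owners and third parties unaffected since new edges have zero utility to them), apply Theorem~\ref{thm:2vertices} to the unoriented edges of $E_{ij}$ with the non-envier chosen to be the vertex opposite the pre-oriented edge, and close the remaining inequality using the endowment bound $u_i(\pi'_i)\geq\alpha$ (or $\beta$) against the at-most-one-edge slack that EFX$_0$ permits. The only differences are cosmetic: the paper normalizes WLOG so the pre-oriented edge points toward $i$ and always favours $j$, and it argues directly from the EFX$_0$ inequalities rather than through your symmetric-weight bound $0\leq w(X)-w(Y)\leq w_{\max}$ (where, to be precise, one should write $w_{\max}\leq\alpha$ rather than $w_{\max}=\alpha$, and note that when no edge is pre-oriented under conditions (1)--(2) either choice of favoured vertex works).
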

\begin{proof}
	If $\pi'$ orients an edge between $i$ and $j$, then we assume without loss of generality that it orients it toward $i$. We construct the extension $\pi$ of $\pi'$ in two steps. First, we finish orienting all the edges between $i, j$ in a way that ensures neither $i$ nor $j$ envies the other. Second, we orient all self-loops at $i, j$ toward $i, j$, respectively. Note that this construction does not depend on whether $\pi'$ orients an edge between $i$ and $j$.
	
	Set $\pi = \pi'$ and let $U$ denote the set of edges between $i, j$ that are not oriented by $\pi'$. Theorem~\ref{thm:2vertices} implies the existence of an EFX$_0$ allocation $X = (X_i, X_j)$ of $U$ to vertices $i, j$ in which $j$ does not envy $i$. Note that $X$ is an orientation because it allocates edges between $i, j$ to $i, j$. Let $\pi$ orient the edges in $U$ according to $X$, and $\pi_i, \pi_j$ denote the resulting bundles of vertices $i, j$, respectively.
	
	We claim that $i$ and $j$ do not envy each other. By the construction of $X$, we have $u_j(X_j) \geq u_j(X_i)$. If $\pi'$ does not orient an edge between $i$ and $j$, then $u_j(\pi_j) \geq u_j(X_j) \geq u_j(X_i) = u_j(\pi_i)$, so $j$ does not envy $i$. Otherwise, $\pi$' orients some edge $e$ between $i, j$ toward $i$ and one of (1)--(2) holds. If (1) holds, then $u_j(\pi'_j) \geq \alpha \geq u_j(e)$. If (2) holds, then $u_j(\pi'_j) \geq \beta \geq u_j(e)$. In either case, $u_j(\pi_j') \geq u_j(e)$, so
	\begin{align*}
		u_j(\pi_j) &= u_j(\pi'_j) + u_j(X_j) \\
		&\geq u_j(\pi'_j) + u_j(X_i) \\
		&\geq u_j(e) + u_j(X_i) = u_j(\pi_i)
	\end{align*}
	On the other hand, if $X_j = \emptyset$, then $\pi$ orients no edges between $i, j$ toward $j$, so $i$ does not envy $j$. Assume $X_j$ is nonempty and fix any edge $e \in X_j$. If (1) or (3) holds, then $u_i(\pi'_i) \geq \alpha \geq u_i(e)$. If (2) or (4) hold, then there are only light edges between $i, j$, so $u_i(\pi'_i) \geq \beta \geq u_i(e)$. In either case, we have $u_i(\pi'_i) \geq u_i(e)$, so
	\begin{align*}
		u_i(\pi_i) &= u_i(\pi'_i) + u_i(X_i) \\
		&\geq u_i(\pi'_i) + u_i(X_j \setminus \{e\}) \\
		&= u_i(\pi'_i) + u_i(X_j) - u_i(e) \\
		&\geq u_i(X_j) = u_i(\pi_j)
	\end{align*}
	Thus, $i$ and $j$ do not envy each other. It follows that $\pi$ can additionally orient any self-loops at $i, j$ toward $i, j$, respectively, without causing envy.
	
	Our construction relies on Theorem~\ref{thm:2vertices} to find an EFX$_0$ allocation between two vertices in polynomial time, so $\pi$ can be found in polynomial time.
\end{proof}

Using the above, we can construct an EF orientation $\pi^G$ of $G$ in which all heavy edges are oriented and the set of unoriented edges is a matching in $G$. Recall that a {\em matching} in $G$ is a set of edges $M$ such that each vertex of $G$ appears as an endpoint of an edge of $M$ at most once.

\begin{lemma}\label{lemma:all_but_a_matching}
	Let $G$ be a bi-valued symmetric multigraph containing a non-trivial heavy component. If $G$ does not contain a heavy component whose heavy edges induce an NTOM, then there is a (possibly partial) EF orientation $\pi^G$ of $G$ such that
	\begin{enumerate}
		\item all heavy edges are oriented;
		\item the set of unoriented edges is a matching $M$ in $G$; and
		\item $\pi^G$ is PEF between all pairs of vertices with an unoriented edge in between.
	\end{enumerate}
	Moreover, $\pi^G$ can be found in polynomial time.
\end{lemma}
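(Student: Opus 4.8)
The plan is to construct $\pi^G$ in two stages: orient the interior of each non-trivial heavy component with the structural lemmas, then orient everything else with the extension lemma while deliberately declining to orient one light edge in each type~1 component. At the very outset I would orient every self-loop toward its own vertex; this only raises that vertex's utility and is invisible to every other vertex's valuation, so it cannot create envy, and a vertex carrying a heavy self-loop thereby attains utility at least $\alpha$.

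First I would apply Lemma~\ref{lemma:non-odd-multitree} to each type~1 heavy component and Lemma~\ref{lemma:not-multitree} to each type~2 heavy component, and let $\pi^0$ be the union of the resulting partial orientations. Since heavy components are vertex-disjoint and each orients only its own edges, no vertex is an endpoint of an edge directed toward a vertex of another component, so $\pi^0$ is EF. By part~(1) of both lemmas every vertex of a non-trivial heavy component (call it \emph{rich}) has $u_i(\pi^0_i) \geq \alpha$; every other vertex (\emph{poor}) lies in a trivial heavy component, so all of its non-loop edges are light. By parts~(2)--(4) of Lemma~\ref{lemma:non-odd-multitree} and part~(2) of Lemma~\ref{lemma:not-multitree}, $\pi^0$ orients at most one edge between any pair of vertices except the distinguished pairs $(v,w)$ from type~1 components, between each of which all edges are oriented except possibly one light edge, with $\pi^0$ PEF between $v$ and $w$. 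I will keep precisely these leftover light edges unoriented.

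Next I would extend $\pi^0$ by repeatedly invoking Lemma~\ref{lemma:extension-lemma} one pair at a time, never choosing a distinguished pair. Call a vertex \emph{active} once its current bundle has utility at least $\beta$, so every rich vertex is active from the start. Whenever a pair $\{i,j\}$ with an active endpoint $i$ still has an unoriented edge, I apply the extension lemma with that $i$; since each extension step only orients edges between the chosen pair and self-loops at its endpoints, the hypothesis ``at most one edge is oriented between $i$ and $j$'' persists. If the single oriented edge is a heavy tree edge between two rich vertices, condition~(1) applies; if no edge between $i,j$ is yet oriented and $i$ is rich, condition~(3) applies; and if $i$ is merely active and all edges between $i,j$ are light, condition~(4) applies. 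In this last case the extension gives $j$ at least half of the light edges, so $j$ too becomes active. As $G$ is connected and contains a rich vertex, activity spreads to all vertices, so every non-distinguished pair is eventually fully oriented.

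Finally I would verify the three conclusions. Every heavy edge lies in a heavy component and is oriented either by $\pi^0$ or by a condition~(1) extension between a rich pair, giving~(1). The only unoriented edges are the at most one leftover light edge per type~1 component; these join pairwise-disjoint distinguished pairs, so they form a matching, giving~(2). Since no extension step ever orients an edge of $E_{vw}$ for a distinguished pair, $\pi^G$ agrees with $\pi^0$ on such edges and remains PEF between $v$ and $w$, giving~(3); and because $\pi^0$ is EF and each extension returns an EF orientation, $\pi^G$ is EF. All polynomially many steps are polynomial-time by the cited lemmas and Theorem~\ref{thm:2vertices}. The main obstacle is maintaining the extension lemma's preconditions for pairs of poor vertices, whose bundles begin empty (utility below $\beta$ when $\beta>0$): neither endpoint initially meets any condition. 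The activation ordering is exactly what overcomes this, propagating utility $\beta$ outward from the rich vertices along light edges via condition~(4); a secondary point needing care is that merging the component orientations and each local extension never disturbs envy elsewhere, which holds because orienting an $i$--$j$ edge or a self-loop at $i$ or $j$ alters only the bundles of $i$ and $j$.
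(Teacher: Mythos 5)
Your proof is correct and takes essentially the same route as the paper's: orient the type~1 and type~2 heavy components via Lemmas~\ref{lemma:non-odd-multitree} and~\ref{lemma:not-multitree}, use connectivity to spread utility at least $\beta$ to the vertices of trivial heavy components, finish every non-distinguished pair with Lemma~\ref{lemma:extension-lemma}, and leave exactly one light edge per type~1 component unoriented, so the leftovers form a matching on which PEF is inherited from Lemma~\ref{lemma:non-odd-multitree}(3). The only difference is bookkeeping: the paper first ``processes'' each trivial-component vertex by pre-orienting a single light edge toward it and then applies the extension lemma under its condition~(2), whereas you fold that propagation into the extension steps themselves through your activation ordering and conditions~(3)--(4), which is a valid (and slightly leaner) reorganization of the same argument.
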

\begin{proof}
	At least one heavy component of $G$ is of type 1 or 2. For each type 1 heavy component $K$, Lemma~\ref{lemma:non-odd-multitree} implies there exists a (possibly partial) EF orientation $\pi^K$ of $K$ and two vertices $v \neq w$ in $K$ such that (1) for each vertex $i \in K$, we have $u_i(\pi^K_i) \geq \alpha$; (2) all edges between $v$ and $w$ are oriented except possibly one light edge; (3) $\pi^K$ is PEF between $v$ and $w$; and (4) at most one edge is oriented between each pair of vertices, other than the pair $v, w$.
	
	For each type 2 heavy component $K$, Lemma~\ref{lemma:not-multitree} implies there exists a (possibly partial) EF orientation $\pi^K$ of $K$ such that (1) for each vertex $i \in K$, we have $u_i(\pi^K_i) \geq \alpha$; and (2) at most one edge is oriented between each pair of vertices. 
	
	Let $\pi$ be the (possibly partial) orientation of $G$ that orients an edge $e$ if and only if some $\pi^K$ orients $e$, in which case $\pi$ orients $e$ in the same direction as $\pi^K$.
	
	%Let $\pi$ be the (possibly partial) orientation of $G$ obtained by combining all of the $\pi^K$ obtained above, i.e., $\pi$ orients an edge $e$ if and only if some $\pi^K$ orients $e$, in which case $\pi$ orients $e$ in the same direction as $\pi^K$.
	
	We mark every vertex in a type 1 or type 2 heavy component as "processed" and every vertex in a trivial heavy component as "unprocessed". Observe that $u_i(\pi_i) \geq \alpha \geq \beta$ for each processed vertex $i$ and $u_j(\pi_j) = 0$ for each unprocessed vertex $j$. In the following, we "process" the unprocessed vertices while maintaining two invariant properties: (P1) $\pi$ is EF and (P2) every processed vertex receives at least $\beta$ utility.
	
	For an unprocessed vertex $j$ that has a processed neighbour $i$, let $\pi$ orient exactly one light edge between $i, j$ toward $j$, and mark $j$ as "processed". Doing so does not cause $i$ or $j$ to envy the other because (P2) holds for $i$, and maintains (P1)--(P2) because (P2) now holds for $j$. Since $G$ is connected, we can repeat this procedure until all vertices are "processed".
	
	We claim that for all pairs of vertices $i \neq j$ in $G$ except for pairs of special vertices $v \neq w$ in type 1 heavy components, one of the conditions in Lemma~\ref{lemma:extension-lemma} holds. If $i$ and $j$ belong to different heavy components, then $\pi$ orients at most one edge between them, and (P1)--(P2) ensure $u_i(\pi_i) \geq \beta$ and $u_j(\pi_j) \geq \beta$ and there are only light edges between $i, j$, which is condition (2) of Lemma~\ref{lemma:extension-lemma}. Otherwise, $i \neq j$ belong to the same heavy component (except for the special pairs $v \neq w$) so condition (1) of Lemma~\ref{lemma:extension-lemma} holds.
	
	We now apply Lemma~\ref{lemma:extension-lemma} to all pairs of vertices $i \neq j$ except the special pairs $v \neq w$ belonging to a type 1 heavy component. This yields an EF extension of $\pi^G$ such that all edges between $i, j$ and self-loops at $i, j$ are oriented. Since $\pi^G$ is EF, we can orient any remaining unoriented self-loops toward their respective endpoints without causing envy.
	
	Clearly, $\pi^G$ orients all heavy edges, and all light edges except for exactly one light edge between a special pair of vertices $v \neq w$ in each type 1 heavy component. Moreover, Lemma~\ref{lemma:non-odd-multitree}(3) guarantees $\pi^G$ to be PEF between every such pair $v \neq w$. Thus, (1)--(3) hold for $\pi^G$.
	
	We show $\pi^G$ can be found in polynomial time. Finding the heavy components of $\pi^G$ takes polynomial time using breadth-first search (BFS) \citep{moore1959shortest}. Orienting each heavy component of type 1 and 2 using Lemmas~\ref{lemma:non-odd-multitree} and \ref{lemma:not-multitree} takes polynomial time. Processing the vertices takes polynomial time using BFS. Finally, applying Lemma~\ref{lemma:extension-lemma} at most once to each pair of vertices takes polynomial time.
\end{proof}

We now orient the light edges left using Lemma~\ref{lemma:all_but_a_matching}.

\begin{lemma}\label{lemma:light_matching}
	Let $G$ be a bi-valued symmetric multigraph $G$ containing a non-trivial heavy component. If $\pi^G$ is a partial EF orientation satisfying (1)--(3) of Lemma~\ref{lemma:all_but_a_matching}, then there exists an extension $\pi$ of $\pi^G$ that is an EFX$_0$ orientation of $G$ that can be found in polynomial time.
\end{lemma}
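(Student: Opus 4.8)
The plan is to exploit the fact that $\pi^G$ is already \emph{envy-free} (EF) and that envy-freeness implies EFX$_0$ (strong envy of $j$ by $i$ forces $u_i(\pi_i)<u_i(\pi_j)$, i.e.\ ordinary envy). Since the only unoriented edges form a matching $M$ of light edges of weight $\beta$, I only have to choose a direction for each edge of $M$ so that no \emph{strong} envy is created, while every pair of vertices untouched by the matching remains EF and hence EFX$_0$. The first reduction is that $M$ is a matching, so its edges are pairwise vertex-disjoint; orienting an edge $e=\{v,w\}$ of $M$ changes only the bundles $\pi_v$ and $\pi_w$. I would therefore orient the matching edges one at a time and independently, reducing the whole statement to a single-edge claim. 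If $\beta=0$ the orientation of $e$ changes no utilities at all, so $\pi^G$ stays EF; hence I may assume $\beta>0$.

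For a single edge $e=\{v,w\}\in M$, orient it (say) toward $v$. Because $e$ is incident only to $v$ and $w$, the quantity $u_x(\pi_v)$ is unchanged for every $x\notin\{v,w\}$, and $\pi_w$ is unchanged entirely, while giving $v$ more weight can only decrease $v$'s envy of others. Consequently the \emph{only} ordered pair whose EFX$_0$ condition can be newly violated is $(w,v)$. So the single-edge task becomes: show that at least one of "orient $e$ toward $v$" and "orient $e$ toward $w$" avoids strong envy of the receiving vertex by the other.

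To analyze this I would first combine hypothesis (3) of Lemma~\ref{lemma:all_but_a_matching} with symmetry of the weights: private envy-freeness between $v$ and $w$ forces the total weight of the $E_{vw}$-edges already oriented toward $v$ to equal that oriented toward $w$; call this common value $s$. After orienting $e$ toward $v$ we have $u_w(\pi_v)=s+\beta$ and $u_w(\pi_w)\geq s$. Removing any $v$--$w$ edge of $\pi_v$ (for instance $e$ itself, of weight $\beta$) drops $u_w(\pi_v)$ to at most $s\leq u_w(\pi_w)$, so strong envy can arise only through an edge $g\in\pi_v$ with $u_w(g)=0$, i.e.\ a self-loop at $v$ or a $v$--$x$ edge with $x\neq w$, and then only if $u_w(\pi_w)<s+\beta$. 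The crux is the observation that such a zero-valued-to-$w$ edge $g$ nonetheless has weight $u_v(g)\geq\beta$ to $v$, whence $u_v(\pi_v)\geq s+\beta$. By the symmetric analysis, orienting toward $w$ fails only if $w$ holds an edge of zero value to $v$ \emph{and} $u_v(\pi_v)<s+\beta$; the inequality $u_v(\pi_v)\geq s+\beta$ just derived rules this out. Hence both directions cannot fail simultaneously, and I can always pick a safe one.

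The main obstacle I anticipate is exactly this zero-valued edge held by a special vertex (typically a self-loop), since its presence upgrades the requirement on the opposite vertex from "no strong envy" to genuine envy-freeness, which a naive direction choice cannot guarantee. The resolution is the weight-accounting argument above: holding such an edge simultaneously raises that vertex's own bundle by at least $\beta$, breaking the symmetry between the two candidate directions so that at most one of them can fail. Finally I would record that the procedure is polynomial: there are at most $O(|V(G)|)$ matching edges, and for each I test the two directions by recomputing only the single affected envy relation, so the EFX$_0$ extension $\pi$ of $\pi^G$ is found in polynomial time, completing the proof of Lemma~\ref{lemma:light_matching}.
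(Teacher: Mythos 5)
Your proof is correct and follows essentially the same route as the paper's: both rest on the PEF hypothesis giving each endpoint the same weight $s$ of already-oriented $v$--$w$ edges, the observation that an edge outside $E_{vw}$ held by an endpoint raises that endpoint's own bundle to at least $s+\beta$, and the fact that every $v$--$w$ edge is worth at least $\beta$, so deleting one cancels the newly added light edge. The only difference is presentational: the paper fixes the direction up front (orienting each matching edge toward the endpoint that receives no outside edge, after a renaming convention), whereas you test both directions and show they cannot both fail.
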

\begin{proof}
	Let $M$ be the set of unoriented light edges of $\pi^G$ and $v_i, w_i$ denote the endpoints of each $e_i \in M$. By possibly renaming $v_i$ and $w_i$, we can assume for each $i$, if $\pi^G$ orients an edge not between $v_i, w_i$ toward one of $v_i, w_i$, it orients one such edge toward $v_i$. If $\pi^G$ orients such edges to both $v_i$ and $w_i$, then our assumption holds without renaming $v_i$ and $w_i$.
	
	We claim that the extension $\pi$ of $\pi^G$ that orients each edge $e_i$ toward $w_i$ is an EFX$_0$ orientation of $G$. Since $\pi^G$ is EF, any envy experienced by a vertex in $\pi$ is caused by an edge that $\pi$ orients in addition to $\pi^G$. Hence, any envy in $\pi$ is between $v_i$ and $w_i$ for some $i$. Fix such a pair $v_i, w_i$. The orientation $\pi$ orients $e_i$ toward $w_i$, so $w_i$ does not envy $v_i$. It remains to show that $v_i$ does not strongly envy $w_i$.
	
	Suppose first $\pi^G$ does not orient an edge not between $v_i, w_i$ toward one of $v_i, w_i$. Since $v_i$ does not envy $w_i$ in $\pi^G$ and $\pi$ only orients a single additional light edge between $v_i$ and $w_i$ to $w_i$, the envy that $v_i$ experiences toward $w_i$ can be alleviated by ignoring any of the light edges between $v_i$ and $w_i$. Because all edges oriented to $w_i$ by $\pi$ are between $v_i, w_i$, the vertex $v_i$ does not strongly envy $w_i$.
	
	Otherwise, $\pi^G$ orients an edge not between $v_i, w_i$ toward one of $v_i, w_i$. By our assumption, $\pi^G$ orients one such edge $e$ toward $v_i$. Since $\pi^G$ is PEF between $v_i$ and $w_i$, the vertices $v_i$ and $w_i$ receive an equal number of heavy (resp.\ light) edges that are between $v_i$ and $w_i$ in $\pi^G$. On the other hand, because $\pi^G$ orients $e$ toward $v_i$, we have $u_{v_i}(\pi^G_{v_i}) \geq u_{v_i}(\pi^G_{w_i}) + u_{v_i}(e)$. Thus, after $\pi$ orients the additional light edge $e_i$ between $v_i$ and $w_i$ toward $w_i$, we have
	\begin{align*}
		u_{v_i}(\pi_{v_i}) &= u_{v_i}(\pi^G_{v_i}) \\
		&\geq u_{v_i}(\pi^G_{w_i}) + u_{v_i}(e) \\
		&= u_{v_i}(\pi_{w_i}) - u_{v_i}(e_i) + u_{v_i}(e) \\
		&\geq u_{v_i}(\pi_{w_i})
	\end{align*}
	where the last inequality follows is because $u_{v_i}(e) \geq u_{v_i}(e_i)$ as $e_i$ is a light edge. So, $v_i$ does not envy $w_i$.
	
	Clearly, $\pi$ can be found in polynomial time given $\pi^G$.
\end{proof}

\subsection{Proof of Theorem~\ref{thm:contribution-2b}}

\restateContributionTwoB*
\begin{proof}
	If every heavy component of $G$ is trivial, then $G$ has an EFX$_0$ orientation that can be found in polynomial time by Theorem~\ref{thm:only-trivial}. Otherwise, $G$ contains a non-trivial heavy component. By Lemma~\ref{lemma:all_but_a_matching}, there exists a (possibly partial) EF orientation $\pi^G$ of $G$ such that all heavy edges are oriented, the set of unoriented edges is a matching $M$ in $G$, and $\pi^G$ is PEF between all pairs of vertices with an unoriented edge in between. By Lemma~\ref{lemma:light_matching}, there exists an extension $\pi$ of $\pi^G$ that is an EFX$_0$ orientation of $G$ that can be found in polynomial time.
\end{proof}

\section{Discussion}\label{section:discuss}

In this chapter, we showed that deciding whether a bi-valued symmetric multigraph $G$ of multiplicity $q$ has an EFX$_0$ orientation is NP-complete, even in a highly restrictive setting. We also showed that as long as $G$ does not contain a heavy component whose heavy edges induce an NTOM, then $G$ has an EFX$_0$ orientation that can be found in polynomial time. It is important to note that our results do not preclude such multigraphs from having EFX {\em allocations}. Indeed, \citet{kaviani2024almost} showed that all symmetric multigraphs have EFX allocations.

One might ask whether there are additional settings in which deciding whether EFX$_0$ orientations exist is possible in polynomial time. Theorem~\ref{thm:contribution-2a} suggests that it may be fruitful to consider settings in which $\alpha$ and $\beta$ are not so far apart.

\begin{problem}
	Is it possible to decide in polynomial time whether bi-valued symmetric multigraphs for which $\alpha \leq q\beta$ have EFX$_0$ orientations?
\end{problem}

In a slightly different vein, even if some desired type of fair allocations or orientations cannot be shown to exist, researchers have attempted to remedy the situation via a variety of means. One approach is to restrict the type of instances under consideration. However, because the setting of bi-valued symmetric multigraphs is already rather restrictive, further restrictions may be too limited in generality.

Another possibility is to leave a small subset of goods unallocated, and donate them to an external "charity". An example of this is Theorem~\ref{thm:monotone-n-2}, which states that all monotone goods instances have a partial EFX$_0$ allocation that leaves at most $n-2$ goods unallocated. In the case that a bi-valued symmetric multigraph $G$ fails to have an EFX$_0$ orientation, it is always possible to find a partial EFX$_0$ orientation by donating heavy edges until $G$ no longer contains a heavy component whose heavy edges induce an NTOM. A trivial upper bound on the number of heavy edges that need to be donated is $n-1$ in the case that each vertex belongs to a heavy component whose edges induce an NTOM. Thus, we obtain the following.

\begin{theorem}
	Let $G$ be a bi-valued symmetric multigraph of multiplicity $q \geq 1$. Then, $G$ has a partial EFX$_0$ orientation that leaves at most $n-1$ heavy edges unallocated. \qed
\end{theorem}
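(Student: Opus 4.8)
The plan is to reduce the statement to Theorem~\ref{thm:contribution-2b}, which produces an EFX$_0$ orientation whenever $G$ has no heavy component whose heavy edges induce an NTOM. Concretely, starting from $G$, I would donate (that is, leave permanently unallocated) a carefully chosen set $D$ of heavy edges so that the multigraph $G' \coloneqq G - D$ contains no such heavy component, and then invoke Theorem~\ref{thm:contribution-2b} on $G'$. Any EFX$_0$ orientation of $G'$, viewed as a partial orientation of $G$, is an EFX$_0$ partial orientation of $G$ that leaves exactly the edges of $D$ unallocated: each agent's bundle, and hence each agent's utility, is identical in the two views, so the EFX$_0$ condition transfers verbatim. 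It therefore only remains to choose $D$ consisting of heavy edges with $|D| \le n-1$.

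To build $D$, I would treat the offending heavy components one at a time. Let $K_1, \dots, K_t$ be the heavy components whose heavy edges induce an NTOM, and let $k_i = |V(K_i)| \ge 2$. Since each $K_i$ is a multitree, its skeleton is a tree on $k_i$ vertices and hence has exactly $k_i - 1$ skeleton edges, where a skeleton edge is a maximal set of parallel heavy edges. For every skeleton edge of every $K_i$, donate exactly one of its parallel copies into $D$. The claim is that this destroys the NTOM structure everywhere: a skeleton edge of multiplicity $1$ is deleted entirely (possibly splitting $K_i$), while a skeleton edge of odd multiplicity $m \ge 3$ becomes one of even multiplicity $m-1 \ge 2$. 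Consequently, in $G'$ every heavy component contained in (the former) $K_i$ either is a single vertex, hence trivial and not an NTOM, or retains a surviving skeleton edge of even multiplicity, hence is not odd and therefore not an NTOM. Heavy components of $G$ not among the $K_i$ are untouched and remain non-NTOM, so $G'$ has no heavy component inducing an NTOM.

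For the count, the key observation is that the $K_i$ are vertex-disjoint, so $\sum_{i=1}^{t} k_i \le n$, and thus $|D| = \sum_{i=1}^{t} (k_i - 1) = \bigl(\sum_{i=1}^{t} k_i\bigr) - t \le n - t \le n - 1$, using $t \ge 1$; if $t = 0$ no donation is needed and the bound holds trivially. Applying Theorem~\ref{thm:contribution-2b} to $G'$ then yields the desired partial EFX$_0$ orientation of $G$ leaving at most $n-1$ heavy edges unallocated.

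The step I expect to require the most care is verifying that the donation genuinely eliminates every NTOM rather than merely fragmenting one into several smaller ones. The delicate case is a heavy component that is a \emph{simple} heavy tree (all skeleton multiplicities equal to $1$): there, deleting a single edge would split it into two subtrees, each of which may again be an NTOM, so no one-edge fix works. The donation-one-copy-per-skeleton-edge rule handles this uniformly, since for a simple tree it amounts to deleting all $k_i - 1$ edges and isolating every vertex; pinning down this case, and checking that no surviving multi-vertex piece can be odd, is the crux. Once that is established, the equivalence between EFX$_0$ orientations of $G'$ and partial EFX$_0$ orientations of $G$, and the vertex-disjointness count, are routine.
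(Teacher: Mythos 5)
Your proposal follows essentially the same route as the paper: the paper's (very terse) justification of this theorem is exactly to donate heavy edges until no heavy component's heavy edges induce an NTOM and then invoke Theorem~\ref{thm:contribution-2b}, with the $n-1$ bound coming from the fact that an NTOM component on $k$ vertices requires at most $k-1$ donations and distinct heavy components are vertex-disjoint. Your explicit one-copy-per-skeleton-edge donation rule, the check that every surviving multi-vertex piece has only even-multiplicity parallel classes (hence is not odd), and the observation that simple heavy trees force all $k-1$ deletions are a correct filling-in of the details the paper leaves implicit.
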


The following problem immediately arises.

\begin{problem}
	Do bi-valued symmetric multigraphs have partial EFX$_0$ orientations that leave fewer than $n-1$ heavy edges unallocated?
\end{problem}

A related approach is to offer a small amount of monetary subsidy to envious agents in order to relieve their envy. The subsidy can be thought of as a divisible good. The goal in this setting is to minimize the amount of subsidy required. By possibly scaling the utility functions of the agents, assume that $u_i(o_j) \in [0, 1]$ for each agent $i$ and each good $o_j$. \citet{halpern2019fair} showed that the minimum subsidy required to ensure an EF allocation for an additive instance is at most $(n-1)m$. \citet{brustle2020one} improved this by showing that in fact, a subsidy of $n-1$ is sufficient to guarantee the existence of an EF allocation.

As for the graphical model, \citet{li2025subsidy} showed that when the agents have monotone utility functions, the minimum subsidy required to guarantee any multigraph $G$ to have an EF orientation is  between $n-2$ and $n-1$. They also considered two special cases: (1) when the agents have additive utility functions and (2) when $G$ is a simple graph. For these two cases, they successfully determined the exact minimum subsidy required to be $n/2$ and $n-2$, respectively. In particular, their findings imply that a subsidy of $n/2$ is necessary and sufficient to guarantee the existence of EF orientations of bi-valued symmetric multigraphs.

\chapter{Polynomial-Time Algorithms for Fair Orientations of Chores}\label{chapter:Fair-Orientations-Chores}

In this chapter, we present our work on EF1 and EFX$_0$ orientations of graphical chores instances. The work presented in this chapter is joint work with Valerie King.

Our first two theorems pertain to graphs that can contain self-loops. In each of them, we make minimal assumptions on the utility functions of the agents, the details of which are given in Subsection~\ref{subsec:assumptions}. Moreover, these two results hold even if the agents have additive utility functions (see discussion in Section~\ref{section:additive}).

\mainContributionThree*

\mainContributionThreeB*

Our second two theorems pertain to multigraphs.

\mainContributionThreeC*

\mainContributionThreeD*

\section{Preliminaries}\label{section:prelim}

\subsection{The Fair Orientation Problem}\label{subsec:assumptions}

Recall that a graphical instance of the fair division problem is represented by a graph $G$ in which each vertex represents an agent and each edge represents an item. In graphical instances, we assume an item $e$ has zero marginal utility to an agent $i$ if it is $e$ is not incident to $i$.

In this chapter, we are particularly interested in chores instances, in which every edge represents a chore. In addition, we make the following two minimal assumptions on each utility function $u_i$:
\begin{itemize}
	\item $u_i$ is {\em monotone}, i.e., $u_i(S) \leq u_i(T)$ whenever $T \subseteq S$.
	\item If $u_i(e) = 0$ for each $e \in S$, then $u_i(S) = 0$.
\end{itemize}

Given $(G, U)$, \textsc{EFX$_0$-Orientation} asks whether $G$ has an EFX$_0$ orientation. We are particularly interested in instances in which every edge $e = \{i, j\}$ is {\em objective}, i.e.\ $u_i(e) = 0$ if and only if $u_j(e) = 0$. Equivalently, either $e$ has zero marginal utility for both endpoints or $e$ has negative marginal utility for both. If every edge is objective, we call $(G, U)$ an {\em objective} instance of \textsc{EFX$_0$-Orientation}. We use \textsc{EFX$_0$-Orientation-Objective} to denote the \textsc{EFX$_0$-Orientation} problem restricted to objective instances.

When considering objective instances, we refer to edges with zero (resp.\ negative) marginal utility for both endpoints as {\em dummy edges} (resp.\ {\em negative edges}). Moreover, we define a {\em negative component} of $G$ to be a maximal vertex-induced subgraph $K$ such that for any two vertices $i, j$ of $K$, there is a path in $K$ between $i$ and $j$ that only contains negative edges. The negative components of $G$ form a partition of $V(G)$. See Figure~\ref{fig:negative-component} for an example.

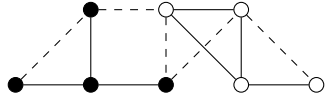
\begin{figure}
	\centering
	\begin{tikzpicture}
		\node[circlevertex] (a1) at (0,0) {};
		\node[circlevertex] (a2) at (1,0) {};
		\node[circlevertex] (a3) at (2,0) {};
		\node[circlevertex] (a4) at (1,1) {};
		\node[emptyvertex] (b1) at (2,1) {};
		\node[emptyvertex] (b2) at (3,1) {};
		\node[emptyvertex] (b3) at (3,0) {};
		\node[emptyvertex] (b4) at (4,0) {};
		\draw (a1) to (a2) (a2) to (a3) (a2) to (a4)
		(b1) to (b2) (b2) to (b3) (b3) to (b1) (b3) to (b4);
		
		\draw[dashed] (a1) to (a4) (a3) to (b2) (b2) to (b4) (a4) to (b1) (b1) to (a3);
	\end{tikzpicture}
	\caption{The graph of an objective instance. Solid (resp.\ dashed) edges denote negative (resp.\ dummy) edges. The set of black (resp.\ white) vertices induces one negative component.}
	\label{fig:negative-component}
\end{figure}

\subsection{Related Decision Problems}

Let $H$ be a graph, $P = \{P_1, P_2, \dots, P_k\}$ be a set of pairwise disjoint subsets of $V(H)$, and $D \subseteq V(H)$ be a subset of vertices. The set $P$ is not necessarily a partition of $V(H)$ because not every vertex needs to be in $P$. We say a subset $C \subseteq V(H)$ is a {\em $(P, D)$-vertex cover} of $H$ if $C$ is a vertex cover of $H$ that contains at most one vertex from each $P_\ell \in P$ and no vertex in $D$, i.e.\ the following three conditions hold:
\begin{itemize}
	\item Each edge of $H$ is incident to at least one vertex in $C$;
	\item $|C \cap P_\ell| \leq 1$ for each $P_\ell \in P$; and
	\item $C \cap D = \emptyset$.
\end{itemize}

\begin{definition}[\textsc{PD-Vertex-Cover}]
	Given a tuple $(H, P, D)$ where $H$ is a graph, $P = \{P_1, P_2, \dots, P_k\}$ is a set of pairwise disjoint subsets of $V(H)$, and $D \subseteq V(H)$, \textsc{PD-Vertex-Cover} asks whether there exists a $(P,D)$-vertex cover $C$ of $H$.
\end{definition}

\begin{definition}[\textsc{2SAT}]
	The \textsc{2SAT} problem is a special case of \textsc{SAT} in which each clause consists of one or two literals.
\end{definition}

\begin{definition}[\textsc{Partition} \citep{karp1972reducibility}]
	Given a set $S = \{s_1, s_2, \dots, s_k\}$ of natural numbers, \textsc{Partition} asks whether there exists a partition $S = S_a \cup S_b$ such that $\sum_{x \in S_a} x = \sum_{x \in S_b} x$. (Such a partition is called an {\em equipartition}.)
\end{definition}

\section{EF1 Orientations of Graphs}\label{section:EF1}

We give our $O(|V(G)| + |E(G)|)$-time algorithm for finding an EF1 orientation of a graph $G$ if one exists. Our algorithm follows immediately from the following two propositions. The second follows from the first and is stated without proof.

\begin{proposition}\label{prop:EF1-graphs}
	An orientation $\pi$ of a graph $G$ is EF1 if and only if each vertex $i$ receives at most one edge of negative utility to it.
\end{proposition}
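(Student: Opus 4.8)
The statement to prove is Proposition~\ref{prop:EF1-graphs}: an orientation $\pi$ of a graph $G$ of chores is EF1 if and only if each vertex receives at most one edge of negative utility to it. Since we are in the graphical model, every chore (edge) is relevant only to its two endpoints, so for any vertex $i$ the bundle $\pi_i$ consists precisely of the edges of $G$ oriented toward $i$, and each such edge has non-positive marginal utility to $i$. The plan is to prove both directions by directly unpacking the EF1 definition and exploiting the local structure of orientations: for two vertices $i, j$ that are not adjacent there is nothing to check, because then $u_i(\pi_j) = 0$ (no edge incident to $i$ lies in $\pi_j$) and no agent envies an empty-to-them bundle; hence the only nontrivial comparisons involve adjacent vertices sharing a single edge.

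\textbf{The forward direction.}
First I would prove the contrapositive of the ``only if'' direction. Suppose some vertex $i$ receives two edges $e_1, e_2$ of strictly negative utility to $i$, say $e_1 = \{i, j_1\}$ and $e_2 = \{i, j_2\}$. I claim $\pi$ is not EF1. Consider agent $i$'s envy toward one of its neighbours, say $j_1$. Since the edge $e_1 = \{i, j_1\}$ is oriented toward $i$, we have $u_i(\pi_{j_1}) = 0$ (the shared edge is in $\pi_i$, not $\pi_{j_1}$, and no other edge incident to $i$ lies in $\pi_{j_1}$), whereas $u_i(\pi_i) \leq u_i(\{e_1, e_2\}) < 0$ by monotonicity. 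To check the EF1 condition between $i$ and $j_1$, one of the two clauses must hold: either there is an item $o \in \pi_{j_1}$ with $u_i(\pi_i) \geq u_i(\pi_{j_1} \setminus \{o\})$, or there is a chore $o \in \pi_i$ with $u_i(\pi_i \setminus \{o\}) \geq u_i(\pi_{j_1})$. The first clause fails because removing any item from $\pi_{j_1}$ cannot raise $u_i(\pi_{j_1}) = 0$ above $u_i(\pi_i) < 0$. For the second clause, removing a single chore $o$ from $\pi_i$ leaves at least one of $e_1, e_2$ in $\pi_i$; since that remaining edge has strictly negative utility to $i$ and utilities are monotone, $u_i(\pi_i \setminus \{o\}) < 0 = u_i(\pi_{j_1})$, so the second clause fails as well. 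Thus $i$ violates EF1 toward $j_1$, and $\pi$ is not EF1.

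\textbf{The reverse direction.}
For the ``if'' direction, assume every vertex receives at most one negatively-valued edge; I would show $\pi$ is EF1 for every ordered pair $(i, j)$. If $i$ and $j$ are non-adjacent, then no edge incident to $i$ is oriented toward $j$, so $u_i(\pi_j) = 0 \geq u_i(\pi_i)$ and $i$ does not even envy $j$. If $i$ and $j$ are adjacent via an edge $e$, then $u_i(\pi_j) \in \{0, u_i(e)\} \leq 0$, and the relevant worry is when $e$ is oriented toward $j$ and has negative utility to $i$, giving $u_i(\pi_j) = u_i(e) < 0$. By hypothesis, $\pi_i$ contains at most one edge of negative utility to $i$; removing that single worst chore $o$ from $\pi_i$ leaves a bundle all of whose edges have zero utility to $i$, so by the second assumption on $u_i$ we get $u_i(\pi_i \setminus \{o\}) = 0 \geq u_i(e) = u_i(\pi_j)$, satisfying the second EF1 clause. (If $\pi_i$ has no negative edge at all, then $u_i(\pi_i) = 0 \geq u_i(\pi_j)$ and $i$ does not envy $j$.) This establishes EF1 for every pair, completing the proof.

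\textbf{Anticipated obstacle.}
I expect the only delicate point to be handling the boundary between strictly-negative and zero-utility edges cleanly, since the EF1 clauses quantify over \emph{all} items rather than only strictly negative ones. The second structural assumption on $u_i$ (that a bundle of zero-utility chores has total utility zero) is exactly what lets the ``remove one chore'' argument collapse $u_i(\pi_i \setminus \{o\})$ to $0$; I would make sure to invoke it explicitly rather than implicitly assuming additivity, so the proof remains valid under the weaker monotonicity assumptions stated in Subsection~\ref{subsec:assumptions}.
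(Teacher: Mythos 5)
Your overall strategy matches the paper's: unpack the EF1 definition directly and exploit the local structure of orientations. Your forward direction is correct and in fact more careful than the paper's own proof, which removes a worst chore from $\pi_i$ and leaves the first EF1 clause (removing an item from $\pi_j$) implicit; you check both clauses and correctly observe that removing \emph{any} single chore from $\pi_i$ leaves at least one of the two negative edges behind, so the argument does not depend on additivity.

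However, the reverse direction contains a genuine error. For non-adjacent $i, j$ you write ``$u_i(\pi_j) = 0 \geq u_i(\pi_i)$ and $i$ does not even envy $j$,'' and your setup paragraph likewise asserts that ``no agent envies an empty-to-them bundle.'' Both claims have the envy direction backwards: envy means $u_i(\pi_i) < u_i(\pi_j)$, so the inequality you derived shows precisely that $i$ \emph{does} envy $j$ whenever $\pi_i$ contains an edge of negative utility to $i$. In the chores setting an agent saddled with a chore envies an agent whose bundle is worth zero to it --- this is exactly the mechanism by which EF1 orientations of chores fail to exist (cf.\ Example~\ref{ex:chores-no-ef1-graphical}). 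The conclusion you need for these pairs is still true, but the correct justification is the one you already use in the adjacent case: by hypothesis $\pi_i$ contains at most one edge $o$ of negative utility to $i$, so the second structural assumption on $u_i$ gives $u_i(\pi_i \setminus \{o\}) = 0 \geq 0 = u_i(\pi_j)$, and the second EF1 clause holds (and if $\pi_i$ contains no negative edge, then $u_i(\pi_i) = 0 \geq u_i(\pi_j)$ and there is genuinely no envy). Once the non-adjacent case (and the adjacent sub-case with $u_i(\pi_j) = 0$) is routed through that argument rather than the false ``no envy'' claim, your proof is complete and agrees with the paper's.
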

\begin{proof}
	Suppose $i$ receives two edges $e = \{i, j\}$ and $e' \neq e$ of negative utility to it. Assume $e'$ is a worst chore in $\pi_i$. Since $e \in \pi_i \setminus \{e'\}$, we have $u_i(\pi_i \setminus \{e'\}) \leq u_i(e) < 0 = u_i(\pi_{j})$, so $i$ envies $j$ even after ignoring $e'$ and $\pi$ is not EF1. Otherwise, $i$ receives at most one edge of negative utility to it. Ignoring such an edge alleviates any envy $i$ has, so $\pi$ is EF1.
\end{proof}

\begin{proposition}\label{prop:EF1-augment}
	If $\pi$ is an EF1 orientation of a graph $G$, then introducing a new edge $e = \{i, j\}$ such that $u_i(e) = 0$ and orienting it toward $i$ results in another EF1 orientation. \qed
\end{proposition}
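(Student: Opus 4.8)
The plan is to reduce the claim entirely to the combinatorial characterization of EF1 orientations established in Proposition~\ref{prop:EF1-graphs}, which says that an orientation of a graph is EF1 if and only if every vertex receives at most one incident edge of strictly negative utility. With this characterization in hand, the whole argument collapses to the observation that orienting a zero-utility edge toward a vertex leaves every vertex's count of received negative-utility edges unchanged.

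First I would fix notation. Let $G'$ be the graph obtained from $G$ by adding the new edge $e = \{i, j\}$, and let $\pi'$ be the orientation of $G'$ that agrees with $\pi$ on all edges of $G$ and orients $e$ toward $i$. The goal is to show $\pi'$ is EF1, so by Proposition~\ref{prop:EF1-graphs} it suffices to check that each vertex of $G'$ receives at most one edge of negative utility under $\pi'$.

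Next I would compare the bundles of $\pi'$ and $\pi$ vertex by vertex. Every vertex other than $i$ — including the endpoint $j$, which does not receive $e$ because $e$ is oriented toward $i$ — keeps exactly the same bundle as under $\pi$, so its number of received negative-utility edges is unchanged and remains at most one by the EF1-ness of $\pi$. The vertex $i$ gains precisely the edge $e$ in its bundle; but the hypothesis $u_i(e) = 0$ means $e$ is \emph{not} an edge of negative utility to $i$, so $i$'s count of received negative-utility edges is also unchanged and stays at most one.

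Since every vertex of $G'$ receives at most one edge of negative utility under $\pi'$, Proposition~\ref{prop:EF1-graphs} yields that $\pi'$ is EF1, which completes the argument. There is no genuine obstacle; the single point deserving care is the distinction between a zero-utility (dummy) edge and a negative-utility edge, which is exactly where the hypothesis $u_i(e) = 0$ is used.
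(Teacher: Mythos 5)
Your proof is correct and matches the paper's intended argument exactly: the paper states Proposition~\ref{prop:EF1-augment} without proof, noting only that it follows from Proposition~\ref{prop:EF1-graphs}, and your reduction to that characterization (no vertex's count of received negative-utility edges changes, since $j$'s bundle is untouched and $e$ is not negative to $i$) is precisely the omitted reasoning.
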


It is straightforward to decide whether $G$ has an EF1 orientation and compute one if it exists using these two propositions.

\mainContributionThreeB*
\begin{proof}
	First, partition the edge set $E(G)$ of $G$ into $E_0 \cup E_1$, where $E_0$ is the set of edges that have zero utility to at least one endpoint, and $E_1$ is the set of edges that have strictly negative utility to both endpoints. Proposition~\ref{prop:EF1-augment} allows us to ignore the edges in $E_0$ without loss of generality. Hence, every edge has strictly negative utility to both endpoints. In this case,  Proposition~\ref{prop:EF1-graphs} implies that an orientation $\pi$ of $G$ is EF1 if and only if each vertex $i$ receives at most one edge. Clearly, this is the case if and only if each component $K$ of $G$ contains at most $|V(K)|$ edges.
	
	To construct an EF1 orientation in the case that it exists, orient each edge $e \in E_0$ toward an endpoint $i$ such that $u_i(e) = 0$. Then, using BFS, find the components of the subgraph of $G$ induced by the edges of $E_1$, and orient the edges of $E_1$ in a way such that each vertex of $G$ receives at most one.
	
	Ignoring the set $E_0$ of edges that have zero utility to at least one endpoint takes $O(|E(G)|)$ time by checking every edge. Verifying the number of edges in each component of $G$ takes $O(|V(G)| + |E(G)|)$ time using BFS. Constructing the EF1 orientation also takes $O(|V(G)| + |E(G)|)$ time because it can be done using BFS.
\end{proof}

\section{EFX$_0$ Orientations of Graphs}\label{section:graphs}

In this section, we give our $O(|V(G)|^4)$-time algorithm for \textsc{EFX$_0$-Orientation}. Our algorithm is essentially a reduction from \textsc{EFX$_0$-Orientation} to \textsc{2SAT}. It comprises three nested routines, each functioning as a reduction from one decision problem to another. The main algorithm \textsc{FindEFXOrientation} (Algorithm~\ref{alg:FindEFXOrientation}) corresponds to a reduction from \textsc{EFX$_0$-Orientation} to \textsc{EFX$_0$-Orientation-Objective}. \textsc{FindEFXOrientation} calls \textsc{FindEFXOrientObj} as a subroutine, which corresponds to a reduction from \textsc{EFX$_0$-Orientation-Objective} to \textsc{PD-Vertex-Cover}. Finally, \textsc{FindEFXOrientObj} calls \textsc{FindPDVertexCover} as a subroutine, which corresponds to a reduction from \textsc{PD-Vertex-Cover} to \textsc{2SAT}. See Figure~\ref{fig:schematic} for a visual overview.

We present the proofs of correctness of the algorithms in their nested order, starting with innermost one, \textsc{FindPDVertexCover}.

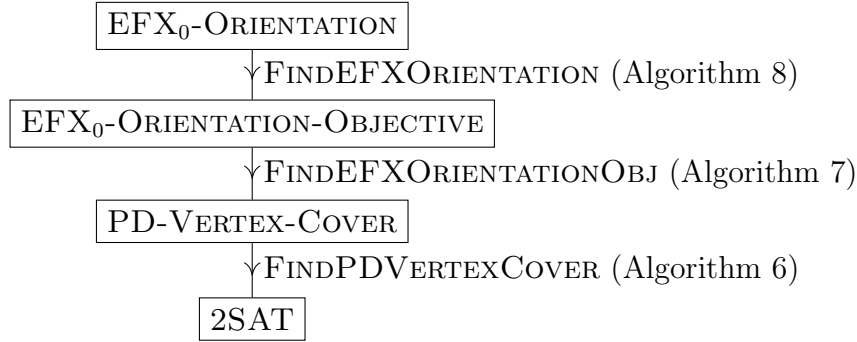
\begin{figure}
	\centering
	\begin{tikzpicture}
		\node[sqvertex] (P1) at (0,3*1.3) {\textsc{EFX$_0$-Orientation}};
		\node[sqvertex] (P2) at (0,2*1.3) {\textsc{EFX$_0$-Orientation-Objective}};
		\node[sqvertex] (P3) at (0,1*1.3) {\textsc{PD-Vertex-Cover}};
		\node[sqvertex] (P4) at (0,0) {\textsc{2SAT}};
		
		\draw[arc] (P1) to node[right] {\textsc{FindEFXOrientation} (Algorithm~\ref{alg:FindEFXOrientation})} (P2);
		
		\draw[arc] (P2) to node[right] {\textsc{FindEFXOrientationObj} (Algorithm~\ref{alg:FindEFXOrientObj})} (P3);
		
		\draw[arc] (P3) to node[right] {\textsc{FindPDVertexCover} (Algorithm~\ref{alg:FindPDVertexCover})} (P4);
	\end{tikzpicture}
	\caption{The reductions our algorithms represent. Each node on the left is a decision problem. A directed edge between nodes is a reduction and is labelled with the relevant algorithm. \vspace{\baselineskip}}
	\label{fig:schematic}
\end{figure}

\subsection{\textsc{FindPDVertexCover} (Algorithm~\ref{alg:FindPDVertexCover})}

\textsc{FindPDVertexCover} (Algorithm~\ref{alg:FindPDVertexCover}) accepts an instance $(H, P, D)$ of \textsc{PD-Vertex-Cover} as input, and outputs a $(P, D)$-vertex cover of $H$ if one exists and \texttt{false} otherwise.

Recall that in an instance $(H, P, D)$ of \textsc{PD-Vertex-Cover}, $H$ is a graph, $P = \{P_1, P_2, \dots, P_k\}$ is a set of pairwise disjoint subsets of $V(H)$, and $D \subseteq V(H)$. The algorithm starts by constructing an instance $\phi$ of \textsc{2SAT} on Line~\ref{line:1.create-sat}, which contains a Boolean variable $x_i$ for each vertex $i$ of $H$ and three types of clauses:
\begin{itemize}
	\item (type 1) the clause $\{x_i, x_j\}$ for each edge $e = \{i, j\}$ of $H$ (possibly $i=j$),
	\item (type 2) the clause $\{\overline{x_i}, \overline{x_j}\}$ for each set $P_\ell \in P$ and each pair of distinct vertices $i, j \in P_\ell$, and
	\item (type 3) the clause $\{\overline{x_i}\}$ for each vertex $i \in D$.
\end{itemize}
Then, the algorithm determines whether $\phi$ has a satisfying assignment $f$ on Line~\ref{line:1.call-sat}. If so, the algorithm outputs the $(P,D)$-vertex cover $C \coloneqq \{i \in V(H) \mid f(x_i) = \texttt{true}\}$. Otherwise, the algorithm outputs \texttt{false}.

\begin{algorithm}
	\caption{$\textsc{FindPDVertexCover}(H,P,D)$}
	\label{alg:FindPDVertexCover}
	
	\hspace*{\algorithmicindent} \textbf{Input:} An instance $(H, P, D)$ of \textsc{PD-Vertex-Cover}.
	
	\hspace*{\algorithmicindent} \textbf{Output:} A $(P,D)$-vertex cover $C$ of $H$ if one exists and \texttt{false} otherwise.
	\begin{algorithmic}[1] %[1] enables line numbers
		%\Procedure{Objectivize}{$G, u$}
		
		%\State $\phi \gets \emptyset$\label{line:1.start-create-sat}
		
		%\For{each vertex $i$ of $H$}
		%\State create the Boolean variable $x_i$
		%\EndFor
		
		%\State $\phi \gets \phi \cup \{ \{x_i, x_j\}:\{i, j\} \in E(H) \}$ \Comment{type 1 clauses}
		
		%\For{each set $P_\ell \in P$}
		%\For{each pair of distinct vertices $i, j \in P_\ell$}
		%\State $\phi \gets \phi \cup \{\{\overline{x_i}, \overline{x_j}\}\}$ \Comment{type 2 clauses}
		%\EndFor
		%\EndFor
		
		%\State $\phi \gets \phi \cup \{\{\overline{x_i}, \overline{x_j}\}: P_\ell \in P, (i, j) \in P \times P, \text{ and } i \neq j\}$ \Comment{type 2 clauses}
		
		%\State $\phi \gets \phi \cup \{\{\overline{x_i}\}:i \in D\}$ \Comment{type 3 clauses}
		
		\State $\phi \gets$ the 2SAT instance defined above \label{line:1.create-sat}
		\State $f \gets \textsc{2SAT}(\phi)$\label{line:1.call-sat}
		
		\If{$f = \texttt{false}$} \Comment{$\phi$ has no satisfying assignment}
		\State \Return \texttt{false}\label{line:1.return-false}
		\EndIf

		\State \Return $C \coloneqq \{i \in V(H) \mid f(x_i) = \texttt{true}\}$\label{line:1.return-C}
		%\EndProcedure
	\end{algorithmic}
\end{algorithm}

\begin{lemma}\label{lemma:reduc:FindPDVertexCover}
	The following statements hold for \textsc{FindPDVertexCover} (Algorithm~\ref{alg:FindPDVertexCover}):
	\begin{enumerate}
		\item If $H$ has a $(P,D)$-vertex cover, then $\phi$ has a satisfying assignment.
		
		\item If $\phi$ has a satisfying assignment $f$, then $C \coloneqq \{i \in H \mid f(x_i) = \normalfont{\texttt{true}}\}$ is a $(P,D)$-vertex cover of $H$. Otherwise, \textsc{FindPDVertexCover} outputs \texttt{false}.
		
		\item \textsc{FindPDVertexCover} runs in $O(|V(H)|^2)$ time.
	\end{enumerate}
\end{lemma}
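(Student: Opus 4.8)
The plan is to prove all three parts by establishing that the map $f \mapsto C := \{i \in V(H) \mid f(x_i) = \texttt{true}\}$ is a correspondence between satisfying assignments of $\phi$ and $(P,D)$-vertex covers of $H$, under the intended semantics that $x_i = \texttt{true}$ encodes ``$i \in C$''. Parts (1) and (2) are the soundness and completeness of this correspondence, and part (3) is a routine size-and-running-time count. The crux is that each of the three clause types is designed to enforce exactly one of the three defining conditions of a $(P,D)$-vertex cover.

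For part (1), I would start with an arbitrary $(P,D)$-vertex cover $C$ and define the assignment $f$ by setting $f(x_i) = \texttt{true}$ exactly when $i \in C$. Then I would verify each clause type in turn. A type-1 clause $\{x_i, x_j\}$ is satisfied because $C$, being a vertex cover, contains an endpoint of the edge $\{i,j\}$ (for a self-loop $i=j$ this forces $i \in C$, which is again what covering a self-loop requires). A type-2 clause $\{\overline{x_i}, \overline{x_j}\}$ is satisfied because $|C \cap P_\ell| \le 1$ forbids both $i,j \in P_\ell$ from lying in $C$ simultaneously. A type-3 clause $\{\overline{x_i}\}$ is satisfied because $C \cap D = \emptyset$ guarantees $i \notin C$ for each $i \in D$. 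Hence $f$ satisfies $\phi$.

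For part (2), I would run the same verification in reverse: given a satisfying assignment $f$, define $C$ as on Line~\ref{line:1.return-C} and check that the three defining conditions of a $(P,D)$-vertex cover hold, each following from the corresponding clause type (type 1 yields the covering condition, type 2 the at-most-one-per-$P_\ell$ condition, type 3 disjointness from $D$). The final sentence of part (2) follows directly from the control flow: if $\phi$ is unsatisfiable then $\textsc{2SAT}(\phi)$ returns \texttt{false} on Line~\ref{line:1.call-sat}, so the algorithm returns \texttt{false} on Line~\ref{line:1.return-false}.

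For part (3), I would bound the size of $\phi$. There are $|V(H)|$ variables, $|E(H)| = O(|V(H)|^2)$ type-1 clauses, $\sum_\ell \binom{|P_\ell|}{2} = O(|V(H)|^2)$ type-2 clauses (using that the $P_\ell$ are pairwise disjoint, so their sizes sum to at most $|V(H)|$ and the count is at most $\binom{|V(H)|}{2}$), and at most $|V(H)|$ type-3 clauses. Constructing $\phi$ on Line~\ref{line:1.create-sat} therefore takes $O(|V(H)|^2)$ time; solving the resulting 2SAT instance takes time linear in its size by the standard strongly-connected-components method; and extracting $C$ on Line~\ref{line:1.return-C} takes $O(|V(H)|)$ time, giving the claimed $O(|V(H)|^2)$ bound. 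I expect no genuine obstacle here, since each part is a direct verification; the only points requiring care are the self-loop case in the type-1 clauses and the disjointness of the $P_\ell$, which is precisely what keeps the type-2 clause count quadratic rather than cubic.
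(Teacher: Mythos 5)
Your proposal is correct and follows essentially the same argument as the paper: the same correspondence $f \mapsto C$ with clause-by-clause verification in both directions for parts (1) and (2), and the same $O(|V(H)|^2)$ accounting (quadratically many clauses plus the linear-time 2SAT algorithm of Aspvall, Plass, and Tarjan) for part (3). Your added remarks on self-loops and on using the disjointness of the $P_\ell$ to bound the number of type-2 clauses are fine refinements of details the paper leaves implicit.
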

\begin{proof}
	(1): Suppose $H$ has a $(P, D)$-vertex cover $C$. We claim the truth assignment $f$ in which $f(x_i) = \texttt{true}$ if and only if $i \in C$ to be a satisfying assignment for $\phi$. Since $C$ is a $(P, D)$-vertex cover of $H$, each edge $e$ of $H$ is incident to at least one vertex in $C$. In other words, for each edge $e = \{i, j\}$ of $H$, we have $i \in C$ or $j \in C$. Hence, at least one of $x_i, x_j$ is true under $f$, so the type 1 clause $\{x_i, x_j\}$ is satisfied.
	
	On the other hand, the definition of $(P, D)$-vertex covers ensures $|C \cap P_\ell| \leq 1$ for each $P_\ell \in P$. Hence, each $P_\ell$ contains at most one vertex $i$ such that $f(x_i) = \texttt{true}$, so all type 2 clauses are satisfied.
	
	Finally, because $C$ is a $(P, D)$-vertex cover, we have $C \cap D = \emptyset$. Hence, for each $i \in D$, we have $i \notin C$, so $f(x_i) = \texttt{false}$, ensuring all type 3 clauses are satisfied.
	
	(2): Suppose $\phi$ has a satisfying assignment $f$. We claim that $C \coloneqq \{i \in V(H) \mid f(x_i) = \texttt{true}\}$ is a $(P, D)$-vertex cover of $H$. Since the type 1 clause $\{x_i, x_j\}$ is satisfied under $f$, the edge $e = \{i, j\}$ of $H$ is incident to at least one vertex of $C$, namely, $i$ or $j$. Since the type 2 clauses are satisfied, for each fixed $P_\ell \in P$, there is at most one vertex $i \in P_\ell$ such that $x_i$ is true under $f$. Hence, $|C \cap P_\ell| \leq 1$. Finally, the type 3 clauses ensure that $f(x_i) = \texttt{false}$ for each vertex $i \in D$, so $i \notin C$. Thus, $C \cap D = \emptyset$.
	
	Otherwise, $\phi$ has no satisfying assignment and \textsc{FindPDVertexCover} outputs \texttt{false} on Line~\ref{line:1.return-false}.
	
	(3): Constructing $\phi$ on Line~\ref{line:1.create-sat} takes $O(|V(H)|^2)$ time because $\phi$ contains $|V(H)|$ variables and $O(|V(H)|^2)$ clauses. Verifying whether $\phi$ has a satisfying assignment on Line~\ref{line:1.call-sat} takes time linear in the number of variables and clauses using the algorithm due to \citet{aspvall1979linear}, which is $O(|V(H)|^2)$. Finally, constructing $C$ takes $O(|V(H)|)$ time by checking each vertex of $H$. Thus, \textsc{FindPDVertexCover} runs in $O(|V(H)|^2)$ time.
\end{proof}

\subsection{\textsc{FindEFXOrientObj} (Algorithm~\ref{alg:FindEFXOrientObj})}

\textsc{FindEFXOrientObj} (Algorithm~\ref{alg:FindEFXOrientObj}) takes an instance $(G^o, u^o)$ of \textsc{EFX$_0$-Orientation-Objective} as input and outputs an EFX$_0$ orientation of $G^o$ if one exists and \texttt{false} otherwise. Recall that in an objective instance, all edges are objective, and we distinguish dummy edges that provide zero marginal utility to both endpoints and negative edges that provide negative marginal utility to both endpoints. Recall also that a negative component of $G^o$ is a maximal vertex-induced subgraph $K$ such that for any two vertices $i, j$ of $K$, there exists a path in $K$ between $i$ and $j$ that only contains negative edges.

We make the following crucial observation.

\begin{proposition}\label{prop:one-neg-or-all-dummies}
	Let $(G^o, u^o)$ be an instance of \textsc{EFX$_0$-Orientation-Objective}. An orientation $\pi^o$ of $G^o$ is EFX$_0$ if and only for each vertex $i$, the orientation $\pi^o$ contains a unique edge directed toward $i$ or every edge directed toward $i$ is a dummy edge.
\end{proposition}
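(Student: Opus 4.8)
The plan is to prove both directions by unwinding the EFX$_0$ (no strong envy) condition using the objective, graphical structure, where the crucial simplification is that for distinct vertices $i \ne j$ the bundle $\pi^o_j$ contains at most one edge relevant to $i$, namely the single edge $\{i,j\}$ (if present and oriented toward $j$), since $G^o$ is a simple graph. First I would record the elementary facts that follow from the two assumptions on $u_i$ (monotonicity and vanishing on all-dummy sets): $u_i(\emptyset)=0$; $u_i(\pi^o_i)\le 0$ and $u_i(\pi^o_j)\le 0$ for every $j$ (apply monotonicity to $\emptyset \subseteq \pi^o_j$); and, most importantly, $u_i(\pi^o_j)=0$ whenever no edge of $\pi^o_j$ is incident to $i$ --- in particular whenever the edge $\{i,j\}$ is absent, is a dummy edge, or is oriented toward $i$. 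I would also note that the stated condition (``unique edge toward $i$, or all edges toward $i$ are dummy'') fails at $i$ precisely when $\pi^o_i$ contains at least two edges, at least one of which is negative.

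For the ($\Leftarrow$) direction I would assume the condition holds at every vertex and verify $u_i(\pi^o_i \setminus \{e'\}) \ge u_i(\pi^o_j)$ for every $i$, every $e' \in \pi^o_i$, and every $j \ne i$. If every edge of $\pi^o_i$ is a dummy edge then $u_i(\pi^o_i)=0$ by the vanishing assumption, so $u_i(\pi^o_i \setminus \{e'\}) \ge u_i(\pi^o_i)=0$ by monotonicity; if $\pi^o_i$ consists of a single edge $e'$ then $u_i(\pi^o_i \setminus \{e'\}) = u_i(\emptyset) = 0$. In either case the left side is at least $0 \ge u_i(\pi^o_j)$, so no agent strongly envies another and $\pi^o$ is EFX$_0$.

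For the ($\Rightarrow$) direction I would argue the contrapositive: suppose the condition fails at some vertex $i$, so $\pi^o_i$ has at least two edges and at least one negative edge. Because $G^o$ is simple it carries at most one self-loop at $i$, so $\pi^o_i$ must contain a non-self-loop edge $\{i,k\}$ with $k \ne i$; since this edge is oriented toward $i$ it does not lie in $\pi^o_k$, whence $u_i(\pi^o_k)=0$ by the recorded fact. I would then exhibit an edge $e' \in \pi^o_i$ whose deletion leaves a negative edge $f$ in $\pi^o_i$: if $\pi^o_i$ has two or more negative edges, delete any edge; if it has exactly one negative edge together with at least one dummy edge, delete a dummy edge. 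By monotonicity $u_i(\pi^o_i \setminus \{e'\}) \le u_i(\{f\}) < 0 = u_i(\pi^o_k)$, so $i$ strongly envies $k$ and $\pi^o$ is not EFX$_0$.

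The main obstacle I anticipate is the treatment of self-loops, which must always be oriented toward their unique endpoint and so cannot be reassigned to a distinct agent $j$ to witness envy. This is exactly why the simple-graph hypothesis is essential: a vertex carries at most one self-loop, so an offending bundle with two or more edges is guaranteed to contain an ordinary edge $\{i,k\}$ that both supplies a distinct comparison agent $k$ with $u_i(\pi^o_k)=0$ and keeps the argument independent of the precise (possibly non-additive) values of $u_i$. I would also explicitly flag that the comparison vertex $k$ and the deleted edge $e'$ may be chosen independently, since strong envy toward the fixed $k$ only requires the existence of a single qualifying $e'$.
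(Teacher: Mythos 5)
Your proof is correct and follows the same overall route as the paper's: both directions unwind the EFX$_0$ condition, use the absence of parallel edges to conclude $u_i(\pi^o_k)=0$ for a comparison vertex $k$ whose shared edge with $i$ is oriented toward $i$, and use monotonicity to bound the offending bundle's utility by that of a surviving negative edge. The one genuine difference is in the converse direction, and it is in your favour. The paper picks two edges $e, e'$ directed toward $i$ with $u_i(e')<0$, deletes $e$, and compares $i$ against the other endpoint $j$ of $e$ --- so the deleted edge and the comparison vertex are coupled. You decouple them: the comparison vertex $k$ comes from an arbitrary non-self-loop edge of $\pi^o_i$ (which exists because a vertex carries at most one self-loop), while the deleted edge $e'$ is chosen separately so that a negative edge survives. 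This is not cosmetic. The paper's argument as written tacitly assumes $e$ is not a self-loop; otherwise $j=i$ and the claimed equality $u_i(\pi^o_j)=0$ is precisely what is being refuted. That assumption can genuinely fail --- e.g.\ when $\pi^o_i$ consists of a dummy self-loop and a single negative ordinary edge, the paper's $e$ is forced to be the self-loop --- and self-loops are explicitly allowed in this chapter's setting (they are trivially objective, so they persist in $G^o$). Your case analysis covers exactly this situation (delete the self-loop, compare against the endpoint of the negative edge), so your write-up is the more complete of the two; the paper's is terser but needs precisely your decoupling observation to be airtight. One terminological nit: you call $G^o$ ``simple,'' whereas the paper reserves \emph{simple} for graphs with no self-loops; what you actually use, correctly, is that $G^o$ is a graph (no parallel edges, at most one self-loop per vertex).
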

\begin{proof}
	
	Let $\pi^o = (\pi_1^o, \pi_2^o, \dots, \pi_{|V(G^o)|}^o)$ be an orientation of $G^o$ and suppose that for each vertex $i$, the orientation $\pi^o$ contains a unique edge directed toward $i$ or every edge directed toward $i$ is a dummy edge. If $\pi^o$ contains a unique edge $e$ directed toward $i$, then we have $u_i^o(\pi^o_i \setminus \{e\}) = u_i^o(\emptyset) = 0 \geq u_i^o(\pi_j^o)$ for any agent $j$. Otherwise, every edge directed toward $i$ is a dummy edge, so $u_i^o(\pi_i^o) = 0 \geq u_i^o(\pi_j^o)$ for any $j$. Hence, no agent strongly envies another and $\pi^o$ is EFX$_0$.
	
	Conversely, suppose the orientation $\pi^o$ contains two edges $e, e'$ directed toward some $i$ and not every edge directed toward $i$ is a dummy edge. Without loss of generality, assume $u_i^o(e') < 0$. Let $j, j'$ denote the other endpoints of the edges $e, e'$, respectively. Note that $j \neq j'$ because $G$ does not contain parallel edges. Also, $u_i^o(\pi_j^o) = 0$ because $i$ receives the edge $e$ between $i$ and $j$. Since $u_i^o(e') < 0$ and $\{e'\} \subseteq \pi_i^o \setminus \{e\}$, the monotonicity of $u_i^o$ implies $u_i^o(\pi_i^o \setminus \{e\}) \leq u_i^o(e') < 0 = u_i^o(\pi_j^o)$. So, $\pi^o$ is not EFX$_0$.
\end{proof}

Proposition~\ref{prop:one-neg-or-all-dummies} gives a graph theoretical condition that is equivalent to the EFX$_0$ condition, which \textsc{FindEFXOrientObj} exploits. The algorithm first finds the set $\mathcal{K}$ of negative components of $G^o$ on Line~\ref{line:2.find-neg-comps}. If some negative component $K$ contains more than $|V(K)|$ negative edges, the algorithm concludes that $G^o$ has no EFX$_0$ orientation and outputs \texttt{false}. Otherwise, every negative component $K$ has at most $|V(K)|$ negative edges, and the algorithm constructs the instance $(H, P, D)$ of \textsc{PD-Vertex-Cover} on Line~\ref{line:2.construct-HPD} defined as:
\begin{itemize}
	\item $H$ is the graph with the same vertex set as $G^o$ and the set of dummy edges of $G^o$;
	\item $P = \{K \in \mathcal{K} \mid K \text{ has exactly } |V(K)|-1 \text{ negative edges}\}$;
	\item $D$ is the set of vertices that belong to a negative component $K \in \mathcal{K}$ that has exactly $|V(K)|$ negative edges.
\end{itemize}

Using \textsc{FindPDVertexCover} (Algorithm~\ref{alg:FindPDVertexCover}) on Line~\ref{line:2.subroutine}, the algorithm tries to find a $(P,D)$-vertex cover $C$ of $H$. If successful, it uses $C$ to construct the orientation $\pi^o$ of $G^o$  defined below and outputs it. Otherwise, it outputs \texttt{false}.

We give the construction of $\pi^o$. Because $C$ is a $(P, D)$-vertex cover of $H$, each edge $e$ of $H$ (equivalently, each dummy edge of $G^o$) is incident to a vertex $i \in C$. Orient $e$ toward $i$, with arbitrary tie-breaking when two such $i$ exist.

We now orient the negative edges of $G^o$ in such a way that ensures each vertex receives at most one negative edge. This can be done as follows for each negative component $K \in \mathcal{K}$. If $K$ has exactly $|V(K)|$ negative edges, then the negative edges of $K$ have the form of a spanning tree of $K$ with an extra edge, so there is a unique cycle $O$ of negative edges of $K$. By orienting $O$ cyclically, any remaining unoriented negative edges of $K$ form trees branching from $O$, so orienting them in the direction away from $O$ ensures each vertex of $K$ receives exactly one negative edge.

Otherwise, $K$ has exactly $|V(K)|-1$ negative edges. Because $C$ is a $(P, D)$-vertex cover of $H$, it contains at most one vertex from each $K \in \mathcal{K}$, so at most one vertex of $K$ received a dummy edge in the previous step. Choose $r$ to be such a vertex if it exists, or an arbitrary vertex of $K$ otherwise. Orient the negative edges of $K$ in the direction away from $r$, so that they form a directed spanning tree of $K$ rooted at $r$, ensuring that each vertex of $K$ receives at most one negative edge.

\begin{algorithm}%[t!]
	\caption{$\textsc{FindEFXOrientObj}(G^o,u^o)$}
	\label{alg:FindEFXOrientObj}
	
	\hspace*{\algorithmicindent} \textbf{Input:} An instance $(G^o, u^o)$ of \textsc{EFX$_0$-Orientation-Objective}.
	
	\hspace*{\algorithmicindent} \textbf{Output:}
	An EFX$_0$ orientation $\pi^o$ of $G^o$ if one exists and \texttt{false} otherwise.
	\begin{algorithmic}[1] %[1] enables line numbers
		%	\Procedure{PDVertexCover}{$G^o, u^o$}
		\State $\mathcal{K} \gets \{K \mid K \text{ is a negative component of } G^o \}$  \label{line:2.find-neg-comps}
		
		%\State \texttt{unvisited} $\gets V(G^o)$
		%\While{\texttt{unvisited} $\neq \emptyset$}
		
		%\State $v \gets$ any vertex in \texttt{unvisited}
		%\State $K \gets$ the set of vertices of $G^o$ reachable via BFS from $v$, when only negative edges are considered.
		%\State $\mathcal{K} \gets \mathcal{K} \cup \{K\}$
		%\State \texttt{unvisited} $\gets$ %\texttt{unvisited} $\setminus K$
		
		%\EndWhile \label{line:2.end-find-neg-comps}
		
		\If{$\exists K \in \mathcal{K}: K$ contains $>|V(K)|$ negative edges}
		\State \Return \texttt{false} \label{line:2.return-false-1}
		\EndIf
		
		\State $(H, P, D) \gets $ the instance of \textsc{PD-Vertex-Cover} defined above \label{line:2.construct-HPD}
		
		%\State $H \gets$ graph with vertex set $V(G^o)$ \label{line:2.start-construct} and edge set $\{e \in E(G^o):e \text { is a dummy edge}\}$
		%\State $P \gets \emptyset$; $D \gets \emptyset$
		
		%\For {each negative component $K$ of $G^o$}\label{line:2.start-for-1}
		%\If{$K$ has $>|V(K)|$ negative edges} 
		%\State \Return \texttt{false} \label{line:2.return-false-1}
		%\ElsIf{$K$ has $|V(K)|$ negative edges}
		%\State $D \gets D \cup V(K)$
		%\Else \Comment{$K$ has $|V(K)|-1$ negative edges}
		%\State $P \gets P \cup \{V(K)\}$
		%\EndIf
		%\EndFor\label{line:2.end-construct}
		
		\State $C \gets \textsc{FindPDVertexCover}(H,P,D)$ \label{line:2.subroutine} \Comment{Algorithm~\ref{alg:FindPDVertexCover}}
		
		\If{$C = \texttt{false}$}
		\State \Return \texttt{false} \label{line:2.return-false-2}
		\EndIf
		
		\State \Return the orientation $\pi^o$ of $G^o$ defined above \label{line:2.construct-pio} \label{line:2.return}
		%	\EndProcedure
	\end{algorithmic}
\end{algorithm}

\begin{lemma}\label{lemma:reduc:FindEFXOrientObj}
	The following statements hold for \textsc{FindEFXOrientObj} (Algorithm~\ref{alg:FindEFXOrientObj}):
	\begin{enumerate}
		\item If $G^o$ has an EFX$_0$ orientation, then no negative component $K$ of $G^o$ contains $>|V(K)|$ negative edges and $H$ has a $(P,D)$-vertex cover.
		\item If $C$ is a $(P,D)$-vertex cover of $H$, then $\pi^o$ is an EFX$_0$ orientation of $G^o$.
		\item \textsc{FindEFXOrientObj} runs in $O(|V(G^o)|^2)$ time.
	\end{enumerate}
\end{lemma}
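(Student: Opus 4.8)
The plan is to prove all three parts from the structural characterization in Proposition~\ref{prop:one-neg-or-all-dummies}, which states that an orientation is EFX$_0$ exactly when each vertex either receives a unique edge or receives only dummy edges. The single recurring observation driving everything is that both endpoints of a negative edge lie in the same negative component (they are joined by a negative path of length one), so within a negative component $K$ every negative edge is oriented toward a vertex of $K$; and since an EFX$_0$ orientation lets each vertex absorb at most one negative edge, the map sending a negative edge of $K$ to its target vertex is injective. Combined with connectivity, this forces every negative component to have either $|V(K)|-1$ or $|V(K)|$ negative edges once the explicit check has passed.

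For part (1), assume $\pi^o$ is an EFX$_0$ orientation. Claim (a) is immediate from the injectivity above: the negative edges of $K$ map to distinct vertices of $K$, so there are at most $|V(K)|$ of them. For claim (b), I would take $C$ to be the set of vertices toward which $\pi^o$ orients at least one dummy edge. This $C$ covers every edge of $H$ since each dummy edge is oriented toward one of its endpoints. A vertex receiving a negative edge receives a unique edge by Proposition~\ref{prop:one-neg-or-all-dummies}, hence no dummy edge, so it lies outside $C$. In a component with $|V(K)|$ negative edges every vertex receives a negative edge, so $C$ misses all of them, yielding $C \cap D = \emptyset$; in a component with $|V(K)|-1$ negative edges exactly one vertex receives no negative edge, so $C$ meets $K$ in at most that vertex, yielding $|C \cap P_\ell| \le 1$.

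For part (2), assume $C$ is a $(P,D)$-vertex cover and verify the condition of Proposition~\ref{prop:one-neg-or-all-dummies} for the constructed $\pi^o$. I would first record two facts: only vertices of $C$ can receive dummy edges, since each dummy edge is oriented toward a cover vertex; and the negative-edge orientation gives each vertex at most one negative edge (exactly one in a unicyclic component, and all but the root in a tree component). It then remains to forbid any vertex from receiving both a negative and a dummy edge. In a $D$-component every vertex receives a negative edge, and $C \cap D = \emptyset$ rules out dummy edges there. In a $P$-component, $|C \cap P_\ell| \le 1$ forces at most one vertex to receive a dummy edge, and the construction selects the root $r$ to be precisely that vertex; then $r$ receives no negative edge, while every non-root vertex lies outside $C$ and so receives no dummy edge. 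Hence each vertex receives a unique edge or only dummy edges, and $\pi^o$ is EFX$_0$.

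For part (3), I would tally the steps: computing the negative components and counting their negative edges by graph search, constructing $(H,P,D)$, invoking \textsc{FindPDVertexCover} (which costs $O(|V(H)|^2) = O(|V(G^o)|^2)$ by Lemma~\ref{lemma:reduc:FindPDVertexCover}(3)), and building $\pi^o$ by orienting dummy edges toward cover vertices and orienting each negative component away from its unique cycle or its chosen root. On a simple graph each step is $O(|V(G^o)| + |E(G^o)|) = O(|V(G^o)|^2)$, giving the stated bound. The main obstacle I anticipate is the careful bookkeeping in part (2): establishing that the root $r$ selected in a tree component simultaneously absorbs the at most one dummy edge arriving in that component and avoids receiving any negative edge, so that no vertex is left holding both a negative and a dummy edge. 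The subtlety is that a cover vertex need not actually receive a dummy edge, so one must argue that whenever some vertex of $K$ does receive a dummy edge it is the unique cover vertex of $K$ and hence exactly the root.
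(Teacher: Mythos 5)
Your proposal is correct and matches the paper's own proof essentially step for step: part (1) uses the same witness cover (vertices receiving a dummy edge) with the same injectivity/pigeonhole count, part (2) verifies the condition of Proposition~\ref{prop:one-neg-or-all-dummies} with the same case split and the same resolution of the root subtlety you flag (dummy edges land only on cover vertices, the at most one such vertex per tree component is chosen as the root, and the root receives no negative edge), and part (3) tallies the same costs dominated by \textsc{FindPDVertexCover}. No substantive differences to report.
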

\begin{proof}
	(1): Suppose $G^o$ has an EFX$_0$ orientation $\pi^o$. First, we show that no negative component $K$ of $G^o$ contains $>|V(K)|$ negative edges. Fix a negative component $K$. By Proposition~\ref{prop:one-neg-or-all-dummies}, for each vertex $i$, the orientation $\pi^o$ contains a unique edge directed toward $i$ or every edge directed toward $i$ is a dummy edge. In particular, this is true for each vertex in $K$, so $\pi^o$ contains at most $|V(K)|$ negative edges directed toward a vertex of $K$. On the other hand, every negative edge of $K$ is directed toward a vertex of $K$ because $\pi^o$ is an orientation. It follows that $K$ contains at most $|V(K)|$ negative edges.
	
	We claim that the set $C$ of vertices $i$ of $G^o$ such that $\pi^o$ contains a dummy edge directed toward $i$ is a $(P, D)$-vertex cover of $H$. First we show that each edge of $H$ is incident to at least one vertex in $C$. Let $e$ be an edge of $H$. By the definition of $H$, the edge $e$ is a dummy edge of $G^o$. Let $i$ be the endpoint toward which $e$ is directed in the orientation $\pi^o$ of $G^o$. The definition of $C$ implies $i \in C$, so $e$ is incident to at least one vertex in $C$.
	
	Next, we show that $|C \cap K_\ell| \leq 1$ for each $K_\ell \in P$. Fix any negative component $K_\ell$ of $G^o$. Since $K_\ell$ is a negative component, it contains at least $|K_\ell|-1$ negative edges. Since $\pi^o$ is EFX$_0$, Proposition~\ref{prop:one-neg-or-all-dummies} implies that for each vertex $i$, the orientation $\pi^o$ contains a unique edge directed toward $i$ or every edge directed toward $i$ is a dummy edge. Since $K_\ell$ contains at least $|K_\ell|-1$ negative edges and each vertex of $K$ can have at most one negative edge directed towards it, the pigeonhole principle implies that $K_\ell$ contains at most one vertex to which dummy edges are directed in $\pi^o$. In other words, at most one vertex of $K_\ell$ is in $C$, i.e., $|C \cap K_\ell| \leq 1$.
	
	Finally, we show that $C \cap D = \emptyset$. Suppose $i \in D$. Because $i \in D$, the negative component $K$ to which $i$ belongs contains $|V(K)|$ negative edges. Proposition~\ref{prop:one-neg-or-all-dummies} implies that for each vertex $j$, the orientation $\pi^o$ contains a unique edge directed toward $j$ or every edge directed toward $j$ is a dummy edge. In particular, each vertex of $K$ has at most one negative edge directed toward it. By the pigeonhole principle, each vertex of $K$ has exactly one negative edge of $K$ directed toward it. In particular, this is true for $i$, so Proposition~\ref{prop:one-neg-or-all-dummies} implies no dummy edge is directed toward $i$. Thus, $i \notin C$. It follows that $C \cap D = \emptyset$ and $H$ indeed has a $(P,D)$-vertex cover.
	
	(2): Suppose the algorithm finds a $(P, D)$-vertex cover $C$ of $H$ on Line~\ref{line:2.subroutine}. First, we show that $\pi^o$ indeed orients every edge of $G^o$. Recall that $H$ is the graph with the same vertex set as $G^o$ and the set of dummy edges of $G^o$. Since $C$ is a $(P, D)$-vertex cover of $H$, each edge of $H$ (thereby each dummy edge of $G^o$) is incident to at least one vertex in $C$. It follows that every dummy edge of $G^o$ is oriented by $\pi^o$. On the other hand, let $e$ be a negative edge of $G^o$ and $K$ be the negative component containing $e$. Since the algorithm successfully finds a $(P, D)$-vertex cover $C$ of $H$ on Line~\ref{line:2.subroutine}, it does not terminate on Line~\ref{line:2.return-false-1}. So, $K$ contains exactly $|V(K)|$ or $|V(K)|-1$ negative edges, and $\pi^o$ orients all of them. It follows that $\pi^o$ indeed  orients every edge of $G^o$.
	
	It remains to show that $\pi^o$ is EFX$_0$. We show that the condition in Proposition~\ref{prop:one-neg-or-all-dummies} holds for every vertex $i$. Let $C' \subseteq C$ be the set of vertices with a dummy edge directed toward it in $\pi^o$ and consider any $i \in C'$. Since $C$ is a $(P, D)$-vertex cover of $H$, we have $C \cap D = \emptyset$, so $i \notin D$. Because $D$ is the set of vertices that belong to a negative component $K$ of $G^o$ that has exactly $|V(K)|$ negative edges, $i$ belongs to a negative component $K$ of $G^o$ containing exactly $|V(K)|-1$ negative edges. As discussed while defining $\pi^o$, at most one vertex of such a negative component $K$ could have received a dummy edge, which clearly is the vertex $i$. So, $\pi^o$ orients the negative edges of $K$ in the direction away from $i$, so that they form a spanning tree of $K$ rooted at $i$. Hence, every edge directed toward $i$ is a dummy edge. it follows that the condition in Proposition~\ref{prop:one-neg-or-all-dummies} holds for every $i \in C'$.
	
	On the other hand, the definition of $\pi^o$ ensures each vertex receives at most one negative edge. In particular, each vertex $i \notin C'$ receives no dummy edges and at most one negative edge, so the condition in Proposition~\ref{prop:one-neg-or-all-dummies} holds for every $i \notin C'$ as well. By Proposition~\ref{prop:one-neg-or-all-dummies}, $\pi^o$ is EFX$_0$.
	
	(3): Finding the set $\mathcal{K}$ of negative components of $G^o$ on Line~\ref{line:2.find-neg-comps} can be done using a BFS using $O(|V(G^o)| + |E(G^o)|)$ time. Constructing $(H,P,D)$ on Lines~\ref{line:2.construct-HPD} requires counting the number of negative edges in each negative component $K \in \mathcal{K}$, which can be done in $O(|E(G^o)|)$ time. By Lemma~\ref{lemma:reduc:FindPDVertexCover}(3), \textsc{FindPDVertexCover} on Line~\ref{line:2.subroutine} runs in $O(|V(H)|^2)$ time.
	
	Now consider the construction of $\pi^o$ on Line~\ref{line:2.construct-pio}. Orienting the dummy edges of $G^o$ can be done by iterating through the vertices of $C \subseteq V(H)$ and checking their incident dummy edges, taking $O(|V(H)|^2)$ time.
	
	We show how to orient the negative edges of $G^o$ using at most two BFSs for each negative component $K \in \mathcal{K}$. If $K$ has $|V(K)|$ negative edges, then the negative edges of $K$ take the form of a spanning tree of $K$ together with an extra edge, i.e., they form a unique cycle $O$ and the set of negative edges not on $O$ induce a forest. In order to orient them in a way so that each vertex of $K$ receives exactly one negative edge, we first use BFS to find the cycle $O$ and orient it cyclically. Then, by identifying the vertices of $O$ together as a single {\em supervertex} and doing a BFS starting at it, we can orient the remaining negative edges of $K$ away from the cycle $O$. This ensures each vertex of $K$ on $O$ receives exactly one negative edge of $K$, and each vertex of $K$ not on $O$ also receives exactly one negative edge. Together, this takes two BFSs on $K$.
	
	Otherwise $K$ has only $|V(K)|-1$ negative edges. Verifying if $K$ contains a vertex $r$ that has received a dummy edge in the previous step takes $O(|V(K)|)$ time. Orienting the the negative edges of $K$ as a directed rooted tree can be done using a single BFS on $K$, proving our claim that orienting the negative edges of $G^o$ can be done using at most two BFSs for each negative component $K \in \mathcal{K}$.
	
	Thus, orienting the negative edges of $G^o$ takes $O(|V(G^o)| + |E(G^o)|)$ time. By summing the above and observing that $|V(H)| = |V(G^o)|$, we conclude that \textsc{FindEFXOrientObj} runs in $O(|V(G^o)|^2)$ time.
\end{proof}

\begin{lemma}\label{lemma:FindEFXOrientObj}
	Given an instance $(G^o, u^o)$ of \textsc{EFX$_0$-Orientation-Objective}, \textsc{FindEFXOrientObj} outputs an EFX$_0$ orientation $\pi^o$ of $G^o$ if one exists and \normalfont{\texttt{false}} otherwise in $O(|V(G^o)|^2)$ time.
\end{lemma}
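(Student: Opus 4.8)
The plan is to derive Lemma~\ref{lemma:FindEFXOrientObj} as a straightforward corollary by chaining together the three parts of Lemma~\ref{lemma:reduc:FindEFXOrientObj} with Lemma~\ref{lemma:reduc:FindPDVertexCover}, establishing correctness in both directions and then citing the running-time bound. No new combinatorial insight is needed beyond what is already contained in those lemmas; the work is entirely in tracking the control flow of \textsc{FindEFXOrientObj} (Algorithm~\ref{alg:FindEFXOrientObj}) so that every return statement is accounted for.

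For the forward direction, I would assume $G^o$ admits an EFX$_0$ orientation and argue the algorithm cannot output \texttt{false}. By Lemma~\ref{lemma:reduc:FindEFXOrientObj}(1), no negative component $K$ has more than $|V(K)|$ negative edges, so the algorithm does not return on Line~\ref{line:2.return-false-1}; the same part guarantees that the constructed instance $(H,P,D)$ has a $(P,D)$-vertex cover. By Lemma~\ref{lemma:reduc:FindPDVertexCover}(1) the associated \textsc{2SAT} instance is then satisfiable, so by Lemma~\ref{lemma:reduc:FindPDVertexCover}(2) the call to \textsc{FindPDVertexCover} on Line~\ref{line:2.subroutine} returns an actual $(P,D)$-vertex cover $C$ rather than \texttt{false}, and the algorithm does not return on Line~\ref{line:2.return-false-2}. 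Finally, Lemma~\ref{lemma:reduc:FindEFXOrientObj}(2) certifies that the orientation $\pi^o$ built from $C$ and returned is indeed EFX$_0$.

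For the reverse direction, I would observe that whenever \textsc{FindEFXOrientObj} returns something other than \texttt{false}, it has obtained a vertex cover $C$ from \textsc{FindPDVertexCover}; Lemma~\ref{lemma:reduc:FindPDVertexCover}(2) guarantees this $C$ really is a $(P,D)$-vertex cover of $H$, whence Lemma~\ref{lemma:reduc:FindEFXOrientObj}(2) shows the returned $\pi^o$ is an EFX$_0$ orientation. Contrapositively, if $G^o$ has no EFX$_0$ orientation then the algorithm cannot return a valid orientation and hence must return \texttt{false}. The running-time claim is immediate from Lemma~\ref{lemma:reduc:FindEFXOrientObj}(3). The only point requiring care --- and the closest thing to an obstacle --- is confirming that the two \texttt{false} exits and the single successful exit are mutually exhaustive and correctly aligned with the existence of an EFX$_0$ orientation, rather than any substantive mathematical difficulty.
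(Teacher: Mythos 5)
Your proposal is correct and follows essentially the same route as the paper: both directions are obtained by chaining Lemma~\ref{lemma:reduc:FindEFXOrientObj}(1)--(2) with Lemma~\ref{lemma:reduc:FindPDVertexCover}, and the running time is cited from Lemma~\ref{lemma:reduc:FindEFXOrientObj}(3). The only cosmetic difference is that you handle the negative case by arguing soundness of any non-\texttt{false} output and taking the contrapositive, whereas the paper argues directly that when no EFX$_0$ orientation exists, each of the two \texttt{false} exits is triggered appropriately (using the contrapositive of Lemma~\ref{lemma:reduc:FindEFXOrientObj}(2) to rule out a $(P,D)$-vertex cover); these are logically equivalent formulations of the same argument.
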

\begin{proof}
	If $G^o$ has an EFX$_0$ orientation, then no negative component $K$ of $G^o$ contains $>|V(K)|$ negative edges and $H$ has a $(P,D)$-vertex cover by Lemma~\ref{lemma:reduc:FindEFXOrientObj}(1). So, Line~\ref{line:2.subroutine} finds a $(P, D)$-vertex cover $C$ of $H$ by Lemma~\ref{lemma:reduc:FindPDVertexCover} and \textsc{FindEFXOrientObj} outputs the EFX$_0$ orientation $\pi^o$ by Lemma~\ref{lemma:reduc:FindEFXOrientObj}(2).
	
	Otherwise, $G^o$ does not have an EFX$_0$ orientation. If some negative component $K$ of $G^o$ contains $>|V(K)|$ negative edges, then \textsc{FindEFXOrientObj} outputs \texttt{false} on Line~\ref{line:2.return-false-1}, Assume every negative component $K$ of $G^o$ contains at most $|V(K)|$ negative edges, so \textsc{FindEFXOrientObj} constructs the instance $(H, P, D)$ of \textsc{PD-Vertex-Cover} on Line~\ref{line:2.construct-HPD} and attempts to find a $(P, D)$-vertex cover $C$ of $H$ on Line~\ref{line:2.subroutine}. By the contrapositive of Lemma~\ref{lemma:reduc:FindEFXOrientObj}(2), such a $C$ does not exist, so \textsc{FindEFXOrientObj} outputs \texttt{false} on Line~\ref{line:2.return-false-2}.
\end{proof}

\subsection{\textsc{FindEFXOrientation} (Algorithm~\ref{alg:FindEFXOrientation})}

\textsc{FindEFXOrientation} (Algorithm~\ref{alg:FindEFXOrientation}) accepts an instance $(G,U)$ of \textsc{EFX$_0$-Orientation} as input and outputs an EFX$_0$ orientation of $G$ if one exists and \texttt{false} otherwise.

It first constructs the instance $(G^o, u^o)$ of \textsc{EFX$_0$-Orientation-Objective} by subdividing each non-objective edge of $G$. Specifically, for each non-objective edge $e_{ij} = \{i, j\}$, first assume $u_i(e_{ij}) = 0$ without loss of generality by possibly exchanging the labels of $i$ and $j$, then replace $e_{ij}$ by a new vertex $k$ and two new edges $e_{ik} = \{i, k\}, e_{jk} = \{j, k\}$. The utilities of the new edges are defined as shown in Figure~\ref{fig:subdivide}, in which $\beta \coloneqq u_j(e_{ij})$. The objective edges and their utilities remain unchanged.

It then uses \textsc{FindEFXOrientObj} (Algorithm~\ref{alg:FindEFXOrientObj}) on Line~\ref{line:3.subroutine} to try to find an EFX$_0$ orientation of $G^o$. If none exists, it outputs \texttt{false} on Line~\ref{line:3.return-false}. Otherwise, an EFX$_0$ orientation $\pi^o$ of $G^o$ is successfully found. Using $\pi^o$, the algorithm constructs and outputs the orientation $\pi$ of $G$ defined as follows.

Let $e_{ij}$ be an edge of $G$. If $e_{ij}$ was not subdivided in the construction of $G^o$, then it is also an edge in $G^o$. In this case, $\pi$ orients $e_{ij}$ in $G$ in the same way as $\pi^o$ orients $e_{ij}$ in $G^o$.

Otherwise, $e_{ij}$ was subdivided in the construction of $G^o$, and corresponds to two edges $e_{ik}, e_{jk}$ of $G^o$, where $u_i^o(e_{ik}) = u_i(e_{ij}) = 0$. In this case, $\pi$ orients $e_{ij}$ in $G$ in the same direction as $\pi^o$ orients $e_{ik}$ in $G^o$ (with respect to Figure~\ref{fig:subdivide}). That is, if $\pi^o$ orients $e_{ik}$ toward $i$ in $G^o$, then $\pi$ orients $e_{ij}$ toward $i$ in $G$. Otherwise, $\pi^o$ orients $e_{ik}$ toward $k$ in $G^o$, and $\pi$ orients $e_{ij}$ toward $j$ in $G$.

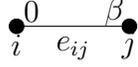
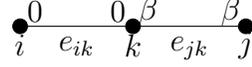
\begin{figure}
	\centering
	\begin{subfigure}{0.45\textwidth}
		\centering
		\begin{tikzpicture}
			\node[redvertex] (i) at (0, 0) {};
			\node[redvertex] (j) at (1.5, 0) {};
			
			\node (li) at (0, -0.25) {$i$};
			\node (lj) at (1.5, -0.25) {$j$};
			
			\node (l1) at (0.2, 0.2) {$0$};
			\node (l2) at (1.3, 0.2) {$\beta$};
			
			\draw (i) to node [below] {$e_{ij}$} (j);
	\end{tikzpicture}
	\caption{[the edge $e_{ij}$ in $G$}
	\end{subfigure}
	\begin{subfigure}{0.45\textwidth}
		\centering
		\begin{tikzpicture}
			\node[redvertex] (i) at (0, 0) {};
			\node[redvertex] (j) at (3, 0) {};
			\node[redvertex] (k) at (1.5, 0) {};
			
			\node (li) at (0, -0.25) {$i$};
			\node (lk) at (1.5, -0.25) {$k$};
			\node (lj) at (3, -0.25) {$j$};
			
			\node (l1) at (0.2, 0.2) {$0$};
			\node (l2) at (1.3, 0.2) {$0$};
			\node (l3) at (1.7, 0.2) {$\beta$};
			\node (l4) at (2.8, 0.2) {$\beta$};
			
			\draw (i) to node [below] {$e_{ik}$} (k);
			\draw (k) to node [below] {$e_{jk}$} (j);
	\end{tikzpicture}
	\caption{after subdivision in $G^o$}
	\end{subfigure}
	
	%\caption{The subdivision in \textsc{FindEFXOrientation}. The labels above an edge indicate the utility the edge has to its two endpoints. We set $\alpha \coloneqq u_i(e_{ij})$ and $\beta \coloneqq u_j(e_{ij})$ for clarity.}
	\caption{Subdivision of $e_{ij}$ during the construction of $G^o$ by \textsc{FindEFXOrientation}. The labels above an edge indicate the utility the edge has to its two endpoints. We write $\beta \coloneqq u_j(e_{ij})$ for clarity.}
	\label{fig:subdivide}
\end{figure}

\begin{algorithm}%[t]
	\caption{$\textsc{FindEFXOrientation}(G,U)$}
	\label{alg:FindEFXOrientation}
	
	\hspace*{\algorithmicindent} \textbf{Input:} An instance $(G, U)$ of \textsc{EFX$_0$-Orientation}.
	
	\hspace*{\algorithmicindent} \textbf{Output:} An EFX$_0$ orientation $\pi$ of $G$ if one exists and \texttt{false} otherwise.
	
	\begin{algorithmic}[1] %[1] enables line numbers
		%\Procedure{Objectivize}{$G, u$}
		\State $(G^o, u^o) \gets$ the instance of \textsc{EFX$_0$-Orientation-Objective} defined above

		\State $\pi^o \gets \textsc{FindEFXOrientObj}(G^o, u^o)$\label{line:3.subroutine} \Comment{Algorithm~\ref{alg:FindEFXOrientObj}}
		
		\If{$\pi^o = \texttt{false}$}
		\State \Return \texttt{false}\label{line:3.return-false}
		\EndIf
		
		\State \Return $\pi \coloneqq$ the orientation of $G$ defined above \label{line:3.return-pi}
		
		%\EndProcedure
	\end{algorithmic}
\end{algorithm}

\begin{lemma}\label{lemma:reduc:FindEFXOrientation}
	The following statements hold for \textsc{FindEFXOrientation} (Algorithm~\ref{alg:FindEFXOrientation}):
	\begin{enumerate}
		\item If $G$ has an EFX$_0$ orientation, then so does $G^o$.
		\item If $\pi^o$ is an EFX$_0$ orientation of $G^o$, then $\pi$ is an EFX$_0$ orientation of $G$.
	\end{enumerate}
\end{lemma}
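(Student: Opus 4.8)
The plan is to prove both implications by setting up a utility-preserving correspondence between the bundle that each original vertex receives under $\pi$ in $G$ and the one it receives under $\pi^o$ in $G^o$, and then transferring any instance of strong envy from one orientation to the other. Throughout, write $e_{ij}$ for a non-objective edge of $G$ with $u_i(e_{ij}) = 0$ and $u_j(e_{ij}) = \beta < 0$, and recall that in $G^o$ it is replaced by a dummy edge $e_{ik}$ (with $u_i^o(e_{ik}) = u_k^o(e_{ik}) = 0$) and a negative edge $e_{jk}$ (with $u_j^o(e_{jk}) = u_k^o(e_{jk}) = \beta$). The central observation is that, from the point of view of an original vertex $v$, every edge it receives in $\pi$ has a counterpart of equal utility among the edges it receives in $\pi^o$: an objective edge is its own counterpart, a received dummy $e_{ij}$ at $v=i$ corresponds to $e_{ik}$, and a received chore $e_{ij}$ at $v=j$ corresponds to $e_{jk}$. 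The only places this can break are at the auxiliary vertices $k$ and in the degenerate configuration $e_{ik}\to i$ together with $e_{jk}\to j$, in which $j$ retains the chore even though $i$ absorbs the subdivided edge.

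For part (2), I would assume $\pi^o$ is EFX$_0$ and first dispose of the degenerate configuration: if $e_{ik}\to i$ and $e_{jk}\to j$ at some subdivided edge, then $j$ holds the negative edge $e_{jk}$, so Proposition~\ref{prop:one-neg-or-all-dummies} forces $\pi_j^o = \{e_{jk}\}$; hence $j$ receives no objective edge and $\pi$ does not give $e_{ij}$ to $j$, so $\pi_j = \emptyset$ and $j$ strongly envies no one. At every remaining original vertex the correspondence is a utility-preserving bijection, giving $u_v(\pi_v \setminus \{o\}) = u_v^o(\pi_v^o \setminus \{o'\})$ for matched edges. I would then suppose $a$ strongly envies $b$ in $\pi$ and manufacture a strong envy in $\pi^o$. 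When $a$ values $b$'s bundle at $0$, a remaining chore $c = \{a,m\}\in\pi_a$ (with $c\to a$, so $u_a(\pi_m)=0$) yields envy of $a$ toward $m$ in $G^o$ as well; when $a$ reaches $b$ through an objective edge, the same $b$ works. The delicate case is $a=j$ valuing $b=i$'s bundle at $\beta$ through a subdivided edge oriented to $i$: since $\pi_a\neq\emptyset$ excludes the degenerate configuration, that edge is in the orientation $e_{ik}\to i,\ e_{jk}\to k$, so $k$ holds $e_{jk}$ and $u_j^o(\pi_k^o)=\beta=u_a(\pi_b)$, and the envy $j$ feels toward $i$ in $G$ reappears as envy of $j$ toward $k$ in $G^o$, a contradiction.

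For part (1), I would build $\pi^o$ from an EFX$_0$ orientation $\pi$ by keeping objective edges and, for each subdivided edge, setting $e_{ik}\to i,\ e_{jk}\to k$ when $e_{ij}\to i$ (routing the chore to $k$) and $e_{ik}\to k,\ e_{jk}\to j$ when $e_{ij}\to j$. By construction each $k$ receives exactly one edge, so it strongly envies no one and the degenerate configuration is avoided outright, making the correspondence a genuine utility-preserving bijection at every original vertex. A strong envy $u\to w$ in $\pi^o$ then cannot have $u$ auxiliary (it holds a single edge), and for original $u$ the bijection converts it into a strong envy in $\pi$: toward $w$ in the objective case, toward a neighbour holding a leftover chore when $u_u^o(\pi_w^o)=0$, and toward $i$ when $w=k$ carries the chore $e_{jk}$ with $u=j$, again using that $i$ and $j$ are adjacent in $G$ precisely where $k$ mediates them in $G^o$. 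Each conclusion contradicts the EFX$_0$-ness of $\pi$.

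I expect the main obstacle to be exactly this mediation: the non-objective edge $\{i,j\}$ directly joins $i$ and $j$ in $G$, whereas in $G^o$ these vertices are separated by $k$, so the envy that $j$ directs at $i$ in $G$ (because $i$ holds an edge that is a chore for $j$) has no edge to ride on in $G^o$ and must be rerouted through $k$. Making this rerouting exact in both directions, while simultaneously ruling out the degenerate configuration through Proposition~\ref{prop:one-neg-or-all-dummies}, is the crux. A minor separate point I would address is that self-loops are necessarily objective, so a leftover chore used to witness envy toward a neighbour is never a problematic self-loop; the only self-loop chores are objective ones, which are handled directly by the single-edge clause of Proposition~\ref{prop:one-neg-or-all-dummies}.
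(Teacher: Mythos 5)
Your part (2) is correct in substance and follows the same route as the paper's proof: the three possible orientation patterns of a subdivided pair, with the degenerate pattern $e_{ik}\to i$, $e_{jk}\to j$ neutralized via Proposition~\ref{prop:one-neg-or-all-dummies}, and the envy of $j$ toward $i$ rerouted through $k$. It is actually more explicit than the paper on the utility-preserving correspondence. Two small repairs: deducing $\pi_j=\emptyset$ in the degenerate case also requires handling the \emph{other} non-objective edges at $j$ (their negative halves $e_{jk'}$ leave $j$, so each $k'$ holds a negative edge, so Proposition~\ref{prop:one-neg-or-all-dummies} forces the dummy $e_{i'k'}\to i'$ and hence $e_{i'j}\to i'$ in $\pi$); and your zero-valued witness, the far endpoint of a chore in $\pi_a$, breaks when that chore is a self-loop --- but there the target $b$ itself works, since the only edges of $G^o$ that $a$ values negatively and that $b$ can hold are objective $a$--$b$ edges, which $\pi$ and $\pi^o$ orient identically, so $u_a(\pi^o_b)=u_a(\pi_b)=0$.

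Part (1) has a genuine gap, and it sits exactly where you tried to wave it away. Your conversion of a strong envy of $u$ in $\pi^o$ toward a zero-valued bundle into a strong envy in $\pi$ requires a vertex of $G$ whose bundle is worth $0$ to $u$, and your candidate is the far endpoint of a chore retained by $u$. If every chore in $\pi_u$ is a self-loop, no such endpoint exists, and no zero-valued target need exist at all: in $G$ the chores of $u$'s non-objective edges can sit on their zero endpoints, which are real vertices, whereas in $G^o$ those chores sit on fake vertices, so in $G$ every other vertex can look bad to $u$ while in $G^o$ some vertex cannot. Concretely, let $G$ have vertices $u,x$, two self-loops $s_1,s_2$ at $u$ with $u_u(s_1)=u_u(s_2)=-1/2$, and one non-objective edge $e=\{u,x\}$ with $u_u(e)=-1$, $u_x(e)=0$. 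Orienting $e$ toward $x$ is EFX$_0$ in $G$, because $u_u(\pi_u)=-1\geq -1=u_u(\pi_x)$, so $u$ does not even envy $x$. But in $G^o$ both self-loops are forced onto $u$ while $x$ can only ever hold the dummy $e_{xk}$, so in every orientation $u_u(\pi^o_u\setminus\{s_1\})\leq -1/2 < 0 = u_u(\pi^o_x)$, and $G^o$ has no EFX$_0$ orientation whatsoever. So implication (1) is actually false for graphs with negative self-loops; no argument can close your gap without excluding them, and your dismissal ("self-loops are objective and handled by the single-edge clause of Proposition~\ref{prop:one-neg-or-all-dummies}") does not parse, since that clause cannot apply to a vertex carrying two self-loops. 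For loop-free $G$ the gap \emph{is} closable, but by a counting argument you never make: $u$ has at most one negative edge per other vertex, so if $\pi_u\setminus\{e\}$ still contains a negative edge, pigeonhole yields a vertex of $G$ holding no edge negative to $u$, i.e.\ a zero-valued target. You are in good company here: the paper's own proof of (1) glosses the same point, checking only pairs adjacent in $G^o$ and asserting that $\pi$ and $\pi^o$ "appear the same" locally, which is precisely what fails in the example above.
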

\begin{proof}
	(1): Suppose $\pi$ is an EFX$_0$ orientation of $G$. We construct the orientation $\pi^o$ of $G^o$ as follows. For each edge $e$ of $G$ that was not subdivided during the construction of $G^o$, we let $\pi^o$ orient it in $G^o$ in the same direction that $\pi$ orients it in $G$. On the other hand, suppose $e_{ik}$ and $e_{jk}$ are a pair of edges of $G^o$ that resulted from the subdivision of the edge $e_{ij}$ of $G$. If $\pi$ orients $e_{ij}$ leftward in terms of Figure~\ref{fig:subdivide} (resp.\ rightward), then $\pi^o$ orients both $e_{ik}$ and $e_{jk}$ leftward (resp.\ rightward).
	
	We claim $\pi^o$ is EFX$_0$. We refer to a vertex that is shared between both $G$ and $G^o$ as a {\em real} vertex, and a vertex that exists only in $G^o$ as a {\em fake} vertex. Fake vertices result from the subdivisions during the construction of $G^o$. Clearly, from the local perspective of each real vertex, the orientations $\pi$ and $\pi^o$ appear the same. Since no real vertex strongly envies a neighbour in $\pi$, no real vertex strongly envies a neighbour in $\pi^o$. On the other hand, the construction of $\pi^o$ ensures each fake vertex receives exactly one incident edge, so no fake vertex strongly envies a neighbour in $\pi^o$. Thus, $\pi^o$ is EFX$_0$.
	
	(2): Suppose $\pi^o$ is an EFX$_0$ orientation of $G^o$. Proposition~\ref{prop:one-neg-or-all-dummies} implies for each vertex $i$, the orientation $\pi^o$ contains a unique edge directed toward $i$ or every edge directed toward $i$ is a dummy edge. In particular, for any fake vertex $k$ in $G^o$ and the two edges $e_{ik}, e_{jk}$ incident to it, $e_{ik}$ is a dummy edge and $e_{jk}$ is a negative edge, so at most one of them is directed toward $k$ in $\pi^o$. Hence, there are only three possibilities for how pairs of fake edges are oriented in $\pi^o$ (see Figure~\ref{fig:construct-pi}).
	
	Consider the first two cases depicted in Figure~\ref{fig:construct-pi}. In Figure~\ref{fig:construct-pi}(a), both $e_{ik}$ and $e_{jk}$ are directed rightward in $\pi^o$, and $\pi$ orients the corresponding edge $e_{ij}$ rightward. In Figure~\ref{fig:construct-pi}(b), both $e_{ik}$ and $e_{jk}$ are directed leftward in $\pi^o$, and $\pi$ orients $\pi_{ij}$ leftward. In either case, the real vertices $i$ and $j$ see no difference between $\pi$ and $\pi^o$ locally in terms of the utilities they derive from the edges $e_{ij}, e_{ik}, e_{jk}$ between $i$ and $j$ that they receive, because $u_i(e_{ij}) = u_i(e_{ik})$ and $u_j(e_{ij}) = u_j(e_{jk})$.
	
	In the case depicted by Figure~\ref{fig:construct-pi}(c), the orientation $\pi$ directs the edge $e_{ij}$ toward $i$. So, $i$ sees no difference between $\pi$ and $\pi^o$ locally between $i$ and $j$, and $j$ thinks that it has given an edge of utility $u_j(e_{jk})$ to $i$ by going from $\pi^o$ to $\pi$.
	
	It follows that for each pair of adjacent vertices $i, j$ of $G$, the vertex $i$, having possibly given some of its edges in $\pi^o$ to $j$ going from $\pi^o$ to $\pi$, does not strongly envy $j$ in $\pi$ because it does not strongly envy $j$ in $\pi^o$. Thus, no vertex of $G$ strongly envies another in $\pi$, so $\pi$ is EFX$_0$.
\end{proof}

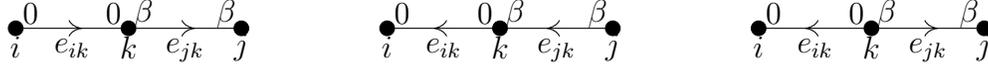
\begin{figure}
	\centering
	\begin{subfigure}{0.3\textwidth}
		\centering
		\begin{tikzpicture}
			\node[redvertex] (i) at (0, 0) {};
			\node[redvertex] (j) at (3, 0) {};
			\node[redvertex] (k) at (1.5, 0) {};
			
			\node (li) at (0, -0.25) {$i$};
			\node (lk) at (1.5, -0.25) {$k$};
			\node (lj) at (3, -0.25) {$j$};
			
			\node (l1) at (0.2, 0.2) {$0$};
			\node (l2) at (1.3, 0.2) {$0$};
			\node (l3) at (1.7, 0.2) {$\beta$};
			\node (l4) at (2.8, 0.2) {$\beta$};
			
			\draw[arc] (i) to node [below] {$e_{ik}$} (k);
			\draw[arc] (k) to node [below] {$e_{jk}$} (j);
	\end{tikzpicture}
	\end{subfigure}
	\begin{subfigure}{0.3\textwidth}
		\centering
		\begin{tikzpicture}
			\node[redvertex] (i) at (0, 0) {};
			\node[redvertex] (j) at (3, 0) {};
			\node[redvertex] (k) at (1.5, 0) {};
			
			\node (li) at (0, -0.25) {$i$};
			\node (lk) at (1.5, -0.25) {$k$};
			\node (lj) at (3, -0.25) {$j$};
			
			\node (l1) at (0.2, 0.2) {$0$};
			\node (l2) at (1.3, 0.2) {$0$};
			\node (l3) at (1.7, 0.2) {$\beta$};
			\node (l4) at (2.8, 0.2) {$\beta$};
			
			\draw[arc] (k) to node [below] {$e_{ik}$} (i);
			\draw[arc] (j) to node [below] {$e_{jk}$} (k);
	\end{tikzpicture}
	\end{subfigure}
	\begin{subfigure}{0.3\textwidth}
		\centering
		\begin{tikzpicture}
			\node[redvertex] (i) at (0, 0) {};
			\node[redvertex] (j) at (3, 0) {};
			\node[redvertex] (k) at (1.5, 0) {};
			
			\node (li) at (0, -0.25) {$i$};
			\node (lk) at (1.5, -0.25) {$k$};
			\node (lj) at (3, -0.25) {$j$};
			
			\node (l1) at (0.2, 0.2) {$0$};
			\node (l2) at (1.3, 0.2) {$0$};
			\node (l3) at (1.7, 0.2) {$\beta$};
			\node (l4) at (2.8, 0.2) {$\beta$};
			
			\draw[arc] (k) to node [below] {$e_{ik}$} (i);
			\draw[arc] (k) to node [below] {$e_{jk}$} (j);
	\end{tikzpicture}
	\end{subfigure}
	
	\caption{The three possibilities involving the fake edges $e_{ik}, e_{jk}$ in the EFX$_0$ orientation $\pi^o$ of $G^o$.}
	\label{fig:construct-pi}
\end{figure}

\mainContributionThree*
\begin{proof}
	We show \textsc{FindEFXOrientation} to be such an algorithm. If $G$ has an EFX$_0$ orientation, then $G^o$ does as well by Lemma~\ref{lemma:reduc:FindEFXOrientation}(1), so the subroutine \textsc{FindEFXOrientObj} produces such an orientation $\pi^o$ of $G^o$ by Lemma~\ref{lemma:FindEFXOrientObj}. In this case, \textsc{FindEFXOrientation} outputs $\pi$, which is an EFX$_0$ orientation of $G$ by Lemma~\ref{lemma:reduc:FindEFXOrientation}(2). Otherwise, $G$ does not have an EFX$_0$ orientation, so so $G^o$ has no EFX$_0$ orientation by Lemma~\ref{lemma:reduc:FindEFXOrientation}(2), causing the subroutine \textsc{FindEFXOrientObj} on Line~\ref{line:3.subroutine} to output \texttt{false} and \textsc{FindEFXOrientation} to also output \texttt{false}.
	
	We now analyze the running time. Constructing $(G^o, u^o)$ takes $O(|E(G)|)$ time because we subdivide each edge at most once. \textsc{FindEFXOrientObj} on Line~\ref{line:3.subroutine} takes $O(|V(G^o)|^2)$ time by Lemma~\ref{lemma:reduc:FindEFXOrientObj}(3). Since $G^o$ is constructed by subdividing each edge of $G$ at most once, and each subdivision introduces a new vertex, we have $|V(G^o)| \leq |V(G)| + |E(G)|$. Hence, $O(|V(G^o)|^2) = O((|V(G)|+|E(G)|)^2)$. Finally, constructing $\pi$ using $\pi^o$ takes $O(|E(G)|)$ time because it requires orienting each edge of $G$. Thus, \textsc{FindEFXOrientation} runs in $O((|V(G)|+|E(G)|)^2)$ time.
	%We now analyze the running time. Initializing $(G^o, u^o)$, $O$, and $S$ on Lines~\ref{line:3.start-init}--\ref{line:3.end-init} takes $O(|V(G)|^2)$ time because we copy $G$ and subdivide each edge at most once. Calling \textsc{FindEFXOrientObj} on Line~\ref{line:3.subroutine} takes $O(|V(G^o)|^2)$ time by Lemma~\ref{lemma:reduc:FindEFXOrientObj}(3). Since $|V(G^o)| \leq 2|V(G)|$, we have $O(|V(G^o)|^2) = O(|V(G)|^2)$. Finally, each of the final two for-loops on Lines~\ref{line:3.start-for-2}--\ref{line:3.end-for-2} and Lines~\ref{line:3.start-for-3}--\ref{line:3.end-for-3} takes $O(|E(G)|)$ time because $O \subseteq E(G)$ and $S \subseteq E(G)$. Thus, \textsc{FindEFXOrientation} runs in $O(|V(G)|^2)$ time.
\end{proof}

\section{EF1 and EFX$_0$ Orientations of Multigraphs}\label{section:multigraphs}

We turn our attention to multigraphs and consider the problem of deciding if a multigraph has EF1 or EFX$_0$ orientations. We show that both of these problems are NP-complete using a reduction from the NP-complete problem \textsc{Partition} \citep{karp1972reducibility}.

We present two different reductions. The first reduction (Theorem~\ref{thm:main-multi}) is relatively simple but results in a multigraph with two self-loops. The second reduction (Theorem~\ref{thm:main-no-self-loop}) is more complex but has the advantage of not using self-loops.

\mainContributionThreeC*

\begin{proof}
	To see that both of these problems are in NP, suppose one is given an orientation $\pi$ of a multigraph $G$ of chores. To verify whether $\pi$ is EF1 or EFX$_0$, one can simply verify the EF1 and EFX$_0$ conditions between each pair of vertices. This can clearly be done by looping through the set of edges a constant number of times for each pair of vertices, so these problems are in NP.
	
	We reduce an instance $S = \{s_1, s_2, \dots, s_k\}$ of \textsc{Partition} to each of the EF1 and EFX$_0$ orientation problems. We give the EF1 reduction and show how to adapt it for EFX$_0$.
	
	Let $\alpha < -\max_{i} s_i < 0$. Construct a multigraph $G$ on two vertices $a, b$ as follows. For each $s_i \in S$, create an edge $e_i$ between $a$ and $b$ and set $u_a(e_i) = u_b(e_i) = -s_i$. Create two self-loops $e_a, e_b$ at each of $a, b$, respectively, and set $u_a(e_a) = u_b(e_b) = \alpha$. Assume the utility functions are additive.
	
	Partitions $S_a \cup S_b$ of $S$ correspond to orientations of the edges between $a$ and $b$ in a natural way. Moreover, because both $a$ and $b$ receive their respective self-loop which has $\alpha < -\max_{i} s_i$ utility, the EF1 criteria requires that their envy for each other be alleviated when they ignore their respective self-loops. Thus,  $\sum_{x \in S_a} x = \sum_{x \in S_b} x$ if and only if $G$ has an EF1 orientation.
	
	To adapt the reduction for EFX$_0$, let $\alpha = 0$ instead. This results is a correct reduction because the two self-loops have minimum marginal utility if $\alpha = 0$.
\end{proof}

We now show that deciding whether EF1 orientations exist remain NP-complete even if the multigraph $G$ contains no self-loops. To do this, we again rely on a reduction from \textsc{Partition}. For any instance $S = \{s_1, s_2, \dots, s_k\}$ of \textsc{Partition}, we define a multigraph $G$ on three vertices $a, b, c$ as follows. All of the edges $G$ contains has equal utility to both endpoints (called its {\em weight}). For each $s_i \in S$, create an edge between $a$ and $b$ of weight $-s_i$. Then, between $c$ and each of $a, b$, create two edges of weight $-T \coloneqq \sum_{i \in [k]} s_i$. (See Figure~\ref{fig:multi-no-loop} for an illustration.)

Before proving the correctness of this reduction, we first prove a lemma.

\begin{lemma}\label{lemma:multigraph-reduction}
	In any EF1 orientation $\pi = (\pi_a, \pi_b, \pi_c)$ of $G$, at least one edge between $c$ and $a$ is oriented toward $a$ (and similarly between $c$ and $b$ by symmetry).
\end{lemma}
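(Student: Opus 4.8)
The plan is to argue by contradiction. Suppose $\pi = (\pi_a, \pi_b, \pi_c)$ is an EF1 orientation of $G$ in which \emph{both} edges between $c$ and $a$ are oriented toward $c$; I will show this forces $c$ to strongly envy $a$, contradicting EF1. The symmetric statement for the $c$--$b$ edges then follows by interchanging the roles of $a$ and $b$.

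First I would record the two quantities that matter from $c$'s viewpoint. Since $c$'s utility function assigns nonzero value only to edges incident to $c$, and the only such edges are the two $c$--$a$ edges and the two $c$--$b$ edges (each of weight $-T$), the bundle $\pi_c$ contributes utility to $c$ solely through these four edges. Under the contradiction hypothesis $\pi_c$ contains both $c$--$a$ edges, so additivity gives $u_c(\pi_c) \leq -2T$. On the other hand, $\pi_a$ contains no $c$--$a$ edge (both were oriented toward $c$) and no $c$--$b$ edge (those are not incident to $a$); every edge in $\pi_a$ is therefore an $a$--$b$ edge, which is worthless to $c$. Hence $u_c(\pi_a) = 0$.

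Next I would check both disjuncts of the EF1 condition between $c$ and $a$ and show each fails. For the disjunct that removes an item from $\pi_a$, every item of $\pi_a$ has zero value to $c$, so $u_c(\pi_a \setminus \{o\}) = 0$ for every $o \in \pi_a$, while $u_c(\pi_c) \leq -2T < 0$; thus $u_c(\pi_c) \geq u_c(\pi_a \setminus \{o\})$ cannot hold. For the disjunct that removes an item from $\pi_c$, every item of $\pi_c$ that $c$ values has weight exactly $-T$, so the most favourable removal yields $u_c(\pi_c \setminus \{o\}) = u_c(\pi_c) + T \leq -T < 0 = u_c(\pi_a)$, which again fails. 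Since neither disjunct holds, $\pi$ is not EF1 between $c$ and $a$, giving the desired contradiction.

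I do not expect any serious obstacle, as the lemma is essentially a counting observation. The only points requiring care are (i) confirming that in the graphical model $c$ assigns value $0$ to the $a$--$b$ edges, so that $u_c(\pi_a)=0$ and a gap of at least $2T$ genuinely cannot be closed by deleting a single chore of weight $-T$; and (ii) addressing \emph{both} EF1 disjuncts (deletion from the envied bundle as well as from the envier's own bundle), rather than only the natural chore-deletion disjunct.
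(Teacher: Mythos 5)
Your proof is correct and takes essentially the same route as the paper's: assume both $c$--$a$ edges are oriented toward $c$ and observe that $u_c(\pi_c \setminus \{o\}) \leq -T < 0 = u_c(\pi_a)$ for every possible single-item removal, so the EF1 condition fails between $c$ and $a$. The only difference is presentational: the paper's one-line proof leaves implicit the first disjunct (removing an item from $\pi_a$) and the case of removing a $c$--$b$ edge from $\pi_c$, both of which you verify explicitly.
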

\begin{proof}
	If both edges between $c$ and $a$ are oriented toward $c$, then for either one of them (call it $e$), we have $u_c(\pi_c \setminus \{e\}) \leq -T < 0 = u_c(\pi_a)$, so $\pi$ is not EF1.
\end{proof}

\mainContributionThreeD*

\begin{proof}
	We have already seen in the proof of Theorem~\ref{thm:main-no-self-loop} that this problem is in NP. We now show that $S$ has an equipartition if and only if $G$ has an EF1 orientation. Suppose $S = S_1 \cup S_2$ is an equipartition. Let $\pi$ be the orientation that orients the edges between $a$ and $b$ in the way that naturally corresponds to $S_1 \cup S_2$, the two edges between $c$ and $a$ in opposite directions (one toward $c$ and the other toward $a$), and the two edges between $c$ and $b$ in opposite directions. We claim $\pi$ is EF1.
	
	First consider the envy that $c$ experiences. It suffices to consider the envy from $c$ to $a$ by symmetry. By construction, $u_c(\pi_c) = -2T < -T = u_c(\pi_a)$. Because both edges directed toward $c$ have utility $-T$, disregarding either of them alleviates the envy from $c$ to $a$.
	
	Next, consider the envy that $a$ and $b$ experience. Again by symmetry, it suffices to consider only $a$. The edge $e$ between $a$ and $c$ directed toward $a$ has the greatest negative utility among the edges in $\pi_a$ by the definition of $T$, so the definition of EF1 requires the envy $a$ experiences to be alleviated if $e$ is disregarded. Indeed, we have $u_a(\pi_a \setminus \{e\}) = (-1/2)T = u_a(\pi_b)$ and $u_a(\pi_a \setminus \{e\}) = (-1/2)T < -T = u_a(\pi_c)$. So, $\pi$ is EF1.
	
	Conversely, suppose $S$ has no equipartition. Let $\pi$ be any orientation of $G$ and $S_1 \cup S_2$ be the partition of $S$ that corresponds to the way $\pi$ orients the edges between $a$ and $b$. More specifically, let $s_i \in S_1$ if and only if the edge between $a$ and $b$ corresponding to $s_i$ is directed toward $a$, and let $S_2 = S \setminus S_1$.
	
	If both edges between $c$ and $a$ are directed toward $c$ or if both edges between $c$ and $b$ are directed toward $b$, then $\pi$ is not EF1 by Lemma~\ref{lemma:multigraph-reduction}, so we are done. It remains to consider the case that an edge $e_{ac}$ between $c$ and $a$ is directed toward $a$ and an edge $e_{bc}$ between $c$ and $b$ is directed toward $b$. The definition of EF1 requires $-\sum_{x \in S_1} x = u_a(\pi_a \setminus \{e_{ac}\}) \leq u_a(\pi_b) = -\sum_{x \in S_2} x$. Similarly, $-\sum_{x \in S_2} x = u_b(\pi_b \setminus \{e_{bc}\}) \leq u_b(\pi_a) = -\sum_{x \in S_1} x$. Thus, $\sum_{x \in S_1} = \sum_{x \in S_1} x$, contradicting the fact that $S$ has no equipartition.
\end{proof}

\begin{figure}
	\centering
	\begin{tikzpicture}
		\node[redvertex] (a) at (-2, 0) {};
		\node[redvertex] (b) at (2, 0) {};
		\node[redvertex] (c) at (0, 3) {};
		
		\node (la) at (-2.3, 0) {$a$};
		\node (lb) at (2.3, 0) {$b$};
		\node (lc) at (0, 3.3) {$c$};

		\draw (a) to[bend left=35] node [left] {$-T$} (c);
		\draw (a) to[bend right=15] node [left] {$-T$} (c);
		
		\draw (b) to[bend left=15] node [right] {$-T$} (c);
		\draw (b) to[bend right=35] node [right] {$-T$} (c);
		
		\draw (a) to[bend left=25] node [above] {$-s_1$} (b);
		\draw (a) to node [above] {$-s_2$} (b);
		\node at (0, -0.2) {$\vdots$};
		\draw (a) to[bend right=25] node [below] {$-s_k$} (b);
	\end{tikzpicture}
	\caption{The multigraph $G$ in Theorem~\ref{thm:main-no-self-loop}. Here, $T \coloneqq \sum_{i \in [k]} s_i$.}
	\label{fig:multi-no-loop}
\end{figure}

\section{On the Additive Case}\label{section:additive}

All of our results also apply to the additive case in which every agent has an additive utility function, without any modification to the theorem statements and proofs. More generally, our results hold even if we allow each agent to have either a monotone utility function as previously described in Section~\ref{section:prelim} or an additive utility function, independently of other agents. We explain the reason below.

Our results pertaining to EF1 orientations stem from Proposition~\ref{prop:EF1-graphs}, which translates the EF1 orientation condition for each agent into a logically equivalent graph theoretical condition. Thus, any role that assumptions made on utility functions play lie entire within the proof of Proposition~\ref{prop:EF1-graphs}. It is easy to verify the correctness of this proof for additive utility functions as well.

As for EFX$_0$ orientations, Proposition~\ref{prop:one-neg-or-all-dummies} translates the EFX$_0$ orientation condition for each agent in an objective instances into an equivalent graph theoretical condition. So, we need to verify Proposition~\ref{prop:one-neg-or-all-dummies} and also the reduction of \textsc{EFX$_0$-Orientation} to \textsc{EFX$_0$-Orientation-Objective} for additive utility functions (i.e.\ Lemma~\ref{lemma:reduc:FindEFXOrientation}). Again, it is straightforward to verify the proofs of both of these results for additive utility functions.

\section{Discussion}\label{section:conclusion}

In this chapter, we determined the complexity of finding EF1 and EFX$_0$ orientations of chores in both the graph and multigraph settings. Our results resolve a conjecture due to \citet{zhou2024complete} and show a surprising contrast between goods and chores cases.

\chapter{Pareto Optimality and Orientations}\label{chapter:po-and-orientations}

We conclude this dissertation with a discussion on the relationship between Pareto optimality and orientations. This chapter was inspired by the questions raised by Nisarg Shah during the oral examination.

Throughout this chapter, we assume that agents have additive utility functions. Recall that in the additive setting, an item $o_j$ is {\em relevant} to an agent $i$ if $u_i(o_j) \neq 0$. Recall also that an {\em orientation} is an allocation in which each item is allocated to an agent to whom the item is relevant. Until now, we have only considered orientations in the context of the graphical model, which is characterized by the assumption that each item is relevant to at most two agents. However, orientations are also meaningful in the general model in which each item can be relevant to arbitrarily many agents. We consider orientations in the general model throughout this chapter, except when otherwise specified.

As briefly discussed in Subsection~\ref{subsec:technical-lemmas} in the paragraph preceding Theorem~\ref{thm:only-trivial}, PO allocations of {\em non-dummy} goods (i.e.\ goods relevant to at least one agent) are orientations. This is because if a good is not allocated to an agent to whom it is relevant, then reallocating it to such an agent results in a Pareto improvement. We state this formally below.

\begin{proposition}\label{prop:po-implies-orientation}
	Any PO allocation of non-dummy goods is an orientation. \qed
\end{proposition}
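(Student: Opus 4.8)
The plan is to prove the contrapositive in a direct, constructive fashion: assume an allocation $\pi$ of non-dummy goods is \emph{not} an orientation, and exhibit an explicit Pareto improvement, thereby showing $\pi$ is not PO. First I would unpack the negation of the definition of orientation: if $\pi$ is not an orientation, then there exists some bundle $\pi_i$ containing an item $o_j$ that is \emph{not} relevant to agent $i$, i.e.\ $u_i(o_j) = 0$ (recall relevance for additive utilities means $u_i(o_j) \neq 0$). Since $o_j$ is a non-dummy good, it is relevant to at least one agent; call such an agent $k$, so that $u_k(o_j) \neq 0$, and because $o_j$ is a good to $k$ we have $u_k(o_j) > 0$. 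Note $k \neq i$ because $u_i(o_j) = 0 \neq u_k(o_j)$.

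The key step is then to construct the improved allocation $\pi'$ by moving $o_j$ out of $\pi_i$ and into $\pi_k$, leaving every other bundle unchanged; formally $\pi'_i = \pi_i \setminus \{o_j\}$, $\pi'_k = \pi_k \cup \{o_j\}$, and $\pi'_\ell = \pi_\ell$ for all $\ell \notin \{i,k\}$. I would then verify the two conditions in the definition of Pareto improvement using additivity. For agent $i$, since $u_i(o_j) = 0$, additivity gives $u_i(\pi'_i) = u_i(\pi_i) - u_i(o_j) = u_i(\pi_i)$, so agent $i$ is unharmed. For agent $k$, additivity gives $u_k(\pi'_k) = u_k(\pi_k) + u_k(o_j) > u_k(\pi_k)$ since $u_k(o_j) > 0$, so agent $k$ is strictly better off. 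All other agents are unaffected. Thus $\pi'$ satisfies both the strict-improvement condition (witnessed by $k$) and the weak-improvement condition (for every agent), so $\pi'$ Pareto-dominates $\pi$.

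This establishes that $\pi$ is not PO whenever it fails to be an orientation, which is exactly the contrapositive of the claim. The argument is entirely routine once the definitions are in place, so I do not anticipate a genuine obstacle; the only point requiring a little care is making explicit that ``good relevant to $k$'' forces the strict inequality $u_k(o_j) > 0$ rather than merely $u_k(o_j) \neq 0$ --- this uses that $o_j$ is a good (so its marginal utility is non-negative) together with relevance (so it is nonzero). I would state this reasoning in one sentence to keep the proof self-contained. Since the paper marks this proposition with \texttt{\textbackslash qed} and omits the proof, a short paragraph covering the above suffices.
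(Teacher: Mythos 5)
Your proof is correct and is exactly the argument the paper itself gives (in the sentence preceding the proposition): a good left with an agent for whom it has zero utility can be reallocated to an agent to whom it is relevant, and since it is a good this yields a strict gain for that agent with no loss to anyone, i.e.\ a Pareto improvement. Your write-up just makes the contrapositive and the additivity bookkeeping explicit, so there is nothing to change.
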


This fact is useful as it guarantees that EF1+PO allocations and EFX$_0$+PO allocations are orientations. However, the same cannot be said for instances containing dummy goods, even for small instances containing only 4 agents. More specifically, Example~\ref{ex:po-alloc-not-orientation} shows that it is possible for a goods instance to have an EFX$_0$+PO allocation that is not an orientation.

\begin{example}[A Goods Instance with an EFX$_0$+PO Allocation but no EFX$_0$ Orientations]\label{ex:po-alloc-not-orientation}
	We construct such a graphical instance represented by a bi-valued symmetric graph $G$. Let $G$ be the complete graph $K_4$ with $V(G) = \{1, 2, 3, 4\}$. Let the edge between 1 and 2 and the edge between 3 and 4 be heavy edges of weight 1, and the remaining edges be light edges of weight 0. Assume that $u_i(e) = 0$ whenever $e$ is not incident to $i$. (See Figure~\ref{fig:no-efx-orientation-k4} for an illustration.)
	
	\begin{figure}
		\centering
		\begin{tikzpicture}
			\node[redvertex] (1) at (0,0) {};
			\node[redvertex] (2) at (0,1.5) {};
			\node[redvertex] (3) at (1.5,1.5) {};
			\node[redvertex] (4) at (1.5,0) {};
			\node (l1) at (-0.2, -0.2) {1};
			\node (l2) at (-0.2, 1.7) {2};
			\node (l3) at (1.7, 1.7) {3};
			\node (l4) at (1.7, -0.2) {4};
			
			\draw[thick] (1) to (2) (3) to (4);
			\draw[dashed] (1) to (3) (1) to (4)
				(2) to (3) (2) to (4);
		\end{tikzpicture}

		\caption{An example of a goods instances with an EFX$_0$+PO allocation but no EFX$_0$ orientation. Each solid (resp.\ dashed) edge represents a heavy (resp.\ light) edge.}
		\label{fig:no-efx-orientation-k4}
	\end{figure}
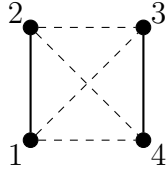
	
	To see that $G$ has an EFX$_0$+PO allocation, observe that it represents an additive 2-valued instance (i.e.\ there exist $a \geq b \geq 0$ such that for each agent $i$ and each good $o_j$, we have $u_i(o_j) \in \{a, b\}$). By Corollary~\ref{cor:efx-additive-2-valued}, MNW implies EFX$_0$+PO for such instances. Since MNW allocations always exist by definition, $G$ has an EFX$_0$+PO allocation.
	
	For an explicit example of an EFX$_0$+PO allocation, consider the allocation $\pi$ in which the two heavy edges are allocated to their upper endpoints (i.e.\ vertices 2 and 3), and all of the light edges are allocated to vertex 1. None of 2, 3, and 4 are strongly envied by any vertex because each of them is allocated at most one edge. Vertex 1 is also not strongly envied because $\pi_i(\pi_1) = 0$ for every vertex $i$. Thus, $\pi$ is EFX$_0$. To see that $\pi$ is PO, it suffices to observe that $\pi$ orients both heavy edges, and that reallocating light edges with zero weight cannot result in a Pareto improvement.
	
	We now show that $G$ has no EFX$_0$ orientation. Let $\pi$ be any orientation of $G$. By the symmetry of $G$, we can assume that the two heavy edges are directed upward toward 2 and 3 in $\pi$, respectively. Again by the symmetry of $G$, we can additionally assume that the light edge $e$ between 2 and 3 is directed leftward toward 2. We now have $u_1(\pi_1) = 0 < 1 = u_1(\pi_2 \setminus \{e\})$ for the edge $e \in \pi_2$, so $\pi$ is not EFX$_0$.
\end{example}

As for chores, the analogue of Proposition~\ref{prop:po-implies-orientation} does not hold. Specifically, PO allocation of chores are not necessarily orientations. In fact, any PO allocation necessarily allocates each chore to an agent to whom the chore has zero utility whenever such an agent exists.

Consider the goal of finding fair allocations for situations in which some chores are unsuitable for some agents. One approach is to let chores be relevant only to the agents to whom they are suitable, and to require allocations to be orientations. Doing so ensures that all {\em non-dummy} chores (i.e.\ chores relevant to at least one agent) are allocated to an agent to whom it is suitable. This is the approach that we studied in Chapter~\ref{chapter:Fair-Orientations-Chores}. Unfortunately, our results in Section~\ref{section:EF1} imply that EF1 orientations of chores do not always exist. Thus, this approach is limited in the sense that it can fail to produce a fair allocation (i.e.\ an EF1 orientation).

Another natural approach to ensure that an agent $i$ is not allocated an unsuitable chore $o_j$ is to set $u_i(o_j)$ to a sufficiently large negative value and requiring allocations to be PO. \citet{mahara2025existence} recently showed that EF1+PO allocations of chores always exist\footnote{In fact, \citet{mahara2025existence} showed the stronger fact that EF1+fPO allocations of chores always exist.}. So, in a sense, this approach is an improvement over the previous because it always produces a fair allocation (i.e.\ an EF1+PO allocation).

A natural question that arises is what types of allocations the second approach produces, in situations where the first approach fails to produce a fair allocation\footnote{This question was raised by Nisarg Shah during the oral examination.}. More precisely, let $I$ be a chores instances that does not have an EF1 orientation when unsuitable chores have zero utility. Suppose we modify $I$ by changing the utilities of the unsuitable chores to be an arbitrarily large negative value (say $-\infty$). What can be said about EF1+PO allocations of the resulting instance?

As we will see, there exists such an instance in which an EF1+PO allocation is not an orientation (Example~\ref{ex:ef1-po-not-orientation}), and another such instance in which all EF1+PO allocations are orientations (Example~\ref{ex:ef1-po-orientation}).

\begin{example}[An EF1+PO Allocation that is not an Orientation]\label{ex:ef1-po-not-orientation}
	Consider the instance represented by the following graph $G$. Assume the edge $o_5$ has $-5$ utility to both of its endpoints, and that each other edge has $-1$ utility to its endpoints. Furthermore, assume that whenever an edge is not incident to a vertex, it is {\em unsuitable} for that vertex.
	$$\begin{tikzpicture}
		\node[redvertex] (1) at (0,0) {};
		\node[redvertex] (2) at (0,2.2) {};
		\node[redvertex] (3) at (2.2,2.2) {};
		\node[redvertex] (4) at (2.2,0) {};
		\node (l1) at (-0.2, -0.2) {1};
		\node (l2) at (-0.2, 2.4) {2};
		\node (l3) at (2.4, 2.4) {3};
		\node (l4) at (2.4, -0.2) {4};
		
		\draw (1) to node[left] {$o_1$} (2);
		\draw (2) to node[above] {$o_2$} (3);
		\draw (3) to node[right] {$o_3$} (4);
		\draw (4) to node[below] {$o_4$} (1);
		\draw (1) to node[right,pos=0.45] {$o_5$} (3);
	\end{tikzpicture}$$
	
	In the first approach, we assume that whenever an edge $o$ is unsuitable for a vertex $i$, we have $u_i(o)=0$. By Proposition~\ref{prop:EF1-graphs}, an orientation $\pi$ of a graph $G$ is EF1 if and only if each vertex $i$ receives at most one edge of negative utility to it. However, such an orientation does not exist by the pigeonhole principle because there are more edges than vertices. In particular, $G$ has no EF1 orientation if we assume that unsuitable edges have zero utility.
	
	In the second approach, we assume that whenever an edge $o$ is unsuitable for a vertex $i$, we have $u_i(o) = -\infty$. We exhibit an EF1+PO allocation that is not an orientation. Let $\pi$ be the allocation that orients the edges on the 4-cycle in a counterclockwise way, and allocates the edge $o_5$ to vertex 4. In other words, $\pi$ is the allocation indicated by the bold underlined entries in the following matrix.
	$$\begin{blockarray}{cccccc}
		& o_1 & o_2 & o_3 & o_4 & o_5 \\
		\begin{block}{c[ccccc]}
			1 & \underline{\mathbf{-1}} & -\infty & -\infty&-1&-5\\
			2 & -1 & \underline{\mathbf{-1}} & -\infty&-\infty&-\infty\\
			3 & -\infty  & -1 & \underline{\mathbf{-1}}&-\infty&-5\\
			4 & -\infty & -\infty & -1 &\underline{\mathbf{-1}}&\underline{\mathbf{-\infty}}\\
		\end{block}
	\end{blockarray}$$
	
	It is straightforward to verify that $\pi$ is EF1. To see that $\pi$ is PO, observe that reassigning $o_5$ to a different vertex than 4 does not result in a Pareto improvement. This is because doing so necessarily decreases the utility of the bundle belonging to some vertex. Thus, $\pi$ is an EF1+PO allocation that is not an orientation.
\end{example}

\begin{example}[Every EF1+PO Allocation is an Orientation]\label{ex:ef1-po-orientation}
Consider the graphical instance represented by the following graph $G$. Assume that each edge has $-1$ utility to both of its endpoints. Furthermore, assume that whenever an edge is not incident to a vertex, it is {\em unsuitable} for that vertex.
	$$\begin{tikzpicture}
		\node[redvertex] (1) at (0,0) {};
		\node[redvertex] (2) at (0,2.2) {};
		\node[redvertex] (3) at (2.2,2.2) {};
		\node[redvertex] (4) at (2.2,0) {};
		\node (l1) at (-0.2, -0.2) {1};
		\node (l2) at (-0.2, 2.4) {2};
		\node (l3) at (2.4, 2.4) {3};
		\node (l4) at (2.4, -0.2) {4};
		
		\draw (1) to node[left] {$o_1$} (2);
		\draw (2) to node[above] {$o_2$} (3);
		\draw (3) to node[right] {$o_3$} (4);
		\draw (4) to node[below] {$o_4$} (1);
		\draw (1) to node[right,pos=0.65] {$o_5$} (3);
		\draw (2) to node[left,pos=0.35] {$o_6$} (4);
	\end{tikzpicture}$$
	
	In the first approach, we assume that whenever an edge $o$ is unsuitable for a vertex $i$, we have $u_i(o)=0$. By Proposition~\ref{prop:EF1-graphs}, an orientation $\pi$ of a graph $G$ is EF1 if and only if each vertex $i$ receives at most one edge of negative utility to it. By the pigeonhole principle, $G$ has no EF1 orientation because there are more edges than vertices.
	
	On the other hand, in the second approach, we assume that whenever an edge $o$ is unsuitable for a vertex $i$, we have $u_i(o) = -\infty$. Thus, the matrix representing this instance is as follows.
	$$\begin{blockarray}{ccccccc}
		& o_1 & o_2 & o_3 & o_4 & o_5 & o_6 \\
		\begin{block}{c[cccccc]}
			1 & -1 &  -\infty & -\infty   & -1 & -1 &  -\infty \\
			2 & -1 & -1 &  -\infty & -\infty &  -\infty & -1 \\
			3 & -\infty & -1 & -1 & -\infty & -1 &  -\infty \\
			4 & -\infty &  -\infty  & -1 & -1 &  -\infty  & -1 \\
		\end{block}
	\end{blockarray}$$
	
	We prove that all EF1+PO allocations are orientations by showing that in any PO allocation, no vertex receives an edge of $-\infty$ utility. 
	
	By the symmetry of $K_4$, it suffices to show that vertex 1 does not receive the edge $o_2$ in any PO allocation. Suppose for contradiction that some PO allocation $\pi$ allocates $o_2$ to 1. If $o_1 \in \pi_2 \cup \pi_3 \cup \pi_4$, then exchanging the edges $o_2 \in \pi_1$ and $o_1 \in \pi_2 \cup \pi_3 \cup \pi_4$ results in a Pareto improvement over $\pi$, which is not possible because $\pi$ is PO. So, $o_1 \in \pi_1$. For similar reasons, $o_4, o_5 \in \pi_1$. Thus, vertex 1 receives at least 4 edges, each of negative utility to 1. Since there are only 6 edges, the pigeonhole principle implies that some vertex $j \neq 1$ receives no edges. Clearly, for any edge $o \in \pi_1$, we have $u_1(\pi_1 \setminus \{o\}) < 0 = u_1(\pi_j)$, so $\pi$ is not EF1. This is a contradiction as required.
	
	For an explicit example of an EF1+PO orientation, consider any orientation in which at most 2 edges is directed toward each vertex. Such an orientation is EF1 because the EF1 condition holds for every pair $i \neq j$ because $|\pi_i| \leq 2$ and $|\pi_j| \geq 1$. Moreover, such an orientation is clearly PO.
\end{example}

% etc.

%--------------------------------------------------------------------------------------%
%								BIBLIOGRAPHY										
%--------------------------------------------------------------------------------------%

\backmatter % Makes the Bibliography an unnumbered section (and implements appropriate page breaks).

	\addtoToC{Bibliography}
	\bibliographystyle{abbrvnat} % Choose the style that you want to use (or your supervisor wants you to use).
	\bibliography{bibliography}

\begin{thebibliography}{80}
\providecommand{\natexlab}[1]{#1}
\providecommand{\url}[1]{\texttt{#1}}
\expandafter\ifx\csname urlstyle\endcsname\relax
  \providecommand{\doi}[1]{doi: #1}\else
  \providecommand{\doi}{doi: \begingroup \urlstyle{rm}\Url}\fi

\bibitem[Afshinmehr et~al.(2024)Afshinmehr, Danaei, Kazemi, Mehlhorn, and
  Rathi]{afshinmehr2024efx}
M.~Afshinmehr, A.~Danaei, M.~Kazemi, K.~Mehlhorn, and N.~Rathi.
\newblock {EFX} allocations and orientations on bipartite multi-graphs: A
  complete picture.
\newblock \emph{arXiv preprint arXiv:2410.17002}, 2024.

\bibitem[Amanatidis et~al.(2021)Amanatidis, Birmpas, Filos-Ratsikas, Hollender,
  and Voudouris]{amanatidis2021maximum}
G.~Amanatidis, G.~Birmpas, A.~Filos-Ratsikas, A.~Hollender, and A.~A.
  Voudouris.
\newblock Maximum {N}ash welfare and other stories about {EFX}.
\newblock \emph{Theoretical Computer Science}, 863:\penalty0 69--85, 2021.

\bibitem[Amanatidis et~al.(2022)Amanatidis, Birmpas, Filos-Ratsikas, Voudouris,
  et~al.]{amanatidis2022fair}
G.~Amanatidis, G.~Birmpas, A.~Filos-Ratsikas, A.~Voudouris, et~al.
\newblock Fair division of indivisible goods: A survey.
\newblock In \emph{IJCAI International Joint Conference on Artificial
  Intelligence}, pages 5385--5393. International Joint Conferences on
  Artificial Intelligence, 2022.

\bibitem[Amanatidis et~al.(2024)Amanatidis, Filos-Ratsikas, and
  Sgouritsa]{amanatidis2024pushing}
G.~Amanatidis, A.~Filos-Ratsikas, and A.~Sgouritsa.
\newblock Pushing the frontier on approximate {EFX} allocations.
\newblock In \emph{Proceedings of the 25th ACM Conference on Economics and
  Computation}, pages 1268--1286, 2024.

\bibitem[Aspvall et~al.(1979)Aspvall, Plass, and Tarjan]{aspvall1979linear}
B.~Aspvall, M.~F. Plass, and R.~E. Tarjan.
\newblock A linear-time algorithm for testing the truth of certain quantified
  boolean formulas.
\newblock \emph{Information processing letters}, 8\penalty0 (3):\penalty0
  121--123, 1979.

\bibitem[Aziz and Mackenzie(2016)]{aziz2016discrete}
H.~Aziz and S.~Mackenzie.
\newblock A discrete and bounded envy-free cake cutting protocol for any number
  of agents.
\newblock In \emph{2016 IEEE 57th Annual Symposium on Foundations of Computer
  Science (FOCS)}, pages 416--427. IEEE, 2016.

\bibitem[Aziz et~al.(2022)Aziz, Caragiannis, Igarashi, and Walsh]{aziz2022fair}
H.~Aziz, I.~Caragiannis, A.~Igarashi, and T.~Walsh.
\newblock Fair allocation of indivisible goods and chores.
\newblock \emph{Autonomous Agents and Multi-Agent Systems}, 36:\penalty0 1--21,
  2022.

\bibitem[Barman et~al.(2018{\natexlab{a}})Barman, Krishnamurthy, and
  Vaish]{barman2018finding}
S.~Barman, S.~K. Krishnamurthy, and R.~Vaish.
\newblock Finding fair and efficient allocations.
\newblock In \emph{Proceedings of the 2018 ACM Conference on Economics and
  Computation}, pages 557--574, 2018{\natexlab{a}}.

\bibitem[Barman et~al.(2018{\natexlab{b}})Barman, Krishnamurthy, and
  Vaish]{barman2018greedy}
S.~Barman, S.~K. Krishnamurthy, and R.~Vaish.
\newblock Greedy algorithms for maximizing {N}ash social welfare.
\newblock In \emph{Proceedings of the 17th International Conference on
  Autonomous Agents and MultiAgent Systems}, pages 7--13, 2018{\natexlab{b}}.

\bibitem[B{\'e}rczi et~al.(2024)B{\'e}rczi, B{\'e}rczi-Kov{\'a}cs, Boros,
  Gedefa, Kamiyama, Kavitha, Kobayashi, and Makino]{berczi2024envy}
K.~B{\'e}rczi, E.~R. B{\'e}rczi-Kov{\'a}cs, E.~Boros, F.~T. Gedefa,
  N.~Kamiyama, T.~Kavitha, Y.~Kobayashi, and K.~Makino.
\newblock Envy-free relaxations for goods, chores, and mixed items.
\newblock \emph{Theoretical Computer Science}, 1002:\penalty0 114596, 2024.

\bibitem[Berger et~al.(2022)Berger, Cohen, Feldman, and Fiat]{berger2022almost}
B.~Berger, A.~Cohen, M.~Feldman, and A.~Fiat.
\newblock Almost full {EFX} exists for four agents.
\newblock In \emph{Proceedings of the AAAI Conference on Artificial
  Intelligence}, volume~36, pages 4826--4833, 2022.

\bibitem[Bhaskar and Pandit(2024)]{bhaskar2024efx}
U.~Bhaskar and Y.~Pandit.
\newblock {EFX} allocations on some multi-graph classes.
\newblock \emph{arXiv preprint arXiv:2412.06513}, 2024.

\bibitem[Bhaskar et~al.(2021)Bhaskar, Sricharan, and
  Vaish]{bhaskar2021approximate}
U.~Bhaskar, A.~Sricharan, and R.~Vaish.
\newblock On approximate envy-freeness for indivisible chores and mixed
  resources.
\newblock \emph{Approximation, Randomization, and Combinatorial Optimization.
  Algorithms and Techniques}, 2021.

\bibitem[Bouveret and Lema{\^\i}tre(2016)]{BL16}
S.~Bouveret and M.~Lema{\^\i}tre.
\newblock Characterizing conflicts in fair division of indivisible goods using
  a scale of criteria.
\newblock \emph{Autonomous Agents and Multi-Agent Systems}, 30\penalty0
  (2):\penalty0 259--290, 2016.

\bibitem[Brustle et~al.(2020)Brustle, Dippel, Narayan, Suzuki, and
  Vetta]{brustle2020one}
J.~Brustle, J.~Dippel, V.~V. Narayan, M.~Suzuki, and A.~Vetta.
\newblock One dollar each eliminates envy.
\newblock In \emph{Proceedings of the 21st ACM Conference on Economics and
  Computation}, pages 23--39, 2020.

\bibitem[Budish(2011)]{budish2011combinatorial}
E.~Budish.
\newblock The combinatorial assignment problem: Approximate competitive
  equilibrium from equal incomes.
\newblock \emph{Journal of Political Economy}, 119\penalty0 (6):\penalty0
  1061--1103, 2011.

\bibitem[Camacho et~al.(2023)Camacho, Fonseca-Delgado, P{\'e}rez, and
  Tapia]{camacho2023generalized}
F.~Camacho, R.~Fonseca-Delgado, R.~P. P{\'e}rez, and G.~Tapia.
\newblock Generalized binary utility functions and fair allocations.
\newblock \emph{Mathematical Social Sciences}, 121:\penalty0 50--60, 2023.

\bibitem[Caragiannis et~al.(2019{\natexlab{a}})Caragiannis, Gravin, and
  Huang]{caragiannis2019envy}
I.~Caragiannis, N.~Gravin, and X.~Huang.
\newblock Envy-freeness up to any item with high {N}ash welfare: The virtue of
  donating items.
\newblock In \emph{Proceedings of the 2019 ACM Conference on Economics and
  Computation}, pages 527--545, 2019{\natexlab{a}}.

\bibitem[Caragiannis et~al.(2019{\natexlab{b}})Caragiannis, Kurokawa, Moulin,
  Procaccia, Shah, and Wang]{caragiannis2019unreasonable}
I.~Caragiannis, D.~Kurokawa, H.~Moulin, A.~D. Procaccia, N.~Shah, and J.~Wang.
\newblock The unreasonable fairness of maximum {N}ash welfare.
\newblock \emph{ACM Transactions on Economics and Computation (TEAC)},
  7\penalty0 (3):\penalty0 1--32, 2019{\natexlab{b}}.

\bibitem[Caragiannis et~al.(2022)Caragiannis, Micha, and
  Shah]{caragiannis2022little}
I.~Caragiannis, E.~Micha, and N.~Shah.
\newblock A little charity guarantees fair connected graph partitioning.
\newblock In \emph{Proceedings of the AAAI Conference on Artificial
  Intelligence}, volume~36, pages 4908--4916, 2022.

\bibitem[Chartrand et~al.(2024)Chartrand, Jordon, Vatter, and
  Zhang]{chartrand2024graphs}
G.~Chartrand, H.~Jordon, V.~Vatter, and P.~Zhang.
\newblock \emph{Graphs \& digraphs}.
\newblock Chapman and Hall/crc, 2024.

\bibitem[Chaudhury et~al.(2020)Chaudhury, Garg, and Mehlhorn]{chaudhury2020efx}
B.~R. Chaudhury, J.~Garg, and K.~Mehlhorn.
\newblock {EFX} exists for three agents.
\newblock In \emph{Proceedings of the 21st ACM Conference on Economics and
  Computation}, pages 1--19, 2020.

\bibitem[Chaudhury et~al.(2021{\natexlab{a}})Chaudhury, Garg, Mehlhorn, Mehta,
  and Misra]{chaudhury2021improving}
B.~R. Chaudhury, J.~Garg, K.~Mehlhorn, R.~Mehta, and P.~Misra.
\newblock Improving {EFX} guarantees through rainbow cycle number.
\newblock In \emph{Proceedings of the 22nd ACM Conference on Economics and
  Computation}, pages 310--311, 2021{\natexlab{a}}.

\bibitem[Chaudhury et~al.(2021{\natexlab{b}})Chaudhury, Kavitha, Mehlhorn, and
  Sgouritsa]{chaudhury2021little}
B.~R. Chaudhury, T.~Kavitha, K.~Mehlhorn, and A.~Sgouritsa.
\newblock A little charity guarantees almost envy-freeness.
\newblock \emph{SIAM Journal on Computing}, 50\penalty0 (4):\penalty0
  1336--1358, 2021{\natexlab{b}}.

\bibitem[Christodoulou and Mastrakoulis(2025)]{christodoulou2025exact}
G.~Christodoulou and S.~Mastrakoulis.
\newblock Exact and approximate maximin share allocations in multi-graphs.
\newblock \emph{arXiv preprint arXiv:2506.20317}, 2025.

\bibitem[Christodoulou et~al.(2023)Christodoulou, Fiat, Koutsoupias, and
  Sgouritsa]{christodoulou2023fair}
G.~Christodoulou, A.~Fiat, E.~Koutsoupias, and A.~Sgouritsa.
\newblock Fair allocation in graphs.
\newblock In \emph{Proceedings of the 24th ACM Conference on Economics and
  Computation}, pages 473--488, 2023.

\bibitem[Christoforidis and Santorinaios(2024)]{christoforidis2024pursuit}
V.~Christoforidis and C.~Santorinaios.
\newblock On the pursuit of {EFX} for chores: non-existence and approximations.
\newblock In \emph{Proceedings of the Thirty-Third International Joint
  Conference on Artificial Intelligence}, pages 2713--2721, 2024.

\bibitem[Cole and Gkatzelis(2015)]{cole2015approximating}
R.~Cole and V.~Gkatzelis.
\newblock Approximating the {N}ash social welfare with indivisible items.
\newblock In \emph{Proceedings of the forty-seventh annual ACM symposium on
  Theory of computing}, pages 371--380, 2015.

\bibitem[Cook(1971)]{cook1971complexity}
S.~A. Cook.
\newblock The complexity of theorem-proving procedures.
\newblock In \emph{Proceedings of the third annual ACM symposium on Theory of
  computing}, pages 151--158, 1971.

\bibitem[Darmann and Schauer(2015)]{darmann2015maximizing}
A.~Darmann and J.~Schauer.
\newblock Maximizing {N}ash product social welfare in allocating indivisible
  goods.
\newblock \emph{European Journal of Operational Research}, 247\penalty0
  (2):\penalty0 548--559, 2015.

\bibitem[Deligkas et~al.(2024)Deligkas, Eiben, Goldsmith, and
  Korchemna]{deligkas2024ef1}
A.~Deligkas, E.~Eiben, T.-L. Goldsmith, and V.~Korchemna.
\newblock {EF1} and {EFX} orientations.
\newblock \emph{arXiv preprint arXiv:2409.13616}, 2024.

\bibitem[Dubins and Spanier(1961)]{dubins1961cut}
L.~E. Dubins and E.~H. Spanier.
\newblock How to cut a cake fairly.
\newblock \emph{The American Mathematical Monthly}, 68\penalty0 (1P1):\penalty0
  1--17, 1961.

\bibitem[Ebadian et~al.(2022)Ebadian, Peters, and Shah]{ebadian2022fairly}
S.~Ebadian, D.~Peters, and N.~Shah.
\newblock How to fairly allocate easy and difficult chores.
\newblock In \emph{21st International Conference on Autonomous Agents and
  Multiagent Systems}, 2022.

\bibitem[Edmonds and Pruhs(2011)]{edmonds2011cake}
J.~Edmonds and K.~Pruhs.
\newblock Cake cutting really is not a piece of cake.
\newblock \emph{ACM Transactions on Algorithms (TALG)}, 7\penalty0
  (4):\penalty0 1--12, 2011.

\bibitem[Edward~Su(1999)]{edward1999rental}
F.~Edward~Su.
\newblock Rental harmony: Sperner's lemma in fair division.
\newblock \emph{The American mathematical monthly}, 106\penalty0 (10):\penalty0
  930--942, 1999.

\bibitem[Even and Paz(1984)]{even1984note}
S.~Even and A.~Paz.
\newblock A note on cake cutting.
\newblock \emph{Discrete Applied Mathematics}, 7\penalty0 (3):\penalty0
  285--296, 1984.

\bibitem[Feige(2022)]{feige2022maximin}
U.~Feige.
\newblock Maximin fair allocations with two item values.
\newblock \emph{Unpublished Manuscript}, 2022.

\bibitem[Feige et~al.(2021)Feige, Sapir, and Tauber]{feige2021tight}
U.~Feige, A.~Sapir, and L.~Tauber.
\newblock A tight negative example for {MMS} fair allocations.
\newblock In \emph{International Conference on Web and Internet Economics},
  pages 355--372. Springer, 2021.

\bibitem[Garg and Murhekar(2023)]{garg2023computing}
J.~Garg and A.~Murhekar.
\newblock Computing fair and efficient allocations with few utility values.
\newblock \emph{Theoretical Computer Science}, 962:\penalty0 113932, 2023.

\bibitem[Garg et~al.(2018)Garg, Hoefer, and Mehlhorn]{garg2018approximating}
J.~Garg, M.~Hoefer, and K.~Mehlhorn.
\newblock Approximating the {N}ash social welfare with budget-additive
  valuations.
\newblock In \emph{Proceedings of the Twenty-Ninth Annual ACM-SIAM Symposium on
  Discrete Algorithms}, pages 2326--2340. SIAM, 2018.

\bibitem[Garg et~al.(2022{\natexlab{a}})Garg, Husic, Murhekar,
  et~al.]{garg2022tractable}
J.~Garg, E.~Husic, A.~Murhekar, et~al.
\newblock Tractable fragments of the maximum {N}ash welfare problem.
\newblock \emph{Web and Internet Economics (WINE)}, 2022{\natexlab{a}}.

\bibitem[Garg et~al.(2022{\natexlab{b}})Garg, Murhekar, and Qin]{garg2022fair}
J.~Garg, A.~Murhekar, and J.~Qin.
\newblock Fair and efficient allocations of chores under bivalued preferences.
\newblock In \emph{Proceedings of the AAAI Conference on Artificial
  Intelligence}, volume~36, pages 5043--5050, 2022{\natexlab{b}}.

\bibitem[Garg et~al.(2023)Garg, Murhekar, and Qin]{garg2023new}
J.~Garg, A.~Murhekar, and J.~Qin.
\newblock New algorithms for the fair and efficient allocation of indivisible
  chores.
\newblock In \emph{IJCAI}, 2023.

\bibitem[Gourv{\`e}s et~al.(2014)Gourv{\`e}s, Monnot, and
  Tlilane]{gourves2014near}
L.~Gourv{\`e}s, J.~Monnot, and L.~Tlilane.
\newblock Near fairness in matroids.
\newblock In \emph{ECAI}, volume~14, pages 393--398, 2014.

\bibitem[Hall(1935)]{hall1935representatives}
P.~Hall.
\newblock On representatives of subsets.
\newblock \emph{Journal of the London Mathematical Society}, 1\penalty0
  (1):\penalty0 26--30, 1935.

\bibitem[Halpern and Shah(2019)]{halpern2019fair}
D.~Halpern and N.~Shah.
\newblock Fair division with subsidy.
\newblock In \emph{International Symposium on Algorithmic Game Theory}, pages
  374--389. Springer, 2019.

\bibitem[Hosseini et~al.(2023)Hosseini, Sikdar, Vaish, and
  Xia]{hosseini2023fairly}
H.~Hosseini, S.~Sikdar, R.~Vaish, and L.~Xia.
\newblock Fairly dividing mixtures of goods and chores under lexicographic
  preferences.
\newblock In \emph{Proceedings of the International Joint Conference on
  Autonomous Agents and Multiagent Systems, AAMAS}, volume 2023, pages
  152--160, 2023.

\bibitem[Hsu()]{hsu2024efxmanuscript}
K.~Hsu.
\newblock {EFX} orientations of multigraphs.
\newblock To appear in ECAI 2025.

\bibitem[Hsu(2024{\natexlab{a}})]{hsu2024efxarxiv}
K.~Hsu.
\newblock {EFX} orientations of multigraphs.
\newblock \emph{arXiv preprint arXiv:2410.12039}, 2024{\natexlab{a}}.

\bibitem[Hsu(2024{\natexlab{b}})]{hsu2024existence}
K.~Hsu.
\newblock Existence of {MMS} allocations of mixed manna.
\newblock In \emph{ECAI 2024}, pages 3573--3580. IOS Press, 2024{\natexlab{b}}.

\bibitem[Hsu and King()]{hsu2025polynomialmanuscript}
K.~Hsu and V.~King.
\newblock A polynomial-time algorithm for {EFX} orientations of chores.
\newblock To appear in ECAI 2025.

\bibitem[Hsu and King(2025)]{hsu2025polynomialarxiv}
K.~Hsu and V.~King.
\newblock A polynomial-time algorithm for {EFX} orientations of chores.
\newblock \emph{arXiv preprint arXiv:2501.13481}, 2025.

\bibitem[Hummel(2023)]{hummel2023lower}
H.~Hummel.
\newblock On lower bounds for maximin share guarantees.
\newblock In \emph{Proceedings of the Thirty-Second International Joint
  Conference on Artificial Intelligence}, pages 2747--2755, 2023.

\bibitem[Kaneko and Nakamura(1979)]{kaneko1979nash}
M.~Kaneko and K.~Nakamura.
\newblock The {N}ash social welfare function.
\newblock \emph{Econometrica: Journal of the Econometric Society}, pages
  423--435, 1979.

\bibitem[Karp(1972)]{karp1972reducibility}
R.~Karp.
\newblock Reducibility among combinatorial problems.
\newblock \emph{Complexity of Computer Computations}, pages 85--103, 1972.

\bibitem[Kaviani et~al.(2024)Kaviani, Seddighin, and
  Shahrezaei]{kaviani2024almost}
A.~Kaviani, M.~Seddighin, and A.~Shahrezaei.
\newblock Almost envy-free allocation of indivisible goods: A tale of two
  valuations.
\newblock \emph{arXiv preprint arXiv:2407.05139}, 2024.

\bibitem[Kulkarni et~al.(2021)Kulkarni, Mehta, and
  Taki]{kulkarni2021indivisible}
R.~Kulkarni, R.~Mehta, and S.~Taki.
\newblock Indivisible mixed manna: On the computability of {MMS}+{PO}
  allocations.
\newblock In \emph{Proceedings of the 22nd ACM Conference on Economics and
  Computation}, pages 683--684, 2021.

\bibitem[Kurokawa et~al.(2016)Kurokawa, Procaccia, and Wang]{KPW16}
D.~Kurokawa, A.~D. Procaccia, and J.~Wang.
\newblock When can the maximin share guarantee be guaranteed?
\newblock In \emph{Thirtieth AAAI Conference on Artificial Intelligence}, 2016.

\bibitem[Lee(2017)]{lee2017apx}
E.~Lee.
\newblock {APX}-hardness of maximizing {N}ash social welfare with indivisible
  items.
\newblock \emph{Information Processing Letters}, 122:\penalty0 17--20, 2017.

\bibitem[Li et~al.(2025)Li, Sun, Suzuki, and Xing]{li2025subsidy}
B.~Li, A.~Sun, M.~Suzuki, and S.~Xing.
\newblock On the subsidy of envy-free orientations in graphs.
\newblock \emph{arXiv preprint arXiv:2502.13671}, 2025.

\bibitem[Lipton et~al.(2004)Lipton, Markakis, Mossel, and
  Saberi]{lipton2004approximately}
R.~J. Lipton, E.~Markakis, E.~Mossel, and A.~Saberi.
\newblock On approximately fair allocations of indivisible goods.
\newblock In \emph{Proceedings of the 5th ACM Conference on Electronic
  Commerce}, pages 125--131, 2004.

\bibitem[Mahara(2024)]{mahara2024extension}
R.~Mahara.
\newblock Extension of additive valuations to general valuations on the
  existence of {EFX}.
\newblock \emph{Mathematics of Operations Research}, 49\penalty0 (2):\penalty0
  1263--1277, 2024.

\bibitem[Mahara(2025)]{mahara2025existence}
R.~Mahara.
\newblock Existence of fair and efficient allocation of indivisible chores.
\newblock \emph{arXiv preprint arXiv:2507.09544}, 2025.

\bibitem[McGlaughlin and Garg(2020)]{mcglaughlin2020improving}
P.~McGlaughlin and J.~Garg.
\newblock Improving {N}ash social welfare approximations.
\newblock \emph{Journal of Artificial Intelligence Research}, 68:\penalty0
  225--245, 2020.

\bibitem[Moore(1959)]{moore1959shortest}
E.~F. Moore.
\newblock The shortest path through a maze.
\newblock In \emph{Proc. of the International Symposium on the Theory of
  Switching}, pages 285--292. Harvard University Press, 1959.

\bibitem[Murhekar and Garg(2021)]{murhekar2021fair}
A.~Murhekar and J.~Garg.
\newblock On fair and efficient allocations of indivisible goods.
\newblock In \emph{Proceedings of the AAAI Conference on Artificial
  Intelligence}, volume~35, pages 5595--5602, 2021.

\bibitem[Nash(1950)]{nash1950bargaining}
J.~Nash.
\newblock The bargaining problem.
\newblock \emph{Econometrica}, 18\penalty0 (2):\penalty0 155--162, 1950.

\bibitem[Plaut and Roughgarden(2020)]{plaut2020almost}
B.~Plaut and T.~Roughgarden.
\newblock Almost envy-freeness with general valuations.
\newblock \emph{SIAM Journal on Discrete Mathematics}, 34\penalty0
  (2):\penalty0 1039--1068, 2020.

\bibitem[Procaccia(2009)]{procaccia2009thou}
A.~D. Procaccia.
\newblock Thou shalt covet thy neighbor's cake.
\newblock In \emph{Proceedings of the 21st International Joint Conference on
  Artificial Intelligence}, pages 239--244, 2009.

\bibitem[Procaccia(2020)]{procaccia2020technical}
A.~D. Procaccia.
\newblock Technical perspective: An answer to fair division's most enigmatic
  question.
\newblock \emph{Communications of the ACM}, 63\penalty0 (4):\penalty0 118--118,
  2020.

\bibitem[Ramezani and Endriss(2009)]{ramezani2009nash}
S.~Ramezani and U.~Endriss.
\newblock {N}ash social welfare in multiagent resource allocation.
\newblock In \emph{International Workshop on Agent-Mediated Electronic
  Commerce}, pages 117--131. Springer, 2009.

\bibitem[Rawls(1971)]{rawls1971}
J.~Rawls.
\newblock \emph{A Theory of Justice}.
\newblock The Belknap Press of Harvard University Press, Cambridge,
  Massachusetts, 1971.

\bibitem[Roos and Rothe(2010)]{roos2010complexity}
M.~Roos and J.~Rothe.
\newblock Complexity of social welfare optimization in multiagent resource
  allocation.
\newblock In \emph{Proceedings of the 9th International Conference on
  Autonomous Agents and Multiagent Systems: volume 1-Volume 1}, pages 641--648,
  2010.

\bibitem[Sgouritsa and Sotiriou(2025)]{sgouritsa2025existence}
A.~Sgouritsa and M.~M. Sotiriou.
\newblock On the existence of {EFX} allocations in multigraphs.
\newblock \emph{arXiv preprint arXiv:2502.09777}, 2025.

\bibitem[Steinhaus(1948)]{steinhaus1948problem}
H.~Steinhaus.
\newblock The problem of fair division.
\newblock \emph{Econometrica}, 16:\penalty0 101--104, 1948.

\bibitem[Tao et~al.(2025)Tao, Wu, Yu, and Zhou]{tao2025existence}
B.~Tao, X.~Wu, Z.~Yu, and S.~Zhou.
\newblock On the existence of {EFX} (and {P}areto-{O}ptimal) allocations for
  binary chores.
\newblock \emph{Theoretical Computer Science}, 1042:\penalty0 115248, 2025.

\bibitem[Woeginger(1997)]{woeginger1997polynomial}
G.~J. Woeginger.
\newblock A polynomial-time approximation scheme for maximizing the minimum
  machine completion time.
\newblock \emph{Operations Research Letters}, 20\penalty0 (4):\penalty0
  149--154, 1997.

\bibitem[Woeginger and Sgall(2007)]{woeginger2007complexity}
G.~J. Woeginger and J.~Sgall.
\newblock On the complexity of cake cutting.
\newblock \emph{Discrete Optimization}, 4\penalty0 (2):\penalty0 213--220,
  2007.

\bibitem[Zeng and Mehta(2024)]{zeng2024structure}
J.~A. Zeng and R.~Mehta.
\newblock On the structure of envy-free orientations on graphs.
\newblock \emph{arXiv preprint arXiv:2404.13527}, 2024.

\bibitem[Zhou et~al.(2024)Zhou, Wei, Li, and Li]{zhou2024complete}
Y.~Zhou, T.~Wei, M.~Li, and B.~Li.
\newblock A complete landscape of {EFX} allocations on graphs: goods, chores
  and mixed manna.
\newblock In \emph{Proceedings of the Thirty-Third International Joint
  Conference on Artificial Intelligence}, pages 3049--3056, 2024.

\end{thebibliography}

\end{document}